\documentclass{article}
\usepackage[a4paper]{geometry}

\usepackage{algorithm}
\usepackage{algpseudocode}
\usepackage{amsfonts}
\usepackage{amsmath}
\usepackage{amssymb}
\usepackage{amsthm} 
\usepackage[english]{babel}
\usepackage[numbers]{natbib}
\usepackage{csquotes}
\usepackage{enumitem}
\usepackage{environ}
\usepackage{etoolbox}
\usepackage[T1]{fontenc}
\usepackage{graphicx}
\usepackage[hidelinks]{hyperref}
\usepackage{mathrsfs}
\usepackage{mathtools}
\usepackage{refcount}
\usepackage{subcaption}
\usepackage{thmtools} 
\usepackage{thm-restate}
\usepackage{xspace}

\usepackage{cleveref}

\usepackage{tikz}
\usetikzlibrary{graphs, calc}

\algdef{SE}[FUNC]{Function}{EndFunction}[1]{\textbf{Function} #1:}{}
\algtext*{EndFunction}

\algdef{SE}[FE]{ForEach}{EndForEach}[1]{\textbf{foreach} #1 \textbf{do}}{}
\algtext*{EndForEach}

\algtext*{EndWhile}
\algtext*{EndFor}
\algtext*{EndProcedure}
\algtext*{EndIf}

\allowdisplaybreaks

\Crefname{subsection}{Section}{Sections}
\Crefname{enumi}{}{}

\declaretheorem[name=Theorem, numberwithin=section]{theorem}
\declaretheorem[sibling=theorem, name=Lemma]{lemma}

\declaretheorem[sibling=theorem, name=Proposition]{proposition}
\declaretheorem[sibling=theorem, name=Observation]{observation}

\declaretheorem[sibling=theorem, name=Definition, style=definition]{definition}

\let\cref\Cref
\let\textcite\citet

\newcommand{\oracle}{\operatorname{oracle}}
\newcommand{\heavy}{\textsf{heavy}\xspace}
\newcommand{\light}{\textsf{light}\xspace}
\newcommand{\invalid}{\textsf{invalid}\xspace}

\newcommand{\CCdetect}{\CC_\text{detect}}
\newcommand{\isLLM}{\texttt{IsLLM}\xspace}
\newcommand{\IsLLM}{\isLLM}
\newcommand{\isHeavy}{\texttt{IsHeavy}\xspace}

\newcommand{\CC}{\mathcal{C}}

\newcommand{\IP}{\mathbb{P}}
\newcommand{\IE}{\mathbb{E}}
\newcommand{\IV}{\operatorname{Var}}
\newcommand{\CB}{\mathcal{B}}

\newcommand{\CI}{\mathcal{I}}
\renewcommand{\CI}{\mathcal{K}}

\newcommand{\CT}{\mathcal{T}}

\newcommand{\CS}{\mathcal{S}}

\newcommand{\poly}{\mathrm{poly}}

\renewcommand{\wedge}{w} 
\newcommand{\substr}{z} 
\newcommand{\os}{\operatorname{os}}
\newcommand{\ow}{\operatorname{ow}}

\newcommand{\FC}{\mathfrak{C}}
\newcommand{\FCvalid}{\FC_{\text{valid}}}

\renewcommand{\epsilon}{\varepsilon}
\newcommand{\eps}{\varepsilon}

\newcommand{\tildeO}{\widetilde{O}}
\renewcommand{\tilde}{\widetilde}

\newcommand{\epsoracleW}{\varepsilon_W}
\newcommand{\epsoracleE}{\varepsilon_E}
\newcommand{\epsoracleV}{\varepsilon_V}

\newcommand{\squares}{\CT}

\newcommand{\Var}[1]{\IV\left[ #1 \right]}
\newcommand{\Cov}[1]{\mathrm{Cov}\left[ #1 \right]}
\newcommand{\Expectation}[1]{\IE\left[ #1 \right]}
\newcommand{\Prob}[1]{\IP\left[ #1 \right]}

\date{}
\title{\Large Near-Optimal Four-Cycle Counting in Graph Streams}

\newcommand{\email}[1]{\text{#1}}
\author{Sebastian Lüderssen\thanks{TU Wien, Vienna, Austria (\email{sebastian.luederssen@tuwien.ac.at}, \email{stefan.neumann@tuwien.ac.at}). This research has been funded by the Vienna Science and Technology Fund (WWTF) [Grant ID: 10.47379/VRG23013].} \and Stefan Neumann\footnotemark[1]
\and Pan Peng\thanks{School of Computer Science and Technology, University of Science and Technology of China, Hefei, China (\email{ppeng@ustc.edu.cn}). Supported in part by NSFC Grant 62272431 and the Innovation Program for Quantum Science and Technology (Grant No. 2021ZD0302901).}}

\begin{document}

\maketitle

\begin{abstract}
    \noindent
    We study four-cycle counting in arbitrary order graph streams.
    We present a 3-pass algorithm for $(1+\varepsilon)$-approximating the number of four-cycles using $\tildeO(m/\sqrt{T})$ space, where $m$ is the number of edges and $T$ the number of four-cycles in the graph.
    This improves upon 
    a 3-pass algorithm by Vorotnikova using space $\tildeO(m/T^{1/3})$ and matches a multi-pass lower bound of $\Omega(m/\sqrt{T})$  by McGregor and Vorotnikova.
\end{abstract}

\section{Introduction}
Subgraph detection and counting in large graphs is a fundamental problem in computer science. Given an undirected graph $G$ and a small pattern graph $H$ (e.g., a triangle or a four-cycle), the goal is to detect or count the number of subgraphs (not necessarily induced) of $G$ that are isomorphic to $H$. Such small pattern graphs, or motifs \cite{milo2002network}, play a crucial role in understanding structural properties of networks. Subgraph counting has diverse applications in different areas, including social network analysis \cite{pavan2013counting}, biological systems \cite{grochow2007network}, and database theory \cite{atserias2013size}.

This problem has been extensively studied in the \emph{graph streaming model}, which is relevant for massive graphs where storing the entire graph in main memory is infeasible. In this model, the edges of the graph arrive sequentially in an arbitrary (potentially adversarial) order, and the algorithm has limited memory and may make a constant number of passes over the stream. Depending on whether a single pass or multiple passes are used, we refer to the corresponding settings as the \emph{single-pass} or \emph{multi-pass} streaming models. While single-pass algorithms are necessary in time-critical settings like network monitoring, multi-pass algorithms are highly effective in cases where the data is too large to fit into main memory but multiple passes over the data are feasible, such as in external memory or I/O-bound computations. 

Formally, let $G=(V,E)$ be an undirected graph with $n$ vertices and $m$ edges that is represented as a graph stream, and let $H$ be a small pattern graph. Denote by $T$ the number of subgraphs in~$G$ isomorphic to $H$. The objective of subgraph counting in the streaming model is to solve one of the following tasks using sublinear space and a small number of passes:
\begin{itemize}
\item \textbf{Detection:} Distinguish between the case where $G$ contains no copy of $H$ and the case where it contains at least $T$ copies.

\item \textbf{Counting:} Estimate $T$ within a factor of $1\pm\varepsilon$.
\end{itemize}

While the majority of streaming algorithms for detecting/counting small pattern graphs have focused on triangles \cite{bar2002reductions,jowhari2005new,buriol2006counting,ahn2012graph,braverman2013hard,kolountzakis2012efficient,pavan2013counting,cormode2017second,mcgregor2016better,bulteau2016triangle,bera2017towards,tsourakakis2009doulion,pagh2012colorful,jayaram2021optimal,kallaugher2017hybrid,manjunath2011approximate,Bera2020how}, other more complex pattern graphs including cycles \cite{manjunath2011approximate,bera2017towards,mcgregor2020triangle} and cliques \cite{pavan2013counting,bera2017towards,fichtenbergerCounting2022} have also received significant attention. There has also been work on counting general subgraphs \cite{kane2012counting,bera2017towards,kallaugher2018sketching,assadi2019simple,fichtenbergerCounting2022}, but, due to the generality of these approaches, they often yield suboptimal space bounds for specific pattern graphs when compared to more tailored approaches.

\bigskip
In this paper, we focus on four-cycles, i.e., cycles of length four, which are the next simplest class of pattern graphs after triangles.
In bipartite graphs, four-cycles (also known as butterflies) are the simplest non-trivial motifs, playing a similar role to triangles in general graphs. 
They are used to cluster users in online recommendation systems \cite{abuoda2019link}  and for computing the butterfly clustering coefficient, a standard measure of cohesiveness in bipartite networks \cite{lind2005cycles,aksoy2017measuring}.

Despite their simplicity, the streaming complexity of detecting and counting four-cycles is, perhaps surprisingly, not yet fully understood. 
This is in stark contrast to the space complexity of triangle counting, which has been well-characterized (see, e.g., \cite{mcgregor2016better,bera2017towards}). 
Indeed, \citet{mcgregor2020triangle} observed: ``\emph{Estimating the number of 4-cycles in the arbitrary order model is significantly more challenging than estimating the number of triangles}''.

Due to these challenges, there remains a significant gap in the current state of
the art for multi-pass streaming algorithms for four-cycle detection and
counting.  Most significantly, \textcite{mcgregor2020triangle} provided a $3$-pass counting
algorithm using $\tildeO(m/T^{1/4})$ space\footnote{We write $\tildeO(\cdot)$ to hide $\poly(\log n, 1/\eps)$ factors.}, and a $2$-pass detection algorithm using
$\tilde{O}(m^{3/2}/T^{3/4})$ space. \textcite{vorotnikova2020improved} gave a
3-pass counting algorithm with $\tilde{O}(m/T^{1/3})$ space and conjectured that
this space usage is optimal. However, the best-known lower bound remains
$\Omega(m/\sqrt{T})$ by \citet{mcgregor2020triangle}, yielding a polynomial gap
in the dependency on the number of four-cycles~$T$, which in practice is typically polynomial in~$m$.
\smallskip

\paragraph{Our contributions.}  
In this work, we present an algorithm that matches the known space lower bound
of $\Omega(m/\sqrt{T})$ by \citet{mcgregor2020triangle} up to polylogarithmic
factors and dependencies on $1/\eps$. Our main result is as follows.
   
\begin{restatable}{theorem}{thmcounting}
\label{thm:counting}
	There exists a $3$-pass algorithm that obtains a graph $G$ as an
	arbitrary-order edge stream and returns a $(1\pm\eps)$-approximation of the
	number of four-cycles in $G$ using space $\tildeO(m/\sqrt{T})$ with high
	probability.
\end{restatable}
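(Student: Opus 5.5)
We reduce to a heavy/light edge-sampling framework, in the spirit of the triangle-counting algorithms of \citet{mcgregor2016better} and the oracle-based approaches for four-cycles. First, by a standard geometric guessing argument (running $O(\log n)$ instances in parallel with guesses $\tilde T = 2^i$ and keeping the largest guess consistent with its own estimate), we may assume we know a constant-factor estimate of $T$. For each edge $e$ let $T_e$ be the number of four-cycles containing $e$, so $\sum_e T_e = 4T$. We call $e$ \heavy if $T_e > \sqrt{T}$ (up to $\poly(1/\eps,\log n)$ factors) and \light otherwise. The estimator samples each edge independently with probability $p = \tildeO(1/\sqrt{T})$ in the first pass, so the sample $S$ has size $\tildeO(m/\sqrt T)$. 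The plan is to estimate $T$ as a sum over four-cycles in which each cycle is charged to one of its edges. The charge goes to the light edges whenever possible, so that variance is controlled. A four-cycle whose edges are all heavy is handled separately.

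The second and third passes implement, for each sampled edge $e=(u,v)$, an estimator of $T_e$ and an \isHeavy test. In pass two we store all edges incident to $u$ or $v$ that are adjacent to sampled edges. This is too expensive in general, so instead we sample wedges. We keep a second independent edge sample $S'$ at rate $q$. A four-cycle through $e=(u,v)$ is $u,v,x,y$, and we detect it when the edge $(v,x)$ lies in $S'$ (recorded in pass two) and the closing pair $(x,y),(y,u)$ is seen in pass three. For this we must hold the paths $v$--$x$ with $x$ known and check, in pass three, for edges $(x,y)$ and $(y,u)$ with matching $y$. The vertex $y$ is unknown, so we hash: we record for each $y$ adjacent to $u$ a counter keyed on the pair. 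To stay within budget, the better route is to let the sampled edge in pass one be $(u,y)$ and $(v,x)$ separately, i.e., to sample two disjoint edges of the cycle. Then pass two collects edges $(u,v)$ connecting endpoints of sampled edges, and pass three verifies the closing edge $(x,y)$. The probability of catching a fixed cycle is then $\Theta(p^2)$, with $p=\tildeO(T^{-1/4})$ and space $\tildeO(mp)=\tildeO(m/T^{1/4})$. This is too much, so the diagonal-pair sampling must be combined with heaviness. Light pairs contribute little, and heavy vertex pairs (codegree large) are few.

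Concretely, I would classify each four-cycle by its two diagonals $\{u,x\},\{v,y\}$ with codegrees $c(u,x), c(v,y)$. A cycle is counted via $\binom{c}{2}$-style wedge sampling, where we sample an edge $(u,v)$ at rate $1/\sqrt T$ and estimate the number of cycles through it using a wedge oracle. The variance bound $\Var{\hat T}\le \sum_e T_e^2/p \le T\max_e T_e/p$ yields $\eps^2T^2$ only if $\max T_e\lesssim \sqrt T$. Heavy edges must therefore be identified (pass two via sampled neighbors) and their cycles counted differently. We would show that there are at most $\tildeO(\sqrt T)$ heavy edges, so cycles containing a heavy edge can be attributed to a light edge of the cycle, except for all-heavy cycles. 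The number of all-heavy cycles is bounded via the graph of heavy edges, which has at most $O(\sqrt T)$ edges and hence $O(T)$ cycles. This bound is not small enough, and I expect this step to be the main obstacle. The approach I would try first is a refined charging by vertex-pair codegree, showing that all-heavy cycles are either an $\eps$-fraction of $T$ or can be counted exactly by storing the heavy subgraph, which takes $\tildeO(\sqrt T)\le \tildeO(m/\sqrt T)$ space since $T\le m^2$. The proof then concludes with a Chebyshev argument plus a median trick for high probability.
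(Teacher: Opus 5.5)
There is a genuine gap: your whole estimator rests on a step you never supply. For every edge $e=(u,v)$ sampled at rate $\tildeO(1/\sqrt{T})$, you need an estimate of $T_e$ (or of its charged share) from the remaining two passes, in total space $\tildeO(m/\sqrt{T})$. That means finding 3-paths $u$--$y$--$x$--$v$. You discard each concrete implementation you write down: storing endpoint neighborhoods can cost $\Theta(m)$ when degrees are $\Omega(\sqrt{T})$, and sampling two disjoint edges costs $m/T^{1/4}$. The final ``wedge oracle'' is never specified.

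This step is the whole difficulty, not a detail. Take $K_{\sqrt{T}/\kappa,\kappa}$, where $m=\sqrt{T}$ and the budget is $\tildeO(1)$. For a sampled edge with $\deg u=\kappa$ and $\deg v=\sqrt{T}/\kappa$, subsample $y\in N(u)$ and $x\in N(v)$ at rates $r_y,r_x$. You find 3-paths only if $r_xr_y\gtrsim 1/\sqrt{T}$. The cost $r_y\kappa+r_x\sqrt{T}/\kappa$ is then $O(1)$ only if $r_y/r_x=\Theta(\sqrt{T}/\kappa^2)$, so the rates must be matched to an unknown scale $\kappa$.

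The paper resolves this by abandoning edge sampling. For every $\kappa$ in a geometric range it samples nodes at paired rates $p_{1,\kappa}=\tildeO(\kappa/\sqrt{T})$ and $p_{2,\kappa}=\tildeO(1/\kappa)$, with an opposite-pair and an adjacent-pair variant. Each stored edge then costs $p_{1,\kappa}p_{2,\kappa}=\tildeO(1/\sqrt{T})$ regardless of degrees, and each cycle is caught with probability $\tildeO(1/T)$.

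Once you sample at several scales, your bound $\sum_e T_e^2/p$ no longer controls the variance. The covariances are dominated by nodes, edges and wedges that are realized with low probability yet lie in many cycles. This is why the paper needs:
\begin{itemize}
\item thresholds $\theta_{(z,\kappa,\ell)}$ that depend on how each substructure is sampled;
\item one assigned (later LLM) configuration per cycle;
\item \Cref{lemma:combination} to show that few cycles are lost;
\item nested node/edge/wedge oracles with random threshold shifts.
\end{itemize}
None of this appears in your plan, so neither your space bound nor your variance bound is established.

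Two further steps are wrong as stated.

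First, with threshold exactly $\sqrt{T}$, all-heavy cycles need not be rare. In $K_{n,n}$ every edge lies in $(n-1)^2\approx 2\sqrt{T}$ cycles, so every cycle is all-heavy. Your fallback of storing the heavy subgraph does not fit the budget. The bound $T\le m^2$ only gives $\sqrt{T}\le m$, whereas $\sqrt{T}\le m/\sqrt{T}$ would need $T\le m$; in $K_{n,n}$ the budget is $O(1)$ but there are $n^2$ heavy edges. The correct move is to raise the threshold to $\sqrt{T}/\delta^2$ with $\delta=\tildeO(\eps)$. By \Cref{lemma:no2heavyedges}, only $O(\delta^2T)$ cycles then contain two heavy edges, and they can be discarded at a cost of only $\poly(1/\delta)$ in $p$. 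Even then, charging single-heavy-edge cycles to light edges requires testing the heaviness of every edge of every cycle found during estimation. That needs an oracle that fits the budget and answers consistently near the threshold, and you do not provide one.

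Second, guessing $T$ in parallel is not free here. Runs with guesses $\tilde T\ll T$ use space $m/\sqrt{\tilde T}$ before you can tell which guess is right. This is why the paper, like prior work, assumes a constant-factor lower bound on $T$ is given.
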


Furthermore, for four-cycle \emph{detection} (as opposed to counting), our
algorithm requires only two passes while also achieving a space
complexity of $\tildeO(m/\sqrt{T})$ (see \Cref{thm:detect}). 

Even though our focus is on insertion-only streams (i.e., streams containing only edge
insertions), we note that our algorithm can be readily extended to the turnstile
model (which allows both insertions and deletions), as its core subroutines ---
sampling nodes and edges --- can be efficiently implemented even in the presence
of deletions. We also note that our algorithm remains correct even if the edge order changes across different stream passes.
\smallskip

\paragraph{Technical contributions.}
Unlike many prior works, our algorithm relies almost entirely on node sampling. It samples multiple sets of nodes using different probabilities, and then, during multiple passes, collects a subset of edges from their induced subgraphs. We then use the number of four-cycles with certain properties in the resulting subgraph to perform detection or to define our estimator for counting. This allows us to obtain a space usage of just $\tildeO(m/\sqrt{T})$ in expectation, but it comes at significant challenges for our algorithm analysis, especially in the variance analysis, due to the complex dependencies among four-cycles introduced by our sampling scheme. 

To control the variance, we need to identify nodes (or edges and wedges, i.e., paths of length two) that participate in ``many'' four-cycles and exclude those four-cycles from our estimators. To this end, we enhance the commonly used concept of \emph{heavy} nodes (edges or wedges).  
The \emph{heaviness} of a node  $v$ is the number of four-cycles in which~$v$ appears, and $v$ is considered \emph{heavy} if its heaviness is above a \emph{heaviness threshold} that we set in our analysis. (The heaviness of an edge or wedge is defined analogously.) 
In most previous works on four-cycle (or other subgraph) counting, this heaviness threshold was uniform across all nodes, making \emph{being heavy} a property that depends only on the graph itself. In contrast, in our analysis, for each node, edge or wedge \emph{and each possible way it might get sampled}, we must define an individual heaviness threshold to properly bound the variance of our estimator. That is, whether a node is heavy depends on its sampling probability.

Moreover, we observe that a straightforward definition of node heaviness cannot be efficiently estimated within our target space bounds. To overcome this, we introduce \emph{a refined notion of heaviness}, which can be checked efficiently while remaining sufficiently accurate for our variance analysis. This refinement, however, introduces additional technical challenges. For example, it may cause certain four-cycles to be double-counted. We show that such overcounting affects only an $\varepsilon$-fraction of all four-cycles and therefore does not impact our final bounds.

We believe our techniques are of independent interest and, due to the generality
of our vertex sampling approach, they will also be applicable to other subgraph counting
problems in the streaming model. We present a more detailed overview of our
algorithm and our analysis in \Cref{sec:overview,sec:strategy}.
\smallskip

\paragraph{Other related work.} 
For counting four-cycles using single-pass algorithms,
\citet{manjunath2011approximate} provided a $\tilde{O}(m^4/T^2)$-space algorithm.
From a lower bound perspective, we observe that the lower bounds by \citet{bera2017towards}, \citet{kallaugher2019complexity} (for the stronger adjacency list model) and by \citet{braverman2013hard} (after extending their bound from triangle-counting to counting four-cycles)
imply the following: There is no single-pass algorithm with space complexity $O(m/T^{1-\delta})$ for constant $\delta>0$ for general $T=\operatorname{poly}(m)$. This rules out single-pass algorithms with space complexities as achieved by our 3-pass algorithm.
We note that for single-pass algorithms for counting four-cycles there remains a gap between the upper and lower bounds.

In the more powerful adjacency list streaming model, where edges are
grouped by endpoint and appear twice, four-cycle counting has been studied with
different bounds \cite{kallaugher2019complexity,mcgregor2020triangle}.
\citet{kallaugher2019complexity} provided an $\tildeO(m/T^{3/8})$-space algorithm
and a lower bound of $\Omega(m/T^{2/3})$. \citet{mcgregor2020triangle}
improved this to $\tildeO(m/\sqrt{T})$ space. The optimality in this model remains open.

Further, we note that the previously mentioned multi-pass algorithm by \citet{mcgregor2020triangle}
for counting four-cycles in the adjacency list streaming model looks superficially similar to ours.
They also use $\log(T)$~pairs of sampling probabilities to count the number of $K_{2,\kappa}$-subgraphs (which we will later call onions) for specific sizes~$\kappa$. However, this setting is significantly simpler than ours, since in this more powerful model node heaviness can be checked easily for \emph{all} nodes upon receiving a node together with all of its incident edges during the second stream pass.
Additionally, while for this algorithm it is enough to just consider node heaviness, here we must take into account the heaviness of \emph{all} substructures, which poses considerable challenges.

For sketching algorithms in the streaming setting, \citet{kallaugher2018sketching} showed that even for bounded-degree graphs, any algorithm for counting four-cycles requires $\tilde{\Omega}(m/\sqrt{T})$ space. They also presented an upper bound of $\tildeO(m/\sqrt{T})$ for graphs of maximum degree $\tildeO(T^{1/4})$. Their analysis required the degree bound to ensure that the variance of their algorithm is low, but we note that this bound does not hold in any of the hard instances we discuss below (see \Cref{sec:challenges-node-sampling}) and thus their algorithm does not apply in our setting. 
Interestingly, their algorithm also samples nodes, but it samples all nodes with the same probability, which is substantially different from our algorithm.

\citet{chen2022triangle-predictions} studied algorithms with
predictions and considered four-cycle counting when they have access to an
oracle that estimates the heaviness of a given substructure; this is motivated
by the integration of machine learning techniques into streaming models.
Given such oracle access, they presented a single-pass algorithm with space usage $\tildeO(T^{1/3}+m/T^{1/3})$.

\citet{luederssen2026four} presented a two-pass algorithm for counting four-cycles in low degeneracy graphs. Their edge-sampling based algorithm uses $O(m\kappa/\sqrt{T})$ space on graphs of degeneracy $\kappa$ and additionally performs well in practice.
For bipartite graphs, \citet{sanei-mehri2019Fleet}, \citet{li2022Approximately} and \citet{papadias2024counting}
presented practical algorithms for four-cycle counting. They evaluated their algorithms empirically and showed that, in practice, even single-pass streaming algorithms can achieve highly accurate predictions.

We also note that \citet{kallaugher2017hybrid} proposed a single-pass node-sampling algorithm for triangle counting, which, similar to our approach, samples nodes with different probabilities determined by their heaviness values. However, their assignment is performed only for nodes, which suffices for single-pass triangle counting. In contrast, our algorithm requires analogous assignments for edges and wedges, which must be carefully combined to effectively control the variance of our estimator. Moreover, unlike the single-pass setting, we employ multiple passes to collect and identify the relevant four-cycles.

Finally, we note that our sampling approach differs significantly from previous
node- or edge-sampling techniques that were used in triangle counting
\cite{pagh2012colorful,jayaram2021optimal}, offering a new
path for managing variance and enabling accurate estimation with sublinear
space. \smallskip

\paragraph{Outline of our paper.}
The paper is structured as follows. \cref{sec:prelim} introduces basic definitions. In \cref{sec:overview} we present a high-level description of our algorithm and provide intuition on how we designed it. \cref{sec:strategy} contains a technical overview of the analysis. \cref{sec:detection,sec:counting,sec:oracles} contain the main proof of \cref{thm:counting}: First we solve the \emph{detection} problem in \cref{sec:detection}, then we extend the detection algorithm to solve the \emph{counting} problem in \cref{sec:counting}. Our counting algorithm calls several oracles as subroutines, which are presented in \cref{sec:oracles}.

\section{Preliminaries}
\label{sec:prelim}
Throughout the paper, we consider undirected unweighted graphs $G=(V,E)$ with $n$~nodes and $m$~edges. 
For notational convenience, we assume that $V=\{1,\dots,n\}$ such that for $u,v\in V$ we have a natural ordering on the nodes given by $<$.
Further, we assume that we obtain the edge set~$E$ as an arbitrary order stream.

We let $T$ denote the number of four-cycles in $G$. In the \emph{detection} version of our problem, our goal is to distinguish whether $G$ has $0$~four-cycles or at least $T$~four-cycles. In the \emph{counting} version, our goal is to output a value $\tilde{T}$ such that $(1-\varepsilon)T \leq \tilde{T} \leq (1+\varepsilon)T$ for given $\varepsilon>0$.
As is standard in this line of work, we assume that the algorithm is given an $\alpha$-approximate lower bound of $T$ as input for some constant $\alpha\geq1$, and that an upper bound on $\alpha$ is known in advance. This assumption is necessary for setting the sampling probabilities (see \cite[Section 1]{braverman2013hard} and \cite[Section 1.2]{mcgregor2016better} for extensive discussions on this assumption). 
In the counting problem, we assume (without loss of generality) that $\eps = 2^{-k}$ for some integer $k$.

For two vertex sets $A,B\subseteq V$, we let $E[A,B] = \{ (u,v) \in E \colon u\in A, v\in B\}$ denote the edges with one endpoint in $A$ and one in $B$. 
Further, we set $E[A]=E[A,A]$ to denote edges induced by $A$.

We denote the set of all four-cycles by $\squares$.
For a four-cycle $A\in\squares$, $V(A)$ denotes its set of nodes, $E(A)$ its set of edges and $W(A)$ its set of wedges (a wedge is a path of two edges). If $A$ consists of edges $(a,b)$, $(b,c)$, $(c,d)$ and $(d,a)$, then we say that $a$ and $c$, as well as $b$ and $d$ are \emph{opposite nodes}, and for notational convenience we will simply refer to $A$ as $(a,b,c,d)$. Note that with the same node set, we could also form the four-cycle $(a,c,b,d)$, so the order matters in the tuple notation.

Given a four-cycle~$A$, we will often write $\substr$ to denote a \emph{substructure} of $A$, where a substructure is either a node, an edge or a wedge. 
The \emph{heaviness} $t(\substr)$ of a substructure~$\substr$ is the number of four-cycles containing $\substr$. For instance, 
for any vertex $v$, $t(v)$ denotes the number of four-cycles containing $v$. 
Further, we say that a substructure $\substr$ is \emph{$\theta$-heavy} if $t(\substr)\geq\theta$. For instance, we say that $v$ is a $\theta$-heavy node or that $e$ is a $\theta$-heavy edge.

We call the complete bipartite graph $K_{2,\kappa}$ an \emph{onion\footnote{\citet{mcgregor2020triangle} and \citet{vorotnikova2020improved} called this graph a \emph{diamond}.} of width $\kappa$}.
Furthermore, for a node pair $(u,v)$ we define its \emph{onion width $\ow(u,v)$} as the number of wedges with endpoints~$u$ and~$v$, and its \emph{onion size} $\os(u,v)=\binom{\ow(u,v)}{2}$ as the number of four-cycles containing~$u$ and~$v$ as opposite nodes.
Note that each four-cycle $(a,b,c,d)$ is part of two onions (namely those with opposite side nodes $a$ and $c$, and $b$ and $d$).

\section{Sampling Paradigms and Algorithmic Overview}
\label{sec:overview}
In this section, we describe our algorithm and present the main challenges and intuition for designing it.

\subsection{Previous Approaches Based on Edge Sampling}
We start by briefly discussing the previous algorithms by
\citet{mcgregor2020triangle} and by \citet{vorotnikova2020improved}, which were heavily based on edge sampling. In a nutshell, their algorithms works as follows: In the first stream pass, sample each edge with probability $p =\frac{1}{T^{1/3}}$ to obtain a set of edges~$S$. Then perform a second pass over the stream and, for each arriving edge~$e=(u,v)$, count how many 3-paths in~$S$ start at $u$ and end at $v$, and thus complete a four-cycle with~$e$. At the end, return the number of such four-cycles after rescaling appropriately.

While this is a very natural algorithm, it is inherently limited to using space $\Omega(m/T^{1/3})$: When we sample edges uniformly at random from the stream with probability~$p$, we must pick $p$ such that for at least $\Omega(1)$~four-cycles we sample a 3-path of their edges. For each four-cycle, this event happens with probability $4p^3$. Thus, to sample $\Omega(1)$ four-cycles in expectation, we need to set $p\geq \frac{1}{T^{1/3}}$ and thus the sampled edge set~$S$ contains $\Omega(m/T^{1/3})$~edges in expectation.

The main work of the papers by \citet{mcgregor2020triangle} and by \citet{vorotnikova2020improved} then goes into the analysis of their oracles, which they need to classify edges and wedges as heavy or light. This is done to ensure that their estimator's variance does not get too large.

As mentioned above, this algorithm is very natural and this is perhaps the reason why \citet{vorotnikova2020improved} conjectured that space $O(m/T^{1/3})$ is optimal for counting four-cycles. Indeed, observe that it is not at all obvious how this algorithm should be improved using edge sampling.

\subsection{Challenges for Node Sampling}
\label{sec:challenges-node-sampling}
As a consequence of the discussion above, our novel algorithms are almost entirely based on node sampling (we only use edge sampling for some of our heaviness oracles later). Specifically, we show that by storing an induced subgraph of size $\tildeO(m/\sqrt{T})$, we can successfully detect $\Omega(1)$ four-cycles with high probability (each of which is sampled with probability $\tildeO(\frac{1}{T})$), and this will also allow us to estimate the number of four-cycles.

To build some intuition for our node sampling approach, we start by discussing some basic sampling approaches and concrete hard instances that break them. For each instance, we will briefly argue that it contains $T$~four-cycles and give a bound on the desired space usage~$O(m/\sqrt{T})$.
We give an illustration of the instances in \Cref{fig:hardinstances}.

\begin{figure}[t]
    \centering
    \begin{subfigure}[t]{0.25\textwidth}
\centering
    \begin{tikzpicture}[every node/.style={circle, draw}]
        \node[circle, draw] (a) at (2,0) {};
        \node[circle, draw] (b) at (-2,0) {};
        \node[circle, draw] (c) at (0,1) {};
        \node[circle, draw] (d) at (0,1.5) {};
        \node[circle, draw] (e) at (0,-1.5) {};
        \node[circle, draw] (f) at (0,-1) {};
        \node[circle, draw] (g) at (0,0.5) {};
        \node[circle, draw] (h) at (0,-0.5) {};
        \node[circle, draw] (i) at (0,0) {};
        \draw (a) -- (c) -- (b);
        \draw (a) -- (e) -- (b) -- (f) -- (a) -- (g) -- (b) -- (h) -- (a) -- (d) -- (b) -- (i) -- (a);
    \end{tikzpicture}
    \caption{Onion $K_{2,\kappa}$.}
    \label{fig:onion}  
\end{subfigure}%
\hfill
\begin{subfigure}[t]{0.25\textwidth}
\centering
\begin{tikzpicture}[every node/.style={circle, draw}]
    
    \foreach \i in {1,...,4} {
        \node (a\i) at (0, {0.05 + (\i - 1) * 0.9}) {};
    }

    \foreach \j in {1,...,7} {
        \node (b\j) at (2, {0 + (\j - 1) * 0.5}) {};
    }

    \foreach \i in {1,...,4}
        \foreach \j in {1,...,7}
            \draw (a\i) -- (b\j);

\end{tikzpicture}
    \caption{Multiple overlapping onions $K_{\sqrt{T}/\kappa,\kappa}$.}
    \label{fig:complete}  
\end{subfigure}%
\hfill
\begin{subfigure}[t]{0.25\textwidth}
\centering
    \begin{tikzpicture}[]
        \node[circle, draw] (a) at (0,0) {};
        \node[circle, draw] (b) at (0,1.5) {};
        \node[circle, draw] (c) at (1,1.25) {};
        \node[circle, draw] (d) at (1,0.25) {};
        \node[circle, draw] (e) at (1.5,1.5) {};
        \node[circle, draw] (f) at (1.5,0) {};
        \node[circle, draw] (g) at (2.5,-0.5) {};
        \node[circle, draw] (h) at (2.5,2) {};
        \draw[line width=2] (a) -- (b);
        \draw (b) -- (c) -- (d) -- (a) -- (f) -- (e) -- (b) -- (h) -- (g) -- (a);
        \node[] (x) at (2,0.75) {$\dots$};
    \end{tikzpicture}
    \caption{Heavy edge instance.}
    \label{fig:heavyedge}
\end{subfigure}
    \caption{Illustrations of the three hard instances we consider in \Cref{sec:challenges-node-sampling}.}
    \label{fig:hardinstances}
\end{figure}
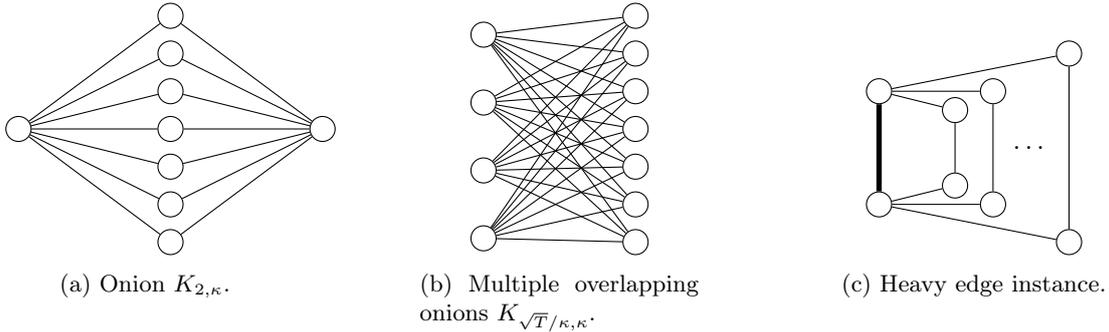

Perhaps the easiest way to employ node sampling would be to sample a set of nodes~$S$ which contains each node uniformly at random with probability~$p$. Then, during a stream pass, we store all edges of the induced subgraph~$E[S]$, and after that pass finishes, we count the four-cycles in $E[S]$.
\smallskip

\emph{Hard instance 1: A single onion.} 
Unfortunately, the simple sampling approach above fails horribly if our input graph is given by a single onion of width~$\kappa$ (i.e., $G=K_{2,\kappa}$). Note that if we set $\kappa=O(\sqrt{T})$ then this induces $T=\binom{\kappa}{2}$~four-cycles (since the two nodes on the one side of the bipartite graph form a four-cycle with all $\binom{\kappa}{2}$~pairs of vertices on the other side). Further, as this graph has $m=O(\kappa)$ edges, we can only use space $O(m/\sqrt{T}) = O(1)$.

Now observe that the simple algorithm above only samples the two special nodes with probability~$p^2$. Thus, to ensure that we even detect a four-cycle, we would need to set $p = \Omega(1)$, resulting in space usage~$O(m)$, which is too much.
\smallskip

\emph{Hard instance 2: Multiple overlapping onions.} Next, let us generalize the instance from above to consist of multiple overlapping onions. Concretely, consider the complete bipartite graph $K_{\sqrt{T}/\kappa,\kappa}$. Note that this graph has $\Theta(T)$ four-cycles since for each of the $\binom{\sqrt{T}/\kappa}{2}$ node pairs on the left side of the graph we obtain $\binom{\kappa}{2}$ four-cycles by picking two neighbors on the right side of the graph. Thus, the total number of four-cycles is given by $\binom{\sqrt{T}/\kappa}{2} \cdot \binom{\kappa}{2} = \Theta(T)$. Further note that this graph has $m=(\sqrt{T}/\kappa)\cdot \kappa=\sqrt{T}$ edges and thus an algorithm achieving our desired space bound can only use space $O(m/\sqrt{T}) = O(1)$. Finally, observe that this graph can be viewed as $\binom{\sqrt{T}/\kappa}{2}$~overlapping onions of width $\kappa$ by picking pairs of nodes on the left-hand side of the graph and keeping all right-hand side vertices.

Now, as before, the simple algorithm of sampling a node set~$S$ and storing $E[S]$ cannot succeed in achieving our desired space usage. 
However, suppose that we sample each vertex from the left-hand side with probability $p_1 = \frac{1}{\sqrt{T}/\kappa}$ to obtain a node set~$S_1$ and each vertex from the right-hand side with probability $p_2 = \frac{1}{\kappa}$ to obtain a node set~$S_2$. Now, during the stream pass, we only store the edges in $E[S_1,S_2]$. Crucially, observe that we sample only $\Theta(1)$ nodes in total. This ensures that we sample a four-cycle with constant probability, and further, we only need space~$O(1)$, matching our desired space bound.

Of course, this approach has two obvious drawbacks: First, it requires knowledge about the two sides of the bipartite graph. Second, it also requires knowledge about the parameter~$\kappa$ to define our probabilities. However, we will see later that we can overcome these obstacles.
\smallskip

\emph{Hard instance 3: Heavy edge instances.}
While above we sketched an approach if our input graph contains many large overlapping onions, we now consider the other extreme. That is, 
we consider instances with heavy edges, i.e., edges that appear in many four-cycles. Concretely, consider a graph with a single edge $e^*=(u^*,v^*)$ that is contained in all $T$~four-cycles. Now, if we still want to store an induced subgraph $E[S_1,S_2]$ as in the previous example, then we have to ensure that (say) $S_1$ contains $u^*$ and $S_2$ contains $v^*$ with high probability, as otherwise we never store $e^*$ and thus have no opportunity to detect any four-cycle. However, as we do not know $u^*$ and $v^*$ beforehand, this means that $S_1$ and $S_2$ must include \emph{all} vertices, which is, of course, infeasible.

Thus, in such a case, we adapt our algorithm from above as follows. We consider an algorithm which samples two node sets $R_1$ and $R_2$, where we set $p_1=1$ and $p_2=\frac{1}{\sqrt{T}}$. Then, during a first stream pass, we store all edges in $E[R_1,R_2] \cup E[R_2]$.
After that, during a second stream pass, we check for each stream edge~$e=(u,v)$ whether it forms a four-cycle with a 3-path of the form $u$--$a$--$b$--$v$, where $u,v\in R_1$, and $a,b\in R_2$. 

Coming back to the concrete example above, observe that $u^*$ and $v^*$ are indeed in $R_1$ since $R_1$ contains all vertices, and $b$ and $c$ are in $R_2$ with probability~$p_2^2 = \frac{1}{T}$. Thus, as $e^*=(u^*,v^*)$ is contained in $T$~four-cycles, in expectation we find $\Omega(1)$ four-cycles containing $e^*$. Below, we will generalize this algorithm for more general probabilities $p_1$ and $p_2$ and we will show that it only uses space $O(m/\sqrt{T})$.

Note that this approach is somewhat similar to those of \citet{mcgregor2020triangle} and \citet{vorotnikova2020improved}, but here we only need to sample two nodes ($a$ and $b$ above) to complete the four-cycle with $e^*$, compared to their three edges. This is crucial to obtain space $\tildeO(m/\sqrt{T})$.

\subsection{Our Algorithm Based on Node Sampling} \label{sec:techoveralgorithm}
Next, we describe our detection algorithm and state its pseudocode in \Cref{algo:detect} below.
Our algorithm combines the insights obtained from the hard instances above and uses a set of parameters $\CI=\{2^k T^{\frac{1}{4}}\;|\; k=0,\dots,\lceil\log(T^{\frac{1}{4}})\rceil)\}$. 
For all $\kappa\in\CI$ it defines probabilities 
\begin{align*}
    p_{1,\kappa}=\frac{c_1\kappa}{\delta^{1.5}\sqrt{T}}
    \quad
    \text{ and }
    \quad
    p_{2,\kappa}=\frac{c_1}{\delta^{1.5} \kappa},
\end{align*}
where $\delta=\frac{1}{2098\log(T)}$ and $c_1$ is some constant to be set later. Note that both of these probabilities crucially depend on the parameter~$\kappa$. 

Next, before the first stream pass, for all $\kappa\in\CI$, the algorithm samples vertex sets $S_{1,\kappa}$, $R_{1a,\kappa}$ and $R_{1b,\kappa}$
by sampling each node independently and uniformly at random with probability $p_{1,\kappa}$, and it further samples vertex sets $S_{2,\kappa}$, $R_{2a,\kappa}$ and $R_{2b,\kappa}$ by sampling each vertex independently and uniformly at random with probability\footnote{
We note that we do not necessarily have to store these sets explicitly. It is enough if we use a hash function $h \colon V \to [0,1]$ and we say that a vertex~$v$ is sampled with probability~$p$ if $h(v)\leq p$.}~$p_{2,\kappa}$.

In a first pass over the stream, the algorithm collects all edges in $E[S_{1,\kappa},S_{2,\kappa}]$, as well as all edges in $E[R_{1b,\kappa},R_{2a,\kappa}]$, $E[R_{2a,\kappa},R_{2b,\kappa}]$ and $E[R_{2b,\kappa},R_{1a,\kappa}]$ for all $\kappa\in\CI$.

In a second pass over the stream, upon arrival of an edge $e=(u,v)$, the algorithm checks whether there exists a four-cycle $A=(u,v,a,b)$ such that $u\in R_{1a,\kappa}$, $v \in R_{1b,\kappa}$, $a\in R_{2a,\kappa}$ and $b\in R_{2b,\kappa}$ for any $w\in\CI$. Note that we can check whether this four-cycle exists since we currently see $(u,v)$ in the stream and the edges $(v,a)$, $(a,b)$ and $(b,u)$ were collected during the first stream pass since they are contained in the sets $E[R_{1b,\kappa},R_{2a,\kappa}]$, $E[R_{2a,\kappa}, R_{2b,\kappa}]$ and $E[R_{2b,\kappa}, R_{1a,\kappa}]$, respectively. If the algorithm identifies such a four-cycle~$A$, it stores the edge $(u,v)$.

After both stream passes finished, the algorithm returns whether the set of all stored edges contains a four-cycle. 
We note that our algorithm remains correct even if the edge order changes in different stream passes.
\smallskip

\paragraph{Intuition for the algorithm and its analysis.}
To understand the intuition behind the algorithm, note that it combines and generalizes the two sampling approaches we discussed in \Cref{sec:challenges-node-sampling} in the second and third hard instances.
In a nutshell, we use the sets $S_{1,\kappa}$ and $S_{2,\kappa}$ to detect four-cycles which are part of large, potentially overlapping onions (corresponding to the second hard instance) and we use the other sets $R_{1a,\kappa}$, $R_{1b,\kappa}$, $R_{2a,\kappa}$ and $R_{2b,\kappa}$ to detect four-cycles which contain heavy edges (corresponding to the third hard instance).

The reason why we use different sets $S_{\cdot,\kappa}$ and $R_{\cdot,\kappa}$ is that we use these sets to detect (and later count) different types of four-cycle, and we want these estimators to be independent from each other.
We note that with the parameter choices above, we always have that $p_{1,\kappa}\geq p_{2,\kappa}$ and thus it is instructive to think of the sets that were sampled with probability $p_{1,\kappa}$ as substantially larger than those that were sampled with probability $p_{2,\kappa}$.

Now let us briefly argue why the algorithm achieves its guarantees \emph{in expectation}.

First, consider any four-cycle $A=(a,b,c,d)$. 
Note that the algorithm detects $A$ if $a,c\in S_{1,\kappa}$ and $b,d\in S_{2,\kappa}$ since in this case all edges of $A$ are in $\Expectation{S_{1,\kappa},S_{2,\kappa}}$. 
Thus, $A$ is detected with probability at least $p_{1,\kappa}^2 p_{2,\kappa}^2 = \left(\frac{c_1^2 \kappa}{\delta^3 \sqrt{T} \kappa}\right)^2 = \tildeO(\frac{1}{T})$. 
Hence, after summing over all $T$~four-cycles, we obtain that the algorithm detects $\Omega(1)$ four-cycles in expectation.
Observe that for this simple expectation analysis, we do not even require the sets $R_{\cdot,\kappa}$ or even different $\kappa$-values, but they will be crucial later to obtain our guarantees with high probability.

Second, let us bound the space usage. Note 
that the sets $E[S_{1,\kappa}, S_{2,\kappa}]$, $E[R_{1b,\kappa}, R_{2a,\kappa}]$ and $E[R_{2b,\kappa}, R_{1a,\kappa}]$ only contain edges for which one of its endpoints was sampled with probability~$p_{1,\kappa}$ and the other one was sampled with probability~$p_{2,\kappa}$. Thus, each edge is contained in this set with probability $2 p_{1,\kappa} p_{2,\kappa} = \tildeO(\frac{1}{\sqrt{T}})$, and therefore we obtain an 
expected space usage of $\tildeO(m/\sqrt{T})$.
Similarly, the set  $E[R_{2a,\kappa},R_{2b,\kappa}]$ contains each edge with probability $p_{2,\kappa}^2 = \tildeO(\frac{1}{\kappa^2}) \in \tildeO(\frac{1}{T^{1/2}})$ since $\kappa\in[T^{1/4},2T^{1/2}]$. Thus, this set also contains at most $\tildeO(m/\sqrt{T})$~edges in expectation.

This shows that the algorithm achieves its guarantees for detection in expectation. We sketch the extension to a four-cycle counting algorithm in \cref{sec:counting_sketch}.

\section{Proof Strategy and Technical Overview} 
\label{sec:strategy}
In this section, we give an overview of our technical contributions and outline several proof ideas.

\subsection{Detecting Four-Cycles Using Space \texorpdfstring{$\tilde{O}(m/\sqrt{T})$}{Õ(m/√T)} With High Probability}

While above we already argued that we find $\Omega(1)$ four-cycles in expectation, showing that this holds with high probability is substantially more difficult.

To bound the probability that our algorithm detects a four-cycle, we have to bound the variance of the random variable counting the number of four-cycles in the sample. As in previous works, the variance of this variable is high if a substructure appearing in many four-cycles is sampled with small probability, where a \emph{substructure} is either a node, an edge or a wedge. 
Thus, we would like to exclude all four-cycles containing substructures that appear in ``too many'' four-cycles.

Now, one of our key technical challenges for excluding four-cycles is the following. Given the algorithm above, each four-cycle~$A$ can be sampled in $\Theta(\log T)$ different ways: $A$ can be sampled for $\Theta(\log T)$~values of $\kappa$ and there are $\Theta(1)$~many different ways to assign the nodes of $A$ to the different sets we sampled. For our analysis, we will pick exactly one such way per four-cycle.

To select how a four-cycle should be sampled, we will intuitively consider two types of four-cycles, which correspond to the two sampling approaches that we are using simultaneously. 
Here, we use the notion of \emph{heaviness} $t(\substr)$ of a substructure~$\substr$, which counts the number of four-cycles containing $\substr$, that we defined in the preliminaries.
These two types of four-cycles are as follows:
\begin{enumerate}
    \item Consider a four-cycle~$A=(a,b,c,d)$ in which all edges~$e$ have heaviness $t(e)\leq \sqrt{T}$. Then we would like to sample $A$ in such a way that $a,c\in S_{1,\kappa}$ and $b,d\in S_{2,\kappa}$, as in this case we collected all edges of $A$ in $E[S_{1,\kappa},S_{2,\kappa}]$.
    This case corresponds to the second hard instance in \Cref{sec:challenges-node-sampling}.
    Crucially, observe that here the nodes $a,c\in S_{1,\kappa}$ are \emph{opposite} nodes in $A$.
    \item Next, consider a four-cycle~$A=(a,b,c,d)$ containing exactly one edge~$e=(a,b)$ with heaviness $t(e)\geq \sqrt{T}$.
    Then we want to detect~$A$ if $a\in R_{1a,\kappa}$, $b\in R_{1b,\kappa}$, $c\in R_{2a,\kappa}$ and $d\in R_{2b,\kappa}$.
    This corresponds to the third hard instance in \Cref{sec:challenges-node-sampling}.
    Crucially, observe that the nodes $a\in R_{1a,\kappa}$ and $b\in R_{1b,\kappa}$ are \emph{adjacent} nodes in~$A$.
\end{enumerate}

Note that above we did not consider the case of a four-cycle containing two or more heavy edges. This is on purpose since there are only a few such four-cycles and we can safely ignore them (as we will see below).

Next, given the two types of four-cycles and the sampling ``rules'' that we sketched, observe that they have the following in common: We want to sample both four-cycle types in such a way that they contain two nodes that were sampled with probability~$p_{1,\kappa}$ and two nodes that were sampled with probability~$p_{2,\kappa}$. 
The main difference between the two cases is whether the two nodes that were sampled with probability~$p_{1,\kappa}$ are opposite nodes (the first case) or if they are adjacent (the second case).
This will be the basis for our definition of \emph{configurations} below.
\bigskip

\paragraph{Configurations, labeled substructures, heaviness thresholds.} 
To formally argue how a four-cycle should be sampled, we introduce configurations. Concretely, given a four-cycle $A\in\squares$ and $\kappa\in \CI$, a \emph{configuration} is a tuple $(A,\kappa,x,y)\in \squares\times \CI\times V^2$ where $x,y\in V(A)$ and $x<y$. 

Note that for each four-cycle in $A$ there are $|\CI| \times \binom{4}{2} = \Theta(\log T)$ configurations. 
A key for our analysis is to \emph{assign} at most one configuration to each four-cycle. In this assignment step, the nodes $x$ and $y$ are picked based on the two sampling strategies above, where we either have $x,y\in S_{1,\kappa}$ or $x\in R_{1a,\kappa}$ and $y\in R_{1b,\kappa}$;
see \Cref{fig:configurations} for an illustration. 
Thus, a configuration intuitively encodes that the vertices of~$A$ are sampled in the correct sets by our algorithm.
Further, the value $\kappa$ has to be picked carefully based on the heaviness of the four-cycle's substructures (see below).

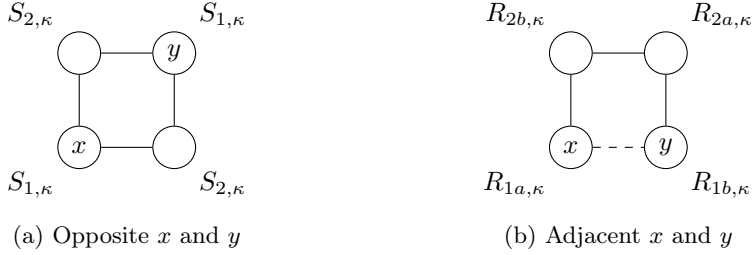
\begin{figure}
    \centering
    \tikzset{customnode/.style={circle, draw, minimum size=16pt, inner sep=0pt}}

\hspace{1cm}
\begin{subfigure}[t]{0.4\textwidth}
\centering
    \begin{tikzpicture}[scale=1.25]
        \node[customnode, label=below left:{$S_{1,\kappa}$}] (a) at (0,0) {$x$};
        \node[customnode, label=above left:{$S_{2,\kappa}$}] (b) at (0,1) {};
        \node[customnode, label=above right:{$S_{1,\kappa}$}] (c) at (1,1) {$y$};
        \node[customnode, label=below right:{$S_{2,\kappa}$}] (d) at (1,0) {};
        
        \draw (a) -- (b) -- (c) -- (d) -- (a);
    \end{tikzpicture}
    \caption{Opposite $x$ and $y$}
    \label{fig:opposite}
\end{subfigure}
\hfill
\begin{subfigure}[t]{0.4\textwidth}
\centering
    \begin{tikzpicture}[scale=1.25]
        \node[customnode, label=below left:{$R_{1a,\kappa}$}] (a) at (0,0) {$x$};
        \node[customnode, label=below right:{$R_{1b,\kappa}$}] (b) at (1,0) {$y$};
        \node[customnode, label=above right:{$R_{2a,\kappa}$}] (c) at (1,1) {};
        \node[customnode, label=above left:{$R_{2b,\kappa}$}] (d) at (0,1) {};
        
        \draw (b) -- (c) -- (d) -- (a);
        \draw[dashed] (a) -- (b);
    \end{tikzpicture}
    \caption{Adjacent $x$ and $y$}
    \label{fig:adjacent}
\end{subfigure}
\hspace{1cm}
    \caption{ Illustration of our assigned configurations $(A,\kappa,x,y)$.
    If $x$ and $y$ are opposite nodes in a four-cycle~$A$, we will only consider $A$ if $x,y\in S_{1,\kappa}$ and its other two nodes are in $S_{2,\kappa}$ \emph{and} it does not contain an edge~$e$ of heaviness $t(e)>\sqrt{T}$. Otherwise, if $x$ and $y$ are adjacent, we only consider $A$ if $x\in R_{1a,\kappa}$ and $y\in R_{1b,\kappa}$ \emph{and} it has exactly one edge~$e$ of heaviness $t(e)\geq \sqrt{T}$.
    The dashed line represents the heavy edge~$e$ and we note that our algorithm collects it in the second pass.
    }
    \label{fig:configurations}
\end{figure}

Given that our configurations encode how the vertices of a four-cycle should be sampled into different sets, we also have to take this into account when controlling the correlation between different configurations. Thus, we define \emph{labeled substructures $(\substr,\kappa,\ell)$} consisting of a substructure~$\substr$, a value $\kappa\in\CI$ and a labeling function~$\ell$ indicating the sets into which each of the vertices in $\substr$ may be sampled (see \Cref{def:label_substructure} for the formal definition). 
For instance, given an edge $e=(u,v)$ it will make a difference whether $u\in S_{1,\kappa}$ and $v\in S_{2,\kappa}$, or whether $u,v\in S_{1,\kappa}$, since in the first case $e$ was part of our sample with probability $p_{1,\kappa}p_{2,\kappa}$ and in the latter case with probability $p_{1,\kappa}^2$; the labelings allow us to distinguish such cases.
Given this notation, we observe that two configurations are dependent/correlated \emph{iff} the labels and $\kappa$-values of one of their substructures agree.

Furthermore, for each labeled substructure~$(\substr,\kappa,\ell)$ we define a \emph{heaviness threshold~$\theta_{(\substr,\kappa,\ell)}$}, which depends on the probability that $\substr$ is sampled with the given labeling~$\ell$. Formally, we will set $\theta_{(\substr,\kappa,\ell)} = \tildeO(T p_{(\substr,\kappa,\ell)})$ where $p_{(\substr,\kappa,\ell)}$ is the probability that $\substr$ is sampled for probabilities $p_{1,\kappa}$ and $p_{2,\kappa}$ with the labeling given by the labeling~$\ell$.
See \Cref{fig:heavinessthresholds} for an illustration.

We define a labeled substructure~$(\substr,\kappa,\ell)$ as \emph{heavy} if $t(\substr) > \theta_{(\substr,\kappa,\ell)}$, i.e., if it appears in more four-cycles than allowed by its heaviness threshold. A configuration is deemed \emph{heavy} if any of its labeled substructures is heavy; otherwise, it is \emph{light}. Furthermore, a four-cycle is called \emph{heavy} if all of its configurations are heavy; otherwise, it is \emph{light}.

We establish two key properties: 
\begin{enumerate}[label=(\alph*)]
    \item We \emph{assign} a light configuration to each light four-cycle, and let $\CCdetect$ denote the set of these assigned configurations. We show that configurations in $\CCdetect$ have low variance.
    \item The number of light four-cycles is at least $|\CCdetect| \geq (1 - \Theta(\delta \log T))T.$
\end{enumerate}

Above, Property~(a) is achieved by \emph{assigning a unique configuration $(A,\kappa_A,x_A,y_A)$} to each light four-cycle~$A$. We intuitively pick $\kappa_A\in\CI$ as the smallest $\kappa\in\CI$ for which~$A$ has a light configuration (technically, we will use a slightly different condition, see \Cref{eq:w_A}). Further, we choose vertices $x_A$ and $y_A$ based on the sampling rules described above (see also \Cref{fig:configurations}).
Then the lightness of the substructures and the choice of the heaviness thresholds allow us to establish the low variance, as sketched below.
Moreover, Property~(b) follows from a technical \emph{combination lemma} (\Cref{lemma:combination}), which shows that few four-cycles contain multiple heavy substructures.

Our analysis of the detection algorithm then focuses on identifying whether any configuration in $\mathcal{C}_\mathrm{detect}$ appears in the sampled subgraphs. Since 
$|\CCdetect| \geq (1 - \Theta(\delta \log T))T$, the expected number of such detected configurations is 
$\Omega(1)$ for sufficiently small $\delta$.
Moreover, because all configurations in $\CCdetect$ are light, their dependencies are weak, ensuring low variance. This guarantees that, with high probability, the algorithm detects a light configuration --- and thus a four-cycle.

We highlight that the introduction of configurations, labeled substructures, and sampling-dependent heaviness thresholds forms a key conceptual contribution of this work. In particular, we assign different thresholds to substructures based on how they are sampled.

\bigskip
\paragraph{Sketch of the variance analysis.}
Next, we provide a sketch of our variance analysis.
We aim to show that our algorithm detects a configuration from $\CCdetect$ with high probability. As pairs of configurations sharing labeled substructures are not independent, we need to carefully bound the covariance of each such pair. Let $X_c$ be the binary random variable indicating whether configuration $c$ is sampled.

For simplification, consider a single sampled configuration $c=(A,\kappa,x,y)\in\CCdetect$ and only one of its labeled substructures $(v,\kappa,\ell)$, where $v$ is a node sampled into $S_{1,\kappa}$.
Consider all four-cycles that share exactly the node~$v$ with $A$. Each such four-cycle corresponds to exactly one configuration~$b$ that contains $(v,\kappa,\ell)$ and we let $\CB$ denote the set of all such configurations~$b$.

Crucially, as $(v,\kappa,\ell)$ is light, there are at most $\theta_{(v,\kappa,\ell)} = \tildeO( Tp_{1,\kappa})$ such configurations $b$ and thus $|\CB|=\tildeO( Tp_{1,\kappa})$.
The probability that $b$ is sampled, given that $v$ is sampled, is $p_{1,\kappa}p_{2,\kappa}^2$. This allows us to bound the covariances as
\[\sum_{b\in \CB}\Cov{X_c,X_b}\leq\sum_{b\in \CB}\Expectation{X_b\mid X_c=1}\Prob{X_c=1}=\sum_{b\in \CB}p_{1,\kappa}p_{2,\kappa}^2\cdot p_{1,\kappa}^2p_{2,\kappa}^2 = \tildeO\left( T p_{1,\kappa}^4p_{2,\kappa}^4\right) =  \tildeO\left(\frac{1}{T}\right),\]
where in the penultimate step we used that $|\CB| = \tildeO( T p_{1,\kappa})$ by the argument above.

The same bound applies to all $O(1)$ substructures and all $\tildeO(T)$ configurations $c$, resulting in a total covariance summed over all configuration pairs of $\tildeO(1)$.
This is sufficient to show that a four-cycle will be detected with constant probability, which can be boosted to high probability, by running our algorithm $\log(n)$ times in parallel.

For our formal variance analysis, see \Cref{sec:variance}.

\subsection{Extension to Counting}
\label{sec:counting_sketch}
Next, we extend our detection algorithm to counting. Recall that we previously showed how to detect (i.e., find) $\Theta(1)$ four-cycles. To extend this to the counting setting, we use the following approach: for each of the $\Theta(1)$ detected four-cycles~$A$, check whether~$A$ was sampled with its assigned configuration~$(A,\kappa_A,x_A,y_A)$, and then scale the counts appropriately (see \Cref{algo:count} for the pseudocode of our approach). This enables us to filter out configurations that may lead to high variance while still obtaining a good estimate of the number of light four-cycles, which in turn serves as a good approximation of $T$.
However, efficiently verifying whether a sampled four-cycle conforms to its assigned configuration --- within $\tilde{O}(m/\sqrt{T})$ space --- presents us with significant technical challenges.
\smallskip

\emph{Challenge~1:}
Suppose a four-cycle~$A$ was sampled with a configuration~$c=(A,\kappa,x,y)$. To check whether $c=(A,\kappa_A,x_A,y_A)$, i.e., whether $A$ was sampled with its assigned configuration, we have to check (among others) that all labeled substructures~$\substr$ of~$c$ are light. Thus, we need to check $t(\substr) \leq \theta_{(\substr,w,\ell)}$ for all $\substr\in c$. However, we find that this cannot be done using space $\tildeO(m/\sqrt{T})$.

To resolve this issue, we introduce a notion of \emph{refined heaviness}~$t(\substr,\kappa,\ell)$ that replaces~$t(\substr)$. Here, recall that $t(\substr)$ counts all four-cycles containing the substructure~$\substr$, which is completely independent of our sampling algorithm. In contrast, $t(\substr,w,\ell)$ takes the sampling probabilities and labelings of substructures into account and ensures that we only count configurations towards $t(\substr,\kappa,\ell)$ in which important substructures are light. For instance, for labeled nodes~$(\substr,\kappa,\ell)$ we define:
 \begin{align}
 \label{eq:refined-heaviness-nodes}
 \begin{split}
    t(\substr,\kappa,\ell)=|\{c\in \FCvalid\mid (\substr,\kappa,\ell)\in c,\; t(\wedge,\kappa,\ell')&\leq s_1\theta_{(\wedge,\kappa,\ell')} \text{ for all wedges } (\wedge,\kappa,\ell')\in c,\\\quad t(e,\kappa,\ell')&\leq s_2\theta_{(e,\kappa,\ell')} \text{ for all edges } (e,\kappa,\ell')\in c\}|,
\end{split}
\end{align}
where $\FCvalid$ is the set of so-called valid configurations (defined in \Cref{sec:counting}) and $s_1$ and $s_2$ are some random shifts required for technical reasons.

Now we define labeled substructures as \emph{heavy} if $t(\substr,\kappa,\ell) > s \theta_{(\substr,\kappa,\ell)}$, where the heaviness threshold $\theta_{(\substr,\kappa,\ell)}$ is the same as before and again $s$ is some random shift for technical reasons. The main advantage of this definition is that it allows us
to check whether
$t(\substr,\kappa,\ell) \leq s\theta_{(\substr,\kappa,\ell)}$ using space just $\tildeO(m/\sqrt{T})$.
However, we note that now we might have that $t(\substr,\kappa,\ell)\ll t(\substr)$ since we only count four-cycles in the definition of $t(\substr,\kappa,\ell)$ with certain good properties. Despite this relaxation, we show that at most $O(\varepsilon T)$ four-cycles are lost and that the variance remains bounded. 
\smallskip

\emph{Challenge~2:}
Next, to check whether $c=(A,\kappa,x,y)$ satisfies $c=(A,\kappa_A,x_A,y_A)$, we also have to check that $\kappa = \kappa_A$. However, whether a substructure is heavy is no longer a monotone property with respect to the parameter~$\kappa$, causing significant challenges.

To understand this, let us briefly reconsider our detection algorithm. There, a substructure is heavy if $t(\substr)>\theta_{(\substr,\kappa,\ell)}$ (i.e., using our old heaviness notion $t(\substr)$). This is a very convenient definition because $t(\substr)$ is fixed and only $\theta_{(\substr,\kappa,\ell)}$ depends on $\kappa$. Since $\theta_{(\substr,\kappa,\ell)}$ is a monotone function in~$\kappa$, whether a substructure is heavy is a monotone property in~$\kappa$. Now, 
since we picked $\kappa_A$ as the smallest $\kappa\in\CI$ such that $A$ has a light configuration,
this suggests the following natural approach to check whether $\kappa=\kappa_A$: If $(A,\kappa,x,y)$ is light and $(A,\kappa/2,x,y)$ is heavy, then return that $\kappa=\kappa_A$, otherwise return $\kappa\neq\kappa_A$.

Unfortunately, for our new notion of refined heaviness $t(\substr,\kappa,\ell)$, this approach no longer works. Now, a substructure is heavy if $t(\substr,\kappa,\ell) > \theta_{(\substr,\kappa,\ell)}$ --- where both sides depend on~$\kappa$. Indeed, note that in \Cref{eq:refined-heaviness-nodes} the configurations that we are counting heavily depend on the choice of $\kappa$. Now, we can construct examples (see \Cref{fig:nonmonotone}) where the property $t(\substr,\kappa,\ell)>\theta_{(\substr,\kappa,\ell)}$ is no longer monotone in $\kappa$.
Thus, our simple approach for checking $\kappa=\kappa_A$ from before (for $t(\substr)$) no longer works. Further, we also cannot check this property for all $\kappa'\in\CI$ with $\kappa'<\kappa$ as this will require too much space if $\kappa'\ll\kappa$. 

We resolve these issues, by only considering configurations $c=(A,\kappa,x,y)$ which are \emph{light and locally minimial}, i.e., they satisfy that all $(A,\kappa',x,y)$ with $\varepsilon^2\kappa\leq \kappa' < \kappa$ are heavy and that $c$ is light.
This can be checked efficiently within our desired space $\tildeO(m/\sqrt{T})$ since we only have to check $\kappa'$-values in a small, local neighborhood around $\kappa$.
However, the problem is that with this definition some four-cycles may now have \emph{multiple} light and locally minimal configurations, i.e., we introduce some double-counting. 
Fortunately, we manage to show that this is only the case for at most $O(\varepsilon T)$~four-cycles, which still allows us to obtain our desired approximation guarantees.
\smallskip

\emph{Challenge~3:}
The third challenge arises from our refined heaviness definition~$t(\substr,\kappa,\ell)$.
To decide whether $t(\substr,\kappa,\ell)\leq \theta_{(\substr,\kappa,\ell)}$ we have to estimate $t(\substr,\kappa,\ell)$.  However, in the estimation, we have to ensure that we only count configurations~$c'$ such that certain labeled substructures in $c'$ are light. 
For instance, if $\substr$ is a node, then for each sampled configuration~$c'$ we must ensure that its wedges and edges are light (as per \Cref{eq:refined-heaviness-nodes}). The latter can only be done by again checking the heaviness/lightness of these substructures. Thus, we have to build multiple so-called \emph{oracles} for checking the heaviness/lightness of configurations, nodes, edges and wedges that call each other. For instance, the node oracle will call the wedge oracle and the edge oracle; the edge oracle itself will also call another wedge oracle. However, we make sure that we only perform $\tildeO(1)$~oracle calls in total, which is crucial to obtain our space bound.

\section{Four-Cycle Detection}
\label{sec:detection}
In this section, we present an algorithm for the detection version of our problem, i.e., the algorithm decides whether a graph has $0$~four-cycles or at least $T$~four-cycles. It performs two passes over the stream and uses space $\tildeO(m/\sqrt{T})$, and it is the basis for our counting algorithm below.
We summarize its guarantees in the following theorem.

\begin{theorem}
\label{thm:detect}
    There exists a 2-pass algorithm which obtains a graph~$G$ as an arbitrary-order edge stream and correctly decides whether $G$ has 0 or at least $T$ four-cycles using  $\tilde{O}(m/\sqrt{T})$ space with high probability.
\end{theorem}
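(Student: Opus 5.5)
We analyse the algorithm of \Cref{sec:techoveralgorithm} (\Cref{algo:detect}) and verify three properties: one-sided correctness, the space bound, and detection with constant probability when $G$ has at least $T$ four-cycles. Running $O(\log n)$ independent copies and taking the OR then gives the high-probability guarantee.

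\textbf{Correctness and space.} The algorithm only reports a four-cycle it has actually seen among stored edges, so on graphs with $0$ four-cycles it always answers correctly. For space, each edge lies in $E[S_{1,\kappa},S_{2,\kappa}]$, $E[R_{1b,\kappa},R_{2a,\kappa}]$ or $E[R_{2b,\kappa},R_{1a,\kappa}]$ with probability $O(p_{1,\kappa}p_{2,\kappa})=\tildeO(1/\sqrt T)$. It lies in $E[R_{2a,\kappa},R_{2b,\kappa}]$ with probability $p_{2,\kappa}^2=\tildeO(1/\kappa^2)\le\tildeO(1/\sqrt T)$, since $\kappa\ge T^{1/4}$. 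Summing over the $O(\log T)$ values $\kappa\in\CI$ bounds the expected first-pass space by $\tildeO(m/\sqrt T)$.

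The edges stored in the second pass need a separate bound. An edge $(u,v)$ is stored only if it closes a cycle with a 3-path through $R_{1a},R_{2b},R_{2a},R_{1b}$. Its storage probability is therefore at most $p_{1,\kappa}^2\,\min(1,p_{2,\kappa}^2 t(u,v))$, which I would sum against $\sum_e t(e)=4T$. If this bound turns out too loose, I would instead cap storage at a fixed $\tildeO(m/\sqrt T)$ budget and abort on overflow. Markov's inequality then makes the overflow probability a small constant, and the boosting step absorbs it.

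\textbf{Detection probability.} For each configuration $c$ let $X_c$ indicate that $c$ is sampled, and set $X=\sum_{c\in\CCdetect}X_c$. Any sampled configuration in $\CCdetect$ forces the algorithm to see a four-cycle. The plan is as follows.
\begin{enumerate}
\item Define the heaviness thresholds $\theta_{(\substr,\kappa,\ell)}=\tildeO(Tp_{(\substr,\kappa,\ell)})$ and assign to every light four-cycle $A$ one light configuration $(A,\kappa_A,x_A,y_A)$. The nodes $x_A,y_A$ are opposite if no edge of $A$ has $t(e)>\sqrt T$, and are the endpoints of the heavy edge otherwise.
\item Show $|\CCdetect|\ge(1-\Theta(\delta\log T))T$. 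A four-cycle is lost only if every choice of $\kappa$ makes some labeled substructure heavy. Charging these cycles to heavy substructures and summing (the combination lemma) shows that only few cycles have multiple heavy substructures. Edges with $t(e)\ge\sqrt T$ number at most $4\sqrt T$, which limits the cycles containing two such edges.
\item Each assigned configuration is sampled with probability at least $p_{1,\kappa}^2p_{2,\kappa}^2=c_1^4/(\delta^6T)$, so $\IE[X]\ge c_1^4/(2\delta^6)$, a large constant.
\item Bound $\IV[X]\le\IE[X]+\sum_{c\ne b}\Cov{X_c,X_b}$. Covariance is nonzero only when $c$ and $b$ share a labeled substructure with equal $\kappa$ and labels. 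For a shared labeled substructure $(\substr,\kappa,\ell)$ we have $\Prob{X_b=1\mid X_c=1}\le p_b/p_{(\substr,\kappa,\ell)}$. Lightness gives at most $\theta_{(\substr,\kappa,\ell)}$ such $b$, so each covariance sum is $\tildeO(Tp_b p_c)$. Summing over the $O(1)$ substructure types and the $\tildeO(T)$ configurations $c$ gives $\tildeO(1)$; more precisely, it should be $O(\delta^{-O(1)})$ times a constant independent of $c_1$.
\item Apply Chebyshev: choosing $c_1$ large makes $\IV[X]\le\IE[X]^2/4$, so $\Prob{X=0}\le 1/4$.
\end{enumerate}

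\textbf{Main obstacle.} The hardest step is the covariance bound in step 4, specifically controlling the polylog factors. Configurations $b$ can use a different $\kappa'$ than $c$ while sharing nodes, and in that case the sampling events are independent only because the sets are sampled independently per $\kappa$. This independence has to be made precise. The thresholds must also be tuned so that the ratio $\IV[X]/\IE[X]^2$ really decreases with $c_1$ while the loss $\Theta(\delta\log T)T$ in step 2 stays below $T/2$. That balance is exactly why $\delta=1/(2098\log T)$. I would first verify the node case explicitly as in the sketch, then handle edges and wedges by the same template with $p_{(\substr,\kappa,\ell)}$ equal to the product of the endpoint probabilities.
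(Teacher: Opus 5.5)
Your overall architecture matches the paper's. You use the same algorithm, $X=\sum_{c\in\CCdetect}X_c$ with at most one configuration per four-cycle, Chebyshev, then Markov and $O(\log n)$ repetitions. Two side remarks on space. Your pass-2 bound needs no cap, since the expected number of stored edges is $O(Tp_{1,\kappa}^2p_{2,\kappa}^2)=O(c_1^4/\delta^6)$ per $\kappa$. The cap-and-abort is needed on the \emph{total} space, and the vertex sets must be stored implicitly by hashing, because $p_{1,\kappa}$ reaches $1$ for large $\kappa$.

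The real gap is step~2, where essentially all the work lies. Heaviness of a labeled substructure depends on $\kappa$ and on its label, so ``few cycles have multiple heavy substructures'' implies nothing until you fix \emph{which} configuration you test. The paper takes $\kappa_A$ to be the rounded-up maximum in \eqref{eq:w_A}. It gives the two $p_1$-labels to the endpoints of the $\sqrt T/\delta^2$-heavy edge, or of the maximizing wedge, or to the maximizing node and its opposite. This choice matters: for $\kappa_A>T^{1/4}$, a maximizing node with an $S_2$ label would already be heavy. With this rule, the $p_1$-labelled substructures are light by the choice of $\kappa_A$, and the maximizer is within a constant factor of its threshold because of the rounding. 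Each remaining labeled substructure, if heavy, forms together with the maximizer (or the $\sqrt T/\delta^2$-heavy edge) a combination excluded by \Cref{lemma:no2heavyedges}, \Cref{lemma:heavyedge_cant_be_in_onions} or \Cref{lemma:combination}. The combination lemma works because the thresholds of the two labels at the same $\kappa$ are complementary. A pair therefore lies in at most $16T^2r/(\theta_1\theta_2)\le 32\delta T$ cycles per case and $\kappa$, where $r$ bounds the cycles per pair; \Cref{thm:expectation} checks this case by case.

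Your one concrete argument in step~2 also fails. At most $4\sqrt T$ edges with $t(e)\ge\sqrt T$ still allows $\Theta(T)$ pairs. Moreover, two heavy edges $(a,b),(b,c)$ lie in $\ow(a,c)-1$ four-cycles, which can be large, so counting edges cannot work. You need a threshold carrying a $\delta^{-2}$ factor plus a Cauchy--Schwarz argument over onions using $\sum_{\ow(a,c)\ge 2}\ow(a,c)^2\le 8T$. This is done in \Cref{lemma:no2heavyedges} (from McGregor--Vorotnikova) and \Cref{lemma:heavyedge_cant_be_in_onions}. The latter case, a heavy edge together with a $1/\delta$-heavy wedge, is not covered by your pair-counting template at all.

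In step~4, thresholds $\theta_{(\substr,\kappa,\ell)}=\tildeO(Tp_{(\substr,\kappa,\ell)})$ are not enough. Take $(\substr,\kappa,\ell)=c\cap b$ to be the \emph{maximal} shared labeled substructure; for smaller shared ones your conditional bound is false. Then the covariance sum is at most $\sum_c p\sum_{(\substr,\kappa,\ell)\in c}\theta_{(\substr,\kappa,\ell)}\,p/p_{(\substr,\kappa,\ell)}$. If $\theta=\Theta(Tp_{(\substr,\kappa,\ell)})$, this is of order $|\CCdetect|Tp^2$, i.e.\ of order $\IE[X]^2$, and Chebyshev gives nothing. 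The thresholds must lie a factor $c_1$ below $Tp_{(\substr,\kappa,\ell)}$ (\Cref{obs:relationship-theta-t-p}), which gives $14|\CCdetect|Tp^2/c_1$. In particular, the covariance is not independent of $c_1$ as you claim. It grows like $c_1^7$ while $\IE[X]^2$ grows like $c_1^8$, and only their ratio is $O(1/c_1)$. Conversely, the cross-$\kappa$ dependence you single out as the main obstacle is immediate, because the sets are sampled independently for each $\kappa$ and each label (\Cref{observ:dependent_configurations}).
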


In the following, we describe the detection algorithm in \Cref{sec:detection-algorithm-description}. In \Cref{sec:configurations}, we formalize the notions of configurations, their labelings, and labeled substructures. Next, in \Cref{sec:heaviness}, we define the concept of heaviness for (labeled) substructures, configurations, and four-cycles, and show how to assign each light four-cycle to a unique configuration. We establish a lower bound on the number of light four-cycles in \Cref{sec:bound-light-squares}, and analyze the variance of the random variable counting four-cycles in \Cref{sec:variance}. Finally, we complete the proof of \Cref{thm:detect} in \Cref{sec:proofofthmdetect}.

\subsection{Algorithm Description}
\label{sec:detection-algorithm-description}

We start by defining our detection algorithm and we summarize its pseudocode in
\Cref{algo:detect}. 

As we described in \Cref{sec:techoveralgorithm}, the algorithm uses a set of parameters $\CI=\{2^k T^{\frac{1}{4}}\;|\; k=0,\dots,\lceil\log(T^{\frac{1}{4}})\rceil)\}$, and 
iterates over all $\kappa\in\CI$. Then for each $\kappa$ it proceeds as follows.

Initially, the algorithm defines probabilities 
\begin{align*}
    p_{1,\kappa}=\frac{c_1\kappa}{\delta^{1.5}\sqrt{T}}
    \quad
    \text{ and }
    \quad
    p_{2,\kappa}=\frac{c_1}{\delta^{1.5} \kappa},
\end{align*}
where $\delta=\frac{1}{2098\log(T)}$ and $c_1$ is some constant to be set later. Note that both of these probabilities crucially depend on the parameter~$\kappa$.

Next, before the first stream pass, for all $\kappa\in\CI$, the algorithm samples vertex sets $S_{1,\kappa}$, $R_{1a,\kappa}$ and $R_{1b,\kappa}$
by sampling each node independently and uniformly at random with probability $p_{1,\kappa}$, and it further samples vertex sets $S_{2,\kappa}$, $R_{2a,\kappa}$ and $R_{2b,\kappa}$ by sampling each vertex independently and uniformly at random with probability~$p_{2,\kappa}$.

In a first pass over the stream, the algorithm collects all edges in $E[S_{1,\kappa},S_{2,\kappa}]$, as well as all edges in $E[R_{1b,\kappa},R_{2a,\kappa}]$, $E[R_{2a,\kappa},R_{2b,\kappa}]$ and $E[R_{2b,\kappa},R_{1a,\kappa}]$ for all $\kappa\in\CI$.

In a second pass over the stream, upon arrival of an edge $e=(u,v)$, the algorithm checks whether there exists a four-cycle $A=(u,v,a,b)$ such that $u\in R_{1a,\kappa}$, $v \in R_{1b,\kappa}$, $a\in R_{2a,\kappa}$ and $b\in R_{2b,\kappa}$ for any $w\in\CI$. Note that we can check whether this four-cycle exists since we currently see $(u,v)$ in the stream and the edges $(v,a)$, $(a,b)$ and $(b,u)$ were collected during the first stream pass since they are contained in the sets $E[R_{1b,\kappa},R_{2a,\kappa}]$, $E[R_{2a,\kappa}, R_{2b,\kappa}]$ and $E[R_{2b,\kappa}, R_{1a,\kappa}]$, respectively. If the algorithm identifies such a four-cycle~$A$, it stores the edge $(u,v)$.

After both stream passes finished, the algorithm returns whether the set of all stored edges contains a four-cycle.

\begin{algorithm}[t]
\caption{Node Sampling for Detection}
\label{algo:detect}
\begin{algorithmic}[1]
\State $\CI = \{ 2^k T^{1/4} \mid k = 0, \dots, \lceil \log(T^{1/4}) \rceil \}$
\For{$\kappa \in \CI$}
    \State $p_{1,\kappa} \gets \frac{c_1 \kappa}{\delta^{1.5} \sqrt{T}}$
    \State $p_{2,\kappa} \gets \frac{c_1}{\delta^{1.5} \kappa}$
    \State Sample $S_{1,\kappa}$, $R_{1a,\kappa}$, $R_{1b,\kappa} \subset V$ by sampling each node independently and u.a.r with probability $p_{1,\kappa}$
    \State Sample $S_{2,\kappa}$, $R_{2a,\kappa}$, $R_{2b,\kappa} \subset V$ by sampling each node independently and u.a.r with probability $p_{2,\kappa}$
\EndFor
\State \textbf{Pass 1:} Collect edges in 
$E[S_{1,\kappa}, S_{2,\kappa}] \cup E[R_{1b,\kappa}, R_{2a,\kappa}] \cup E[R_{2a,\kappa}, R_{2b,\kappa}] \cup E[R_{2b,\kappa}, R_{1a,\kappa}]$ 
for all $\kappa \in \CI$
\State \textbf{Pass 2:} Collect all edges $(u,v) \in E[R_{1a,\kappa}, R_{1b,\kappa}]$ completing a four-cycle $(u, v, a, b)$ such that 
\Statex \hspace{0.5em} $u\in R_{1a,\kappa}$, $v\in R_{1b,\kappa}$, $a\in R_{2a,\kappa}$, $b\in R_{2b,\kappa}$
\State \Return whether any four-cycle exists in the collected subgraph
\end{algorithmic}
\end{algorithm}

\subsection{Configurations and Labeled Substructures}
\label{sec:configurations}

As we already discussed above, each four-cycle~$A$ can be detected by our algorithm in different ways: it can be sampled for each value $\kappa\in\CI$, and it can also be detected in multiple ways based on how its nodes were sampled into the sets $S_{1,\kappa}$, $S_{2,\kappa}$, $R_{1a,\kappa}$, $R_{1b,\kappa}$, $R_{2a,\kappa}$ and $R_{2b,\kappa}$. For this reason, we now introduce \emph{configurations} and \emph{labeled substructures}, which let us reason about how the algorithm detects a four-cycle~$A$.

\smallskip
\paragraph{Configurations.} We start by defining configurations.

\begin{definition}
\label{def:config}
    Given a four-cycle $A\in\squares$ and $\kappa\in \CI$, a \emph{configuration} is a tuple $(A,\kappa,x,y)\in \squares\times \CI\times V^2$ where $x,y\in V(A)$ and $x<y$. 
\end{definition}

Note that there are $|\squares|\times |\CI|\times \binom{4}{2} = \Theta(T \log(T))$ configurations in total. 

Further, observe that the definition of a configuration is completely deterministic and currently completely independent from our sampling. To connect this definition to our sampling, we are interested in whether certain well-behaved configurations are \emph{realized} by our random sample. Thus whether a configuration is realized is a random event and, in a nutshell, all realized configuration satisfy that either $x,y\in S_{1,\kappa}$ or that $x\in R_{1a,\kappa}$ and $y\in R_{1b,\kappa}$, i.e., they contain two nodes that were sampled with probability $p_{1,\kappa}$.

More specifically, consider a four-cycle $A$ with node set $V(A) = \{x,y,u,v\}$ and a configuration $c=(A,\kappa,x,y)$. If $x$ and $y$ are opposite nodes of $A$, then we say that $c$ is \emph{realized} if
\begin{align}
\label{eq:realized-S}
    x,y\in S_{1,\kappa}\quad  \text{ and } \quad u,v \in S_{2,\kappa}.
\end{align}
Otherwise, if $x$ and $y$ are adjacent in $A$, $c$ is \emph{realized} if 
\begin{align}
\label{eq:realized-R}
    x\in R_{1a,\kappa},\quad y\in R_{1b,\kappa},\quad u\in R_{2a,\kappa} \quad \text{ and } \quad v\in R_{2b,\kappa},
\end{align}
where $v$ and $y$ are the neighbors of $x$ in $A$.

Observe that both of these cases are in line with what we discussed above: Realized configurations contain exactly two nodes that were sampled with probability $p_{1,\kappa}$ (namely $x$ and $y$) and two nodes that were sampled with probability $p_{2,\kappa}$ (namely $u$ and $v$). 
We illustrate this in \Cref{fig:configurations}.

Now observe that if a configuration $(A,\kappa,x,y)$ is realized, then it is much more likely that another configuration $(A',\kappa,x,y')$, which also contains node~$x$ and uses the same $\kappa$-value, is also realized. 
Thus, we say that two configurations $c$ and $c'$ are \emph{dependent} if the random events that $c$ and $c'$ are realized are dependent.
To argue about this dependence more precisely, we will now have to define labeled substructures and when they are realized.

\smallskip
\paragraph{Labelings of Configurations and Labeled Substructures.}
Now, before we define labeled substructures, we say that the \emph{labeling $\ell_c$ of a configuration} $c=(A,\kappa,x,y)$ is a mapping $\ell_c \colon V(A)\to\{S_1,S_2,R_{1a},R_{1b},R_{2a},R_{2b}\}$, which encodes in which sets the nodes $v\in V(A)$ must be contained such that $c$ is realized. 
Note that here we omit the subscript $\kappa$ of the sets on purpose, because we only want $S_1$, $S_2$, etc. to denote \emph{labels} and not concrete random sets.
For instance, if $x$ and $y$ are opposite nodes in $A=(x,u,y,v)$, then the labeling $\ell_c$ of $c=(A,\kappa,x,y)$ is given by  $\ell_c(x) = S_1$, $\ell_c(y) = S_1$, $\ell_c(u) = S_2$, $\ell_c(v) = S_2$.

Next, recall from the preliminaries that a \emph{substructure} is a node, edge or wedge of a four-cycle. Further, we will briefly also consider a pair of opposite nodes of a four-cycle as a substructure, but we will see below that we tacitly ignore it. For convenience, given a substructure $\substr$, we denote its set of nodes by $V(\substr)$.

We will call such mappings \emph{labelings} and we extend this idea to \emph{labeled substructures}.

\begin{definition}\label{def:label_substructure}
    A \emph{labeled substructure} $(\substr,\kappa,\ell)$ is a substructure $\substr$ together with a value $\kappa\in\CI$ and a labeling function $\ell: V(\substr) \rightarrow \{S_1,S_2,R_{1a},R_{1b},R_{2a},R_{2b}\}$.
\end{definition}

Again, observe that the definition of a labeled substructure is completely deterministic, and we stress again that here $S_1,S_2,R_{1a},R_{1b},R_{2a},R_{2b}$ are \emph{labels}, rather than (random) sets.

Now we say that a labeled substructure $(\substr,\kappa,\ell)$ is \emph{realized} if each node $v\in V(\substr)$ is sampled into the node set defined by $\kappa$ and the labeling $\ell(v)$. 

For instance, if $\substr$ consists of a single node $v$, then the labeled substructure $(\{v\},\kappa,v\mapsto S_1)$ is realized if $v\in S_{1,\kappa}$. As another example, consider a labeled substructure $(e,\kappa,\ell_e)$, where $e=(u,v)$ and $\ell_e(u) = S_1$ and $\ell_e(v) = S_2$. Then this labeled substructure is realized if $u\in S_{1,\kappa}$ and $v\in S_{2,\kappa}$.

Now observe that whether a labeled substructure $(\substr,\kappa,\ell)$ is realized is a random event and we denote the probability of this event by $p_{(\substr,\kappa,\ell)}$. Note that
\begin{align*}
    p_{(\substr,\kappa,\ell)}=p_{1,\kappa}^i p_{2,\kappa}^j,
\end{align*} 
where $i=|\ell^{-1}(\{S_1,R_{1a},R_{1b}\})|$ is the number of nodes of $\substr$ with a label of a set into which nodes are sampled with probability $p_{1,\kappa}$, and $j=|\ell^{-1}(\{S_2,R_{2a},R_{2b}\})|$ is the number of nodes with a label of a set into which nodes are sampled with probability $p_{2,\kappa}$.

Now we can connect configurations and labeled substructures as follows. Given a configuration $c=(A,\kappa',x',y')$ and a labeled substructure $(\substr,\kappa,\ell)$ we write $(\substr,\kappa,\ell)\in c$ if $\kappa=\kappa'$ and $V(\substr)\subset V(A)$ and $\ell=\ell_c\big|_{V(\substr)}$, i.e., when restricting the domain of the labeling~$\ell_c$ of the configuration~$c$ to the vertices in $\substr$ then the two labelings must agree.
Observe that if $c$ is realized, then any labeled substructure $(\substr,\kappa,\ell)\in c$ is realized.

Note that each configuration~$c$ contains 14 labeled substructures: 4 nodes, 4 edges, 4 wedges, and 2 opposite node pairs, each labeled according to $\ell_c$.
We obtain the following observation, which characterizes when two configurations are dependent.

\begin{observation}
    Two configurations $c$ and $c'$ are dependent if and only if there exists a labeled substructure $(\substr,\kappa,\ell)$ with $(\substr,\kappa,\ell)\in c$ and $(\substr,\kappa,\ell)\in c'$.
    \label{observ:dependent_configurations}
\end{observation}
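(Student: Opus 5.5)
The realization event of a configuration $c$ is a conjunction of independent events, one per node. For each $v\in V(A)$ the event is $\{v\in \ell_c(v)_{\kappa}\}$, i.e., membership of $v$ in the random set with label $\ell_c(v)$ and parameter $\kappa$. All these random sets, taken over all labels and all $\kappa\in\CI$, are sampled independently node by node. So every indicator ``$v\in Z_{\kappa}$'' for a set $Z_\kappa$ is an independent Bernoulli variable with success probability in $(0,1)$. With this in mind, the realization indicator $X_c$ is the product of the independent Bernoulli variables $B_{v,\ell_c(v),\kappa}$ over $v\in V(A)$. Likewise, $X_{c'}$ is the product of $B_{v',\ell_{c'}(v'),\kappa'}$ over $v'\in V(A')$. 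The plan is to show that $X_c$ and $X_{c'}$ are dependent exactly when these two products share a common factor.

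\emph{Step 1 (shared factor $\Leftrightarrow$ shared labeled substructure).} Let $I$ be the set of index triples $(v,\ell_c(v),\kappa)$ that occur in both products. If $I\neq\emptyset$, then every triple in $I$ has $\kappa=\kappa'$, and the nodes appearing in $I$ form a set $U\subseteq V(A)\cap V(A')$ on which $\ell_c$ and $\ell_{c'}$ agree. Choose any single node $u\in U$. The labeled substructure $(\{u\},\kappa,u\mapsto\ell_c(u))$ then lies in both $c$ and $c'$. Conversely, suppose some $(\substr,\kappa,\ell)$ lies in both $c$ and $c'$. Then $\kappa=\kappa'$, and $\ell_c$ and $\ell_{c'}$ agree on $V(\substr)\neq\emptyset$. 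Hence $I\neq\emptyset$. Because every substructure contains a node, it suffices to reason at the level of single nodes.

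\emph{Step 2 (no shared factor $\Rightarrow$ independent).} If $I=\emptyset$, then $X_c$ and $X_{c'}$ are functions of disjoint families of independent Bernoulli variables, so they are independent.

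\emph{Step 3 (shared factor $\Rightarrow$ dependent).} Write $X_c=P\cdot Q$ and $X_{c'}=P\cdot Q'$. Here $P$ is the product of the shared factors, and $P$, $Q$, $Q'$ are mutually independent with $\IE[P]=q\in(0,1)$. Then
\[
\IE[X_cX_{c'}]=q\,\IE[Q]\,\IE[Q'] \quad\text{and}\quad \IE[X_c]\,\IE[X_{c'}]=q^2\,\IE[Q]\,\IE[Q'].
\]
Since $\IE[Q],\IE[Q']>0$ and $q<1$, the two quantities differ, so the configurations are dependent.

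One subtlety needs checking: a single node $v$ of $A$ must not appear with two different labels in the same configuration, and this holds because $\ell_c$ is a function. A second point concerns coincident sets. The sets with different labels, such as $S_{1,\kappa}$ and $R_{1a,\kappa}$, are sampled independently even though they have equal probabilities. So factors that agree on $v$ and $\kappa$ but differ in label are genuinely distinct independent variables, not the same event. Keeping this point straight is the only place where care is needed; the rest of the argument is routine.
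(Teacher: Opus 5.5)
Your proof is correct and is the reasoning the paper leaves implicit, since it states the observation without proof: each realization indicator is a product of independent Bernoulli factors indexed by (node, label, $\kappa$), and two such products are dependent exactly when they share a factor, which is the same as sharing a labeled node. One caveat is your premise that every sampling probability lies strictly in $(0,1)$, which the paper does not guarantee: $p_{1,\kappa}=c_1\kappa/(\delta^{1.5}\sqrt{T})$ exceeds $1$ once $\kappa\geq\delta^{1.5}\sqrt{T}/c_1$, which happens at the top of $\CI$, and configurations sharing only such probability-one labeled nodes are then independent. This breaks only your Step~3 (the ``if'' direction), while the variance analysis only ever uses the direction proved in your Step~2.
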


\subsection{Heaviness and Assigned Configurations}
\label{sec:heaviness}

To bound the variance of our estimator below, we must ensure that our sample does not contain substructures of too high heaviness compared to the probability with which they were sampled. Thus, below we categorize each labeled substructure $(\substr,\kappa,\ell)$ as \emph{heavy} or \emph{light} based on whether the heaviness $t(\substr)$ exceeds a heaviness threshold $\theta_{(\substr,\kappa,\ell)}=\widetilde{\Theta}(Tp_{(\substr,\kappa,\ell)})$ which depends on the probability with which this labeled substructure is realized.

First, recall from the preliminaries that the \emph{heaviness} $t(\substr)$ of a node, edge or wedge $\substr$ is defined as the number of four-cycles containing $\substr$. Next, the heaviness of an opposite node pair $\substr$ with $V(\substr)=\{u,v\}$ is the number of four-cycles containing the node as an opposite pair, i.e.,  $t(\substr)=\os(u,v)$ (note that this does not count four-cycles in which $u$ and $v$ are adjacent).

Now, the thresholds $\theta_{(\substr,\kappa,\ell)}$ for labeled substructures are given below. Note that in the case distinctions we use the sampling probabilities $p_{(\substr,\kappa,\ell)}$ of the labeled substructures to define the thresholds. This notation is relatively succinct and highlights the connection between thresholds and sampling probabilities, but we note that we could as well define the thresholds based on the labelings of $\substr$ (see below).

A labeled node $(v,\kappa,\ell)$ is \emph{heavy} if
\begin{align*}
    t(v)>
    \theta_{(v,\kappa,\ell)}
    :=
    \begin{cases}
    \kappa\sqrt{T}/\delta^{1.5} & \text{ if }\; p_{(v,\kappa,\ell)} = p_{1,\kappa}, \\
    T/(\kappa\delta^{1.5}) & \text{ if }\; p_{(v,\kappa,\ell)} = p_{2,\kappa}.
    \end{cases}
\end{align*}
A labeled edge $(e,\kappa,\ell)$ is \emph{heavy} if
\begin{align*}
    t(e)
    >\theta_{(e,\kappa,\ell)}
    :=
    \begin{cases}
    \kappa^2/\delta^{2} & \text{ if }\; p_{(e,\kappa,\ell)}=p_{1,\kappa}^2, \\
    \sqrt{T}/\delta^{2} & \text{ if }\; p_{(e,\kappa,\ell)}=p_{1,\kappa} p_{2,\kappa},\\
    T/(\kappa^2\delta^{2}) & \text{ if }\; p_{(e,\kappa,\ell)}=p_{2,\kappa}^2.
    \end{cases}
\end{align*}
A labeled wedge $(\wedge,\kappa,\ell)$ is \emph{heavy} if 
\begin{align*}
    t(\wedge)>\theta_{(\wedge,\kappa,\ell)}
    :=
    \begin{cases}
    \kappa/\delta & \text{ if }\; p_{(\wedge,\kappa,\ell)}=p_{1,\kappa}^2 p_{2,\kappa}, \\
    \sqrt{T}/(\kappa\delta) & \text{ if }\; p_{(\wedge,\kappa,\ell)}=p_{1,\kappa} p_{2,\kappa}^2.
    \end{cases}
\end{align*}  
A labeled opposite node pair $(\substr,\kappa,\ell)$ is \emph{heavy} if
\begin{align*}
    t(\substr)
    >\theta_{(\substr,\kappa,\ell)}
    :=
    \begin{cases}
    \kappa^2/\delta^{2} & \text{ if }\; p_{(\substr,\kappa,\ell)}=p_{1,\kappa}^2, \\
    \sqrt{T}/\delta^{2} & \text{ if }\; p_{(\substr,\kappa,\ell)}=p_{1,\kappa} p_{2,\kappa},\\
    T/(\kappa^2\delta^{2}) & \text{ if }\; p_{(\substr,\kappa,\ell)}=p_{2,\kappa}^2.
    \end{cases}
\end{align*}
Otherwise, the labeled substructure is called \emph{light}.

\begin{figure}[t]
    \centering
    \tikzset{customnode/.style={circle, draw, minimum size=20pt, inner sep=0pt}}

\definecolor{color1}{HTML}{1b9e77}
\definecolor{color2}{HTML}{7570b3}
\colorlet{w1}{color1}
\colorlet{w2}{color2}

\hspace{1cm}

\begin{subfigure}[t]{0.45\textwidth}
\centering
    \begin{tikzpicture}[scale=1.8]
    \footnotesize
        \node[customnode, label=below left:{$p_{1}$}] (a) at (0,0) {\scriptsize $S_1$};
        \node[customnode, label=above left:{$p_{2}$}] (b) at (0,1) {\scriptsize $S_2$};
        \node[customnode, label=above right:{$p_{1}$}] (c) at (1,1) {\scriptsize $S_1$};
        \node[customnode, label=below right:{$p_{2}$}] (d) at (1,0) {\scriptsize $S_2$};
        
        
      \draw[] (a) -- node[midway, left] {$p_{1}p_{2}$} (b);
      \draw (b) -- node[midway, above] {$p_{1}p_{2}$} (c);
      \draw (c) -- node[midway, right] {$p_{1}p_{2}$} (d);
      \draw (d) -- node[midway, below] {$p_{1}p_{2}$} (a);

      \draw[w1,very thick, rounded corners=6pt]  ($ (b) + (5pt, -3pt) $) -|  ($ (d) + (-3pt, 5pt) $);
      \draw[w2,very thick, rounded corners=6pt]  ($ (a) + (3pt, 5pt) $) |-  ($ (c) + (-5.5pt, -5pt) $);
      \node[] (w1) at (0.7,0.6){\textcolor{w1}{$p_{1}p_{2}^2$}};
      \node[] (w2) at (0.3,0.4){\textcolor{w2}{$p_{1}^2p_{2}$}};

    \end{tikzpicture}
    \caption{Sampling probabilities for opposite nodes in $S_1$.}
    \label{fig:opposite_prob}
\end{subfigure}
\hfill
\begin{subfigure}[t]{0.45\textwidth}
\centering
    \begin{tikzpicture}[scale=1.8]
    \footnotesize
        \node[customnode, label=below left:{$p_1$}] (a) at (0,0) {\scriptsize $R_{1a}$};
        \node[customnode, label=above left:{$p_2$}] (b) at (0,1) {\scriptsize $R_{2b}$};
        \node[customnode, label=above right:{$p_2$}] (c) at (1,1) {\scriptsize $R_{2a}$};
        \node[customnode, label=below right:{$p_1$}] (d) at (1,0) {\scriptsize $R_{1b}$};
        
        
      \draw[] (a) -- node[midway, left] {$p_1p_2$} (b);
      \draw (b) -- node[midway, above] {$p_2^2$} (c);
      \draw (c) -- node[midway, right] {$p_1p_2$} (d);
      \draw[dashed] (d) -- node[midway, below] {$p_1^2$} (a);

      \draw[w1,very thick, rounded corners=6pt]  ($ (b) + (5pt, -3pt) $) -|  ($ (d) + (-3pt, 5pt) $);
      \draw[w2,very thick, rounded corners=6pt]  ($ (a) + (3pt, 5pt) $) |-  ($ (c) + (-5.5pt, -5pt) $);
      \node[] (w1) at (0.7,0.6){\textcolor{w1}{$p_{1}p_{2}^2$}};
      \node[] (w2) at (0.3,0.4){\textcolor{w2}{$p_{1}p_{2}^2$}};
    \end{tikzpicture}
    \caption{Sampling probabilities for adjacent nodes in $R_{1a}$ and $R_{1b}$.}
    \label{fig:adjacent_prob}
\end{subfigure}
\hspace{1cm}

\vspace{0.5cm}

\hspace{1cm}

\begin{subfigure}[t]{0.45\textwidth}
\centering
    \begin{tikzpicture}[scale=1.8]
        \node[customnode, label=below left:{$\frac{\kappa\sqrt{T}}{\delta^{1.5}}$}] (a) at (0,0) {\tiny $S_{1,\kappa}$};
        \node[customnode, label=above left:{$\frac{T}{\kappa\delta^{1.5}}$}] (b) at (0,1) {\tiny $S_{2,\kappa}$};
        \node[customnode, label=above right:{$\frac{\kappa\sqrt{T}}{\delta^{1.5}}$}] (c) at (1,1) {\tiny $S_{1,\kappa}$};
        \node[customnode, label=below right:{$\frac{T}{\kappa\delta^{1.5}}$}] (d) at (1,0) {\tiny $S_{2,\kappa}$};
        
        
      \draw[] (a) -- node[midway, left] {$\frac{\sqrt{T}}{\delta^2}$} (b);
      \draw (b) -- node[midway, above] {$\frac{\sqrt{T}}{\delta^2}$} (c);
      \draw (c) -- node[midway, right] {$\frac{\sqrt{T}}{\delta^2}$} (d);
      \draw (d) -- node[midway, below] {$\frac{\sqrt{T}}{\delta^2}$} (a);

      \draw[w1,very thick, rounded corners=6pt]  ($ (b) + (5pt, -3pt) $) -|  ($ (d) + (-3pt, 5pt) $);
      \draw[w2,very thick, rounded corners=6pt]  ($ (a) + (3pt, 5pt) $) |-  ($ (c) + (-5.5pt, -5pt) $);
      
      \node[] (w1) at (0.71,0.5){\textcolor{w1}{$\frac{\sqrt{T}}{\kappa\delta}$}};
      \node[] (w2) at (0.2,0.5){\textcolor{w2}{$\frac{\kappa}{\delta}$}};

    \end{tikzpicture}
    \caption{Heaviness thresholds for opposite nodes in $S_{1,\kappa}$}
    \label{fig:opposite_theta}
\end{subfigure}
\hfill
\begin{subfigure}[t]{0.45\textwidth}
\centering
    \begin{tikzpicture}[scale=1.8]
        \node[customnode, label=below left:{$\frac{\kappa\sqrt{T}}{\delta^{1.5}}$}] (a) at (0,0) {\tiny $R_{1a,\kappa}$};
        \node[customnode, label=above left:{$\frac{T}{\kappa\delta^{1.5}}$}] (b) at (0,1) {\tiny $R_{2b,\kappa}$};
        \node[customnode, label=above right:{$\frac{T}{\kappa\delta^{1.5}}$}] (c) at (1,1) {\tiny $R_{2a,\kappa}$};
        \node[customnode, label=below right:{$\frac{\kappa\sqrt{T}}{\delta^{1.5}}$}] (d) at (1,0) {\tiny $R_{1b,\kappa}$};
        
        
      \draw[] (a) -- node[midway, left] {$\frac{\sqrt{T}}{\delta^2}$} (b);
      \draw (b) -- node[midway, above] {$\frac{T}{\kappa^2\delta^2}$} (c);
      \draw (c) -- node[midway, right] {$\frac{\sqrt{T}}{\delta^2}$} (d);
      \draw[dashed] (d) -- node[midway, below] {$\frac{\kappa^2}{\delta^2}$} (a);

      \draw[w1,very thick, rounded corners=6pt]  ($ (b) + (5pt, -3pt) $) -|  ($ (d) + (-3pt, 5pt) $);
      \draw[w2,very thick, rounded corners=6pt]  ($ (a) + (3pt, 5pt) $) |-  ($ (c) + (-5.5pt, -5pt) $);
      
      \node[] (w1) at (0.71,0.5){\textcolor{w1}{$\frac{\sqrt{T}}{\kappa\delta}$}};
      \node[] (w2) at (0.29,0.5){\textcolor{w2}{$\frac{\sqrt{T}}{\kappa\delta}$}};
    \end{tikzpicture}
    \caption{Heaviness thresholds for adjacent nodes in $R_{1a,\kappa}$ and $R_{1b,\kappa}$.}
    \label{fig:adjacent_theta}
\end{subfigure}
\hspace{1cm}
    \caption{Visualization of the probabilities $p_{(\substr,\kappa,\ell)}$ and the heaviness thresholds $\theta_{(\substr,\kappa,\ell)}$ for given labeled substructures $(\substr,\kappa,\ell)$. In the first two figures, we annotated each vertex~$v$ with its label~$\ell(v)$; for the probabilities, we omitted the dependency on~$\kappa$ for better readability. The third and fourth figure include the dependency on~$\kappa$ and list the heaviness thresholds for each node, each edge, and two of the four wedges.}
    \label{fig:heavinessthresholds}
\end{figure}
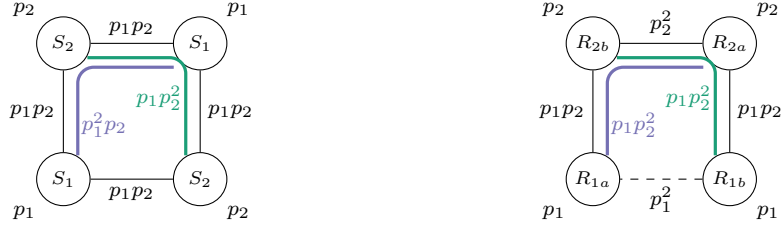
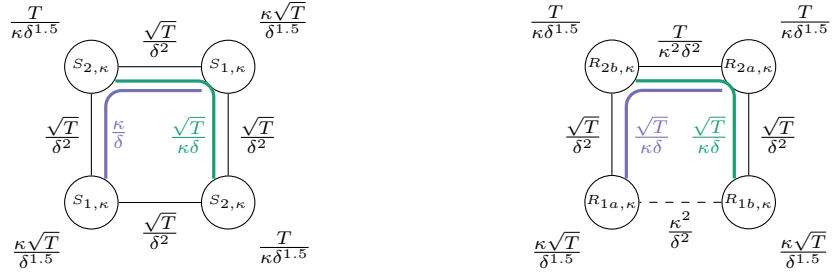

We provide a visualization of the heaviness thresholds in \cref{fig:heavinessthresholds}, and note that the thresholds for edges and for opposite node pairs are the same if they are realized with the same probabilities.

Given the definitions above, note that we could equivalently have defined the thresholds in terms of the labelings of the substructures. For instance, above we write $p_{(v,\kappa,\ell)} = p_{1,\kappa}$ which is equivalent to writing $\ell(v)\in\{S_1,R_{1a},R_{1b}\}$. As another example, consider an edge $e=(u,v)$. Above we write $p_{(e,\kappa,\ell)} = p_{1,\kappa}^2$ which is equivalent to writing $(\ell(u),\ell(v)) \in \{ S_1, R_{1a}, R_{1b} \}^2$, i.e., both endpoints of the edge are contained in a set into which nodes are sampled with probability $p_{1,\kappa}$.

We note that for each substructure, the thresholds are listed from high to low. For instance, for labeled nodes one can check that $\kappa\sqrt{T}/\delta^{1.5} \geq T/(\kappa\delta^{1.5})$ for all $\kappa\in\CI$, and for labeled edges one can check that $\kappa^2/\delta^2 \geq \sqrt{T}/\delta^2 \geq T/(\kappa^2\delta^2)$ for all $\kappa\in\CI$, and so on. This observation will be useful later in the proof of \Cref{thm:expectation}.

Next, we note that the thresholds are not set arbitrarily. They are all set such that they are in $\Theta(T p_{(\substr,\kappa,\ell)})$ but later it will be crucial that they are a large constant factor smaller than $T p_{(\substr,\kappa,\ell)}$.
\begin{observation}
\label{obs:relationship-theta-t-p}
    For each labeled substructure $(\substr,\kappa,\ell)$, the heaviness threshold is bounded by $\theta_{(\substr,\kappa,\ell)}\leq {Tp_{(\substr,\kappa,\ell)}}/{c_1}$ where $c_1 > 1$ is a constant.
\end{observation}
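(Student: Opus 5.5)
The plan is a direct case check. Every threshold in the definitions above has the form $\theta_{(\substr,\kappa,\ell)} = T\,p_{1,\kappa}^i p_{2,\kappa}^j \cdot (\text{factor})$, so it suffices to compute $Tp_{(\substr,\kappa,\ell)}$ for each case and compare.

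First, write $p_{1,\kappa}=c_1\kappa/(\delta^{1.5}\sqrt{T})$ and $p_{2,\kappa}=c_1/(\delta^{1.5}\kappa)$. Then record the basic products:
\begin{align*}
Tp_{1,\kappa} &= c_1\kappa\sqrt{T}/\delta^{1.5}, & Tp_{2,\kappa} &= c_1T/(\kappa\delta^{1.5}),\\
Tp_{1,\kappa}^2 &= c_1^2\kappa^2/\delta^{3}, & Tp_{1,\kappa}p_{2,\kappa} &= c_1^2\sqrt{T}/\delta^{3},\\
Tp_{2,\kappa}^2 &= c_1^2T/(\kappa^2\delta^{3}), & Tp_{1,\kappa}^2p_{2,\kappa} &= c_1^3\kappa/\delta^{4.5},\\
Tp_{1,\kappa}p_{2,\kappa}^2 &= c_1^3\sqrt{T}/(\kappa\delta^{4.5}). & &
\end{align*}

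Next, go through the four substructure types and compare each threshold with the matching product:
\begin{itemize}
\item \emph{Nodes.} The thresholds are exactly $Tp/c_1$.
\item \emph{Edges and opposite node pairs.} The thresholds equal $Tp\cdot \delta/c_1^2$.
\item \emph{Wedges.} The thresholds equal $Tp\cdot\delta^{3.5}/c_1^3$.
\end{itemize}
Since $\delta=1/(2098\log T)<1$ and $c_1>1$, each of these is at most $Tp_{(\substr,\kappa,\ell)}/c_1$.

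The only point needing care is that labelings with other probability patterns do not occur. For example, a node labeled $S_1$ versus $R_{1a}$ has the same probability. Any labeled substructure whose probability pattern falls outside the listed cases is never used, so it needs no threshold. I would note this explicitly, perhaps restricting the observation to labeled substructures contained in some configuration. There is no real obstacle: the argument is bookkeeping, and the constant $c_1>1$ is free to choose.
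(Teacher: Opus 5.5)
Your proposal is correct and is the same direct case-by-case verification the paper leaves implicit: each threshold equals $Tp_{(\substr,\kappa,\ell)}$ times $1/c_1$ (nodes), $\delta/c_1^2$ (edges and opposite node pairs), or $\delta^{3.5}/c_1^3$ (wedges), and each of these is at most $1/c_1$ because $\delta<1<c_1$. Your caveat is also apt: labeled wedges with probability $p_{1,\kappa}^3$ or $p_{2,\kappa}^3$ have no defined threshold and never occur inside a configuration, so the observation should be read as applying only to labeled substructures of configurations.
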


Using the substructure classification, we also define the heaviness for configurations and four-cycles:
\begin{itemize}
    \item A configuration is \emph{heavy} if any of its labeled substructures is heavy, and otherwise it is \emph{light}.
    \item A four-cycle is \emph{heavy} if all of its configurations are heavy, and otherwise it is \emph{light}.
\end{itemize}
Note that a light four-cycle can have $1$ to $6|\CI|$ light configurations. 
Thus, in such cases, we will select one unique light configuration per four-cycle.
\smallskip
\paragraph{Wedges Dominate Opposite Node Pairs.}
Next, we argue that 
the heaviness constraints on opposite node pairs are redundant. In the following, this will allow us to only consider nodes, edges and wedges as substructures, since we show that a configuration is already light if its labeled nodes, edges and wedges are light.

First, observe that the heaviness of an opposite node pair $\{u,v\}$ and a wedge $\wedge$ connecting it are closely related since $t(\{u,v\})=\os(u,v)=\binom{t(\wedge)+1}{2}$.
Second, for any configuration $c$ with light wedges, consider a labeled opposite node pair $(\substr,\kappa,\ell_\substr)$ with $V(\substr)=\{u',v'\}$ and a wedge $(\wedge,\kappa,\ell_\wedge)\in c$ with endpoints $u'$ and $v'$. As $(\wedge,\kappa,\ell_\wedge)$ is light we have 
$t(\wedge)<\theta_{(\wedge,\kappa,\ell_\wedge)}$ and thus 
\begin{align*}
    t(\substr)
    =\os(u',v')
    =\binom{t(\wedge)+1}{2}
    \leq \binom{\theta_{(\wedge,\kappa,\ell_{\wedge})}+1}{2} 
    \leq\theta_{(\wedge,\kappa,\ell_{\wedge})}^2.
\end{align*}
Now note that if $p_{(\substr,\kappa,\ell_\substr)}=p_{1,\kappa}^2$ or $p_{(\substr,\kappa,\ell_\substr)}=p_{2,\kappa}^2$ we have $\theta_{(\substr,\kappa,\ell_\substr)}=\theta_{(\wedge,\kappa,\ell_\wedge)}^2$ and thus the opposite node pair is light. If $p_{(\substr,\kappa,\ell_\substr)}=p_{1,\kappa}p_{2,\kappa}$ consider the unique wedge $(\wedge,\kappa,\ell_\wedge)\in c$ with endpoints $u',v'$ and $ p_{(\wedge,\kappa,\ell_\wedge)}=p_{1,\kappa}p_{2,\kappa}^2$. For this wedge we have $\theta_{(\wedge,\kappa,\ell_\substr)}^2\leq \theta_{(\substr,\kappa,\ell_\substr)}$ and thus $\substr$ is light.

We conclude that the lightness of wedges implies the lightness of opposite node pairs, and thus, for the rest of this section, we only consider nodes, edges and wedges as substructures.
\smallskip
\paragraph{Assigning Configurations to Four-Cycles.}
We now proceed by \emph{assigning} a unique configuration to each four-cycle. This is crucial to bound the variance of our estimator.
Below, in \cref{thm:expectation}, we show that nearly all four-cycles have a light configuration and thereby prove that nearly all four-cycles are light.

Specifically, the \emph{configuration assigned to a four-cycle}~$A$ is the configuration $\sigma_A:=(A,\kappa_A,x_A,y_A)$, where $\kappa_A$, $x_A$ and $y_A$ are set as described below.
We will prove that this configuration $\sigma_A$ is light for almost all four-cycles (see \Cref{thm:expectation}).

\emph{Choice of $\kappa_A$:} 
Intuitively, we want to choose $\kappa_A$ as the smallest $\kappa\in\CI$ over all light configurations of~$A$. However, for technical reasons, it will be convenient to work with the following definition, where we set
\begin{equation}
    \kappa_A=\Bigg\lceil \max_{\substack{v\in V(A)\\e\in E(A)\\ \wedge\in W(A)}}\left\{T^{\frac{1}{4}},\delta^{1.5} \frac{t(v)}{\sqrt{T}},\delta \sqrt{t(e)},\delta t(\wedge)\right\}\Bigg\rceil_\CI, \label{eq:w_A}
\end{equation}
where the maximum is taken over all nodes, edges or wedges in $A$ and then rounded up to the next value in $\CI$ (by at most a factor of 2).

Indeed, for this definition we note that picking a smaller value for~$\kappa$ directly violates the heaviness threshold for the substructure attaining the maximum above. Thus, the smallest value of $\kappa$ such that $A$ is light is at least $\kappa_A$, but could potentially be higher. This implies that there might be four-cycles which have a light configuration, but for which the assigned configuration $\sigma_A$ is not light. However, in \Cref{thm:expectation} we show that this is not a problem since with this choice of $\kappa_A$ and the choice of $x_A$ and $y_A$ below, a vast majority of the configurations $\sigma_A$ are light.

\emph{Choice of $x_A$ and $y_A$:} 
To finish the description of $\sigma_A$, we assign the nodes $x_A$ and $y_A$ for light four-cycles $A$ as follows:
\begin{itemize}
\item If $A$ contains a $\sqrt{T}/\delta^2$-heavy edge then 
we assign $x_A$ and $y_A$ to be the endpoints of this edge, i.e., in this case $x_A$ and $y_A$ are adjacent.
\item Otherwise, we assign an opposite node pair of $A$ based on the value $\kappa_A$ (where ties are broken arbitrarily) and the substructure $\substr$ attaining the maximum in \cref{eq:w_A} as follows:
    \begin{itemize}
        \item If $\kappa_A=T^{1/4}$ we assign the lexicographically smaller opposite node pair of $A$ as $x_A$ and $y_A$.
        \item If $\substr$ is a wedge, we assign its endpoints as $x_A$ and $y_A$.
        \item If $\substr$ is a node, we assign $\substr$ and its opposite node in $A$ as $x_A$ and $y_A$.
        \item Note that in this case $\substr$ cannot be an edge: Since for all edges in $A$, we have that $t(e) < \sqrt{T}/\delta^2$ and thus we get that $\delta\sqrt{t(e)} < T^{1/4}$, implying that none of these edges can attain the maximum in \Cref{eq:w_A}.
    \end{itemize}
\end{itemize}

Observe that this choice of $x_A$ and $y_A$ is exactly corresponding to which sampling ``rule'' should be used such that the algorithm detects~$A$: If the heaviest substructure in $A$ is a heavy edge, then we want to pick $x_A$ and $y_A$ as adjacent nodes, and in all other cases we want to pick $x_A$ and $y_A$ as opposite nodes in $A$.

\subsection{Bounding the Number of Light Four-Cycles}
\label{sec:bound-light-squares}

Based on the assignment rule $\sigma_A$ for assigning unique light configurations to light four-cycles, we can now prove the following proposition, showing that the vast majority of four-cycles are light.

\begin{proposition}
    \label{thm:expectation}
    Let $\delta \in (0, \min\{\frac{1}{2}, \frac{1}{1049\log(T)}\})$.
    For at least $(1-1049\delta\log(T))T$~four-cycles
    the configuration $\sigma_A$ is light.
\end{proposition}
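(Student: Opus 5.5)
We bound the number of \emph{bad} four-cycles, i.e., those $A$ for which $\sigma_A$ has a heavy labeled node, edge or wedge; by the discussion on opposite node pairs it suffices to consider these three types. The bad cycles split into two groups. The first group consists of four-cycles containing at least two $\sqrt{T}/\delta^2$-heavy edges; for these, $x_A,y_A$ is not well defined and the $p_{1,\kappa}p_{2,\kappa}$-thresholds on the remaining edges are violated. The second group consists of four-cycles for which $\sigma_A$ violates some threshold even though $\kappa_A$ was chosen large enough to dominate every substructure through \eqref{eq:w_A}. The goal is to show that each group contains at most $O(\delta\log T)\,T$ cycles. Together these give the claimed $1049\delta\log(T)T$.

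The first step is to check which thresholds are automatically satisfied by the choice of $\kappa_A$. By \eqref{eq:w_A} we have $\kappa_A\ge \delta^{1.5}t(v)/\sqrt{T}$, $\kappa_A\ge\delta\sqrt{t(e)}$ and $\kappa_A\ge \delta t(w)$. These give $t(v)\le \kappa_A\sqrt{T}/\delta^{1.5}$, $t(e)\le\kappa_A^2/\delta^2$ and $t(w)\le\kappa_A/\delta$. So the \emph{largest} threshold of each substructure type always holds, namely labels with probability $p_1$, $p_1^2$ and $p_1^2p_2$ respectively. Next I go through the labeling $\ell_{\sigma_A}$ of both configuration shapes (\Cref{fig:heavinessthresholds}) and list the remaining constraints:
\begin{itemize}
\item $S_2$- or $R_2$-labeled nodes: $t(v)\le T/(\kappa_A\delta^{1.5})$;
\item $p_1p_2$-edges: $t(e)\le\sqrt{T}/\delta^2$;
\item the $p_2^2$-edge in the adjacent case: $t(e)\le T/(\kappa_A^2\delta^2)$;
\item $p_1p_2^2$-wedges: $t(w)\le \sqrt{T}/(\kappa_A\delta)$.
\end{itemize}
The choice of $x_A,y_A$ places the substructure attaining the maximum in the $p_1$-role. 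Hence a violation of one of these constraints always means that $A$ contains \emph{two} substructures that are simultaneously heavy relative to complementary thresholds. Typical cases are: a node $v$ with $t(v)\gtrsim \kappa\sqrt{T}$ together with a node $u\neq v$ (not opposite, or the other $S_2$ node) with $t(u)\gtrsim T/\kappa$; a wedge with $t\gtrsim\kappa$ together with a complementary wedge with $t\gtrsim\sqrt{T}/\kappa$; or two edges each heavier than $\sqrt{T}/\delta^2$.

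The core step is a combination-lemma type counting bound for each such pair pattern, summed over the $O(\log T)$ dyadic scales of $\kappa$. For a fixed scale $\kappa$, I would charge each bad cycle to its heavier substructure. Take the node pattern as the model case. There are at most $4T\delta^{1.5}/(\kappa\sqrt{T})$ nodes $v$ with $t(v)\ge\kappa\sqrt{T}/\delta^{1.5}$, since $\sum_v t(v)=4T$. There are also at most $4T\kappa\delta^{1.5}/T$ nodes $u$ with $t(u)\ge T/(\kappa\delta^{1.5})$. A cycle through both $u$ and $v$ is determined by the pair $(u,v)$ together with at most the onion structure between them. To bound such cycles I would use the Cauchy–Schwarz style inequality that the number of four-cycles containing two fixed nodes is at most $\min(t(u),t(v))$ and also at most $\sqrt{t(u)t(v)}$-type quantities via $\os(u,v)$. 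The hope is that this gives a total of $O(\delta^{1.5}\cdot\mathrm{stuff})\cdot T$ per scale, i.e.\ $O(\delta T)$. Edge pairs are handled analogously: there are at most $4T\delta^2/\sqrt{T}$ edges that are $\sqrt{T}/\delta^2$-heavy, and two such edges lie in a common cycle only in constrained ways. Wedge pairs use $\sum_w t(w)=4T$ in the same manner. Summing over all scales gives $O(\delta\log T)T$, and tracking constants should yield $1049$.

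I expect the main obstacle to be exactly this pairwise counting. Naive counting of pairs of heavy nodes multiplied by cycles per pair loses polynomial factors. The difficulty is to bound the number of cycles containing two heavy substructures \emph{in a specific relative position} (adjacent versus opposite nodes, crossing versus complementary wedges) by $O(\delta)T$ per scale. I would first try to reduce every pattern to wedge and edge heaviness: a cycle containing $u$ and $v$ contains a wedge or edge joining them, so the number of such cycles can be controlled via wedge or edge heaviness bounded by $\kappa_A/\delta$ or $\kappa_A^2/\delta^2$. Then I would apply a double-counting argument weighted by $1/\theta$, in the style of Markov's inequality, over the heavy substructures.
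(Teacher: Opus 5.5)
Your architecture matches the paper's. The largest threshold of each type holds automatically by \eqref{eq:w_A}. Any other violation pairs the maximizing substructure (heavy up to the factor~$2$ lost in rounding) with a complementary heavy substructure, to be counted per scale and summed over the $O(\log T)$ values of $\kappa$.

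\textbf{Main gap: the per-pair multiplicity.} The decisive counting step is left as a hope, and the tools you name do not give $O(\delta T)$ per scale. Bounding the cycles per heavy pair by $\min(t(u),t(v))$, or by the automatic bounds $\kappa_A/\delta$ and $\kappa_A^2/\delta^2$, is too weak. Take a $\kappa\sqrt{T}/(2\delta^{1.5})$-heavy node $v$ adjacent to a $T/(\kappa\delta^{1.5})$-heavy node $y$; there are at most $32\delta^3\sqrt{T}$ such pairs.
\begin{itemize}
\item Multiplying by $t(v,y)\le\kappa^2/\delta^2$ gives $32\delta\kappa^2\sqrt{T}\gg\delta T$.
\item Multiplying by $t(v,y)\le\sqrt{T}/\delta^2$ gives $32\delta T$. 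This bound is available only because $(v,y)$ is a $p_1p_2$-edge that has \emph{already} been certified light.
\end{itemize}
Likewise, the opposite $p_1$/$p_2$ node pair needs $\os(v,y)\le\sqrt{T}/\delta^2$, which comes from the $p_1p_2^2$ wedges being light. The node/$p_2^2$-edge pattern needs $t(\wedge)\le\sqrt{T}/(\kappa\delta)$. So each pattern must carry a side condition on the connecting edge, wedge or onion. The case analysis must also be ordered so that this condition is established before it is used. This is exactly the content of \Cref{lemma:combination} and of the case-by-case proof, and it is missing from your plan.

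\textbf{Wedge--wedge patterns.} These do not follow ``in the same manner'' from $\sum_{\wedge} t(\wedge)=4T$. A $\kappa/(2\delta)$-heavy wedge and a $\sqrt{T}/(\kappa\delta)$-heavy wedge sharing one edge give up to $(8\delta T/\kappa)(4\delta\kappa\sqrt{T})=32\delta^2T^{3/2}$ pairs. The paper instead counts their endpoint pairs, using $\os\ge t(\wedge)^2/2$ and $\sum_{\{u,v\}}\os(u,v)=2T$, which recovers the missing factor $\sqrt{T}$.

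\textbf{Heavy edge plus moderately heavy wedge.} This case cannot be handled by pair counting at all. It arises when the maximum is a wedge and some $p_1p_2$-edge is heavy, and again when certifying the wedges in the adjacent case. For $\kappa$ near $T^{1/4}$ there are already $32\delta^3T^{3/2}/\kappa\gg T$ candidate pairs of a $\sqrt{T}/\delta^2$-heavy edge and a $\kappa/(2\delta)$-heavy wedge. The paper uses a separate Cauchy--Schwarz argument over onions (\Cref{lemma:heavyedge_cant_be_in_onions}). It bounds by $145\delta T$, independently of $\kappa$, the cycles containing a $\sqrt{T}/(4\delta^2)$-heavy edge and a $1/\delta$-heavy wedge.

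\textbf{Two heavy edges.} Saying such edges lie in a common cycle ``only in constrained ways'' is not an argument. For adjacent heavy edges the pair does not determine the cycle. The required bound is the McGregor--Vorotnikova result (\Cref{lemma:no2heavyedges}).
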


We note that these excluded four-cycles are also all heavy four-cycles, but we do not need this and thus also do not prove it.

To prove \cref{thm:expectation} we use the next three lemmas to exclude at most $(328+145+576)\delta\log(T)T$ four-cycles.

We start by restating a known result by \citet{mcgregor2020triangle}, showing that only a few four-cycles contain two or more heavy edges.
Recall the definition of $\theta$-heavy nodes, edges and wedges from \cref{sec:prelim}.
\begin{lemma}[\citet{mcgregor2020triangle}]
    \label{lemma:no2heavyedges}
    At most $328\delta^2 T$ four-cycles contain two or more $\frac{\sqrt{T}}{4\delta^2}$-heavy edges.
\end{lemma}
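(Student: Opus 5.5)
The plan is to charge every four-cycle with two heavy edges to one of its heavy edges and to bound, per heavy edge, how many such cycles it can support. Set $\theta=\sqrt{T}/(4\delta^2)$ and let $H$ be the set of $\theta$-heavy edges. Since each four-cycle contains exactly four edges, $\sum_{e} t(e)=4T$, so $|H|\le 4T/\theta = 16\delta^2\sqrt{T}$.

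A four-cycle with two or more heavy edges contains either two heavy edges sharing a node or two opposite, node-disjoint heavy edges. I treat the two cases separately.

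\emph{Disjoint pair.} Suppose $e=(a,b)$ and $f=(c,d)$ are opposite edges of a four-cycle $A=(a,b,c,d)$. Then $A$ is determined by the pair $\{e,f\}$ up to the two matchings between their endpoints. The number of such cycles is therefore at most $2\binom{|H|}{2}\le |H|^2 = 256\delta^4 T$, which is at most $\delta^2 T$ for small $\delta$.

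\emph{Adjacent pair.} Suppose $e=(a,b)$ and $f=(b,c)$ share the node $b$. Then $A$ contains the wedge $a$--$b$--$c$ and is completed by some $d\in N(a)\cap N(c)$. Fix a heavy edge $e=(a,b)$ and count the four-cycles through $e$ whose other edge at $b$, say $(b,c)$, is also heavy. There are at most $|H|$ choices of $c$. For each such $c$, the number of completions $d$ is at most $\ow(a,c)$.

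The main obstacle is bounding these completion counts, since $\ow(a,c)$ can be large. I would use the heaviness of $e$ itself. Every four-cycle through $e=(a,b)$ with fourth node $c$ and completion $d$ is one of the $t(e)$ cycles on $e$. So the cycles through $e$ split into heavy-$(b,c)$ ones and the rest, and the heavy-$(b,c)$ ones number at most $\min\bigl(t(e),\ \sum_{c:\,(b,c)\in H}\ow(a,c)\bigr)$.

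Summing over $e\in H$ via $|H|$ alone is too crude. Instead I would use a symmetric double-counting argument. Count triples consisting of a four-cycle and two adjacent heavy edges in it, and group them by the shared node $b$. Writing $h_b$ for the number of heavy edges at $b$, we have $\sum_b h_b\le 2|H|$. The cycles with adjacent heavy edges at $b$ are determined by a pair of heavy edges at $b$ together with a completion node, which is at most $\sum_{\text{pairs}}\ow$. I would bound $\ow(a,c)\le t(b)$-type quantities and combine these with $\sum_e t(e)=4T$ through Cauchy--Schwarz, $\sum_b h_b^2\cdot\max\ow$. This is where the constant $328$ should come from, with the adjacent case contributing the dominant part and the disjoint case the rest.

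I am least certain of this final step. If the direct argument stalls, I would simply invoke the cited result of \citet{mcgregor2020triangle}, rescaling their heaviness parameter to $\theta=\sqrt{T}/(4\delta^2)$ and tracking the constant to obtain $328\delta^2T$.
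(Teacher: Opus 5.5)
Your disjoint case is correct, but the adjacent case, where all the content of the lemma lies, is not proved, and the route you sketch cannot close. Grouping by the shared node $b$ and bounding completions by $t(b)$, $t(e)$ or $\max\ow$ loses too much. Both $t(b)$ and $t(e)$ can be $\Theta(T)$. The a priori bounds $\sum_b h_b^2\le |H|^2=O(\delta^4T)$ and $\max\ow=O(\sqrt{T})$ only give $O(\delta^4T^{3/2})$, which is not $O(\delta^2T)$. Your fallback of citing \citet{mcgregor2020triangle} is what the paper itself does, since it states the lemma without proof. The problem is only that your self-contained argument, as written, does not go through.

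The missing idea is to group by the \emph{endpoints} $\{a,c\}$ of the heavy--heavy wedge rather than its center, and to pair the two quantities at the same node pair by Cauchy--Schwarz. This is exactly what the paper does in the proof of \Cref{lemma:heavyedge_cant_be_in_onions}. Let $w(a,c)$ be the number of nodes $b$ with $(a,b),(b,c)\in H$. A cycle with adjacent heavy edges $(a,b),(b,c)$ is such a wedge plus one of the $\ow(a,c)-1$ other common neighbours of $a$ and $c$. So the number of such cycles is at most
\[
\sum_{\{a,c\}:\,\ow(a,c)\ge 2} w(a,c)\,\ow(a,c)\;\le\;\Bigl(\sum_{\{a,c\}} w(a,c)^2\Bigr)^{1/2}\Bigl(\sum_{\{a,c\}:\,\ow(a,c)\ge 2}\ow(a,c)^2\Bigr)^{1/2}.
\]
The second factor is at most $\sqrt{8T}$, since $\ow^2\le 4\binom{\ow}{2}$ for $\ow\ge 2$ and $\sum_{\{a,c\}}\binom{\ow(a,c)}{2}=2T$.

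For the first factor, write $\sum w^2=\sum w+2\sum\binom{w}{2}$.
\begin{itemize}
\item A heavy--heavy wedge is an unordered pair of heavy edges, so $\sum w\le\binom{|H|}{2}$.
\item The term $\binom{w(a,c)}{2}$ counts four-cycles all of whose edges are heavy, and each such cycle is counted exactly twice over all pairs $\{a,c\}$. These cycles fall under your disjoint case, so $\sum\binom{w}{2}\le 2|H|^2$.
\end{itemize}
Hence $\sum w^2\le 4.5|H|^2\le 1152\,\delta^4T$. The adjacent case then contributes at most $\sqrt{1152\cdot 8}\,\delta^2T=96\,\delta^2T$. Adding your disjoint bound $256\,\delta^4T\le 64\,\delta^2T$ (for $\delta\le 1/2$) gives $160\,\delta^2T\le 328\,\delta^2T$.
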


Next, we show that only a few four-cycles contain a heavy edge and also a heavy wedge.
\begin{lemma}
    \label{lemma:heavyedge_cant_be_in_onions}
    At most $145\delta T$ four-cycles contain at least one $\frac{\sqrt{T}}{4\delta^2}$-heavy edge and a $\frac{1}{\delta}$-heavy wedge.
\end{lemma}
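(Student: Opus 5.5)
The plan is to charge every bad four-cycle to the onion spanned by its heavy wedge, and then bound the bad cycles onion by onion. Throughout, call an edge \emph{heavy} if it is $\frac{\sqrt{T}}{4\delta^2}$-heavy. First I would make two preliminary observations. Since $\sum_e t(e)=4T$, there are at most $4T/\frac{\sqrt{T}}{4\delta^2}=16\delta^2\sqrt{T}$ heavy edges. By \Cref{lemma:no2heavyedges}, we may discard the at most $328\delta^2T\le 164\delta T$ four-cycles with two or more heavy edges (using $\delta<\frac12$). Hence it suffices to bound the four-cycles containing \emph{exactly one} heavy edge and a $\frac1\delta$-heavy wedge.

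Next I would reformulate the condition in terms of onions. Fix such a cycle $A=(u,v,a,b)$ with heavy edge $e=(u,v)$. Every wedge of $A$ has as endpoints an opposite pair of $A$, namely $\{u,a\}$ or $\{v,b\}$. A wedge with endpoints $x,y$ is $\frac1\delta$-heavy iff $\ow(x,y)-1\ge \frac1\delta$. So $A$ lies in an onion with apex pair $\{x,y\}\in\{\{u,a\},\{v,b\}\}$ of width $k:=\ow(x,y)>\frac1\delta$. Moreover, $e$ is one of the four edges of this onion that belong to $A$, because $u$ or $v$ is an apex or a middle vertex of that onion. I would charge $A$ to this onion; each cycle is charged at most twice.

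For a fixed onion with apices $x,y$ and middle set $M$, $|M|=k$, each four-cycle in it is $x$--$z$--$y$--$z'$ for a pair $z,z'\in M$. Call $z\in M$ \emph{marked} if $(x,z)$ or $(z,y)$ is heavy, and let $h$ be the number of marked vertices. A charged cycle has exactly one heavy edge, so exactly one of $z,z'$ is marked. The number of charged cycles in this onion is therefore at most $h(k-h)$, whereas the onion contains $\binom{k}{2}$ cycles in total. It then remains to show
\begin{align*}
\sum_{\{x,y\}\colon \ow(x,y)>1/\delta} h_{x,y}\,(k_{x,y}-h_{x,y}) \;=\; O(\delta T).
\end{align*}
I would split the sum into onions with $h\le 2\delta k$ and onions with $h>2\delta k$. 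In the first class, each onion contributes at most $hk\le 2\delta k^2\le 8\delta\binom{k}{2}$ (using $k\ge 2$). Since $\sum_{\{x,y\}}\os(x,y)=2T$, the first class contributes at most $16\delta T$.

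The second class is the main obstacle. Here a $2\delta$-fraction of the middle vertices are attached to the apices via heavy edges, so the apex $x$ or $y$ is incident to at least $\delta k$ heavy edges. I would try to exploit the global bound of $16\delta^2\sqrt{T}$ heavy edges together with the fact that each heavy edge $(x,z)$ participates in at most $t((x,z))$ cycles. Concretely, I would bound
\begin{align*}
\sum h(k-h)\le \sum_{\text{heavy } (x,z)}\;\sum_{y\in N(z)} \ow(x,y)\,[\ow(x,y)>1/\delta].
\end{align*}
I would then show that the inner sum is at most $O(\delta\, t((x,z)))$ for these heavy-rich onions. The intuition is that in such onions the cycles through $(x,z)$ that also use a second heavy edge $(x,z')$, which are already discarded, outnumber those that do not by a factor of roughly $1/\delta$. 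If that charging goes through, the second class contributes $O(\delta)\sum_e t(e)=O(\delta T)$.

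Collecting all contributions, with constants tracked loosely, should give the stated bound of $145\delta T$. If the charging in the second class fails, I would fall back on a further dyadic split on $k$ relative to $\sqrt{T}$ and bound each band separately using the heavy-edge count.
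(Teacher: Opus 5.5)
Your reduction and first class are correct. You charge each bad cycle to the onion spanned by its $\frac1\delta$-heavy wedge and bound the charged cycles of an onion by $h(k-h)\le hk$. You then sum $hk\le 8\delta\binom k2$ over onions with $h\le 2\delta k$ using $\sum_{\{x,y\}}\os(x,y)=2T$.

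The gap is the second class, $h>2\delta k$. You flag it as the main obstacle but do not prove it. The per-edge bound you aim for, that the inner sum over heavy-rich onions is $O(\delta\,t((x,z)))$, is not derived from anything. Moreover, the heuristic you give for it runs the wrong way. Take $h$ close to $2\delta k$. The cycles of the onion through $(x,z)$ that also use a second heavy edge $(x,z')$ number at most $h-1\approx 2\delta k$. Those whose other middle vertex is unmarked number $k-h\approx k$. So inside an onion the discarded cycles are outnumbered by a factor of about $1/(2\delta)$, not the reverse, and there is no per-onion or per-edge domination to exploit. What makes the argument work is global: there are only $O(\delta^2T)$ cycles with two heavy edges, which is small enough to absorb a loss of $1/\delta$.

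Concretely, in the second class $k>1/\delta$ forces $h>2\delta k>2$, hence
\[
h(k-h)\le hk<\frac{h^2}{2\delta}\le\frac{2}{\delta}\binom{h}{2}.
\]
Each pair of marked middle vertices $z,z'$ of the onion with apices $x,y$ gives the cycle $x$--$z$--$y$--$z'$. This cycle has a heavy edge on each of its two wedges between $x$ and $y$. Each four-cycle arises in this way from at most two apex pairs. So by \Cref{lemma:no2heavyedges}, $\sum_{\{x,y\}}\binom{h_{x,y}}{2}\le 2\cdot 328\delta^2T$, and the second class contributes at most $1312\,\delta T$. This closes your proof with an $O(\delta T)$ bound, using the same key ingredient as the paper.

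The paper, however, does not split by $h/k$. It separates the contribution $\sum k$ (covering onions with $h\le 1$) from onions with $h\ge 2$, and applies Cauchy--Schwarz to each part. These are $\sum_{h\ge2}hk\le\sqrt{\sum 4\binom h2}\sqrt{\sum k^2}$ and $\sum k\le\sqrt{\#\{\text{onions of width}>1/\delta\}}\cdot\sqrt{\sum k^2}$. It then uses $\sum k^2\le 8T$ and the fact that there are at most $8\delta^2T$ onions of width $>1/\delta$. That is where the constant $145$ comes from.

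Your threshold split loses about a factor of $2$ (AM--GM in place of Cauchy--Schwarz), even with the threshold optimized to $h=\sqrt{328}\,\delta k$. So your closing claim that the constants ``should give'' $145\delta T$ is unsupported. In addition, bounding the discarded cycles by $164\delta T$ already overshoots, unless you note that the statement is vacuous for $\delta\ge 1/145$. Only the $O(\delta T)$ form matters downstream, though; the paper's own two terms actually add up to $145\delta T+8\delta T$.
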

\begin{proof}
    In this proof, we call four-cycles with the property stated in the lemma \emph{bad}.
    
    Now let $u$ and $v$ be endpoints of a $\frac{1}{\delta}$-wedge. This implies that $u$ and $v$ form an onion of size 
    \begin{align*}
        \os(u,v) = \binom{\ow(u,v)}{2} \geq \binom{1/\delta}{2} >\frac{1}{4\delta^2},
    \end{align*}
    where we used that $\delta\leq\frac{1}{2}$. This implies that, since each four-cycle is part of two onions, there are at most $8\delta^2T$ onions of size at least $\frac{1}{4\delta^2}$, as otherwise there would be more than $T$ four-cycles in total.
    
    Next, let $h(u,v)$ be the number of wedges between $u$ and $v$ which contain at least one $\frac{\sqrt{T}}{4\delta^2}$-heavy edge.
    Then any bad four-cycle in the onion with endpoints $u$ and $v$ consists of one of these $h(u,v)$ wedges and one other wedge connecting $ u$ and $ v$.
    Hence, the number of bad four-cycles in this onion is at most $h(u,v)\ow(u,v)$.
    
    Thus, the total number of bad four-cycles in the entire graph is bounded by
    \begin{align*}
        \sum_{\substack{u,v\in V\\ \ow(u,v)>\frac{1}{\delta}}} h(u,v)\ow(u,v) &\leq \sum_{\substack{u,v\in V\\ \ow(u,v)>\frac{1}{\delta}\\ h(u,v)\geq 2}} h(u,v)\ow(u,v) + \sum_{\substack{u,v\in V\\ \ow(u,v)>\frac{1}{\delta}}} \ow(u,v).
    \end{align*}
    
    Now let $D_\kappa$ denote the number of onions of width $\kappa$. Note that our result above implies that $\sum_{\kappa\geq 1/\delta} D_\kappa \leq 8\delta^2 T$.
    Now we get
    \begin{align*}
        \sum_{\substack{u,v\in V\\ \ow(u,v)>\frac{1}{\delta}}} \ow(u,v) &\leq \sum_{\kappa=\frac{1}{\delta}}^{\sqrt{T}} D_\kappa \kappa\\
        &\leq \sqrt{\sum_{\kappa=\frac{1}{\delta}}^{\sqrt{T}} D_\kappa} \sqrt{\sum_{\kappa=\frac{1}{\delta}}^{\sqrt{T}} D_\kappa \kappa^2}\\
        &\leq \sqrt{8\delta^2 T\cdot 8T} \\
        &\leq 8\delta T,
    \end{align*}
    where we used the Cauchy--Schwarz inequality. Further, we used the identity $\sum_{\kappa=1}^{\sqrt{T}} D_\kappa \binom{\kappa}{2} = 2T$, where the factor of 2 stems from the fact that each four-cycle is part of two onions. Finally, we used that $\kappa^2\leq 4\binom{\kappa}{2}$. This bounds the second summand above.

    Next, we bound the first summand and get that
    \begin{align*}
        \sum_{\substack{u,v\in V\\ \ow(u,v)>\frac{1}{\delta}\\ h(u,v)\geq 2}} h(u,v)\ow(u,v) &\leq  \sqrt{\sum_{\substack{u,v\in V\\ \ow(u,v)>\frac{1}{\delta}\\ h(u,v)\geq 2}} h(u,v)^2} \cdot \sqrt{\sum_{\substack{u,v\in V\\ \ow(u,v)>\frac{1}{\delta}}} \ow(u,v)^2}\\
        &\leq  \sqrt{\left( \sum_{\substack{u,v\in V\\ \ow(u,v)>\frac{1}{\delta}}} 4\binom{h(u,v)}{2}\right) \cdot  8T}\\
        &\leq  \sqrt{8\cdot328\delta^2 T \cdot 8T} \\
        &\leq 145\delta T,
    \end{align*}
    where we again used the Cauchy--Schwarz inequality in the first step. In the second step we again used that $h(u,v)^2 \leq 4 \binom{h(u,v)}{2}$ and that $\sum_{u,v}\ow(u,v)^2 \leq \sum_{u,v} 4 \binom{\ow(u,v)}{2} \leq 8T$. Then in the final step we applied \cref{lemma:no2heavyedges} since $\binom{h(u,v)}{2}$ counts pairs of wedges from $u$ to $v$ which both contain a $\sqrt{T}/(4\delta^2)$-heavy edge, and thus form a four-cycle containing two $\sqrt{T}/(4\delta^2)$-heavy edges.
\end{proof}

Next, we provide a combination lemma, which bounds the number of four-cycles containing multiple heavy configurations.
\begin{lemma}[Combination lemma]
    \label{lemma:combination}
    Across all possible choices for $\kappa\in\CI$, in total, there are at most
    $576\delta\log(T)T$ four-cycles with one of the following properties:
    \begin{enumerate}[label=(\roman*),ref=(\roman*)]
        \item \label{comb:1a} One $\kappa/(2\delta)$-heavy wedge and one $\sqrt{T}/(\kappa\delta)$-heavy wedge sharing exactly one edge.
        \item \label{comb:1b} One $\kappa/(2\delta)$-heavy wedge and one disjoint $T/(\kappa\delta^{1.5})$-heavy node.
        \item \label{comb:2a} One $\kappa^2/(4\delta^2)$-heavy edge and one disjoint $T/(\kappa^2\delta^2)$-heavy edge.
        \item \label{comb:2b} One $\kappa^2/(4\delta^2)$-heavy edge $e$ and one disjoint $T/(\kappa\delta^{1.5})$-heavy node $v$ and the wedge $\wedge$ using $v$ and $e$ fulfilling $t(\wedge)\leq \kappa/\delta$.
        \item \label{comb:3a4a} One $\kappa\sqrt{T}/(2\delta^{1.5})$-heavy node $v$ and one adjacent $T/(\kappa\delta^{1.5})$-heavy node $y$ and the edge $e=(v,y)$ fulfilling $t(e)\leq\sqrt{T}/\delta^2$.
        \item \label{comb:3b} One $\kappa\sqrt{T}/(2\delta^{1.5})$-heavy node $v$ and one opposite $T/(\kappa\delta^{1.5})$-heavy node $y$ and $\os(v,y)\leq \sqrt{T}/\delta^2$.
        \item \label{comb:3c} One $\kappa\sqrt{T}/(2\delta^{1.5})$-node $v$ and one disjoint $T/(\kappa^2\delta^2)$-heavy edge $f$ and the wedge $\wedge$ using $v$ and $f$ fulfilling $t(\wedge)\leq\sqrt{T}/(\kappa\delta)$.
        \item \label{comb:4b} One $\kappa\sqrt{T}/(2\delta^{1.5})$-heavy node and one disjoint $\sqrt{T}/(\kappa\delta)$-heavy wedge.
        \item \label{comb:4c} One $\kappa\sqrt{T}/(2\delta^{1.5})$-heavy node $v$ and one disjoint $\sqrt{T}/\delta^2$-heavy edge $e$ and the wedge $\wedge$ using $v$ and $e$ fulfilling $t(\wedge)\leq \kappa/\delta$.
    \end{enumerate}
\end{lemma}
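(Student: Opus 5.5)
Fix a single $\kappa\in\CI$. For each of the nine properties I will show that at most $64\delta T$ four-cycles have it. Summing over the nine properties and over the $|\CI|\le \log(T)$ values of $\kappa$ then gives the stated bound of $576\delta\log(T)T$ (since $9\cdot 64=576$).

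All nine items share one pattern. A four-cycle is described by a heavy structure $X$ with threshold $\theta_X$ together with a disjoint or complementary structure $Y$ with threshold $\theta_Y$, and in every item $\theta_X\theta_Y \gtrsim T/\delta^{c}$. For example, in \ref{comb:2a} the product is $\frac{\kappa^2}{4\delta^2}\cdot\frac{T}{\kappa^2\delta^2}=\frac{T}{4\delta^4}$, and in \ref{comb:1b} it is $\frac{\kappa}{2\delta}\cdot \frac{T}{\kappa\delta^{1.5}}$.

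\textbf{Charging argument.} The number of $\theta_X$-heavy copies of $X$ is at most $c_X T/\theta_X$, where $c_X$ is the number of copies of that substructure type in a four-cycle. Fix such a heavy $X$. I count the four-cycles through $X$ that are completed by a $\theta_Y$-heavy $Y$ in the prescribed position. For this I use an extra structural quantity $Z$ that bounds the number of completions, namely the number of ways to attach $Y$ to $X$. The side constraints in the items supply this bound:
\begin{itemize}
\item in \ref{comb:2b}, \ref{comb:3c}, \ref{comb:4c} it is the bound $t(\wedge)\le\kappa/\delta$ (resp. $\sqrt{T}/(\kappa\delta)$) on the connecting wedge;
\item in \ref{comb:3a4a} it is $t(e)\le\sqrt T/\delta^2$;
\item in \ref{comb:3b} it is $\os(v,y)\le\sqrt T/\delta^2$.
\end{itemize}

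\textbf{Double counting.} Given the pair $(X,Y)$, the cycles containing both are at most $Z$. Summing over heavy $Y$, and using that each $Y$ lies in at least $\theta_Y$ cycles, the count of heavy $Y$ is at most $4T/\theta_Y$. This gives a bound of the form
\[
\min\Bigl(\tfrac{4T}{\theta_X}\cdot(\text{completions}),\; \tfrac{4T}{\theta_Y}\cdot Z\Bigr).
\]
Concretely, I would bound the number of bad cycles by $\sum_Y \#\{\text{cycles through } Y \text{ and some heavy } X\}$. Each cycle through $Y$ with a heavy $X$ at the prescribed position is determined by $Y$ plus the choice of $X$. Writing $N_Y$ for the number of heavy $X$'s compatible with $Y$, the count of bad cycles through $Y$ is at most $N_Y\cdot Z$. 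On the other hand, $\sum_Y N_Y \le (\#\text{heavy }X)\cdot(\max\text{ compatible }Y\text{ per }X)$. Combining these via Cauchy--Schwarz, as in the proof of \cref{lemma:heavyedge_cant_be_in_onions}, should give $O(\sqrt{T/\theta_X}\cdot\sqrt{T/\theta_Y}\cdot \text{stuff})$. With the products above this comes to $O(\delta T)$, and in the cleaner cases (e.g. \ref{comb:2a}) even $O(\delta^2 T)$.

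\textbf{Items \ref{comb:1a} and \ref{comb:4b}.} These involve wedges and onions, so I would reuse the onion-based argument of \cref{lemma:heavyedge_cant_be_in_onions}. A $\kappa/(2\delta)$-heavy wedge means its endpoints span an onion of width at least $\kappa/(2\delta)$. Since $\sum D_k\binom k2=2T$, the number of such onions is $O(\delta^2T/\kappa^2)$. The second wedge (heaviness $\ge\sqrt T/(\kappa\delta)$) lies in an onion of width at least $\sqrt T/(\kappa\delta)$ on the other diagonal. Cauchy--Schwarz over onion widths, with the product of the two widths being at least $\sqrt T/\delta^2$, then yields $O(\delta T)$.

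\textbf{Main obstacle.} The hard part will be the items with node heaviness, \ref{comb:3a4a} and \ref{comb:3b}. Here the heavy-node count $T/\theta$ alone is insufficient and the side constraint on $t(e)$ or $\os$ must be used. For these I would first try the following: for a fixed $\kappa\sqrt T/(2\delta^{1.5})$-heavy node $v$ (there are at most $8\delta^{1.5}\sqrt T/\kappa$ such nodes), bound the cycles through $v$ with a heavy neighbour or opposite node $y$ by $\sum_y t(v,y)$. Each term is at most $\sqrt T/\delta^2$, and the number of heavy $y$ is at most $4\kappa\delta^{1.5}$. This gives $O(\delta T)$ per $\kappa$.
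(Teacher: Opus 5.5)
Your final paragraph is exactly the paper's proof, and that argument alone covers all nine items. For each fixed $\kappa$ and each item, the paper bounds the number of bad four-cycles by $\frac{4T}{\theta_1}\cdot\frac{4T}{\theta_2}\cdot r$. This is (number of $\theta_1$-heavy structures) times (number of $\theta_2$-heavy structures) times $r$, the maximum number of qualifying four-cycles through a fixed such pair. The paper then checks that this product is at most $32\delta T$.

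Items (v) and (vi) are therefore not the hard cases. The same plain product works throughout, with the following values of $r$:
\begin{itemize}
\item $r=\sqrt{T}/\delta^2$ in (v) and (vi), from the side constraint;
\item $r=O(\kappa/\delta)$ in (iv) and (ix);
\item $r=O(\sqrt{T}/(\kappa\delta))$ in (vii);
\item $r=O(1)$ in (ii), (iii) and (viii), where the two structures already determine the cycle.
\end{itemize}
For (i), the paper turns each $\theta$-heavy wedge into a $\theta^2/2$-heavy opposite node pair. Since the two diagonals determine the cycle, just multiplying your two onion counts $O(\delta^2T/\kappa^2)$ and $O(\kappa^2\delta^2)$ gives $O(\delta^4 T)$. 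No Cauchy--Schwarz is needed.

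What does not work is the general ``charging/double counting'' template you propose for the other items. In your $\min$, the term $\frac{4T}{\theta_Y}\cdot Z$ is not a valid bound: one heavy $Y$ can be compatible with many heavy $X$, each contributing up to $Z$ cycles. The Cauchy--Schwarz step giving $\sqrt{T/\theta_X}\sqrt{T/\theta_Y}\cdot(\text{stuff})$ is neither derived nor of the right shape.

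The condition that actually has to be checked is $16T^2 r/(\theta_X\theta_Y)\le 32\delta T$, i.e.\ $\theta_X\theta_Y\ge Tr/(2\delta)$. This involves $r$, not merely $\theta_X\theta_Y\ge T/\delta^{c}$. For example, in (v) one has $\theta_X\theta_Y=T^{1.5}/(2\delta^3)$, and it is exactly the division by $r=\sqrt{T}/\delta^2$ that brings the bound down to $32\delta T$. Replace the template by the explicit product bound of your last paragraph, verified case by case with the values of $r$ above, and the proof is complete.
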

\begin{proof}
    We give a proof for \cref{comb:3a4a}. All other cases follow analogously.

    First, observe that for any heaviness threshold $\theta>1$, there can be at most $4T/\theta$ nodes, edges,  wedges and node pairs of heaviness at least $\theta$, as otherwise there would be more than $T$ four-cycles in total. 

    Next, consider any $\kappa\in\CI$. For each such choice of $\kappa$, we will exclude at most $288 \delta T$~four-cycles.
    
    Let $\theta_1=\kappa\sqrt{T}/(2\delta^{1.5})$ and $\theta_2=T/(\kappa\delta^{1.5})$ and $r=\sqrt{T}/\delta^2$. 
    Based on the above observation, there are at most $16T^2/(\theta_1\theta_2)$ pairs of one $\theta_1$-heavy-node and one $\theta_2$-heavy-node, and thus also at most this many adjacent pairs.
    For each such pair, we get at most $r$ four-cycles containing both nodes of the pair and fulfilling the conditions of \cref{comb:3a4a}.
    Thus, we get at most 
    \begin{equation*}
        \frac{4T}{\theta_1}\frac{4T}{\theta_2}r\leq \delta^3\frac{8\sqrt{T}}{\kappa}\cdot 4\kappa\cdot \frac{\sqrt{T}}{\delta^2} \leq 32\delta T
    \end{equation*}
    such four-cycles.

    All other cases except (i) follow by adjusting the values for $\theta_1,\theta_2$ and $r$ and observing $\frac{4T}{\theta_1}\frac{4T}{\theta_2}r\leq 32\delta T$.
    Note that $r=1$ in cases (ii), (iii) and (viii).

    For case (i), we observe that if a wedge is $\theta$-heavy its endpoints form a $\theta^2/2$-heavy node pair. Thus, any four-cycle satisfying (i) contains a $\kappa^2/(8\delta^2)$-heavy node pair and a disjoint $T/(2\kappa^2\delta^2)$-heavy node pair. Now applying the same argument as before proves that at most  $\frac{4T}{\kappa^2/(8\delta^2)}\frac{4T}{T/(2\kappa^2\delta^2)}\leq 256\delta^4 T\leq 32\delta T$ four-cycles have this property, where we used $\delta\leq\frac{1}{2}$.

    In total we exclude for each $\kappa\in\CI$ and each of the nine cases at most $64\delta T$ four-cycles, giving at most $\log(T) \cdot 9 \cdot 64 \delta T = 576 \delta \log(T) T$ excluded four-cycles.
\end{proof}

We now prove \cref{thm:expectation} by showing that $\sigma_A=(A,\kappa_A,x_A,y_A)$ is light for all four-cycles $A$ which are not excluded by the previous three lemmas.

\begin{proof}[Proof of \cref{thm:expectation}]
Recall that in our assignment $\sigma_A = (A,\kappa_A,x_A,y_A)$ of light configurations to light four-cycles $A$, we set
\begin{equation*}
    \kappa_A=\Bigg\lceil \max_{\substack{v\in V(A)\\e\in E(A)\\ \wedge\in W(A)}}\left\{T^{\frac{1}{4}},\delta^{1.5} \frac{t(v)}{\sqrt{T}},\delta \sqrt{t(e)},\delta t(\wedge)\right\}\Bigg\rceil_\CI.
\end{equation*}

For the rest of the proof, we exclude all four-cycles that satisfy the conditions of  \cref{lemma:no2heavyedges}, \cref{lemma:heavyedge_cant_be_in_onions} and \cref{lemma:combination}.

Next, we do a case distinction on which substructure attains the maximum in \cref{eq:w_A}.
For each case, we show that $\sigma_A=(A,\kappa_A,x_A,y_A)$ is a light configuration by showing that each substructure is light unless $A$ is excluded by one of the three previous lemmas.

Before we start the case distinctions, recall that the configuration $\sigma_A$ gives us a labeling function $\ell_{\sigma_A}$ that maps the vertices of $A$ to labels $S_1$, $S_2$, etc.
Further recall that for each labeled substructure $\substr$ of $A$ we now know the probability $p_{(\substr,\kappa_A,\ell_\substr)}$ that the labeled structure $(\substr,\kappa_A,\ell_\substr)$ is realized, where $\ell_\substr = \ell_c\big|_{V(\substr)}$ is the restriction of $\ell_c$ onto the vertices $V(\substr)$ in $\substr$. Thus, in the following, we will make heavy use of the probabilities $p_{(\substr,\kappa_A,\ell_\substr)}$.

We consider the following cases:
\begin{itemize}
    \item Suppose $\kappa_A=\left\lceil \delta t(\wedge) \right\rceil_\CI$, i.e., the heaviest substructure is a wedge~$\wedge$.
    
        First, note that due to the rounding we have $\kappa_A/(2\delta) \leq t(\wedge)\leq \kappa_A/\delta$.
        Next, recall that $x_A$ and $y_A$ are set to the endpoints of the wedge $\wedge$, implying that $\wedge$ is a wedge with $p_{(\wedge,\kappa_A,\ell_\wedge)} = p_{1,\kappa_A}^2 p_{2,\kappa_A}$.
        We now give a comprehensive list of all labeled substructures of $\sigma_A$ and argue why they are light.
        \begin{itemize}
            \item The two wedges $\wedge'\in A$ with 
                $p_{(\wedge',\kappa_A,\ell_{\wedge'})}=p_{1,\kappa_A}^2 p_{2,\kappa_A}$ 
                are light 
                by choice of $\kappa_A$ since $t(\wedge') \leq t(\wedge)$ and above we already showed argued that $t(\wedge) \leq \kappa_A/\delta$.
            \item The two wedges~$\wedge'$ with 
                $p_{(\wedge',\kappa_A,\ell_{\wedge'})}=p_{2,\kappa_A}^2 p_{1,\kappa_A}$ 
                are light 
                because if they were heavy, then they would be $\sqrt{T}/(\kappa_A\delta)$-heavy which implies that $A$ satisfies \cref{lemma:combination} \cref{comb:1a} (since the wedge and $\wedge$ share exactly one edge), but we excluded all such four-cycles above. 
            \item The two nodes~$v$ with 
                $p_{(v,\kappa_A,\ell_v)}=p_{1,\kappa_A}$ 
                are light by choice of $\kappa_A$ since 
                $\kappa_A \geq \left\lceil \delta^{1.5}t(v)/\sqrt{T}\right\rceil_\CI
                \geq \delta^{1.5}t(v)/\sqrt{T}$ and thus after rearranging we get that $t(v) \leq \kappa_A\sqrt{T} / \delta^{1.5}$.
            \item The two nodes~$v$ with 
                $p_{(v,\kappa_A,\ell_v)}=p_{2,\kappa_A}$ 
                are light because if they were heavy, then $A$ would satisfy 
                \cref{lemma:combination} \cref{comb:1b} where we also use the heaviness of~$\wedge$, but we excluded all such four-cycles above.
            \item The four edges $e$ with
                $p_{(e,\kappa_A,\ell_v)}=p_{1,\kappa_A}p_{2,\kappa_A}$ 
                are light because if they were heavy, then $A$ would satisfy \cref{lemma:heavyedge_cant_be_in_onions} where we also use the heaviness of $\wedge$, but we excluded all such four-cycles above.
        \end{itemize}
    \item Suppose $\kappa_A=\left\lceil \delta \sqrt{t(e)} \right\rceil_\CI$, i.e., the heaviest substructure is an edge~$e$.

        First, note that due to the rounding we get that $\kappa_A^2 /(4\delta^2) \leq t(e) \leq \kappa_A^2 / \delta^2$.
        Next, recall that $x_A$ and $y_A$ are the endpoints of the edge $e$ and thus $e$ is an edge with $p_{(e,\kappa_A,\ell_e)} = p_{1,\kappa_A}^2$. Now we get that:
        \begin{itemize}
            \item  The unique edge $e$ with 
                $p_{(e,\kappa_A,\ell_e)} = p_{1,\kappa_A}^2$
                is light by choice of $\kappa_A$ since above we showed that
                $t(e) \leq \kappa_A^2/\delta^2$.
            \item  The unique edge $e'$ with
                $p_{(e',\kappa_A,\ell_{e'})} = p_{\kappa_A,2}^2$
                is light
                because if it were heavy, then $A$ would satisfy
                \cref{lemma:combination} \cref{comb:2a} where we also include the heaviness of $e$, but we excluded all such four-cycles.
                \item  The two edges $e'$ with
                $p_{(e',\kappa_A,\ell_{e'})} = p_{1,\kappa_A} p_{2,\kappa_A}$
                are light because if it were heavy, then $A$ would satisfy
                \cref{lemma:no2heavyedges}, but we excluded all such four-cycles.
            \item  The four wedges $\wedge$ with
                $p_{(\wedge,\kappa_A,\ell_\wedge)} = p_{1,\kappa_A}^2 p_{2,\kappa_A}$
                and
                $p_{(\wedge,\kappa_A,\ell_\wedge)} = p_{1,\kappa_A} p_{2,\kappa_A}^2$, respectively,
                are light
                because if they were heavy then they would satisfy $t(\wedge') > \min\{\kappa_A/\delta, \sqrt{T}/(\kappa_A\delta)\} \geq 1/\delta$ and thus $A$ would satisfy
                \cref{lemma:heavyedge_cant_be_in_onions}, but we excluded all such four-cycles.
            \item  The two nodes with
                $p_{(v,\kappa_A,\ell_v)} = p_{1,\kappa_A}$
                are light by choice of $\kappa_A$ since $\kappa_A\geq \left\lceil \delta^{1.5} \frac{t(v)}{\sqrt{T}} \right\rceil_\CI\geq\delta^{1.5} \frac{t(v)}{\sqrt{T}}$ and thus $t(v)\leq \kappa_A\sqrt{T}/\delta^{1.5}$.
            \item  The two nodes $v$ with
                $p_{(v,\kappa_A,\ell_v)} = p_{2,\kappa_A}$
                are light because if they were heavy, then $A$ would satisfy
                \cref{lemma:combination} \cref{comb:2b}, but we excluded all such four-cycles. Here, we used that the wedge $\wedge$ consisting of $e$ and $v$ is a wedge with $p_{(\wedge,\kappa_A,\ell_\wedge)}=p_{1,\kappa_A}^2 p_{2,\kappa}$ and thus $t(\wedge)\leq \kappa/\delta$.
        \end{itemize}
    \item Suppose $\kappa_A=\left\lceil \delta^{1.5} \frac{t(v)}{\sqrt{T}} \right\rceil_\CI$, i.e., the heaviest substructure is a vertex~$v$, and an there is a $\sqrt{T}/\delta^2$-heavy edge $e\in A$ incident to $v$.
    
        First, note that due to the rounding we have $\kappa_A\sqrt{T}/(2\delta^{1.5}) \leq t(v)\leq \kappa_A\sqrt{T}/\delta^{1.5}$.
        Next, recall that $x_A$ and $y_A$ are the endpoints of the edge $e$ and thus $e$ is an edge with $p_{e,\kappa_A,\ell_e} = p_{1,\kappa_A}^2$.
        Now we get that:
        \begin{itemize}
            \item  The two nodes $v'$ with
                $p_{(v',\kappa_A,\ell_{v'})} = p_{1,\kappa_A}$
                are light since $t(v') \leq t(v)$ and above we showed that $t(v) \leq \kappa_A \sqrt{T} /\delta^{1.5}$.
            \item  The unique edge $e$ with
                $p_{(e,\kappa_A,\ell_e)} = p_{1,\kappa_A}^2$
                is light by choice of $\kappa_A$
                since $\kappa_A\geq \left\lceil \delta \sqrt{t(e)} \right\rceil_\CI\geq \delta \sqrt{t(e)}$ and thus $t(e)\leq \kappa_A^2/\delta^2$.
            \item  The two edges $e'$ with
                $p_{(e',\kappa_A,\ell_{e'})} = p_{1,\kappa_A} p_{2,\kappa_A}$
                are light because if they were heavy, then $A$ would satisfy
                \cref{lemma:no2heavyedges} but we excluded all such four-cycles.
            \item  All four wedges~$\wedge$ with 
                $p_{(\wedge,\kappa_A,\ell_{e})} = p_{1,\kappa_A}^2 p_{2,\kappa_A}$
                and $p_{(\wedge,\kappa_A,\ell_{e})} = p_{1,\kappa_A} p_{2,\kappa_A}^2$, respectively,
                must be light because if they were heavy, then
                then they would satisfy $t(\wedge') > \min\{\kappa_A/\delta, \sqrt{T}/(\kappa_A\delta)\} \geq 1/\delta$
                and thus
                $A$ would satisfy \cref{lemma:heavyedge_cant_be_in_onions}, but we excluded all such four-cycles. 
            \item  The node $v'$ which is adjacent to $v$ with
                $p_{({v'},\kappa_A,\ell_{v'})} = p_{2,\kappa_A}$
                is light                
		since if it were heavy, then $A$ would satisfy
                 \cref{lemma:combination} \cref{comb:3a4a}, but we excluded all such four-cycles. Here, we used that the edge $e'=(v,v')$ with $p_{(e',\kappa_A,\ell_{e'})}=p_{1,\kappa_A}p_{2,\kappa_A}$ is light as showed above.
            \item  The unique node $v'$ which is opposite to $v$ in $A$ with
                $p_{({v'},\kappa_A,\ell_{v'})} = p_{2,\kappa_A}$
                is light
                since if it were heavy, then $A$ would satisfy
                \cref{lemma:combination} \cref{comb:3b}: Here, we used that the wedges are light and thus the onion size is at most $\os(v,v')\leq T/(\kappa_A^2\delta^2)\leq \sqrt{T}/\delta^2$. 
            \item  The unique edge $e'$ with
                $p_{(e',\kappa_A,\ell_{e'})} = p_{2,\kappa_A}^2$
                is light since if it were heavy, then $A$ would satisfy 
                \cref{lemma:combination} \cref{comb:3c}, but we excluded all such cases.
                Here, we used that the wedge formed by $v$ and $e'$ is a wedge $\wedge$ with $p_{(\wedge,\kappa_A,\ell_\wedge)} = p_{1,\kappa_A} p_{2,\kappa_A}^2$ for which above we showed that it is light and thus $t(\wedge) \leq \sqrt{T}/(\kappa_A\delta)$. 
        \end{itemize}
    \item Suppose $\kappa_A=\left\lceil \delta^{1.5} \frac{t(v)}{\sqrt{T}} \right\rceil_\CI$, i.e., the heaviest substructure is a vertex ~$v$, and an there is no $\sqrt{T}/\delta^2$-heavy edge $e\in A$ incident to $v$.

        First, note that due to the rounding we have $\kappa_A\sqrt{T}/(2\delta^{1.5}) \leq t(v)\leq \kappa_A\sqrt{T}/\delta^{1.5}$.
        Let $v'$ be the node opposite to $v$ in $A$.
        Recall that $x_A$ and $y_A$ are the nodes $v$ and $v'$ and thus $v$ and $v'$ are nodes with $p_{(v,\kappa_A,\ell_v)}=p_{(v',\kappa_A,\ell_{v'})}=p_{1,\kappa_A}$.
        Now we get that:
        \begin{itemize}
            \item  The two nodes $v$ and $v'$ with
                $p_{(v',\kappa_A,\ell_{v'})} = p_{1,\kappa_A}$
                are light since $t(v') \leq t(v)$ and above we showed that $t(v) \leq \kappa_A \sqrt{T} /\delta^{1.5}$.
            \item  The nodes $v''$ with
                $p_{(v'',\kappa_A,\ell_{v''})} = p_{2,\kappa_A}$
                are light since if they were heavy, then $A$ would satisfy 
                \cref{lemma:combination} \cref{comb:3a4a}, but we excluded all such four-cycles.
                Here, we used the assumption that there is no $\sqrt{T}/\delta^2$-heavy edge incident upon~$v$.
            \item The two wedges $\wedge'\in A$ with 
                $p_{(\wedge',\kappa_A,\ell_{\wedge'})}=p_{1,\kappa_A}^2 p_{2,\kappa_A}$ 
                are light 
                by choice of $\kappa_A$ since $t(\wedge') \leq t(\wedge)$ and above we already showed argued that $t(\wedge) \leq \kappa_A/\delta$.
            \item The two wedges $\wedge'\in A$ with 
                $p_{(\wedge',\kappa_A,\ell_{\wedge'})}=p_{1,\kappa_A} p_{2,\kappa_A}^2$ 
                are light since if they were heavy, then $A$ would satisfy \cref{lemma:combination} \cref{comb:4b}, but we excluded all such four-cycles above.
            \item The two edges $e'$ with
                $p_{(e',\kappa_A,\ell_{e'})} = p_{1,\kappa_A} p_{2,\kappa_A}$ which are incident to $v$ are light by the case assumption that no $\sqrt{T}/\delta^2$-heavy edge is incident to $v$.
            \item The two edges $e'$ with
                $p_{(e',\kappa_A,\ell_{e'})} = p_{1,\kappa_A} p_{2,\kappa_A}$ which are disjoint to $v$
                are light because if they were heavy, then $A$ would satisfy
                \cref{lemma:combination} \cref{comb:4c}, but we excluded all such four-cycles. Here, we used that the wedge $\wedge$ in $A$ containing $v$ and $e$ is light as shown above.
        \end{itemize}
    \item Suppose $\kappa_A=\left\lceil T^{\frac{1}{4}} \right\rceil$.
                As $p_{1,\kappa_A} = p_{2,\kappa_A}$, the choice of $x_A$ and $y_A$ does not matter, as the sampling probabilities and thus heaviness thresholds do not change.
                Thus, all substructures are light by choice of $\kappa_A$. 
\end{itemize}

This proves that for all four cycles that were not excluded by the three lemmas above, the configuration $\sigma_A$ is light.
\end{proof}

\subsection{Variance Analysis}
\label{sec:variance}

Next, we prove the following lemma, showing that if the input graph contains $T$ four-cycles, then the algorithm finds such a four-cycle with probability at least $9/10$. The main work to prove the lemma will be to bound the variance of the random variable counting the number of four-cycles~$A$ that were sampled with the assigned configurations~$\sigma_A$.

\begin{lemma}
\label{lem:detect}
    If $G$ contains at least $T$ four-cycles, then running \cref{algo:detect} with $\delta\leq \frac{1}{2098\log(T)}$ detects a four-cycle with probability at least $9/10$.
\end{lemma}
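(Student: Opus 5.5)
We use the second-moment method on a carefully restricted count. Let $\mathcal{L}$ be the set of four-cycles $A$ whose assigned configuration $\sigma_A$ is light; set $\CCdetect=\{\sigma_A : A\in\mathcal{L}\}$. For $c\in\CCdetect$, let $X_c$ indicate that $c$ is realized, and put $X=\sum_{c\in\CCdetect}X_c$. First we check that $X>0$ forces the algorithm to output ``yes''. If $c=(A,\kappa,x,y)$ is realized with $x,y$ opposite, all four edges of $A$ lie in $E[S_{1,\kappa},S_{2,\kappa}]$ and are stored in pass~1. If $x,y$ are adjacent, the three edges $(y,u),(u,v),(v,x)$ are stored in pass~1, and $(x,y)$ is stored in pass~2 because it closes the required cycle. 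So it suffices to show $\Prob{X>0}\ge 9/10$, and by Chebyshev/Paley--Zygmund it suffices to show $\Var{X}\le \Expectation{X}^2/10$.

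\textbf{Expectation.} Every configuration is realized with probability $p_{1,\kappa}^2p_{2,\kappa}^2=c_1^4/(\delta^6T)$, independently of $\kappa$. By \Cref{thm:expectation} with $\delta\le 1/(2098\log T)$, we have $|\CCdetect|\ge(1-1049\delta\log T)T\ge T/2$. Hence $\Expectation{X}\ge c_1^4/(2\delta^6)$, which is large once $c_1$ is a large constant.

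\textbf{Variance.} Write
\[
\Var{X}\le \Expectation{X}+\sum_{c\ne c'}\Cov{X_c,X_{c'}} .
\]
By \Cref{observ:dependent_configurations}, only pairs sharing a labeled substructure contribute. For each pair $c\neq c'$, pick a maximal shared labeled substructure $(\substr,\kappa,\ell)$; since $c$ and $c'$ come from distinct four-cycles, $\substr$ is a node, an edge, a wedge, or an opposite pair. Then
\[
\Cov{X_c,X_{c'}}\le \Prob{X_c=1}\cdot \frac{p_{1,\kappa}^2p_{2,\kappa}^2}{p_{(\substr,\kappa,\ell)}} .
\]
Fix $c$ and one of its at most $14$ labeled substructures. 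Because $c$ is light, $t(\substr)\le\theta_{(\substr,\kappa,\ell)}\le Tp_{(\substr,\kappa,\ell)}/c_1$ by \Cref{obs:relationship-theta-t-p}. Each four-cycle containing $\substr$ has $O(1)$ configurations with this $\kappa$ and containing this labeled substructure, so the number of partners $c'$ is $O(Tp_{(\substr,\kappa,\ell)}/c_1)$. Summing over them gives $O(\Prob{X_c=1}\cdot Tp_{1,\kappa}^2p_{2,\kappa}^2/c_1)$. Opposite pairs are dominated by wedges, as shown earlier. Summing over $c$ yields
\[
\sum_{c\ne c'}\Cov{X_c,X_{c'}}=O\!\left(\Expectation{X}\cdot \frac{T\,p_{1,\kappa}^2p_{2,\kappa}^2}{c_1}\right)=O\!\left(\Expectation{X}\cdot \frac{c_1^3}{\delta^6}\right).
\]
This is $O(\Expectation{X}^2/c_1)$ only if the $\delta^{-6}$ factors cancel against $\Expectation{X}\ge c_1^4/(2\delta^6)$. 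They do: the two bounds give $\Expectation{X}\cdot c_1^3\delta^{-6}\le 2\Expectation{X}^2/c_1$. Choosing $c_1$ large then makes $\Var{X}\le\Expectation{X}^2/10$.

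\textbf{Main obstacle.} The delicate step is the covariance bound, for two reasons. First, a pair $c,c'$ may have a different $\kappa$ or label $\ell$ on the shared vertices; in that case their events are independent, which \Cref{observ:dependent_configurations} gives. Second, the partner count must use lightness of $c$'s own labeled substructure $(\substr,\kappa,\ell)$ rather than lightness of $c'$; this is symmetric, so the bound still holds. We also need to check that the constant multiplicities (at most $6$ configurations per cycle and $\kappa$, and $14$ substructures) are absorbed into $c_1$.
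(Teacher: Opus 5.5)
Your proposal is correct and follows the paper's proof essentially step for step. It uses the same restricted count $X=\sum_{c\in\CCdetect}X_c$, the bound $|\CCdetect|\ge T/2$ from \Cref{thm:expectation}, Chebyshev, and the covariance bound via the maximal shared labeled substructure together with lightness and \Cref{obs:relationship-theta-t-p}; the $\delta^{-6}$ factors cancel exactly as in the paper's step $|\CCdetect|p^2\cdot 14T/c_1\le\frac{1}{20}|\CCdetect|^2p^2$. Your explicit check that every realized configuration in $\CCdetect$ has all its edges stored, so that $X>0$ forces a detection, is a small addition that the paper only asserts as $Z\ge X$.
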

\begin{proof}
Let $Z$ be the number of configurations detected by the algorithm. The goal of our proof is to show that $\Prob{Z\geq 1} \geq 9/10$. However, due to the high variance of $Z$, we cannot show this directly, and instead define a random variable $X$ such that (deterministically) $Z\geq X$, and then we show that $\Prob{X\geq 1} \geq 9/10$.

More concretely,  set
\begin{align*}
    \CCdetect=\{\sigma_A\mid A\in\squares,\; \sigma_A \text{ is light}\},
\end{align*}
i.e., $\CCdetect$ contains all assigned configurations~$\sigma_A$ for four-cycles~$A$ such that $\sigma_A$ is light. As per \cref{thm:expectation}, this is the vast majority of the four-cycles as long as $\delta$ is small enough.

Now we set $X_c$ to the binary random variable indicating whether the configuration $c\in\CCdetect$ was realized by \Cref{algo:detect}, and we let $X=\sum_{c\in\CCdetect} X_c$ denote the number of realized configurations from $\CCdetect$.  Observe that $Z\geq X$ since $Z$ might as well count four-cycles that were sampled with configurations that do not count towards $X$. 

Thus, if $G$ contains at least $T$~four-cycles, we can apply Chebyshev's inequality and get that
\begin{align}
\label{eq:proof-detection}
    \Prob{Z\leq 0} &\leq \Prob{X\leq0}=\Prob{\Expectation{X}-X\geq\Expectation{X}}\leq\frac{\Var{X}}{\Expectation{X}^2}.
\end{align} 
Now, the main work will be to bound $\Expectation{X}$ and $\Var{X}$. 

First, let us bound $\Expectation{X}$. Note that each configuration $\sigma_A\in\CCdetect$ contains two nodes that are sampled with probability $p_{1,\kappa}$ and two nodes that are sampled with probability $p_{2,\kappa}$, for a suitable choice of $\kappa$. Thus, each configuration $\sigma_A$ is sampled with probability $p_{1,\kappa}^2 p_{2,\kappa}^2 = \frac{c_1^2 \kappa^2}{\delta^3 T} \frac{c_1^2}{\delta^3 \kappa^2} = \frac{c_1^4}{\delta^6 T} = \tildeO(\frac{1}{T})$, which is independent of $\kappa$. Thus, we set $p=\frac{c_1^4}{\delta^6 T}$ and observe that
 \begin{align*}
     \Expectation{X}=|\CCdetect|p.
 \end{align*}

Next, we have to bound the variance of~$X$, i.e., we have to bound the variance of sampling configurations from $\CCdetect$.
To do this, we need to consider pairs of dependent configurations $c$ and $d$ since
\begin{align*}
    \Var{X} \leq \sum_{c\in\CCdetect}\Var{X_c} + \sum_{c,d\in\CCdetect} \Cov{X_c, X_d}.
\end{align*}

Observe that the first term is bounded by 
\begin{align*}
    \sum_{c\in\CCdetect}\Var{X_c}
    \leq \Expectation{X}
    \leq \frac{1}{20}\Expectation{X}^2,
\end{align*}
since $\Var{X_c}\leq \Expectation{X_c^2}=\Expectation{X_c}$ and $\Expectation{X}\geq 20$
if we pick the constant $c_1$ in $p_{1,\kappa}$ and $p_{2,\kappa}$ large enough.

To bound the second term consisting of the covariances, consider two dependent configurations $c$ and $d$. Note that by \cref{observ:dependent_configurations} this implies that $c$ and $d$ share a labeled substructure. To compute the covariance, we are interested in the largest shared labeled substructure of $c$ and $d$. Therefore we define $c\cap d$ as the unique labeled substructure $(\substr,\kappa,\ell)$ with $(\substr,\kappa,\ell)\in c$, $(\substr,\kappa,\ell)\in d$ and maximum $|V(\substr)|$. Note that $c \cap d$ can also be an opposite node pair.

Assuming that $c$ is realized, $c\cap d$ is realized as well, and thus the probability of realizing $d$ is exactly $p_1^2p_2^2/p_{c\cap d}$. Here, we used that $c\cap d$ is maximal and $p_{c\cap d} = p_{(\substr,\kappa,\ell)}$ if $c\cap d = (\substr,\kappa,\ell)$.

For a labeled substructure $(\substr,\kappa,\ell)$, observe that $(\substr,\kappa,\ell)\in c$ for at most $t(\substr)$ configurations $c$.
We can now bound the covariance as
\begin{align*}
    \sum_{c,d\in\CCdetect} \Cov{X_c, X_d} 
    &=\sum_{c,d\in\CCdetect} \Expectation{X_c X_d} - \Expectation{X_c}\Expectation{X_d} \\
    &\leq \sum_{c,d\in\CCdetect, c\cap d\not=\emptyset} \Expectation{X_c X_d} \\
    &=\sum_{c\in\CCdetect}\Expectation{X_c}\sum_{(\substr,\kappa,\ell)\in c}\sum_{\substack{d\in\CCdetect\\(\substr,\kappa,\ell)=c\cap d}} \Expectation{X_d\mid X_c=1}\\
    &\leq  \sum_{c\in\CCdetect} p \sum_{(\substr,\kappa,\ell)\in c}\sum_{\substack{d\in\CCdetect\\(\substr,\kappa,\ell)=c\cap d}} p/p_{(\substr,\kappa,\ell)}\\
    &\leq \sum_{c\in\CCdetect} p \sum_{(\substr,\kappa,\ell)\in c} t(\substr) p/p_{(\substr,\kappa,\ell)}\\
    &\leq \sum_{c\in\CCdetect} p \sum_{(\substr,\kappa,\ell)\in c} \theta_{(\substr,\kappa,\ell)} p/p_{(\substr,\kappa,\ell)}\\
    &\leq |\CCdetect| p^2   \cdot 14 T/c_1\\
    &\leq \frac{1}{20}|\CCdetect|^2 p^2 \\
    &= \frac{1}{20} \Expectation{X}^2,
\end{align*}
where the sixth step follows from $c$ being light, the seventh step follows from \Cref{obs:relationship-theta-t-p} and the last inequality assumes a large enough constant $c_1$ and $T\leq 2|\CCdetect|$ which we get from \cref{thm:expectation} by choosing $\delta\leq \frac{1}{2098\log(T)}$.

This shows that
\begin{equation*}
    \Var{X}\leq \frac{1}{10}\Expectation{X}^2.
\end{equation*}
 Now, plugging this into \Cref{eq:proof-detection}, we get that
\begin{align*}
    \Prob{Z\leq 0} \leq\frac{\Var{X}}{\Expectation{X}^2}<\frac{1}{10}.
\end{align*} 
This proves that \cref{algo:detect} finds a configuration and thus a four-cycle with probability at least $\frac{9}{10}$.
\end{proof}

\subsection{Proof of \texorpdfstring{\Cref{thm:detect}}{Theorem 5.1}}

\label{sec:proofofthmdetect}
Now we prove \Cref{thm:detect}. We already showed that \Cref{algo:detect} detects a four-cycle with probability at least $9/10$ in \Cref{lem:detect} if we set $\delta=\frac{1}{2098\log(T)}$. 
 Next, we argue the algorithm's space bound.

First, recall that we do not have to store the sampled vertex sets explicitly, but that we can employ hash functions to store them implicitly using space $\tildeO(1)$. Thus, upon seeing a new edge in the stream, we can check whether its endpoints are sampled in one of the sets $S_{1,\kappa}$, $S_{2,\kappa}$, etc.\ by hashing them.

As above we have that $q := p_{1,\kappa}p_{2,\kappa} = \frac{c_1^2}{\delta^3 \sqrt{T}}$ is independent of the choice of $\kappa$. Thus, as we already argued in \Cref{sec:detection-algorithm-description}, storing the collected edges in the first pass takes space $\tildeO(m q)=\tildeO(m/\sqrt{T})$ in expectation.
In the second pass, we expect to find $Tq^2=\tildeO(1)$ four-cycles for each $\kappa\in\CI$. Thus, we detect $\tildeO(1)$~four-cycles in expectation and therefore store another $\tildeO(1)$ edges during the second stream pass.
We conclude that in total the algorithm uses $M := \tilde{O}(m/\sqrt{T})$ expected space. 

Now, observe that by a Markov bound, we get that we use space of more than $10M$, with probability at most $1/10$. Further, the algorithm fails to detect a four-cycle with probability at most $1/10$. Thus, with probability $8/10$, the algorithm detects a four-cycle and uses $\tildeO(m/\sqrt{T})$ space.

To obtain \Cref{thm:detect}, we run the algorithm $O(\log n)$ times in parallel. As soon as any of the instances of the algorithm uses more than $10M$~space, we immediately stop it. Thus, if the input graph contains at least $T$~four-cycles, at least one of the algorithm runs detects a four-cycle with high probability. If the input graph contains no four-cycles, the algorithm would never output a false positive by construction. This proves the theorem.

\section{Counting}
\label{sec:counting}

In this section, we prove the main result of our paper, which we restate from the introduction.

\thmcounting*

Before we describe our algorithm, we give a short overview of how our approach from \Cref{sec:detection} has to be adapted to extend our algorithm to the approximate counting setting.

Recall that in \Cref{sec:heaviness} we described configurations~$\sigma_A$ that we assigned to four-cycles~$A$. Then the crucial part of our analysis was to show that the set 
\begin{align*}
    \CCdetect=\{\sigma_A\mid A\in\squares,\; \sigma_A \text{ is light}\}
\end{align*}
satisfies $(1-\eps)T \leq |\CCdetect| \leq T$. The most difficult part was to show that $\CCdetect$ contains configurations for at least $(1-\eps)T$~four-cycles (see \Cref{thm:expectation}), i.e., to satisfy the first inequality, whereas the second inequality was trivial.
Then, to show that our algorithm detects a four-cycle, we only needed to bound the probability that at least one configuration from $\CCdetect$ is realized.
While the variance of the number of sampled four-cycles is generally high, we proved that the variance of the number of realized configurations from $\CC_{\texttt{Detect}}$ is small.

Thus, the natural approach for extending our detection algorithm to counting four-cycles is to estimate $|\CC_{\texttt{Detect}}|$.
Indeed, we would immediately get a $(1\pm\eps)$-approximate counting algorithm (after adapting our analysis from \cref{sec:detection}), if we could decide the following question during the algorithm: 
\begin{center}
    Given a realized configuration $c$, is $c\in \CCdetect$?
\end{center}

The main problem here is that to check whether $c\in\CCdetect$, we have to verify that the heaviness thresholds that we introduced in \Cref{sec:heaviness} are satisfied for all labeled substructures $(\substr,\kappa,\ell)\in c$. However, this is not possible, as we can show that there exist graphs for which verifying $t(v)\leq\theta_{(v,\kappa,\ell)}$ for some labeled node $(v,\kappa,\ell) \in c$ requires space $\omega(m/\sqrt{T})$. Thus, we have to abandon the idea of checking whether $c\in\CCdetect$.

To fix this, we refine our heaviness notion.
Concretely, now we define a new heaviness notion $t(\substr,\kappa,\ell)$ that replaces the old heaviness notion $t(\substr)$ in the definition of heavy and light substructures. This allows us to check whether substructures are heavy (i.e., whether $t(\substr,\kappa,\ell)\leq \theta_{(\substr,\kappa,\ell)}$) using oracles that only use space $\tildeO(m/\sqrt{T})$.
Now we can define a new set of configurations $\CC$ for which we will be able to answer the following question using one additional stream pass:
\begin{center}
Given a realized configuration $c$, is $c\in \CC$?
\end{center}

Similar to $\CCdetect$, the new set $\CC$ still only contains light configurations, but now it satisfies the relaxed property $(1-\eps)T\leq|\CC|\leq (1+\eps)T$. 
The main obstacle will be to show that our new heaviness notion $t(\substr,\kappa,\ell)$ satisfies all properties that we need to adopt our variance proof from \Cref{sec:variance}. In particular, one of the technical issues we encounter is that a single four-cycle~$A$ might now have multiple configurations~$c\in\CC$ (this is different from $\CCdetect$, where each four-cycle~$A$ had at most one configuration~$\sigma_A\in\CCdetect$), and thus showing that $|\CC|\leq(1+\eps)T$ will be non-trivial.

\smallskip
\paragraph{Outline of the remaining subsections.} We begin by describing our counting algorithm in \Cref{sec:countingsquares}. In \Cref{sec:refinedheaviness}, we introduce the notion of refined heaviness for labeled substructures and analyze its properties. To address the challenge that refined heaviness is no longer monotone, we define the concept of light, locally minimal (LLM) configurations in \Cref{sec:llm}. We then bound the number of LLM configurations in \Cref{sec:boundLLMconfigs}. In \Cref{sec:variance_counting}, we analyze the variance of our estimator and prove \Cref{thm:counting}. Finally, we defer the implementation details of checking whether a sampled configuration is LLM to \Cref{sec:oracles}.

\subsection{Algorithm Description}
\label{sec:countingsquares}
Next, we describe our counting algorithm and present its pseudocode in \Cref{algo:count}.
Due to the high similarity of \Cref{algo:count} to the detection algorithm (\Cref{algo:detect}), here we only describe how the two algorithms differ and refer to \Cref{sec:detection-algorithm-description} for a high-level description of the detection algorithm.

Before we explain \Cref{algo:count}, let us briefly note that the algorithm will make use of \emph{oracles}. In a nutshell (see \Cref{sec:oracles} for details), an oracle takes as input a labeled substructure $(\substr,\kappa,\ell)$ and returns whether $t(\substr,\kappa,\ell) \leq s \theta_{(\substr,\kappa,\ell)}$, where $t(\substr,\kappa,\ell)$ is the new heaviness notion that we will introduce below (replacing $t(\substr)$ from before) and $\theta_{(\substr,\kappa,\ell)}$ is the same heaviness threshold as introduced in \Cref{sec:heaviness}. Here, a minor difference is that for technical reasons we will slightly \emph{shift} the heaviness threshold by some value $s\in[1,2)$ where $s=(1+\frac{1}{L})^i$ for some $L\in\tildeO(1)$ and some random value of $i\in\mathbb{N}$. This allows us to ensure that all our oracle calls return the correct answer with some large constant probability.
We formally describe all guarantees we need from the oracles in \Cref{thm:oracles}, and we describe them in detail in \Cref{sec:oracles}.

Now in \Cref{algo:count}, we note that the overall structure of the algorithm stays the same compared to the detection algorithm. 
Before the stream passes, we sample the same vertex sets as before and some sample additional vertex sets that will be used for constructing our oracles. Further, we pick random shifts $s_1$, $s_2$ and $s_3$ for the heaviness thresholds. 
To obtain a $(1\pm\eps$)-approximate estimate, we run the algorithm for $\delta=\frac{\eps}{2098\log(T)^3}$ and amplify the sampling probabilities by a factor of $1/\delta^2$ compared to \cref{algo:detect}.

Then, during the first two stream passes, the counting algorithm proceeds exactly as the detection algorithm. Additionally, during these two passes, and also in the third stream pass, the algorithm collects additional edges which are necessary for our oracles (see \Cref{sec:oracles} for the details).

After the three stream passes finish, the algorithm iterates over all realized configurations~$c$ (i.e., all four-cycles it detected with the nodes sampled in the correct sets, see \Cref{sec:configurations}). For each such configuration, it checks whether $c\in\CC$. This is done by checking whether $c$ satisfies the LLM property, which we define below (see \Cref{def:LLM}), and this check can be done using the oracles we constructed in the three stream passes. For each $c\in\CC$ the algorithm increments a counter $X$ and eventually returns the estimate $X/p^2$, where $p = \frac{c^2}{\delta^7 \sqrt{T}}$. Here, we note that $p=p_{1,\kappa}p_{2,\kappa}$ for any $\kappa$ and $p^2$ is the probability a four-cycle~$A$ is realized with a configuration $c\in\CC$.

\begin{algorithm}[t]
\caption{Node Sampling for Four-Cycle Counting}
\label{algo:count}
\begin{algorithmic}[1]
\State $\CI = \{ 2^k T^{1/4} \mid k=0,\dots,\lceil \log(T^{1/4}) \rceil \}$
\State Pick random shifts $s_1,s_2,s_3 \in [1,2]$ as described in \Cref{sec:refinedheaviness}
\For{$\kappa \in \CI$}
    \State $p_{1,\kappa} \gets \frac{c \kappa}{\delta^{3.5} \sqrt{T}}$
    \State $p_{2,\kappa} \gets \frac{c}{\delta^{3.5} \kappa}$
    \State Sample $S_{1,\kappa}, R_{1a,\kappa}, R_{1b,\kappa} \subset V$ by sampling each node independently and u.a.r. with probability $p_{1,\kappa}$
    \State Sample $S_{2,\kappa}, R_{2a,\kappa}, R_{2b,\kappa} \subset V$ by sampling each node independently and u.a.r. with probability $p_{2,\kappa}$
    \State Sample node sets for the \IsLLM-oracle (see \Cref{sec:oracles})
\EndFor
\State \textbf{Pass 1:} Collect all edges in 
$E[S_{1,\kappa}, S_{2,\kappa}] \cup E[R_{1b,\kappa}, R_{2a,\kappa}] \cup E[R_{2a,\kappa}, R_{2b,\kappa}] \cup E[R_{2b,\kappa}, R_{1a,\kappa}]$ for all $\kappa$
\State \hspace{0.5em} Sample and collect edge sets for the \IsLLM-oracle (see \Cref{sec:oracles})
\State \textbf{Pass 2:} Collect all edges $(u,v) \in E[R_{1a,\kappa}, R_{1b,\kappa}]$ completing a four-cycle $(u, v, a, b)$ such that
\Statex \hspace{0.5em} $u\in R_{1a,\kappa}$, $v\in R_{1b,\kappa}$, $a\in R_{2a,\kappa}$, $b\in R_{2b,\kappa}$
\State \hspace{0.5em} Collect edge sets for the \IsLLM-oracle
\State \textbf{Pass 3:} Collect edges for the \IsLLM-oracle (see \Cref{sec:oracles})
\State $X \gets 0$
\ForEach{realized configuration $c$}
    \If{\IsLLM($c$), i.e., if oracle says $c \in \mathcal{C}$}
        \State $X \gets X + 1$
    \EndIf
\EndForEach
\State $p \gets \frac{c^2}{\delta^7 \sqrt{T}}$
\State $\hat{T} \gets X / p^2$
\State \Return $\hat{T}$
\end{algorithmic}
\end{algorithm}

\smallskip
\paragraph{Invalid Configurations.}
For technical reasons, we now introduce the notion of valid configurations, as we will need this in our definitions below. 
Formally, a configuration $(A,\kappa,x,y)$ is \emph{invalid} if $e=(x,y)\in E(A)$ and $t(e)\leq\sqrt{T}/\delta^2$, and otherwise it is \emph{valid}. We write $\FC$ to denote the set of all configurations, and we let $\FCvalid$ denote the set of all valid configurations.

Note that configurations $(A,\kappa,x,y)$ with opposite nodes $x,y\in A$ are always valid. Intuitively, valid configurations help us to ensure that if a configuration is realized as described in \Cref{eq:realized-R}, then there exists exactly one $\sqrt{T}/\delta^2$-heavy edge in the configuration. In other words, this ensures that we really only use configurations with two neighboring nodes $x$ and $y$ if the edge connecting the two nodes is $\sqrt{T}/\delta^2$-heavy, akin to the third hard instance in \Cref{sec:challenges-node-sampling}.  

Ensuring the validity of configurations is one of the reasons our algorithm requires three passes over the stream instead of just two. This is due to the fact that we use a special edge-sampling based oracle for this case, which is more accurate than node-sampling based oracles. However, this comes at the expense of one extra stream pass. See \cref{sec:oracleedge} for more details.

\subsection{Refined Heaviness}
\label{sec:refinedheaviness}
Next, we define our new heaviness notion $t(\substr,\kappa,\ell)$, which will enable us to build space-efficient heaviness oracles below.

Let $(\substr,\kappa,\ell)$ be a labeled substructure. By \emph{true heaviness} of $\substr$ we refer to the value $t(\substr)$, i.e., the number of four-cycles containing $\substr$. Note that this value is independent of $\kappa$ and $\ell$.
In contrast, our notion of refined heaviness $t(\substr,\kappa,\ell)$ depends on $\kappa$ and $\ell$.

Before we formally define $t(\substr,\kappa,\ell)$, we remark that the refined heaviness only differs from the true heaviness for nodes and for edges $e$ with $p_{(e,\kappa,\ell)}=p_{2,\kappa}^2$.
Intuitively, in the cases when $t(\substr,\kappa,\ell)$ differs from $t(\substr)$, 
$t(\substr,\kappa,\ell)$ is the number of configurations $c$ such that $(\substr,\kappa,\ell) \in c$ and such that all labeled substructures $(y,\kappa,\ell')\in c$ of larger cardinality than $\substr$ (i.e., $|V(\substr)|<|V(y)|$) are light.

Formally, we define the \emph{refined heaviness} $t(\substr,\kappa,\ell)$ of a labeled substructure $(\substr,\kappa,\ell)$ as follows, where $s_1$ and $s_2$ are the random shifts used in the algorithm and the ordering is picked such that we avoid circular definitions:
\begin{itemize}
\item If $\substr$ is neither a node nor an edge with $p_{(\substr,\kappa,\ell)} = p_{2,\kappa}^2$, then we set $t(\substr,\kappa,\ell)=t(\substr)$.
In other words, if $\substr$ is a wedge or node pair or an edge with $p_{(\substr,\kappa,\ell)} \in \{p_{1,\kappa}^2, p_{1,\kappa}p_{2,\kappa}\}$ then $t(\substr,\kappa,\ell)$ is just its true heaviness $t(\substr)$.

\item If $\substr$ is an edge with $p_{(\substr,\kappa,\ell)} = p_{2,\kappa}^2$, then we define
\begin{align*}
    t(\substr,\kappa,\ell)=|\{c\in \FC\mid (\substr,\kappa,\ell)\in c,\; t(\wedge,\kappa,\ell')\leq s_1\theta_{(\wedge,\kappa,\ell')} \text{ for all wedges } (\wedge,\kappa,\ell')\in c\}|.
\end{align*}

\item If $\substr$ is a node, then we define
 \begin{align*}
    t(\substr,\kappa,\ell)=|\{c\in \FCvalid\mid (\substr,\kappa,\ell)\in c,\; t(\wedge,\kappa,\ell')&\leq s_1\theta_{(\wedge,\kappa,\ell')} \text{ for all wedges } (\wedge,\kappa,\ell')\in c,\\\quad t(e,\kappa,\ell')&\leq s_2\theta_{(e,\kappa,\ell')} \text{ for all edges } (e,\kappa,\ell')\in c\}|.
\end{align*}
\end{itemize}

For this definition, several remarks are in order:
(1)~Although the refined heaviness counts configurations (instead of four-cycles) in the second and third case, it counts at most one configuration per four-cycle (see \Cref{obs:refined-at-most-true}).
(2)~The reason why we were able to set $t(\substr,\kappa,\ell)=t(\substr)$ in first case is that for these substructures it is ``easy'' to build space-efficient oracles that check our heaviness thresholds from \Cref{sec:heaviness}. The main technical challenges arise for the other substructures, where we needed to define $t(\substr,\kappa,\ell)$ differently. 
(3)~In the definition of $t(\substr,\kappa,\ell)$ for nodes, we need to count valid configurations to avoid double-counting four-cycles with two invalid but light configurations.

One of the technical intricacies of this definition is the following: In \Cref{sec:oracles}, we will build oracles that check whether a realized configuration~$c$ is light. This means that we have to check whether $t(\substr,\kappa,\ell) \leq \theta_{(\substr,\kappa,\ell)}$ for all $(\substr,\kappa,\ell)\in c$. Now suppose that $\substr=v$ is a node from $c$. Then our oracle will estimate $t(v,\kappa,\ell)$ by sampling additional configurations~$c'$ that contain $v$. However, now we can only count configurations~$c'$ towards our estimate of $t(v,\kappa,\ell)$ if they satisfy $t(\wedge,\kappa,\ell')\leq s_1 \theta_{(\wedge,\kappa,\ell')}$ for all labeled wedges $(\wedge,\kappa,\ell') \in c'$ and, additionally, $t(e,\kappa,\ell')\leq s_2 \theta_{(e,\kappa,\ell')}$ for all edges $(e,\kappa,\ell')\in c$. However, to check that these conditions are satisfied, we will again need oracles that verify these conditions. Thus, our oracles for checking the lightness of a configuration will (among others) call oracles for node-lightness, which in turn call oracles for edge-lightness, which then further call oracles for wedge-lightness. We present the details of this in \Cref{sec:oracles}.

\smallskip
\paragraph{Heavy Substructures, Configurations and Shifts.}
Next, we define slightly modified notions of heavy substructures and configurations. This is essentially the same as in \Cref{sec:heaviness}, except that now we use the refined heaviness $t(\substr,\kappa,\ell)$ instead of $t(\substr)$ and that we additionally introduce the random shifts $s_1$, $s_2$ and $s_3$ of the algorithm in the thresholds to ensure that our oracles succeed with large constant probability. We note that the values of $\theta_{(\substr,\kappa,\ell)}$ are still exactly the same as defined in \Cref{sec:heaviness}.

Consider a labeled substructure $(\substr,\kappa,\ell)$. Then:
\begin{itemize}
    \item If $\substr$ is a wedge then $(\substr,\kappa,\ell)$ is \emph{heavy} if $t(\substr,\kappa,\ell) > s_1 \theta_{(\substr,\kappa,\ell)}$.
    \item If $\substr$ is an opposite node pair then $(\substr,\kappa,\ell)$ is \emph{heavy} if $t(\substr,\kappa,\ell) > s_1^2 \theta_{(\substr,\kappa,\ell)}$.
    \item If $\substr$ is an edge then $(\substr,\kappa,\ell)$ is \emph{heavy} if $t(\substr,\kappa,\ell) > s_2 \theta_{(\substr,\kappa,\ell)}$.
    \item If $\substr$ is a node then $(\substr,\kappa,\ell)$ is \emph{heavy} if $t(\substr,\kappa,\ell) > s_3 \theta_{(\substr,\kappa,\ell)}$.
\end{itemize}
Otherwise the substructures are called \emph{light}\footnote{
    As in \Cref{sec:heaviness}, we can ignore opposite node pair heaviness as it is dominated by wedge heaviness: Both opposite node pairs of a configuration $c$ are light if its wedges are light as for any node pair $(\substr,\kappa,\ell_\substr)$ we have $t(\substr)\leq \theta_{(\wedge,\kappa,\ell)}^2\leq  \theta_{(\substr,\kappa,\ell_\substr)}$, where $\wedge$ is one of the wedges of $c$ connecting the node pair.
}.

Let us briefly note why we have to introduce the shifts and how we pick them.
While the refined heaviness generally allows us to build small space oracles for checking whether substructures are heavy or not (see \cref{sec:oracles}), these oracles are naturally inaccurate if the heaviness $t(\substr,\kappa,\ell)$ is very close to the heaviness threshold $\theta_{(\substr,\kappa,\ell)}$. For instance, if $t(\substr,\kappa,\ell)$ and $\theta_{(\substr,\kappa,\ell)}$ only differ by some small constant, we have essentially no hope of deciding the heaviness correctly while using small space.
We circumvent this inaccuracy by introducing small shifts to the heaviness thresholds: by multiplicatively shifting all thresholds by a small random real number, we can ensure that with constant probability, all substructures have a refined heaviness $t(\substr,\kappa,\ell)$ that is either at most $(1-\delta')\theta_{(\substr,\kappa,\ell)}$ or at least $(1+\delta')\theta_{(\substr,\kappa,\ell)}$, where $\delta' = \operatorname{poly}(\delta, \frac{1}{\log n})$, i.e., they are ``far away'' from the threshold.

All wedge thresholds are shifted by $s_1=(1+\frac{1}{200L''})^i$ where the exponent $i$ is randomly chosen by the algorithm s.t. $1\leq s_1 < 2$. Here $L''\in \tildeO(1)$ is an upper bound on the number of wedge oracle calls depending on the sampling probabilities and is specified in \cref{sec:oracles}.
All opposite node pair thresholds are shifted by $s_1^2$ accordingly. 
Similarly $1\leq s_2=(1+\frac{1}{200L'})^i<2$ and $1\leq s_3=(1+\frac{1}{200L})^i<2$ shift the edge and node thresholds, where $L,L'\in\tildeO(1)$ are specified in \cref{sec:oracles}.

Finally, a configuration is \emph{heavy} if at least one of its substructures is heavy with respect to the refined heaviness $t(\substr,\kappa,\ell)$, and otherwise it is \emph{light}.

\smallskip
\paragraph{Properties of the Refined Heaviness.}
Next, we state two key properties of the refined heaviness, which will be crucial to obtain our results for counting the number of four-cycles.

First, we observe that $t(\substr)$ is an upper bound on $t(\substr,\kappa,\ell)$.
This observation is crucial for us because it implies that all combinatorial arguments that we used in the proof of \cref{thm:expectation} also hold with respect to the refined heaviness.

\begin{observation}
\label{obs:refined-at-most-true}
    The refined heaviness is always at most the true heaviness, i.e., $$t(\substr,\kappa,\ell)\leq t(\substr).$$
\end{observation}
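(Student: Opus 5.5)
The proof goes by cases on the type of labeled substructure $(\substr,\kappa,\ell)$, following the three cases in the definition of refined heaviness.

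\emph{Wedges, opposite node pairs, and edges with $p_{(\substr,\kappa,\ell)}\in\{p_{1,\kappa}^2,p_{1,\kappa}p_{2,\kappa}\}$.} Here $t(\substr,\kappa,\ell)=t(\substr)$ by definition, so the inequality holds with equality.

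\emph{The remaining cases.} These are edges with $p_{(\substr,\kappa,\ell)}=p_{2,\kappa}^2$ and nodes. In both, $t(\substr,\kappa,\ell)$ is the size of a subset of
\[
\mathcal{M}:=\{c=(A,\kappa',x,y)\in\FC \mid (\substr,\kappa,\ell)\in c\},
\]
intersected with $\FCvalid$ in the node case. It therefore suffices to give an injection from the counted configurations into the set of four-cycles containing $\substr$, i.e.\ the map $c=(A,\kappa,x,y)\mapsto A$.

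\emph{Step 1: the map lands in the right set.} Since $(\substr,\kappa,\ell)\in c$ forces $V(\substr)\subseteq V(A)$, the image $A$ contains $\substr$. For an edge $\substr$, the labeling is only defined when $\substr\in E(A)$, so $A$ indeed contains $\substr$ as an edge.

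\emph{Step 2: injectivity.} Fix $A$. The value $\kappa$ is fixed, so distinct configurations with the same $A$ differ only in the pair $\{x,y\}$, i.e.\ in the labeling $\ell_c$. We must show that at most one such pair satisfies all constraints: compatibility with $\ell$ on $V(\substr)$, the lightness conditions, and validity in the node case.

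\emph{Edge with label probability $p_{2,\kappa}^2$.} Both endpoints of $\substr$ carry $p_2$-labels, so $x,y$ are the other two nodes of $A$. That pair is determined by $A$ and $\substr$. Moreover, the labeling is pinned down: the two $p_2$ nodes form an edge, so the configuration is of adjacent type, and $\ell$ fixes which endpoint is $R_{2a}$ and which is $R_{2b}$. This in turn fixes $x$ and $y$ via the neighbor structure in \eqref{eq:realized-R}. Hence at most one configuration exists per $A$.

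\emph{Node $v$.} This is the main obstacle. If $\ell(v)$ is a $p_2$-label, then $x,y$ lie among the other three nodes and $\ell(v)\in\{S_2,R_{2a},R_{2b}\}$ further restricts the choice, but possibly not to a single pair. If $\ell(v)$ is a $p_1$-label, several pairs containing $v$ may be consistent with $\ell$. In either situation I would use validity together with the lightness constraints to rule out all but one:
\begin{itemize}
\item An adjacent-type configuration is valid only if its $p_1^2$-labeled edge $(x,y)$ has $t(e)>\sqrt T/\delta^2$.
\item Lightness of its $p_1p_2$-labeled edges requires $t\le s_2\sqrt T/\delta^2$.
\item The two opposite-type pairs give labelings that differ, and $\ell(v)=S_1$ versus $S_2$ selects one of them.
\end{itemize}
The key comparison is between two distinct candidate configurations for the same $A$. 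I would show that the edge which is heavy ($>\sqrt T/\delta^2$, by validity) in one configuration would have to be light, or of the wrong heaviness, in the other.

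If this exclusion is not fully airtight because the shift $s_2$ creates a gap, I would fall back on the intended reading from remark~(1) following the definition of refined heaviness: the counted configurations are in correspondence with four-cycles, at most one per four-cycle. That correspondence yields $t(\substr,\kappa,\ell)\le t(\substr)$ directly.
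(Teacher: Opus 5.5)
\textbf{There is a genuine gap, and it is in the only case with real content.} Your treatment of wedges, of edges with $p_{(\substr,\kappa,\ell)}=p_{2,\kappa}^2$, and of nodes with $\ell(v)\in\{S_1,S_2\}$ matches the paper. The gap is the case $\ell(v)\in\{R_{1a},R_{1b},R_{2a},R_{2b}\}$.

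You say you ``would show'' the exclusion but never do. Your fallback is circular: remark~(1) after the definition of refined heaviness is justified in the paper by citing exactly this observation.

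The paper argues as follows. A counted $c$ is valid, so its $p_{1,\kappa}^2$-edge $(x,y)$ has $t(x,y)>\sqrt{T}/\delta^2$. Then $A$ contains only one such edge, ``since otherwise $c$ would contain a heavy edge''. So $\{x,y\}$ is determined, and $x<y$ fixes $c$.

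Making that step precise needs a fact your sketch lacks. Suppose $c_1=(A,\kappa,x_1,y_1)\neq c_2=(A,\kappa,x_2,y_2)$ both contain $(v,\kappa,\ell)$. Then $e_1=(x_1,y_1)$ and $e_2=(x_2,y_2)$ share a vertex:
\begin{itemize}
\item for $R_{1a},R_{1b}$, both edges contain $v$;
\item for $R_{2a},R_{2b}$, both avoid $v$, so both contain the node opposite $v$.
\end{itemize}
Hence inside $c_1$ the edge $e_2$ carries a $p_{1,\kappa}p_{2,\kappa}$-label. Its refined heaviness is $t(e_2)$, and it is tested against $s_2\sqrt{T}/\delta^2$.

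\textbf{Your suspicion about the shift is correct, and it applies to the paper's proof too.} Validity uses the unshifted threshold $\sqrt{T}/\delta^2$. So $t(e_2)\in(\sqrt{T}/\delta^2,s_2\sqrt{T}/\delta^2]$ makes $c_2$ valid while $e_2$ is light in $c_1$.

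Here is a concrete instance.
\begin{itemize}
\item Let $v$ be the smallest vertex and take $A_0=(v,a_1,b,a_2)$.
\item For $i=1,2$ and $j=1,\dots,k$, add four-cycles $(v,a_i,b_{ij},m_{ij})$ on fresh vertices.
\item Choose $k+1\in(\sqrt{T}/\delta^2,s_2\sqrt{T}/\delta^2]$. This is possible when $s_2>1$ and $T$ is large; pad with disjoint four-cycles.
\end{itemize}
Then $t(v)=2k+1$ and $t(v,a_i)=k+1$, and every other edge and every wedge lies in at most one four-cycle. For $\ell(v)=R_{1a}$ and any $\kappa$, both $(A_0,\kappa,v,a_1)$ and $(A_0,\kappa,v,a_2)$ are valid and light. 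So $t(v,\kappa,\ell)=2k+2>t(v)$.

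With the definitions exactly as written, the node case is therefore false, and no completion of your sketch can avoid this. The natural repair is to call $(A,\kappa,x,y)$ with $(x,y)\in E(A)$ valid iff $t(x,y)>s_2\sqrt{T}/\delta^2$. This is consistent with the paper reusing the $p_{1,\kappa}p_{2,\kappa}$ edge oracle as $\oracle_{\text{valid}}$. Under it, validity of $c_2$ makes $e_2$ heavy in $c_1$, so at most one configuration per four-cycle is counted. Your injection $c\mapsto A$ then goes through without appealing to remark~(1).
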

\begin{proof}
    For wedges and for edges~$e$ with $p_{(e,\kappa,\ell)}\neq p_{2,\kappa}^2$ this follows from the definition because $t(\substr,\kappa,\ell)=t(\substr)$.
    
    Next, consider an edge~$e=(u,v)$ with $p_{(e,\kappa,\ell)} = p_{2,\kappa}^2$.  Let $A=(u,v,a,b)$ be any four-cycle containing $e$.
    Now note that $c=(A,\kappa,a,b)$ is the only configuration of $A$ which contains $(e,\kappa,\ell)$. This is because the $\kappa$-values have to match and $a$ and $b$ need to be labeled $\ell_c(a),\ell_c(b)\in\{R_{1a},R_{2a}\}$ to extend $\ell$ to a labeling of a configuration $\ell_c$.
    Thus, any four-cycle containing $e$ counts at most once towards $t(e,\kappa,\ell)$.

    Finally, consider a node $\substr=v$. Let $A=(v,a,b,d)$ be any four-cycle containing $v$. We again show that at most one configuration of $A$ contains $(v,\kappa,\ell)$ and is valid.
    If $\ell(v)=S_1$ (resp. $\ell(v)=S_2$), there is a unique labeling $\ell_c$ which extends $\ell$ to $V(A)$ and results in the unique configuration $c=(A,\kappa,v,b)$ (resp. $(A,\kappa,a,d)$), which may count towards $t(v,\kappa,\ell)$.
    
    Now consider $\ell(v)\in \{R_{1a},R_{1b},R_{2a},R_{2b}\}$. Let $c$ be any configuration of $A$ which counts towards $t(v,\kappa,\ell)$.
    Then for $c$ to be valid, $A$ needs to contain a $\sqrt{T}/\delta^2$-heavy edge $e$. Furthermore, $A$ can only contain one such edge; otherwise, $c$ would contain a heavy edge. Let $x<y$ be the endpoints of $e$. Then $c$ must be $(A,\kappa,x,y)$ as $\ell_c$ needs to assign $\ell_c(x)=R_{1a}$ and $\ell_c(y)=R_{1b}$ for $c$ to be valid and contain $(v,\kappa,\ell)$. 
    Thus, for each four-cycle containing $v$, at most one of its configuration counts towards $t(v,\kappa,\ell)$, implying $t(v,\kappa,\ell)\leq t(v)$.
\end{proof}

Our second property shows that $t(\substr,\kappa,\ell)$ bounds the number of light configurations containing $(\substr,\kappa,\ell)$. This is crucial to show that \cref{algo:count} has low variance when excluding substructures based on their refined heaviness (see \cref{sec:variance_counting}).
 \begin{observation}
\label{observ:refinedheaviness}
    For any labeled substructure $(\substr,\kappa,\ell)$, at most $t(\substr,\kappa,\ell)$ configurations containing $(\substr,\kappa,\ell)$ are light.
\end{observation}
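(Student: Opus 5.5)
The plan is a case distinction on the type of substructure $\substr$, mirroring the three cases in the definition of refined heaviness. In each case we show that every light configuration $c$ with $(\substr,\kappa,\ell)\in c$ is one of the objects counted by $t(\substr,\kappa,\ell)$. This gives an injection from the set of light configurations containing $(\substr,\kappa,\ell)$ into the set counted by $t(\substr,\kappa,\ell)$, which proves the bound.

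\emph{Case 1: $t(\substr,\kappa,\ell)=t(\substr)$.} This covers wedges, node pairs, and edges with $p_{(\substr,\kappa,\ell)}\neq p_{2,\kappa}^2$. Here we show that each four-cycle $A\supseteq\substr$ has at most one configuration containing $(\substr,\kappa,\ell)$. The value of $\kappa$ is fixed, so it remains to see that the labels $\ell$ on $V(\substr)$ force the pair $(x,y)$.
\begin{itemize}
\item A label $S_1$ or $S_2$ forces an opposite-type configuration, and the $S_1$-labelled nodes (either given directly, or the remaining two nodes) determine $\{x,y\}$.
\item A label $R_{1a}$ or $R_{1b}$ pins down one of $x,y$, and the cyclic pattern $R_{1a}$--$R_{1b}$--$R_{2a}$--$R_{2b}$ around $A$ determines the rest.
\item Since $|V(\substr)|\ge 2$, at least one node carries a label, and its position in this cyclic pattern fixes the orientation.
\end{itemize}
This is the same extension argument used in the proof of \cref{obs:refined-at-most-true}. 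It follows that the number of configurations containing $(\substr,\kappa,\ell)$, light or not, is at most $t(\substr)=t(\substr,\kappa,\ell)$.

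\emph{Case 2: $\substr$ is an edge with $p_{(\substr,\kappa,\ell)}=p_{2,\kappa}^2$.} If $c$ is light, then in particular all labeled wedges $(\wedge,\kappa,\ell')\in c$ are light. By the definition of heaviness for wedges (refined heaviness equals true heaviness, with threshold shifted by $s_1$), this means $t(\wedge,\kappa,\ell')\le s_1\theta_{(\wedge,\kappa,\ell')}$. Hence $c$ satisfies exactly the membership condition of the set defining $t(\substr,\kappa,\ell)$. That set ranges over all of $\FC$, so no validity requirement arises.

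\emph{Case 3: $\substr$ is a node.} A light $c$ has all wedges light, which gives the $s_1$ condition, and all edges light, which gives $t(e,\kappa,\ell')\le s_2\theta_{(e,\kappa,\ell')}$. So $c$ satisfies the wedge and edge conditions. The remaining point, and the main obstacle, is that the counted set ranges only over $\FCvalid$, whereas a light configuration might a priori be invalid.

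To handle this, I would first try to show that light configurations containing a labeled node are automatically valid. If that fails, I would fall back on an injection from light configurations into the counted set. For the direct argument, suppose $c=(A,\kappa,x,y)$ is invalid, so $x,y$ are adjacent and $e=(x,y)$ has $t(e)\le\sqrt T/\delta^2$. Under $\ell_c$, the edge $e$ carries labels $R_{1a},R_{1b}$, so its threshold is $s_2\kappa^2/\delta^2\ge \sqrt T/\delta^2$, and lightness of $e$ does not exclude this. So validity does not seem to follow from lightness directly.

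This suggests that the intended convention is that invalid configurations count as heavy, or are excluded from consideration. Under that convention every light configuration lies in $\FCvalid$ and the inclusion goes through. I would make this convention explicit at this point, presumably matching the paper's later definition of LLM configurations, which only consider valid ones. With it, the case finishes exactly as Case 2 did.
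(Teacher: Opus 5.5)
The step that fails is the uniqueness claim in your Case~1 for opposite node pairs with $R$-labels. In a configuration the two opposite pairs carry the labels $(R_{1a},R_{2a})$ and $(R_{1b},R_{2b})$. These labels sit at distance two in the cyclic pattern $R_{1a}$--$R_{1b}$--$R_{2a}$--$R_{2b}$, and the reflection of $A$ that fixes them swaps the other two nodes. So the orientation is \emph{not} determined, and your third bullet does not apply.

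Concretely, for $A=(1,2,3,4)$ both $(A,\kappa,1,2)$ and $(A,\kappa,1,4)$ satisfy $\ell_c(1)=R_{1a}$ and $\ell_c(3)=R_{2a}$, so both contain the same labeled pair on $\{1,3\}$. Take $G$ to be a disjoint union of four-cycles. Every substructure then has heaviness at most $1$ and every threshold is at least $1$, so every configuration is light. The pair therefore lies in two light configurations, although $t(\substr,\kappa,\ell)=\os(1,3)=1$. For these pairs the observation is false as stated, so no argument can prove it. The most one can get is $2t(\substr,\kappa,\ell)$, which is harmless in the covariance sum; the paper essentially sets opposite pairs aside as dominated by wedges.

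Your injection is correct for wedges, for edges with $p_{(e,\kappa,\ell)}\in\{p_{1,\kappa}^2,p_{1,\kappa}p_{2,\kappa}\}$, and for $S$-labeled pairs. That part goes beyond the paper: its proof never treats the case $t(\substr,\kappa,\ell)=t(\substr)$ and implicitly reads a count of four-cycles as a count of configurations.

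Your Cases~2 and~3 are the paper's one-line argument read contrapositively: a configuration containing $(\substr,\kappa,\ell)$ that is not counted must contain a heavy edge or wedge, hence is heavy. Your diagnosis of the validity problem is correct, and the paper's proof silently overlooks it. In the same disjoint union, node $1$ with label $R_{1a}$ has $t(1,\kappa,\ell)=0$, because the only configurations containing it are $(A,\kappa,1,2)$ and $(A,\kappa,1,4)$, and both are invalid. Yet both are light and contain no heavy substructure.

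Rather than adopting a convention, state the observation for valid configurations. Both places the paper invokes it only need this version. In the covariance bound of \Cref{sec:variance_counting}, the configurations range over $\CC\subseteq\FCvalid$. In \Cref{lemma:nonmonotone}, the light configurations share the pair $(x,y)$ with an LLM configuration, and validity depends only on $(x,y)$.
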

\begin{proof}
    Consider any configuration $c$ containing $(\substr,\kappa,\ell)$ which does not count towards $t(\substr,\kappa,\ell)$. Then $c$ contains some other substructure $(y,\kappa,\ell')\in c$ of cardinality larger than $\substr$, which is heavy. But $(y,\kappa,\ell')$ being heavy implies that $c$ is heavy.
\end{proof}

\smallskip
\paragraph{(Non-)Monotonicity.}
Next, we discuss a drawback of our refined heaviness notion. For this, recall that in \Cref{sec:heaviness} we defined assigned configurations $\sigma_A=(A,\kappa_A,x_A,y_A)$. We used these assigned configurations to ensure that each four-cycle has at most one assigned configuration and to argue that $|\CCdetect|\leq T$. There, we intuitively chose $\kappa_A$ as the smallest $\kappa\in\CI$ over all light configurations of $A$. Thus, if we wanted to build an oracle for checking whether $c=\sigma_A$ for some configuration $c$ of $A$, this suggests the following strategy: If $c=(A,\kappa,x,y)$ then check if $(A,\kappa/2,x,y)$ is light; if that is the case, then report that $c\neq \sigma_A$, otherwise report that $c=\sigma_A$. Note that for this approach, it is crucial that if we fix $A$, $x$ and $y$, then whether a configuration is light is a monotone function with respect to the parameter~$\kappa$. Note that the same holds for all substructures.

Now the main disadvantage of the refined heaviness definition is that the property of a substructure being heavy is not monotone in $\kappa$ anymore:
Let $(v,\kappa,\ell)$ be a labeled node. The property of $(v,\kappa,\ell)$ being heavy is monotone in $\kappa$ for the true heaviness since the node heaviness $t(v)$ is constant for different $\kappa\in\CI$, while the heaviness threshold~$\theta_{(v,\kappa,\ell)}$ changes monotonically.
However, for the refined heaviness~$t(v,\kappa,\ell)$ we lose this nice property, as now not only the thresholds~$\theta_{(v,\kappa,\ell)}$ but also the heaviness~$t(v,\kappa,\ell)$ itself can vary for different $\kappa$. In particular, the definition of $t(v,\kappa,\ell)$ might include some configurations only for specific values of $\kappa$. An example is presented in \cref{fig:nonmonotone}. 

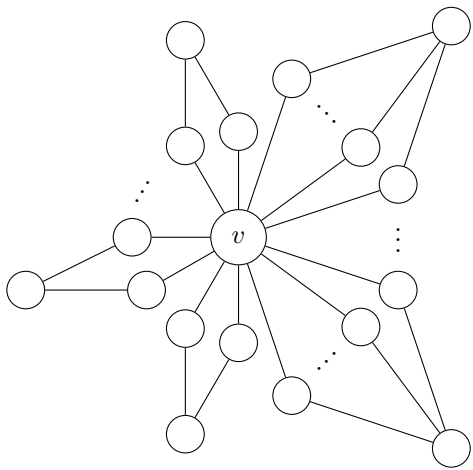
\begin{figure}
    \centering
    \tikzset{customnode/.style={circle, draw, minimum size=13pt, inner sep=5pt}}

\begin{tikzpicture}[scale=0.7]
    \node[customnode] (a) at (0,0) {$v$};
    \node[customnode] (b) at (3,1) {};
    \node[customnode] (c) at (2.3,1.7) {};
    \node[rotate=45] (x3) at (1.55,2.45) {$\vdots$};
    \node[customnode] (d) at (1,3) {};
    \node[customnode] (e) at (4,4) {};
    
    \node[] (x1) at (3,0.1) {$\vdots$};
    
    \node[customnode] (f) at (3,-1) {};
    \node[customnode] (g) at (2.3,-1.7) {};
    \node[rotate=135] (x4) at (1.55,-2.45) {$\vdots$};
    \node[customnode] (h) at (1,-3) {};
    \node[customnode] (i) at (4,-4) {};

    \node[customnode] (j) at (0,2) {};
    \node[customnode] (k) at (-1,3.73) {};
    \node[customnode] (l) at (-1,1.73) {};
    
    \node[customnode] (m) at (0,-2) {};
    \node[customnode] (n) at (-1,-3.73) {};
    \node[customnode] (o) at (-1,-1.73) {};

    \node[rotate=-30] (x2) at (-1.73,1) {$\vdots$};
    
    \node[customnode] (p) at (-1.73,-1) {};
    \node[customnode] (q) at (-4,-1) {};
    \node[customnode] (r) at (-2,0) {};
    
    \draw (a) -- (b) -- (e) -- (c) -- (a) -- (d) -- (e);
    \draw (a) -- (f) -- (i) -- (g) -- (a) -- (h) -- (i);
    \draw (a) -- (j) -- (k) -- (l) -- (a);
    \draw (a) -- (m) -- (n) -- (o) -- (a);
    \draw (a) -- (p) -- (q) -- (r) -- (a);
\end{tikzpicture}
    \caption{Example of a substructure with non-monotone heaviness.
    Here, we consider a node $v$ with label $\ell(v)=S_1$ which is contained in $2T^{3/4}/\delta^{1.5}$ edge-disjoint four-cycles. This makes $v$ heavy for $\kappa=T^{1/4}$ as $t(v,T^{1/4},\ell)>\theta_{(v,T^{1/4},\ell)}=T^{3/4}/\delta^{1.5}$. Additionally, $v$ is endpoint of $T^{1/8}/\delta^{1.5}$ onions of width $\kappa'=T^{3/8}$. Then all four-cycles in the onions do not count towards $t(v,\kappa'/2,\ell)$ but do count towards $t(v,\kappa',\ell)$. As $t(v,T^{3/8},\ell)>\theta_{(v,T^{3/8},\ell)}=T^{7/8}/\delta^{1.5}$, 
    we observe that $v$ is heavy for $\kappa=T^{1/4}$ and $\kappa=\kappa'$ but light for $\kappa=\kappa'/2$.}
    \label{fig:nonmonotone}
\end{figure}

\subsection{LLM Configurations}
\label{sec:llm}

To deal with the drawback that the heaviness of substructures and configurations is not monotone with respect to our refined heaviness, we now introduce the notion of LLM configurations. This will allow us to carefully select the configurations that we count in $\CC$ and in \cref{lemma:nonmonotone}.

Before we give our definition, we need some additional notation. 
Let $A$ be any four-cycle with node set $\{x,y,x',y'\}$.
Now, for any $\kappa\in\CI$, we identify the configuration $(A,\sqrt{T}/\kappa,x,y)$ with the configuration $(A,\kappa,x',y')$. This is useful for the following reasons. Note that $(A,\sqrt{T}/\kappa,x,y)$ is not a configuration according to \cref{def:config} as $\sqrt{T}/\kappa\notin\CI$ since $\sqrt{T}/\kappa < T^{1/4}$. However, if we extend the definitions of the probabilities $p_{1,{\kappa'}}$ and $p_{2,{\kappa'}}$ to arbitrary values $\kappa'$, we observe that
\begin{align*}
    p_{1,\kappa}=p_{2,{\sqrt{T}/\kappa}}
    \quad \text{ and } \quad
     p_{2,\kappa}=p_{1,{\sqrt{T}/\kappa}}.
\end{align*}
Thus a configuration $(A,\kappa,x',y')$ which samples $x'$ and $y'$ with probabilities $p_{1,\kappa}$ and $x$ and $y$ with probability $p_{2,\kappa}$ corresponds exactly to sampling $x$ and $y$ with $p_{1,{\sqrt{T}/\kappa}}$ and $x'$ and $y'$ with $p_{2,{\sqrt{T}/\kappa}}$. Furthermore, $x$ and $y$ are adjacent in $A$ if and only if $x'$ and $y'$ are adjacent in $A$. This shows that this is a natural extension of configurations to values $\sqrt{T}/\kappa\notin\CI$.

Now recall that the detection assignment $\sigma_A$ assigns each light four-cycle a light configuration with minimal~$\kappa$. As we already discussed above for the non-monotonicity of the refined heaviness, we cannot follow this approach any longer. It is possible that some four-cycle $A$ has configurations for which $(A,\kappa,x,y)$ and $(A,\kappa',x,y)$ with $\kappa \neq \kappa'$ are both light. The issue now is that we cannot check whether $(A,\kappa',x,y)$ is light for all $\kappa'<\kappa$ because if $\kappa'\ll \kappa$ then the oracle for checking the heaviness threshold $\theta_{(x,\kappa',\ell)}$ will have a too large space usage and will break our desired space bound. Thus, if we know that $(A,\kappa,x,y)$ is light, we can only check if $(A,\kappa',x,y)$ is light for $\kappa'$-values that are within a factor of $\tildeO(1)$ of $\kappa$ to guarantee that our oracles do not use too much space. This is the intuition for our next definition of locally minimal configurations, where we need to cover several corner cases.

\begin{definition}
\label{def:locally-minimial}
Let $c=(A,\kappa,x,y)$ be a light configuration. Then:
\begin{itemize}
    \item If $\kappa\geq T^{1/4}/\delta^2$, $c$ is \emph{locally minimal} if $(A,\kappa',x,y)$ is heavy for all $\kappa'\in\{\delta^2 \kappa, \dots, \kappa/2\}$.
    \item If $\kappa = T^{1/4}$, then $c$ is \emph{locally minimal} if $(x,y)$ is the lexicographically smallest node pair in the four-cycle.
    \item If $T^{1/4}< \kappa < T^{1/4}/\delta^2$ then we say that $c$ is \emph{locally minimal} if $(A,\kappa',x,y)$ is heavy for all $\kappa'\in\{\max \{\delta^2 \kappa,2\sqrt{T}/\kappa\}, \dots, \kappa/2\}$ and \emph{either} $(A,\sqrt{T}/\kappa,x,y):=(A,\kappa,x',y')$ is heavy as well \emph{or} $(A,\sqrt{T}/\kappa,x,y)$ is light and $(x,y)$ is lexicographically smaller than $(x',y')$.
\end{itemize}
\end{definition}

Note that in the final case of the definition above, we used our extended definition of configurations, where we allow the value $\kappa$ in the configuration to be less than $T^{1/4}$.

Next, we can define LLM configurations, which is the property that four-cycles must satisfy such that they are counted by \Cref{algo:count}.

\begin{definition}
\label{def:LLM}
    A configuration is \emph{LLM} if it is light, locally minimal and valid. We let $\CC$ denote the set of all LLM configurations.
\end{definition}

Note that whether a configuration is LLM is not a deterministic property, since the definition depends on our notion of refined heaviness, which in turn uses the random shifts from the algorithm.

Next, we state a useful observation about light and LLM configurations.
\begin{observation}
    If a four-cycle has a light configuration, then it also has an LLM configuration
\end{observation}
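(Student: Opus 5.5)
The plan is an extremal argument. Fix a four-cycle $A$ that has at least one light configuration, and choose a single configuration $c^*=(A,\kappa^*,x^*,y^*)$ that minimises $\kappa$ among a suitable family of light configurations of $A$, breaking ties by taking the lexicographically smallest pair $(x,y)$. We then check that $c^*$ meets each clause of \Cref{def:locally-minimial} and is valid, so that it is LLM.

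The first step is to fix the family. LLM also demands validity, so we minimise over the \emph{light and valid} configurations of $A$, which means we must first show this family is non-empty. Configurations whose pair $x,y$ is opposite are always valid. So we only need to treat a light configuration $(A,\kappa,x,y)$ in which $x,y$ are adjacent and $t(e)\le\sqrt{T}/\delta^2$ for $e=(x,y)$. For such a configuration we would try to show that the opposite-pair configuration with the same $\kappa$ is also light. The tools are \Cref{obs:refined-at-most-true}, the ordering of the thresholds within each substructure type, and the fact that a light edge $e$ with $t(e)\le\sqrt{T}/\delta^2$ meets the threshold of its relabelled version. I expect this to be the main obstacle. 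The refined heaviness of nodes counts only valid configurations and depends on the labeling, so the transfer needs care. If it fails, the fallback is to read the observation as "a light valid configuration implies an LLM configuration", which is all that is used later.

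Next we use one key fact: validity of $(A,\kappa,x,y)$ depends only on the pair $\{x,y\}$ and not on $\kappa$. Take $c^*$ as above and suppose $\kappa^*\ge T^{1/4}/\delta^2$. Every $(A,\kappa',x^*,y^*)$ with $\delta^2\kappa^*\le\kappa'\le\kappa^*/2$ is valid and has a smaller $\kappa$. By minimality it is therefore heavy, which is exactly local minimality.

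The same argument covers the window $\{\max\{\delta^2\kappa^*,2\sqrt{T}/\kappa^*\},\dots,\kappa^*/2\}$ in the intermediate range $T^{1/4}<\kappa^*<T^{1/4}/\delta^2$. That leaves the extra condition on the partner $(A,\sqrt{T}/\kappa^*,x^*,y^*)=(A,\kappa^*,x',y')$, which has the same $\kappa^*$. If it is heavy we are done. If it is light, note that $x',y'$ are adjacent exactly when $x^*,y^*$ are. Provided this partner is valid, lexicographic minimality of $(x^*,y^*)$ among light valid configurations with $\kappa=\kappa^*$ gives $(x^*,y^*)<(x',y')$, as required. If the partner is light but invalid, we again fall back on the relabelling argument from the first step.

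Finally, when $\kappa^*=T^{1/4}$ we have $p_{1,\kappa}=p_{2,\kappa}$, so all labelings give identical thresholds. We then argue that lightness at $\kappa=T^{1/4}$ does not depend on the choice of pair, at least among valid pairs. Then the lexicographically smallest pair yields a light configuration, and it is locally minimal by definition. Checking that this pair is also valid is the remaining corner case, handled as in the first step.
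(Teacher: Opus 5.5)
Your extremal idea is the same as the paper's one-line justification: the paper takes the light configuration of $A$ with smallest $\kappa$, as for $\sigma_A$, and does not mention validity. Your remark that validity depends only on $\{x,y\}$ is what makes minimising over light \emph{valid} configurations work on the window. This settles the case $\kappa^*\ge T^{1/4}/\delta^2$, and the intermediate range when $c^*$ is an opposite configuration.

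\textbf{The relabelling step is false.} It is the step you flagged, and it does not just need care. Refined node heaviness depends on the label \emph{type}, not only on whether the label is sampled with $p_{1,\kappa}$ or $p_{2,\kappa}$. For $\ell(v)\in\{R_{1a},R_{1b},R_{2a},R_{2b}\}$, $t(v,\kappa,\ell)$ counts only valid adjacent configurations, i.e.\ four-cycles carrying a $\sqrt{T}/\delta^2$-heavy edge. For $\ell(v)\in\{S_1,S_2\}$ it counts four-cycles whose edges are all light.

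Counterexample: take $A=(x,y,u,v)$, attach to each of $x$ and $y$ about $4T^{3/4}/\delta^{1.5}$ edge-disjoint four-cycles on fresh vertices, and add a disjoint component to reach $T$ four-cycles.
\begin{itemize}
\item No edge near $A$ is $\sqrt{T}/\delta^2$-heavy. So every $R$-labelled node of $A$ has refined heaviness $0$, and every edge and wedge of $A$ has heaviness $1$. Hence every adjacent configuration of $A$ is light for every $\kappa\in\CI$, and every one is invalid.
\item Every opposite configuration labels one of $x,y$ with $S_2$. That node has refined heaviness at least $4T^{3/4}/\delta^{1.5}>s_3T/(\kappa\delta^{1.5})$ for all $\kappa\in\CI$, so every opposite configuration is heavy.
\end{itemize}
So $A$ has a light configuration but no LLM configuration. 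The observation as literally stated is false, and only your fallback (light and valid implies LLM) can be proved. That is enough downstream, since the paper only applies it to $\sigma_A$, which is valid.

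\textbf{Open cases in the fallback.} In the two places where you again defer to relabelling, it cannot help, because it cannot make a pair valid.
\begin{itemize}
\item At $\kappa^*=T^{1/4}$, the definition asks for the lexicographically smallest pair of $V(A)$, and this may be an adjacent pair with a light edge. Suppose all four nodes of $A$ each lie in about $T^{3/4}/\delta^{1.5}$ such extra four-cycles and no edge is heavy. Then the only light valid configurations are the two opposite ones at $T^{1/4}$. If the two smallest vertices of $A$ are adjacent, nothing is LLM. Your pair-independence at $T^{1/4}$ holds between the two opposite pairs, but not between an adjacent valid pair and an opposite pair, by the label-type effect above. So this clause has to be read as a tie-break among light valid pairs; $\sigma_A$ uses the smallest opposite pair.
\item In the intermediate range with $c^*$ adjacent, the complementary pair $\{x',y'\}$ is the disjoint edge, and it may be invalid. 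This affects the partner $(A,\kappa^*,x',y')$. It also affects the part of the window with $\kappa'<T^{1/4}$, where $(A,\kappa',x^*,y^*)$ stands for $(A,\sqrt{T}/\kappa',x',y')$. There, ``validity depends only on the pair'' refers to $\{x',y'\}$, so minimality over light valid configurations tells you nothing. A light invalid configuration there blocks local minimality of $c^*$ as defined. You need either an argument that these configurations are heavy or an amended definition.
\end{itemize}
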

Note that this observation holds because for each light four-cycle $A$, we can pick the light configuration $(A,\kappa,x,y)$ with the smallest $\kappa$-value (exactly as when we defined $\sigma_A$ above). However, now there could also be more LLM configurations for~$A$, i.e., light configurations with non-minimal $\kappa$-values. Fortunately, we show in \Cref{thm:expectation_counting} that this cannot happen for too many four-cycles.

\smallskip
\paragraph{Oracle.}
Given the definitions of LLM configurations and the set~$\CC$, we now consider the guarantees that 
\cref{algo:count} requires for the oracle $\isLLM(\cdot)$, which checks whether $c\in\CC$ for any given configuration $c$.
The proposition below summarizes the oracle's guarantees, and we prove it in \cref{sec:oracles}.
\begin{restatable}{restatableproposition}{thmoracles}
\label{thm:oracles}
    There exists an oracle $\isLLM(\cdot)$ for checking whether a configuration is LLM that, with probability at least~$2/3$,
    answers all $\tildeO(1)$ queries by \cref{algo:count} correctly and uses $\tildeO(m/\sqrt{T})$ space.
\end{restatable}

\subsection{Bounding the Number of LLM Configurations} \label{sec:boundLLMconfigs}
Next, we bound the number of LLM configurations, which will be crucial for the analysis of our estimator below. In the following proposition we show that  $ ||\CC|-T| \leq 1049\delta\log(T)^3 T$.
\begin{proposition}
    \label{thm:expectation_counting}
    The number of LLM configurations is bounded by
    $(1-1049\delta\log(T))T \leq |\CC| \leq (1+ 3\delta\log(T)^3) T$.
\end{proposition}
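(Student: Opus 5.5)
Both inequalities can be handled by comparing $\CC$ with the detection assignment $\sigma_A$, now read with respect to refined heaviness.

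\textbf{Lower bound.} I would show that every four-cycle $A$ which is not excluded by \Cref{lemma:no2heavyedges}, \Cref{lemma:heavyedge_cant_be_in_onions} or \Cref{lemma:combination} has at least one LLM configuration. The proof of \Cref{thm:expectation} establishes that $\sigma_A$ is light under the true heaviness $t(\substr)$ with thresholds $\theta$. By \Cref{obs:refined-at-most-true} we have $t(\substr,\kappa,\ell)\le t(\substr)$, and the shifts satisfy $s_i\ge 1$, so $\sigma_A$ is also light under the refined notion. Next I would check that $\sigma_A$ is valid. If $x_A,y_A$ are adjacent, they were chosen as the endpoints of a $\sqrt T/\delta^2$-heavy edge, so validity holds by definition. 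Otherwise $x_A,y_A$ are opposite, and opposite configurations are always valid.

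Having a valid light configuration, I would take one whose $\kappa$ is minimal (among $\kappa\in\CI$, and with the extended convention $\sqrt T/\kappa$ when $\kappa$ is small), breaking ties lexicographically. This configuration is locally minimal in every case of \Cref{def:locally-minimial}. Validity is not affected by $\kappa$, so the minimizer can be taken over valid configurations. This gives an injection from the non-excluded four-cycles into $\CC$, and hence $|\CC|\ge(1-1049\delta\log T)T$.

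\textbf{Upper bound.} Write $|\CC|=\sum_A |\CC_A|$, where $\CC_A$ is the set of LLM configurations of $A$. The goal is to bound $\sum_A(|\CC_A|-1)_+$ by $3\delta\log(T)^3T$.

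First, I would show that for fixed $A$ and fixed $\kappa$ at most one valid configuration is light. This is the argument of \Cref{obs:refined-at-most-true}:
\begin{itemize}
\item an adjacent configuration must use the unique heavy edge;
\item if both opposite configurations were light, the two symmetric choices are related via the identification $\kappa\leftrightarrow\sqrt T/\kappa$, and the lexicographic tie-breaking clause in \Cref{def:locally-minimial} selects only one of them.
\end{itemize}
It follows that if $|\CC_A|\ge 2$, then $A$ has two LLM configurations $(A,\kappa,x,y)$ and $(A,\kappa',x',y')$ with $\kappa'\ge\kappa/\delta^2$, i.e.\ $\kappa$-values separated by a factor of at least $\delta^{-2}$. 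Indeed, local minimality of the configuration with the larger $\kappa$ forces every $\kappa''$ in the window $[\delta^2\kappa',\kappa'/2]$ to be heavy for the same pair. Different pairs $(x,y)$ would need a separate argument, and I expect the heavy-edge and lightness structure forces the pair to coincide.

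Second, I would charge these multiply-counted four-cycles. Lightness at the small value $\kappa$ gives $t(\substr,\kappa,\ell)\le 2\theta_{(\substr,\kappa,\ell)}$ for all substructures. Lightness at $\kappa'\ge\kappa/\delta^2$ places $A$ in a regime where some substructure is very light relative to its threshold at $\kappa'$, while another must be near-heavy. The natural attempt is a counting argument in the style of \Cref{lemma:combination}: $A$ would contain two substructures whose heaviness thresholds differ by a factor of $\delta^{-2}$. One of them is forced heavy at the intermediate $\kappa''$, which is implied by the heaviness of the window $[\delta^2\kappa',\kappa'/2]$. The other is light at $\kappa$, and the product bound $\frac{4T}{\theta_1}\frac{4T}{\theta_2}r$ then yields $O(\delta T)$ four-cycles per pair of scales. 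Summing over the $O(\log T)$ choices of $\kappa$ and of the window gives the $\log(T)^2$ factor. Each four-cycle has at most $O(\log T)$ LLM configurations, which supplies the third $\log T$ factor. Altogether $|\CC|\le T+3\delta\log(T)^3T$.

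\textbf{Main obstacle.} The hard step is the second one: showing that a light configuration at $\kappa$ together with a light, locally minimal one at $\kappa'\gg\kappa$ forces a rare combination of heavy and light substructures. The difficulty comes from non-monotonicity (\Cref{fig:nonmonotone}), so refined heaviness at $\kappa'$ cannot be compared directly with heaviness at $\kappa$. My first attempt would be to sandwich both with true heaviness. Heaviness at the intermediate $\kappa''$ implies true heaviness via \Cref{obs:refined-at-most-true}. For the other direction, the cycles missing from the refined count are exactly those containing a heavy wedge or edge. I would bound those missing cycles using \Cref{lemma:combination} and \Cref{lemma:heavyedge_cant_be_in_onions}.
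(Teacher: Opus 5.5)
Your lower bound matches the paper's argument: lightness under true heaviness transfers to refined heaviness by \Cref{obs:refined-at-most-true}, and you then take a light valid configuration of minimal $\kappa$. Your explicit check that $\sigma_A$ is valid is a welcome addition.

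The upper bound, however, has two genuine gaps. First, your reduction to ``two LLM configurations of one pair with scales separated by $\delta^{-2}$'' does not go through as you argue it. The claim that for fixed $A,\kappa$ at most one valid configuration is light is false: at $\kappa=T^{1/4}$ the two opposite configurations have identical thresholds and can both be light. The lexicographic tie-break you invoke is about local minimality, not lightness, and it exists only for $T^{1/4}<\kappa<T^{1/4}/\delta^2$. For $\kappa\ge T^{1/4}/\delta^2$, nothing in \Cref{def:locally-minimial} prevents $(A,\kappa,a,c)$ and $(A,\kappa,b,d)$ from both being LLM. So, contrary to your expectation, the pairs need not coincide. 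The missing idea is to apply the identification $(A,\kappa',x',y')=(A,\sqrt{T}/\kappa',x,y)$ across different scales, not only at equal ones. It turns two opposite LLM configurations into configurations of a single pair, one of them at a possibly extended scale $\sqrt{T}/\kappa'$. Then either $\sqrt{T}/\kappa'<\delta^2\kappa$, or this scale falls into the heavy window or the tie-break clause of the other configuration's local minimality, which is a contradiction. The mixed adjacent/opposite case is excluded because the adjacent configuration's heavy edge makes every opposite configuration heavy. In every surviving case you obtain, for one fixed pair $(x,y)$, a light $(A,\kappa_1,x,y)$, a heavy $(A,\kappa_2,x,y)$ and a light $(A,\kappa_3,x,y)$, with $\kappa_1\le\delta\kappa_2$ and $\kappa_2\le\delta\kappa_3$ up to rounding (take $\kappa_2\approx\delta\kappa_3$ inside the heavy window).

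Second, your charging step targets the wrong structure. A product bound $\frac{4T}{\theta_1}\frac{4T}{\theta_2}r$ needs both substructures to be heavy; a light substructure gives no $4T/\theta$ count. Sandwiching refined heaviness by true heaviness also fails. The difference $t(\substr)-t(\substr,\kappa,\ell)$ counts four-cycles through $\substr$ containing a single heavy wedge or edge, and these are not rare (e.g.\ the onion cycles in \Cref{fig:nonmonotone}). By contrast, \Cref{lemma:combination,lemma:heavyedge_cant_be_in_onions} only control cycles with two complementary heavy substructures.

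The paper never compares refined heaviness across scales. Because the three configurations share the pair $(x,y)$, they induce the same labeling $\ell$ on every substructure. Hence the labeled substructure $\substr$ that makes the middle configuration heavy is, with the same $\ell$, light at both $\kappa_1$ and $\kappa_3$. It must be a node or an $R_{2a}$--$R_{2b}$ edge, since substructures governed by true heaviness are monotone in $\kappa$. By \Cref{observ:refinedheaviness}, at most $O(\min\{\theta_{(\substr,\kappa_1,\ell)},\theta_{(\substr,\kappa_3,\ell)}\})=O(\delta\,\theta_{(\substr,\kappa_2,\ell)})$ light configurations contain it. By \Cref{obs:refined-at-most-true}, at most $4T/\theta$ substructures are heavy at threshold $\theta$. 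This gives $O(\delta T)$ four-cycles per choice of $(\kappa_1,\kappa_2,\kappa_3,\theta)$, which yields the $\delta\log(T)^3T$ bound. Non-monotonicity is thus charged, not ruled out. Your bookkeeping via $\sum_A(|\CC_A|-1)_+$ is the right quantity, and the triple count above, which bounds incidences between four-cycles and scale triples, controls it.
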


Now we show that most four-cycles admit an LLM configuration, which proves the first inequality in \cref{thm:expectation_counting}.
\begin{lemma}
    At most $1049\delta\log(T) T$ four-cycles do not admit an LLM configuration.
\end{lemma}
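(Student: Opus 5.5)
The plan is to reduce to the detection analysis. By the observation preceding \Cref{thm:oracles}, any four-cycle with at least one light configuration also has an LLM configuration, as long as that light configuration is valid. The candidate is the light configuration $(A,\kappa,x,y)$ with smallest $\kappa$; if several tie, take the lexicographically smallest pair. It is therefore enough to show that, outside a set of at most $1049\delta\log(T)T$ four-cycles, the detection assignment $\sigma_A=(A,\kappa_A,x_A,y_A)$ remains light under the refined heaviness with shifted thresholds, and is valid. I exclude exactly the four-cycles excluded in the proof of \Cref{thm:expectation}, namely those covered by \Cref{lemma:no2heavyedges}, \Cref{lemma:heavyedge_cant_be_in_onions} and \Cref{lemma:combination}. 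Their total number is $(328\delta^2+145\delta+576\delta\log T)T\le 1049\delta\log(T)T$.

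For every non-excluded $A$, \Cref{thm:expectation} gives $t(\substr)\le\theta_{(\substr,\kappa_A,\ell)}$ for every labeled substructure of $\sigma_A$. By \Cref{obs:refined-at-most-true}, $t(\substr,\kappa_A,\ell)\le t(\substr)$. The shifts satisfy $s_1,s_2,s_3\ge 1$, so each refined threshold $s_i\theta$ is at least $\theta$. Hence every node, edge and wedge of $\sigma_A$ is light in the refined sense. Opposite node pairs are dominated by wedges, as in the footnote, so $\sigma_A$ is light. One caveat: the combinatorial lemmas were stated with thresholds such as $\kappa/(2\delta)$. Since $s_i\ge 1$ only relaxes the lightness conditions, no extra exclusions should be needed. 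I will check this direction carefully, because the shifts enter only through the definition of refined heaviness and not through the exclusion lemmas.

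Validity must be checked separately. If $x_A,y_A$ are opposite nodes, $\sigma_A$ is valid by definition. If they are adjacent, the assignment rule chose them only because $(x_A,y_A)$ is a $\sqrt T/\delta^2$-heavy edge, so $\sigma_A$ is again valid. One subtlety is the case $\kappa_A=T^{1/4}$, where the rule picks the lexicographically smaller opposite pair; that choice is valid as well.

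Having a light valid configuration, I take among all light valid configurations of $A$ one with minimal $\kappa$, breaking ties lexicographically, and verify that it is locally minimal under \Cref{def:locally-minimial}. Every $\kappa'<\kappa$ in the required window gives a heavy configuration with the same $x,y$, by minimality. For the third case, I must handle the identification of $(A,\sqrt T/\kappa,x,y)$ with $(A,\kappa,x',y')$. If that configuration is light and valid, minimality combined with the lexicographic tie-break selects the right one. This needs care because it lives at the same $\kappa$ but with the complementary pair $x',y'$.

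I expect the main obstacle to be exactly this tie-breaking and validity interplay: some light configurations with smaller $\kappa$ may be invalid, and the ``heavy for all $\kappa'$'' condition in \Cref{def:locally-minimial} does not mention validity. My first fix is to minimize over all light configurations with the given pair $(x,y)$. For opposite pairs, validity is automatic. For adjacent pairs, validity depends only on the edge $(x,y)$ and not on $\kappa$, so minimizing with the pair fixed keeps the chosen configuration valid.
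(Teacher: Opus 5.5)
Your proposal follows the paper's proof. You exclude the at most $1049\delta\log(T)T$ four-cycles handled in \Cref{thm:expectation}, transfer lightness of $\sigma_A$ to the refined, shifted notion via \Cref{obs:refined-at-most-true} together with $s_1,s_2,s_3\ge 1$, and conclude with the observation that a four-cycle with a light configuration has an LLM configuration (take the light one of minimal $\kappa$). Your explicit check that $\sigma_A$ is valid fills in a detail the paper leaves implicit; the tie-breaking and validity subtleties you raise about local minimality sit inside that observation, which the paper only asserts, so your argument is as complete as the paper's own proof.
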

\begin{proof}
    Recall that by \cref{thm:expectation} there are at most $1049\delta\log(T)$ four-cycles that have no configuration in $\CCdetect$.
    Since any configuration in $\CCdetect$ is light with respect to the true heaviness~$t(\substr)$, this also bounds the number of light four-cycles. Now observe that these configurations are also light with respect to the refined heaviness since $t(\substr,\kappa,\ell)\leq t(\substr)$ for all substructures $(\substr,\kappa,\ell)$ by \Cref{obs:refined-at-most-true}.
    Thus, if a four-cycle admits a light configuration, it also admits an LLM configuration. This shows that at most $1049\log(T)\delta T$ four-cycles do not admit an LLM configuration.
\end{proof}

Next, to prove the second inequality in \cref{thm:expectation_counting}, recall that some four-cycles might have multiple LLM configurations in $\CC$ because being heavy is not a monotone property with respect to the refined heaviness. Thus, we say that a four-cycle $A$ is \emph{non-monotone} if there is a LLM configuration $(A,\kappa,x,y)$ and a light configuration $(A,\kappa',x,y)$ with $\kappa'<\kappa$ (and with equal $x,y$). 

Now we state an intermediate result showing that there are few non-monotone four-cycles.
\begin{lemma}
\label{lemma:nonmonotone}
    There exist at most $3\delta \log(T)^3 T$ non-monotone four-cycles.
\end{lemma}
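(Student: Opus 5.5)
We bound non-monotone four-cycles by charging each to a substructure whose refined heaviness differs substantially from its true heaviness at the larger $\kappa$. Fix a non-monotone $A$, an LLM configuration $c=(A,\kappa,x,y)$, and a light $c'=(A,\kappa',x,y)$ with $\kappa'<\kappa$.

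The first step uses local minimality. Every $(A,\kappa'',x,y)$ with $\kappa''$ in the local window $[\max\{\delta^2\kappa,2\sqrt T/\kappa\},\kappa/2]$ is heavy, with the boundary cases $\kappa=T^{1/4}$ and $\kappa<T^{1/4}/\delta^2$ handled by the lexicographic tie-break and the identification $(A,\sqrt T/\kappa,x,y)=(A,\kappa,x',y')$. Hence $\kappa'<\delta^2\kappa$, i.e.\ $\kappa/\kappa'>1/\delta^2$. Since $c'$ is light, the thresholds at $\kappa'$ are not exceeded by the refined heaviness. The thresholds that increase with $\kappa$ (nodes labeled $p_1$, edges labeled $p_1^2$, wedges labeled $p_1^2p_2$) are therefore comfortably satisfied at $\kappa$.

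The second step identifies the certificate. Lightness at $\kappa'$ and heaviness at some $\kappa''\in[\delta^2\kappa,\kappa/2]$ must be caused by a substructure whose threshold decreases in $\kappa$: a $p_2$-node, a $p_2^2$-edge, or a $p_1p_2^2$-wedge. Yet $c$ at $\kappa$, where the threshold is even smaller, is light. For wedges and $p_1p_2$-edges the refined and true heaviness agree, so this is monotone and impossible. The culprit must therefore be a $p_2$-node $v$ or a $p_2^2$-edge $f$ whose refined count at $\kappa$ is much smaller than at $\kappa''$, even though $t(v)$ is large, namely above $\theta_{(v,\kappa'',\ell)}\ge T/(\kappa''\delta^{1.5})$. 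The configurations counted at $\kappa''$ but not at $\kappa$ are those containing a wedge or edge that is light at $\kappa''$ but heavy at $\kappa$. So $A$ contains two heavy structures: a node of true heaviness at least $T/(\kappa\delta^{1.5})$ (respectively a heavy $p_2^2$-edge), together with a wedge/edge of heaviness at least about $\sqrt T/(\kappa\delta)$ or $\kappa^2/\delta^2$ at scale $\kappa$. Combined with the light substructures of $c$ at $\kappa$, this matches one of the patterns \cref{comb:1a}--\cref{comb:4c} of \cref{lemma:combination}, or the opposite-pair/onion counting of \cref{lemma:heavyedge_cant_be_in_onions}, at scale $\kappa$ or $\kappa''$.

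The third step is counting. For each fixed pair $(\kappa,\kappa'')$ we apply the product bound $\frac{4T}{\theta_1}\frac{4T}{\theta_2}r$ from the proof of \cref{lemma:combination}. The separation $\kappa/\kappa''\ge$ const together with the $\delta$-powers in the thresholds yields $O(\delta T)$ per pair and per case. Summing over the $O(\log^2 T)$ pairs $(\kappa,\kappa'')$ and over the $O(\log T)$ choices of which substructure flips (equivalently, the shift level) gives $3\delta\log(T)^3T$ after tracking constants.

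The main obstacle is the second step. We must show rigorously that a flip in refined heaviness between $\kappa''$ and $\kappa$ forces a concrete co-occurrence of two true-heavy structures in $A$ itself. The difficulty is that the lost configurations are other four-cycles through $v$, not $A$. I would try to sidestep this by bounding instead, for each heavy $p_2$-node $v$, the total number of four-cycles through $v$ that lie in configurations containing a heavy wedge/edge at some scale. This charges via $t(v)$ times the density of heavy wedges, and then uses the fact that $A$ itself is light at $\kappa$ in order to attribute at most $\theta$ cycles per certificate.
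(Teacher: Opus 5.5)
There is a genuine gap in your second step, and that step carries the whole argument.

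\textbf{Wrong restriction on the culprit.} You argue that the substructure which is heavy at the intermediate scale $\kappa''$ must have a threshold that decreases in $\kappa$. This treats the heaviness as $\kappa$-independent, which fails for refined node heaviness. A node labeled with probability $p_{1,\kappa}$ can be light at $\kappa'$, heavy at $\kappa''$ and light again at $\kappa$, because $t(v,\kappa,\ell)$ itself grows with $\kappa$. This is exactly the phenomenon shown in \cref{fig:nonmonotone}.

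\textbf{No forced co-occurrence inside $A$.} Your claim that $A$ contains two true-heavy structures matching a pattern of \cref{lemma:combination} is unsupported, and in general false. Since $(A,\kappa,x,y)$ is light, all edges and wedges of $A$ at scale $\kappa$ are light. The configurations that drop out of $t(v,\cdot,\ell)$ between $\kappa''$ and $\kappa$ belong to other four-cycles through $v$, as you note yourself. Nothing forces a co-occurrence inside $A$, so your third step has nothing to count. Separately, a merely constant ratio $\kappa/\kappa''$ cannot produce the factor $\delta$.

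\textbf{The missing idea.} No structural information about $A$ is needed; you charge $A$ directly to the heavy witness. Since $\kappa'<\delta^2\kappa$, choose $\kappa_2\in\CI$ with $\kappa_2\approx\delta\kappa$. Then $\kappa_1:=\kappa'\lesssim\delta\kappa_2$ and $\kappa_2\lesssim\delta\kappa_3$ with $\kappa_3:=\kappa$.
\begin{enumerate}
\item By local minimality, $(A,\kappa_2,x,y)$ is heavy, witnessed by some $(\substr,\kappa_2,\ell)$.
\item The labeling depends only on $x,y$, so the light configurations $(A,\kappa_1,x,y)$ and $(A,\kappa_3,x,y)$ contain $(\substr,\kappa_1,\ell)$ and $(\substr,\kappa_3,\ell)$.
\item The witness must be a node or an edge with $p_{(\substr,\kappa,\ell)}=p_{2,\kappa}^2$. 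All other substructures have $\kappa$-independent heaviness and monotone thresholds, so they cannot be light at both ends and heavy in the middle. Their thresholds scale like $\kappa$, $1/\kappa$ or $1/\kappa^2$.
\item By \cref{observ:refinedheaviness}, the number of four-cycles charged to a fixed witness is at most $\min\{t(\substr,\kappa_1,\ell),t(\substr,\kappa_3,\ell)\}=O(\min\{\theta_{(\substr,\kappa_1,\ell)},\theta_{(\substr,\kappa_3,\ell)}\})=O(\delta\,\theta_{(\substr,\kappa_2,\ell)})$. This holds whichever way the threshold is monotone, which is why you need both light neighbors.
\item By \cref{obs:refined-at-most-true}, heaviness at $\kappa_2$ implies $t(\substr)>\theta_{(\substr,\kappa_2,\ell)}$. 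Hence there are at most $O(T/\theta_{(\substr,\kappa_2,\ell)})$ possible witnesses.
\end{enumerate}
The product is $O(\delta T)$ per choice of $(\kappa_1,\kappa_2,\kappa_3)$ and threshold type. Summing over the $O(\log^3 T)$ choices gives the lemma.

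Your closing remark about using the lightness of $A$ to attribute at most $\theta$ cycles per certificate is the germ of this argument. Applied at both neighboring scales, it suffices by itself, without the combination lemma or any $t(v)$-times-density charging.
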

\begin{proof}
We prove the lemma in three steps: 
(1)~We show that any non-monotone four-cycle admits a triple of labeled substructures that induce its non-monotonicity. 
(2)~We show that each triple $Z$ induces at most $\delta\theta_Z$ non-monotone four-cycles for some $\theta_Z$ depending on the triple. 
(3)~We show that there are at most $\log(T)^2T/\theta_Z$ triples for each $\theta_Z$.

\emph{Step 1:} Any non-monotone four-cycle admits three configurations $c_1=(A,\kappa_1,x,y)$, $c_2=(A,\kappa_2,x,y)$, $c_3=(A,\kappa_3,x,y)$ where $\kappa_1\leq\delta \kappa_2\leq\delta \kappa_3$ and $c_1,c_3$ are light and $c_2$ is heavy.
Let $(\substr,\kappa_2,\ell)\in c_2$ be the labeled substructure which is heavy in $c_2$.
We know 
\begin{align*}
        {t}(\substr,\kappa_1,\ell)<\theta_{(\substr,\kappa_1,\ell)},\quad {t}(\substr,\kappa_2,\ell)>\theta_{(\substr,\kappa_2,\ell)},\quad  {t}(\substr,\kappa_3,\ell)<\theta_{(\substr,\kappa_3,\ell)}
\end{align*}

and say the triple $((\substr,\kappa_1,\ell),(\substr,\kappa_2,\ell),(\substr,\kappa_3,\ell))$ \emph{induces} the non-monotonicity of four-cycle $A$.
We know $\substr$ is either a node or a $R_{2a}$-$R_{2b}$-edge as all other substructures are heavy or light based on true heaviness $t(\substr)$, which is independent of $\kappa_1,\kappa_2,\kappa_3$.

\emph{Step 2:} Now observe that a triple $((\substr,\kappa_1,\ell),(\substr,\kappa_2,\ell),(\substr,\kappa_3,\ell))$ induces at most $\min\{\theta_{(\substr,\kappa_1,\ell)}, \theta_{(\substr,\kappa_3,\ell)}\}$ non-monotone four-cycles: 
By \cref{observ:refinedheaviness}, at most $t(\substr,\kappa_1,\ell)$ (resp. $t(\substr,\kappa_3,\ell)$) light configurations contain $(\substr,\kappa_1,\ell)$ (resp. $(\substr,\kappa_3,\ell)$). Thus the triple can induce the non-monotonicity of at most $\min\{t(\substr,\kappa_1,\ell),t(\substr,\kappa_3,\ell)\}\leq \min\{\theta_{(\substr,\kappa_1,\ell)}, \theta_{(\substr,\kappa_3,\ell)}\}$ four-cycles.

Crucially, we have $\min\{\theta_{(\substr,\kappa_1,\ell)}, \theta_{(\substr,\kappa_3,\ell)}\}\leq \delta \theta_{(\substr,\kappa_2,\ell)}$ as $\substr$ is either a node or a $R_{2a}$-$R_{2b}$-edge:
If $\substr$ is a node and $\ell(v)\in\{S_1,R_{1a},R_{1b}\}$, we have 
$$\theta_{(\substr,\kappa_1,\ell)}=\kappa_1\sqrt{T}/\delta^{1.5}\leq\delta \kappa_2\sqrt{T}/\delta^{1.5}=\delta\theta_{(\substr,\kappa_2,\ell)}.$$
If $\substr$ is a node and $\ell(v)\in\{S_1,R_{1a},R_{1b}\}$, we have
$$\theta_{(\substr,\kappa_3,\ell)}=T/(\delta^{1.5}\kappa_3)\leq\delta T/(\delta^{1.5}\kappa_2)=\delta\theta_{(\substr,\kappa_2,\ell)},$$
and if $\substr$ is an edge, we have 
$$\theta_{(\substr,\kappa_3,\ell)}=T/(\delta^{2}\kappa_3^2)\leq\delta^2 T/(\delta^{2}\kappa_2^2)\leq\delta\theta_{(\substr,\kappa_2,\ell)}.$$

\emph{Step 3:} Now we count the number of triples of substructures $((\substr,\kappa_1,\ell),(\substr,\kappa_2,\ell),(\substr,\kappa_3,\ell))$ such that $(\substr,\kappa_2,\ell)$ is heavy. Fix values $\kappa_1,\kappa_2,\kappa_3\in\CI$ and fix some heaviness threshold $\theta\in\{\kappa_2\sqrt{T}/\delta^{1.5},T/(\delta^{1.5}\kappa_2),T/(\delta^2\kappa_2^2)\}$. Observe that there are at most $4T/\theta$ heavy substructures $(\substr,\kappa_2,\ell)$ with $\theta_{(\substr,\kappa_2,\ell)}=\theta$.
Each induces at most $\delta\theta_{(\substr,\kappa_2,\ell)}$ non-monotone four-cycles and thus there are at most $\delta T$ non-monotone four-cycles for this choice of $\kappa_1,\kappa_2,\kappa_3,\theta$. As there are at most $3\log(T)^3$ choices for these parameters, this proves the lemma.
\end{proof}

Using \Cref{lemma:nonmonotone} we can now prove the second inequality of \Cref{thm:expectation_counting}.
\begin{lemma}
\label{lemma:oneLLMconfig}
    At most $3\delta (\log^3 T) T$ four-cycles have more than one LLM configuration.
\end{lemma}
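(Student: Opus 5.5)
The plan is to show that every four-cycle with two or more LLM configurations is non-monotone in the sense defined above (possibly after applying the identification $(A,\sqrt{T}/\kappa,x,y)=(A,\kappa,x',y')$). Then \Cref{lemma:nonmonotone} gives the bound $3\delta\log(T)^3T$ directly.

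So fix a four-cycle $A$ and two distinct LLM configurations $c=(A,\kappa,x,y)$ and $c'=(A,\kappa',x',y')$. I would argue by cases on the node pairs.

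\emph{Case 1: same pair, $\{x,y\}=\{x',y'\}$, and say $\kappa'<\kappa$.} Since $c'$ is LLM it is light, and $c$ is LLM, so $A$ is non-monotone by definition. The local-minimality condition of \Cref{def:locally-minimial} is what makes this case genuinely non-monotone: it forces every $(A,\kappa'',x,y)$ with $\delta^2\kappa\le\kappa''\le\kappa/2$ to be heavy. Hence $\kappa'<\delta^2\kappa$, which gives the separation $\kappa_1\le\delta\kappa_2\le\delta\kappa_3$ used in Step~1 of \Cref{lemma:nonmonotone}, with $\kappa_2$ taken from the heavy window.

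\emph{Case 2: different pairs.} Validity restricts the options. An adjacent pair is valid only if its edge is $\sqrt{T}/\delta^2$-heavy, and a light configuration contains at most one such edge (otherwise some labeled edge would be heavy). I would first try to show that $\{x,y\}$ and $\{x',y'\}$ cannot be one adjacent pair and one opposite pair both light. A light opposite configuration with labels $S_1,S_2$ gives every edge the label probability $p_1p_2$, so each edge must satisfy $t(e)\le s_2\sqrt{T}/\delta^2$. An LLM adjacent configuration needs that edge to be $\sqrt{T}/\delta^2$-heavy. The random shift $s_2<2$ leaves a gap between these two conditions, and I expect this gap to be the main obstacle. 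I would try to absorb such four-cycles into the counts of \Cref{lemma:no2heavyedges}/\Cref{lemma:heavyedge_cant_be_in_onions}, or argue that the shift makes this window negligible. If it cannot be absorbed, I would fall back to charging these cycles separately, at a cost of $O(\delta T)$, which is still within the bound.

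The other situation in Case 2 is that both pairs are opposite pairs, $\{x,y\}$ and $\{x',y'\}=V(A)\setminus\{x,y\}$. Via the identification, $(A,\kappa',x',y')=(A,\sqrt{T}/\kappa',x,y)$, so both configurations are on the pair $\{x,y\}$ with parameters $\kappa$ and $\sqrt{T}/\kappa'$.
\begin{itemize}
\item If both parameters lie in the same regime $\ge T^{1/4}/\delta^2$, this is Case 1 again.
\item If $\kappa=T^{1/4}$, lexicographic minimality selects a unique pair, so there is no conflict.
\item In the intermediate regime $T^{1/4}<\kappa<T^{1/4}/\delta^2$, the third bullet of \Cref{def:locally-minimial} is designed so that only one of the two mirrored configurations qualifies. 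Either the mirror is heavy, or the lexicographic tie-break applies.
\end{itemize}
Checking these subcases against the definition should show that two LLM configurations on complementary opposite pairs again yield a light configuration with a smaller $\kappa$ on the same pair, and so a non-monotone four-cycle. The remaining work is to confirm that the heavy window in the definition is exactly wide enough for Step~1 of \Cref{lemma:nonmonotone} to apply.
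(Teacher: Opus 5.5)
Your overall route is the paper's: show that a four-cycle with two LLM configurations is non-monotone, then invoke \Cref{lemma:nonmonotone}. Your same-pair case also matches the paper. But two cases are left open.

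\textbf{Opposite $c$ versus adjacent $c'$.} You are right that, read literally, validity of $c'$ gives only $t(x',y')>\sqrt{T}/\delta^2$. Lightness of the $p_{1,\kappa}p_{2,\kappa}$-labeled edge $(x',y')$ in $c$ gives only $t(x',y')\le s_2\sqrt{T}/\delta^2$. None of your fallbacks closes this window.
\begin{itemize}
\item Such four-cycles contain a single heavy edge and may have all wedges of heaviness $1$, so neither \Cref{lemma:no2heavyedges} nor \Cref{lemma:heavyedge_cant_be_in_onions} applies.
\item The window has constant multiplicative width for a typical $s_2$, so it is not negligible.
\item There is no $O(\delta T)$ bound on such four-cycles. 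Take a hub $u$ (the smallest vertex) adjacent to $k\approx\delta^2\sqrt{T}$ vertices $v_j$, each edge $uv_j$ lying in about $1.2\sqrt{T}/\delta^2$ four-cycles $(u,v_j,a,b)$ with private $a,b$. Suppose $s_2\ge 1.2$ and $\kappa\in\CI$ is least with $s_3\kappa\sqrt{T}/\delta^{1.5}\ge T$. Then for every four-cycle, both $(A,\kappa,u,v_j)$ and the opposite configuration on $\{u,a\}$ are LLM.
\item In any case, the stated bound equals that of \Cref{lemma:nonmonotone}, so there is no room for an extra additive charge.
\end{itemize}
The missing idea is to measure validity against the same shifted threshold $s_2\sqrt{T}/\delta^2$. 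This is effectively what $\oracle_{\text{valid}}$ computes, since it reuses the shifted $p_{1,\kappa}p_{2,\kappa}$ edge oracle. With that reading, $c'$ valid and $c$ light contradict each other outright; this is the paper's one-line argument. The same reasoning excludes two adjacent LLM configurations on edges that share a vertex.

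Your claim that a light configuration contains at most one $\sqrt{T}/\delta^2$-heavy edge is not automatic for the edge opposite $(x,y)$. That edge carries the $R_{2a}$--$R_{2b}$ label and is judged by refined heaviness. This residual case needs \Cref{lemma:no2heavyedges}, at the cost of a lower-order $O(\delta^2T)$ term; the paper's proof also passes over it.

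\textbf{Complementary opposite pairs.} Your bullets do not do the work here. The first is vacuous, since $\sqrt{T}/\kappa'\le T^{1/4}<T^{1/4}/\delta^2$ always holds. The third treats only the exact mirror $\kappa=\kappa'$. Your closing sentence also targets the wrong check: what matters is where $\sqrt{T}/\kappa'$ falls relative to the window of $c$, not whether that window suits Step~1 of \Cref{lemma:nonmonotone}.

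The argument you need is as follows. Assume $\kappa\ge\kappa'$ and $\kappa>T^{1/4}$, and read $c'$ as $(A,\sqrt{T}/\kappa',x,y)$. Then $\sqrt{T}/\kappa\le\sqrt{T}/\kappa'\le T^{1/4}\le\kappa/2$, and there are two possibilities.
\begin{itemize}
\item If $\sqrt{T}/\kappa'<\delta^2\kappa$, then $c'$ is a light configuration on $\{x,y\}$ with a smaller parameter, so $A$ is non-monotone.
\item Otherwise $\sqrt{T}/\kappa'$ lies in the window of \Cref{def:locally-minimial} for $c$. If $\kappa>\kappa'$, the doubling grid gives $\sqrt{T}/\kappa'\ge 2\sqrt{T}/\kappa$; if $\kappa=\kappa'$, this is the mirror clause. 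Local minimality of $c$ then forces $c'$ to be heavy or to lose the tie-break, contradicting $c'\in\CC$.
\end{itemize}
This dichotomy also covers $\kappa'=T^{1/4}<\kappa$, which your $\kappa=T^{1/4}$ bullet does not.
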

\begin{proof}
    We show that any four-cycle with more than one LLM configuration must also be non-monotone. Then the lemma immediately follows from \Cref{lemma:nonmonotone}. 

    In the proof, we write $\CC_{\text{adj}}\subset \CC$ for the set of all LLM configurations $(A,\kappa,x,y)$ in which $x$ and $y$ are adjacent and $\CC_{\text{opp}}$ for the set of all LLM configurations in which $x$ and $y$ are opposite nodes.

    Let $A$ be a four-cycle with two different LLM configurations $c=(A,\kappa,x,y)$ and $c'=(A,\kappa',x',y')$ with $\kappa\geq \kappa'$. 
    Note that here we make no assumptions on $x$, $y$, $x'$ and $y'$, i.e., it could be that $x=x'$ and $y=y'$, but it could also be that $x=y'$ and $x'=y'$.
    We perform a case distinction based on the values of $\kappa$ and $\kappa'$ and whether $c$ and $c'$ are in $\CC_{\text{adj}}$ or $\CC_{\text{opp}}$:
    \begin{itemize}
        \item Suppose $\kappa=\kappa'=T^{1/4}$. Then only one of $c$ or $c'$ is LLM by definition.
        \item Suppose $\kappa>\kappa'=T^{1/4}$. Note that $c'=(A,T^{1/4},x',y')$ being light implies that $c''=(A,T^{1/4},x,y)$ is light, as corresponding substructures in $c',c''$ have equal heaviness thresholds for $\kappa=T^{1/4}$. 
        Thus $A$ is non-monotone due to $c$ being LLM and $c''$ being light.
        \item Suppose $\kappa,\kappa'>T^{1/4}$. Now consider the following cases:
        \begin{itemize}
            \item  Suppose both $c,c'\in \CC_{\text{adj}}$. Then both $(x,y)$ and $(x',y')$ are $\sqrt{T}/\delta^2$-heavy-edges as otherwise $c$ or $c'$ are not valid and thus not locally minimal. Thus, we have $x=x'$ and $y=y'$ since a light four-cycle can only have one $\sqrt{T}/\delta^2$-heavy edge. As $c$ and $c'$ are LLM configurations with the same node pair $(x,y)$, $A$ is non-monotone.
            \item Suppose $c\in \CC_{\text{opp}}$ and $c'\in\CC_{\text{adj}}$.
            This cannot happen because $c'$ contains a $\sqrt{T}/\delta^2$-heavy edge $(x',y')$ (since $c'\in\CC_{\text{adj}}$), which would imply that $c$ is not light. 
            \item Suppose $c'\in \CC_{\text{opp}}$ and $c\in\CC_{\text{adj}}$. This cannot happen for the same reason as above (but we note that it is a different case since we assume that $\kappa\geq\kappa'$).
            \item Suppose $c,c'\in \CC_{\text{opp}}$. If $x=x'$ and $y=y'$, then this implies that $A$ is non-monotone. Otherwise, $\{x,y\}\cap \{x',y'\}=\emptyset$ and we have $(A,\kappa',x',y')=(A,\sqrt{T}/\kappa',x,y)$.             
            Now if $\sqrt{T}/\kappa'<\delta^2\kappa$, then $A$ is non-monotone.
            Otherwise, observe that we get the series of inequalities $\delta^2 \kappa \leq \sqrt{T}/\kappa' < T^{1/4} \leq \kappa$ and we also have that $\sqrt{T}/\kappa\leq \sqrt{T}/\kappa'$ since we assume that $\kappa\geq \kappa'$. Thus $\max\{\delta^2\kappa,\sqrt{T}/\kappa\}\leq\sqrt{T}/\kappa'$
            implying that $c$ is locally minimal only if $c'$ is heavy or not locally minimal. 
            Thus, only one of $c$ and $c'$ can be LLM, contradicting the assumption that both of them are LLM.
        \end{itemize}
    \end{itemize}
\end{proof}

The previous lemmas together prove \cref{thm:expectation_counting}.

\subsection{Variance Analysis and Proof of \texorpdfstring{\Cref{thm:counting}}{Theorem 1.1}}
\label{sec:variance_counting}

Now we analyze the variance of our estimator similar to \Cref{sec:variance}, but this time using our notion of refined heaviness. Throughout the section, we condition on the fact that all oracle calls to the $\IsLLM(\cdot)$ procedure are answered correctly, which happens with probability at least $2/3$ by \cref{thm:oracles}.

First, note that similar to \Cref{obs:relationship-theta-t-p}, we obtain that the sampling probabilities $p_{1,\kappa}$ and $p_{2,\kappa}$, which we use for \Cref{algo:count}, satisfy the following inequality.
\begin{observation}
\label{obs:relationship-theta-t-p-counting}
    For any labeled substructure $(\substr,\kappa,\ell)$, it holds that
    $\theta_{(\substr,\kappa,\ell)}<\delta^2Tp_{(\substr,\kappa,\ell)}/c_1$.
    \label{observ:heaviness2}
\end{observation}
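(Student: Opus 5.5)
The observation asks us to show $\theta_{(\substr,\kappa,\ell)} < \delta^2 T p_{(\substr,\kappa,\ell)}/c_1$ for every labeled substructure. The plan is a direct case check, since both sides are explicit monomials in $\kappa$, $T$ and $\delta$. In \Cref{algo:count} the probabilities are $p_{1,\kappa}=\frac{c\kappa}{\delta^{3.5}\sqrt{T}}$ and $p_{2,\kappa}=\frac{c}{\delta^{3.5}\kappa}$. Each is a factor $\delta^{-2}$ larger than its counterpart in \Cref{algo:detect}, and $c$ plays the role of $c_1$. Write $p_{(\substr,\kappa,\ell)}=p_{1,\kappa}^ip_{2,\kappa}^j$ with $i+j=|V(\substr)|\in\{1,2,3\}$. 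The thresholds $\theta_{(\substr,\kappa,\ell)}$ are unchanged from \Cref{sec:heaviness}.

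First I would compute $Tp_{(\substr,\kappa,\ell)}$ for each of the finitely many pairs $(i,j)$:
\begin{itemize}
\item nodes: $Tp_{1,\kappa}=c\kappa\sqrt{T}/\delta^{3.5}$ and $Tp_{2,\kappa}=cT/(\kappa\delta^{3.5})$;
\item edges and opposite pairs: $Tp_{1,\kappa}^2=c^2\kappa^2/\delta^7$, $Tp_{1,\kappa}p_{2,\kappa}=c^2\sqrt{T}/\delta^7$ and $Tp_{2,\kappa}^2=c^2T/(\kappa^2\delta^7)$;
\item wedges: $Tp_{1,\kappa}^2p_{2,\kappa}=c^3\kappa/\delta^{10.5}$ and $Tp_{1,\kappa}p_{2,\kappa}^2=c^3\sqrt{T}/(\kappa\delta^{10.5})$.
\end{itemize}

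Next I compare each value with the matching threshold. The $\kappa$- and $T$-dependence is identical on both sides in every case; that is exactly how the thresholds were chosen, as $\Theta(Tp)$ in \Cref{sec:heaviness}. So the ratio $\delta^2Tp_{(\substr,\kappa,\ell)}/\theta_{(\substr,\kappa,\ell)}$ is a pure constant times a power of $1/\delta$:
\begin{itemize}
\item for nodes it is $c\,\delta^{2}\delta^{-3.5}/\delta^{-1.5}=c$;
\item for edges and opposite pairs it is $c^2\delta^{2}\delta^{-7}/\delta^{-2}=c^2\delta^{-3}$;
\item for wedges it is $c^3\delta^{2}\delta^{-10.5}/\delta^{-1}=c^3\delta^{-7.5}$.
\end{itemize}
Since $\delta<1$, every ratio is at least $c$ (taking $c\ge1$). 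The claim then holds as soon as the constant $c$ in \Cref{algo:count} is chosen larger than $c_1$, and we are free to make that choice.

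The only delicate point is the node case. There the gain is exactly $\delta^{-2}$, coming from amplifying each probability, with no spare power of $\delta$. So strictness rests entirely on picking $c>c_1$, which I would state explicitly. All other cases have slack polynomial in $1/\delta$.
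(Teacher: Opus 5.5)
Your proposal is correct and is essentially the paper's argument. The paper justifies this observation only by analogy with \Cref{obs:relationship-theta-t-p}, i.e., by the same direct case check of $\theta_{(\substr,\kappa,\ell)}$ against $Tp_{(\substr,\kappa,\ell)}$ using the $\delta^{-2}$-amplified probabilities of \Cref{algo:count}, and your ratios $c$, $c^2\delta^{-3}$ and $c^3\delta^{-7.5}$ are right; your remark that the node case has no slack in $\delta$, so that strict inequality requires $c>c_1$ (the paper uses $c$ and $c_1$ interchangeably, which would give only equality there), is accurate.
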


Next, let $X_c$ be the binary random variable indicating whether configuration $c\in\CC$ was realized.
Let $X=\sum_{c\in\CC} X_c$ and as in the algorithm set $p = \frac{c^2}{\delta^7 \sqrt{T}}$.

Now, as before in \Cref{sec:variance}, we get that 
\[\Expectation{X}=|\CC|p^2,\]
and 
\[\Var{X} \leq \sum_{c\in\CC}\Var{X_c} + \sum_{c,d\in\CC} \Cov{X_c, X_d}.\]
Further, we again have
\begin{align*}
    \sum_{c\in\CC}\Var{X_c}\leq \Expectation{X}\leq \frac{\delta^2}{20}\Expectation{X}^2,
\end{align*}
since $\Expectation{X}>20/\delta^2$. 

Next, we bound the covariances analogously to \cref{sec:variance}, but using refined heaviness instead of true heaviness:
\begin{align*}
    \sum_{c,d\in\CC} \Cov{X_c, X_d} 
    &\leq  \sum_{c\in\CC}p^2\sum_{(\substr,\kappa,\ell)\in c}\sum_{\substack{d\in\CC\\(\substr,\kappa,\ell)=c\cap d}} p^2/p_{(\substr,\kappa,\ell)}\\
    &\leq \sum_{c\in\CC} p^2 \sum_{(\substr,\kappa,\ell)\in c} t(\substr,\kappa,\ell) p^2/p_{(\substr,\kappa,\ell)}\\
    &\leq \sum_{c\in\CC} p^2 \sum_{(\substr,\kappa,\ell)\in c} s_1^2s_2s_3\theta_{(\substr,\kappa,\ell)} p^2/p_{(\substr,\kappa,\ell)}\\
    &\leq |\CC| p^4   \cdot 14 s_1^2s_2s_3 T/c\\
    &\leq \frac{1}{20}|\CC|^2 p^4,
\end{align*} 
where the first step is analogous to \Cref{sec:variance}. The second step follows from \Cref{observ:refinedheaviness}. The third step follows from the fact that all $c\in\CC$ are light and the fact that 
$s_1^2s_2s_3$ is an upper bound on the largest shift $\max\{s_1,s_1^2,s_2,s_3\}$. The fourth step follows from \Cref{obs:relationship-theta-t-p-counting}.
 The final step follows from picking $c$ large enough and \cref{thm:expectation_counting}.

This shows
\begin{equation*}
    \Var{X}\leq \frac{1}{10}\Expectation{X}^2.
\end{equation*}

We can now bound the success probability of \cref{algo:count} using a Chebyshev bound.
\begin{lemma}
    Conditioned on the event that all oracle calls $\IsLLM(\cdot)$ are answered correctly, we have $\Prob{|\hat{T}-T|\geq 1049\delta\log(T)^3 T}\leq \frac{1}{10}$.
    \label{lemma:accuracy}
\end{lemma}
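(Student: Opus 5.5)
We combine Chebyshev's inequality, using the variance bound just derived, with the bound on $|\CC|$ from \Cref{thm:expectation_counting}. Throughout we condition on all $\IsLLM(\cdot)$ answers being correct. Under this conditioning, the counter in \Cref{algo:count} equals exactly $X=\sum_{c\in\CC}X_c$, the number of realized LLM configurations. Hence $\hat{T}=X/p^2$ and
\[
\Expectation{\hat T}=\frac{|\CC|p^2}{p^2}=|\CC| .
\]
A realized configuration $c\notin\CC$ is rejected by the oracle and therefore does not contribute. Here we use that the oracle's correctness covers every realized configuration the algorithm queries. By \Cref{thm:oracles} there are only $\tildeO(1)$ such queries, so this is part of the conditioning.

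\textbf{Step 1: bias.} By \Cref{thm:expectation_counting},
\[
\bigl||\CC|-T\bigr|\le \max\{1049\delta\log T,\;3\delta\log(T)^3\}\,T\le 1049\delta\log(T)^3T/2 .
\]
This holds for $T$ large enough, or after adjusting constants. Any slack is absorbed by splitting the target deviation into two halves.

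\textbf{Step 2: concentration.} We apply Chebyshev's inequality to $\hat T$ with deviation $\lambda=1049\delta\log(T)^3T/2$:
\[
\Prob{|\hat T-|\CC||\ge\lambda}\le\frac{\Var{X}}{p^4\lambda^2}.
\]
The covariance computation above gives
\[
\Var{X}\le \frac{\delta^2}{20}\Expectation{X}^2+\frac{1}{20}|\CC|^2p^4 .
\]
The non-diagonal term is bounded using \Cref{obs:relationship-theta-t-p-counting}, which carries the extra factor $\delta^2$. Carrying that factor through, the covariance sum is at most $O(\delta^2/c)|\CC|^2p^4$. With $c$ large enough this gives $\Var{X}\le \gamma\,\delta^2\Expectation{X}^2$ for a small constant $\gamma$. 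Consequently,
\[
\frac{\Var{X}}{p^4\lambda^2}\le \frac{\gamma\,\delta^2|\CC|^2}{\lambda^2}\le \frac{4\gamma(1+o(1))}{1049^2\log(T)^6}\le\frac1{10}.
\]
The triangle inequality with Step 1 then yields the lemma.

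\textbf{Main obstacle.} The delicate point is Step 2's scaling. The stated bound $\Var{X}\le\frac1{10}\Expectation{X}^2$ alone only gives constant relative error. We therefore need the sharper relative variance $O(\delta^2)$, which comes from the amplified sampling probabilities, i.e.\ the factor $1/\delta^2$ in \Cref{obs:relationship-theta-t-p-counting}. I would first redo the last two lines of the covariance chain keeping the $\delta^2$ factor explicit, together with $|\CC|\ge T/2$. If this falls short, I would instead take the deviation to be $\Theta(\delta T)$ plus the bias, which is still within $1049\delta\log(T)^3T$.
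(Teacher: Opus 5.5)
Your proposal is correct and takes the paper's route: Chebyshev's inequality for $\hat T=X/p^2$ with $\Expectation{\hat T}=|\CC|$, combined with the two-sided bound on $|\CC|$ from \Cref{thm:expectation_counting}. The paper uses the relative deviation $\delta\Expectation{\hat T}$ instead of your additive split into bias plus fluctuation, which is only cosmetic; your insistence on carrying the $\delta^2$ factor from \Cref{obs:relationship-theta-t-p-counting} through the covariance sum is well placed, because the paper's step $\Var{X}/(\delta^2\Expectation{X}^2)\le 1/10$ silently needs $\Var{X}\le\frac{\delta^2}{10}\Expectation{X}^2$ rather than the displayed $\Var{X}\le\frac{1}{10}\Expectation{X}^2$.
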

\begin{proof}
Recall that $\hat{T}=X/p^2$.
    By Chebyshev's inequality, we obtain that
    \begin{align*}
        \Prob{|\hat{T}-\Expectation{\hat{T}}|\geq {\delta} \Expectation{\hat{T}}}&\leq \frac{\Var{\hat{T}}}{\delta^2\Expectation{\hat{T}}^2} = \frac{\Var{X}}{\delta^2\Expectation{X}^2} 
        \leq \frac{1}{10}.
    \end{align*}
    Since we have that
    $$\Expectation{\hat{T}}=|\CC|$$ and since
    $$(1-1049\log(T)\delta)T \leq |\CC| \leq (1+ 3\log(T)^3\delta) T$$
    by \cref{thm:expectation_counting}, 
    this proves the claim.
\end{proof}

We can now prove \Cref{thm:counting}.
\begin{proof}[Proof of \Cref{thm:counting}]
    We run \cref{algo:count} for $\delta=\frac{\eps}{2098\log(T)^3}$. Here $\eps$ is the desired approximation guarantee of the algorithm estimate.
    The space bound follows from \cref{thm:detect,thm:oracles} as resetting $\delta$ only adds logarithmic factors.
    The oracle answers all queries correctly with probability $\frac{2}{3}$.
    By \cref{lemma:accuracy} the algorithm computes a $(1\pm{1049\delta\log(T)^3})$-approximate solution with probability at least $\frac{9}{10}$, which gives a success probability of $\frac{3}{5}$ for a single run of \cref{algo:count}. 
    Further, note that above we bounded the algorithm's space usage by $M=\tildeO(m/\sqrt{T})$ in expectation. Observe that by a Markov bound, the algorithm uses space more than $10M$ with probability at most $\frac{1}{10}$.
    
    Now, if we run the algorithm $O(\log n)$ times independently and stop each instance whenever it exceeds space $10M$, we can apply the median trick to obtain a $(1\pm\eps)$-approximate answer with high probability.
\end{proof}

\section{Oracles}
\label{sec:oracles}

In this section, we design an oracle for checking whether a sampled configuration is LLM, and we prove \cref{thm:oracles}, which we restate here for convenience.

\thmoracles*

We note that our oracle for checking whether a configuration is LLM uses heaviness oracles to classify whether substructures are heavy or light as subroutines. We start by explaining the LLM oracle, before we proceed to explain the heaviness oracles for the substructures.

\smallskip
\paragraph{LLM Oracle.}
The procedure to check whether a given configuration $c$ is LLM, i.e., whether $c\in \CC$, is given in \cref{algo:oracle}. It checks whether $c=(A,\kappa,x,y)$ is LLM analogously to the definition of LLM configurations (see \Cref{def:LLM}).
Recall that intuitively a configuration $(A,\kappa,x,y)$ is LLM if it is valid and light, and all $(A,\kappa',x,y)$ with $\kappa> \kappa'\geq\delta^2 \kappa$ are heavy. Thus, the oracle checks all of these conditions from the definition by calling other oracles for these specific properties. We note that the pseudocode in \Cref{algo:oracle} is stated for the case of $\kappa\geq T^{1/4}/\delta^2$, and that for other values of $\kappa$, the concrete intervals for the values of $\kappa'$ differ (see \Cref{def:locally-minimial}). However, the other cases work analogously and we omit this for the sake of brevity.

Note that the LLM oracle uses the following two subroutines:
\begin{itemize}
\item To check whether $(A,\kappa,x,y)$ is light, the function $\isHeavy(\cdot)$ checks whether any substructure of $(A,\kappa,x,y)$ is heavy. For this task, $\isHeavy(\cdot)$ calls the function $\oracle(\substr,\kappa,\ell)$ which returns whether the substructure $(\substr,\kappa,\ell)$ is heavy or light.
\item If $e=(x,y)$ is an edge of $A$, then, to check whether $(A,\kappa,x,y)$ is valid, we employ an oracle $\oracle_{\text{valid}}(\cdot)$ which checks whether $t(e)>\sqrt{T}/\delta^2$ .
\end{itemize}

In the following, we denote the calls of the LLM oracle to other oracles as \emph{heaviness oracle calls} and obtain the following observation.

\begin{observation}
\label{observ:LLMoracle}
    If all heaviness oracle calls during \cref{algo:oracle} are answered correctly, then $\isLLM(c)$ correctly returns whether $c\in\CC$.
\end{observation}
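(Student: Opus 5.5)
The plan is to prove the observation by checking that \Cref{algo:oracle} evaluates, condition by condition, exactly the Boolean formula that defines membership in $\CC$ (\Cref{def:LLM} together with \Cref{def:locally-minimial}). Under the hypothesis that every heaviness oracle call is answered correctly, each atomic predicate is then evaluated correctly, and hence so is the formula.

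First, I will unfold the definition. A configuration $c=(A,\kappa,x,y)$ lies in $\CC$ iff three conditions hold.
\begin{itemize}
\item[(a)] $c$ is valid, i.e., either $(x,y)\notin E(A)$ or $t((x,y))>\sqrt{T}/\delta^2$.
\item[(b)] $c$ is light, i.e., every labeled substructure $(\substr,\kappa,\ell)\in c$ satisfies $t(\substr,\kappa,\ell)\le s\,\theta_{(\substr,\kappa,\ell)}$ with the appropriate shift $s\in\{s_1,s_1^2,s_2,s_3\}$.
\item[(c)] $c$ is locally minimal.
\end{itemize}
Condition (c) splits into the three cases on $\kappa$. Each case is a finite conjunction of statements ``$(A,\kappa',x,y)$ is heavy'' over the listed $\kappa'$. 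In the middle regime it also contains a disjunction involving the lightness of $(A,\sqrt{T}/\kappa,x,y)=(A,\kappa,x',y')$ and a purely deterministic lexicographic comparison. The case $\kappa=T^{1/4}$ needs no oracle at all.

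Second, I will argue that the oracle mirrors this structure. Validity is decided by $\oracle_{\text{valid}}$ exactly when $(x,y)\in E(A)$; when $x,y$ are opposite, the algorithm declares $c$ valid deterministically. Lightness of any configuration $(A,\kappa',x,y)$ is decided by $\isHeavy$, which by construction calls $\oracle(\substr,\kappa',\ell)$ for every labeled substructure of that configuration and takes the disjunction. Opposite node pairs can be skipped by the footnote: they are dominated by wedges. I will then match the loop over $\kappa'$ in \Cref{algo:oracle} to the index set in \Cref{def:locally-minimial} for each regime of $\kappa$, including the omitted analogous cases. The lexicographic tie-breaks are computed directly from the node labels.

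The only point needing care, and the main obstacle, is to make sure ``correct'' is used with respect to the refined heaviness and the same random shifts as in the definition of $\CC$. In particular, $\oracle(\substr,\kappa,\ell)$ must decide $t(\substr,\kappa,\ell)\le s\,\theta$ rather than use the true heaviness. The nested definition of $t(\cdot)$ for nodes and $R_{2a}$-$R_{2b}$ edges, which itself refers to lightness of wedges and edges, is internal to that oracle. That nesting is handled by the hypothesis that all heaviness oracle calls, including nested ones, are correct. Given this, every atomic predicate evaluates to its true value, so the composed Boolean expression returned by $\isLLM(c)$ equals the indicator of $c\in\CC$.
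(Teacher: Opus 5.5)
Your proposal is correct and follows the same reasoning the paper relies on. \isLLM evaluates, condition by condition, the defining conjunction of \Cref{def:LLM}: lightness via \isHeavy (opposite node pairs can be skipped since wedges dominate them), local minimality via the loop over $\kappa'$ and the deterministic lexicographic tie-breaks, and validity via $\oracle_{\text{valid}}$ only when $(x,y)\in E(A)$. Hence correct atomic answers give the correct Boolean value. One minor point: correctness of the oracles' internal nested calls is not actually needed here, since the hypothesis on the top-level answers $\oracle(\substr,\kappa',\ell)$ and $\oracle_{\text{valid}}$ already suffices.
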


\begin{algorithm}[t]
\caption{LLM Oracle for $\kappa \geq T^{1/4}/\delta^2$}
\label{algo:oracle}
\begin{algorithmic}[1]
\Function{$\isLLM(A,\kappa,x,y)$}
\If{\texttt{IsHeavy}($A, \kappa, x, y$)}
    \State \Return \texttt{false}
\EndIf
\ForEach{$\kappa' \in \{\delta^2 \kappa, \dots, \kappa/2\}$}
    \If{not \texttt{IsHeavy}($A, \kappa', x, y$)}
        \State \Return \texttt{false}
    \EndIf
\EndForEach
\If{$(x, y) \in E(A)$ \textbf{and} $\oracle_{\text{valid}}(\{x, y\}) = \invalid$}
    \State \Return \texttt{false}
\EndIf
\State \Return \texttt{true}
\EndFunction
\Function{$\isHeavy(c = (A, \kappa, x, y))$}
\ForEach{labeled node, edge, or wedge $(\substr, \kappa, \ell) \in c$}
    \If{$\oracle(\substr, \kappa, \ell) = \heavy$}
        \State \Return \texttt{true}
    \EndIf
\EndForEach
\State \Return \texttt{false}
\EndFunction
\end{algorithmic}
\end{algorithm}

\paragraph{Heaviness Oracles.}
In the rest of this section, we build the heaviness oracles which are used by \cref{algo:oracle}.
Given any labeled substructure $(\substr,\kappa,\ell)\in c$ as a query, the oracle should decide whether $t(\substr,\kappa,\ell)>s\theta_{(\substr,\kappa,\ell)}$, where $s\in\{s_1,s_2,s_3\}$ is the appropriate shift from \Cref{sec:refinedheaviness}. We denote the oracle's answer, which is either \light or \heavy, by $\oracle(\substr,\kappa,\ell)$.

All of our heaviness oracles work by sampling a small subgraph and estimating $t(\substr,\kappa,\ell)$ by counting the number of configurations in the sample that count towards $t(\substr,\kappa,\ell)$. We do not sample a new subgraph for each oracle query; rather, we generate one sample to answer all oracle calls. Storing the samples allows us to compute the oracle values after all stream passes finish, which has the advantage that our oracle's answers are consistent: The same query is answered identically each time.

To obtain the sample, we use two different strategies:
For edges $(e,\kappa,\ell)$ with $p_{(v,\kappa,\ell)}=p_{1,\kappa}p_{2,\kappa}$ we use edge-sampling based oracles.
For all other substructures we use node-sampling based oracles which sample nodes similarly to the main algorithm.

Next, recall that by definition of the refined heaviness for nodes, a configuration may or may not count towards the node heaviness depending on the heaviness of its edges and wedges. Thus, naturally, our node heaviness oracles will depend on the edge and wedge heaviness oracles and call them internally to ensure the node heaviness is estimated correctly.

We use the following standard variants of the Chernoff and Chebyshev bounds for our analysis.
\begin{lemma}[Chernoff bound]
    \label{lemma:chernoff}
    Let $X_1,\dots,X_n\in[0,B]$ be independent random variables,
    and $X=\sum_{i=1}^n X_i$. If $M_L\leq \Expectation{X}\leq M_H$, then for all $\delta\in(0,\frac{1}{2})$:
    \begin{align*}
        \IP[X\geq M_H/(1-\delta)]&\leq \exp\left(-\frac{1}{12B}\delta^2M_H\right), \text{ and} \\
        \IP[X\leq M_L/(1+\delta)]&\leq \exp\left(-\frac{1}{12B}\delta^2M_L\right).
    \end{align*}
\end{lemma}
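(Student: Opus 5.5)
This is the standard multiplicative Chernoff bound, stated with a bound $B$ on the range and with upper/lower bounds $M_H,M_L$ on the mean in place of the mean itself. The plan is to reduce to the case $B=1$ by rescaling, prove the bound with $\Expectation{X}$ in place of $M_H$ (resp.\ $M_L$) via the moment generating function, and then pass to $M_H$, $M_L$ by a monotonicity/domination argument.

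\emph{Rescaling.} Set $Y_i=X_i/B\in[0,1]$ and $Y=X/B$, so $\mu:=\Expectation{Y}=\Expectation{X}/B$. The events become $Y\geq (M_H/B)/(1-\delta)$ and $Y\leq (M_L/B)/(1+\delta)$. The targets become $\exp(-\delta^2 (M_H/B)/12)$ and $\exp(-\delta^2 (M_L/B)/12)$. Hence it suffices to treat $B=1$.

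\emph{Upper tail.} For $\lambda>0$, convexity of $e^{\lambda y}$ on $[0,1]$ gives
\[
\Expectation{e^{\lambda Y_i}}\leq 1+(e^\lambda-1)\Expectation{Y_i}\leq \exp\bigl((e^\lambda-1)\Expectation{Y_i}\bigr).
\]
Multiplying over $i$ shows that $\Expectation{e^{\lambda Y}}\leq \exp((e^\lambda-1)\mu)\leq \exp((e^\lambda-1)M_H)$ (now with $B=1$). This replaces $\mu$ by the larger $M_H$ and is the only place domination is needed. Markov's inequality applied to $e^{\lambda Y}$ with threshold $a=M_H/(1-\delta)=(1+\eta)M_H$, where $\eta=\delta/(1-\delta)\in(\delta,2\delta]$, together with the choice $\lambda=\ln(1+\eta)$, yields the usual bound
\[
\IP[Y\geq (1+\eta)M_H]\leq \bigl(e^\eta/(1+\eta)^{1+\eta}\bigr)^{M_H}\leq \exp(-\eta^2 M_H/3)
\]
for $\eta\leq 1$. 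Since $\eta\geq\delta$, this is at most $\exp(-\delta^2M_H/3)\leq\exp(-\delta^2M_H/12)$.

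\emph{Lower tail.} Using $\lambda<0$ symmetrically gives $\Expectation{e^{\lambda Y}}\leq\exp((e^\lambda-1)\mu)$. Since $e^\lambda-1<0$ and $\mu\geq M_L$, this is at most $\exp((e^\lambda-1)M_L)$. Set $a=M_L/(1+\delta)=(1-\eta')M_L$ with $\eta'=\delta/(1+\delta)\in[2\delta/3,\delta)$ (as $\delta<1/2$). Choosing $\lambda=\ln(1-\eta')$ gives
\[
\IP[Y\leq (1-\eta')M_L]\leq \exp(-\eta'^2M_L/2)\leq\exp(-\tfrac{4}{9}\delta^2M_L/2)\leq \exp(-\delta^2 M_L/12).
\]

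The main obstacle is the bookkeeping of replacing $\mu$ by $M_H$ or $M_L$. This has to be done inside the mgf bound, with the sign of $e^\lambda-1$ dictating which bound on $\mu$ is used. It cannot be done afterwards, because the event thresholds are stated in terms of $M_H$ and $M_L$ rather than $\mu$. The generous constant $12$ absorbs the conversions from $\delta$ to $\eta$ and $\eta'$.
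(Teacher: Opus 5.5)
Your proof is correct and is the standard argument. The paper gives no proof of this lemma; it quotes it as a standard variant of the Chernoff bound. Your derivation is exactly what that citation stands for: rescale to $B=1$, bound the moment generating function by $\Expectation{e^{\lambda Y}}\leq\exp((e^\lambda-1)\mu)$, and replace $\mu$ by $M_H$ or $M_L$ according to the sign of $e^\lambda-1$. The conversions $\eta=\delta/(1-\delta)\in(\delta,1)$ and $\eta'=\delta/(1+\delta)>2\delta/3$ also check out.

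One small correction to your closing remark. For the lower tail, the passage from $\mu$ to $M_L$ could be done afterwards. Indeed, $\{X\leq M_L/(1+\delta)\}\subseteq\{X\leq \Expectation{X}/(1+\delta)\}$ and $\exp(-\delta^2\Expectation{X}/(12B))\leq\exp(-\delta^2 M_L/(12B))$. Only the upper tail genuinely needs the domination inside the moment generating function, or an equivalent padding argument.
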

\begin{lemma}[Chebyshev bound]
    \label{lemma:chebyshev}
    Given any random variable $X\geq 0$, with $M_L\leq \Expectation{X}\leq M_H$. Then for all $\delta\in(0,\frac{1}{2})$,
    \begin{align*}
        \IP\left[X \geq \frac{1}{1 - \delta} M_H \right] &\leq \frac{4\Var{X}}{\delta^2 M_H^2}, \text{ and} \\
        \IP\left[X \leq \frac{1}{1 + \delta} M_L \right] &\leq \frac{4 \Var{X}}{\delta^2 \Expectation{X}^2}.
    \end{align*}
\end{lemma}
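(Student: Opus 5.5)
The plan is to reduce both inequalities to the classical Chebyshev inequality $\Prob{|X-\Expectation{X}|\geq a}\leq \Var{X}/a^2$. In each case I show that the tail event forces a deviation of $X$ from $\Expectation{X}$ of size at least a constant multiple of $\delta M_H$ (upper tail) or $\delta\Expectation{X}$ (lower tail). The factor $4$ then absorbs the resulting constants. No structure of the graph problem is needed; this is a purely probabilistic statement.

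\emph{Upper tail.} Suppose $X\geq M_H/(1-\delta)$. Since $\Expectation{X}\leq M_H$, this gives
\[
X-\Expectation{X}\geq \frac{M_H}{1-\delta}-M_H=\frac{\delta M_H}{1-\delta}\geq \delta M_H .
\]
Applying Chebyshev with $a=\delta M_H$ (we may assume $M_H>0$, otherwise the bound is vacuous) yields
\[
\IP\left[X\geq \frac{M_H}{1-\delta}\right]\leq \frac{\Var{X}}{\delta^2M_H^2}\leq \frac{4\Var{X}}{\delta^2M_H^2}.
\]

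\emph{Lower tail.} If $\Expectation{X}=0$, the right-hand side is infinite or undefined, so the bound is vacuous and I treat it as trivial; assume $\Expectation{X}>0$. Suppose $X\leq M_L/(1+\delta)$. Since $M_L\leq\Expectation{X}$, we get $X\leq \Expectation{X}/(1+\delta)$, and hence
\[
\Expectation{X}-X\geq \frac{\delta}{1+\delta}\Expectation{X}\geq \frac{2\delta}{3}\Expectation{X},
\]
using $\delta<\frac12$. Chebyshev with $a=\frac{2\delta}{3}\Expectation{X}$ then bounds the probability by
\[
\frac{9\Var{X}}{4\delta^2\Expectation{X}^2}\leq \frac{4\Var{X}}{\delta^2\Expectation{X}^2}.
\]

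There is no real obstacle here. The only points needing care are the degenerate cases $M_H=0$ and $\Expectation{X}=0$, and the check that the constant $9/4$ from the lower tail fits under the stated factor $4$.
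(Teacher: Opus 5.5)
Your proof is correct. The paper gives no proof of this lemma and simply calls it a standard variant of Chebyshev's inequality, and your direct reduction to $\Prob{|X-\Expectation{X}|\geq a}\leq \Var{X}/a^2$ (which even gives the sharper constants $1$ and $9/4$, both under the stated factor $4$) is exactly the routine argument intended; since $X\geq 0$, the degenerate cases $M_H=0$ or $\Expectation{X}=0$ force $X=0$ almost surely, so the right-hand sides there are $0/0$ and are rightly set aside as you do.
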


\subsection{Node Oracles}
\label{sec:oraclenode}

We now design an oracle which, when queried on a labeled node $(v,\kappa,\ell)$ during \cref{algo:oracle}, outputs 
\begin{center}
    $\oracle(v,\kappa,\ell)=\heavy$ if and only if $t(v,\kappa,\ell)>s_3\theta_{(v,\kappa,\ell)}$
\end{center}
with a large constant probability.
Our oracle uses node sampling to detect four-cycles in $\squares_v$, the set of four-cycles containing $v$, and then employs heaviness oracles for edges and wedges to decide whether a found four-cycle counts towards $t(v,\kappa,\ell)$. For the rest of this section, we condition on the event that the edge and wedge oracles are \emph{perfect}, i.e., they answer correctly all queries.

Next, note that the LLM oracle only calls the node oracle for nodes that are part of realized configurations of \cref{algo:count}. For each such realized configuration, the node oracle is called inside \isLLM in \cref{algo:oracle} at most $1+\log(1/\delta^2)$ times.
This is crucial, as it implies that we only need to correctly answer $4\log(1/\delta^2)|\FC|p^2=\tildeO(1)$ many queries in expectation and thus at most 
\begin{align}
\label{eq:upper-bound-queries}
    L:=100\cdot4\log(1/\delta^2)T\log(T)p^2
\end{align}
queries with probability $99/100$.

\begin{lemma}
    \label{lemma:oraclenode}
    Assuming access to perfect edge heaviness and wedge heaviness oracles,
    there is a node heaviness oracle which correctly answers all at most $L=\tildeO(1)$ queries by \cref{algo:oracle} with probability at least $81/100$. The oracle requires $\tildeO(m/\sqrt{T})$ space.
\end{lemma}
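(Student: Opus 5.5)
The oracle is built from fresh, independent node samples for every $\kappa\in\CI$, drawn the same way as the main algorithm's samples. For a queried labeled node $(v,\kappa,\ell)$, the configurations counted by $t(v,\kappa,\ell)$ are valid configurations $c\ni(v,\kappa,\ell)$ whose labeling $\ell_c$ is determined by $\ell$ together with the choice of the other three nodes (\Cref{obs:refined-at-most-true}). So the natural estimator is to sample the other three nodes of $c$ according to their labels. Concretely, for each $\kappa$ and each label pattern we sample new sets $S'_{1,\kappa},S'_{2,\kappa},R'_{\cdot,\kappa}$, with probabilities boosted by a factor $\beta=\poly(\log n,1/\delta)$. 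In the passes we collect the induced edges between these sets. For labels in $\{R_{1a},R_{1b}\}$ we also collect the edge to $v$ in the second pass using the already-stored 3-paths, exactly as in \Cref{algo:detect}. Since $v$ itself is not sampled but queried, we additionally store all edges between $v$ and the sampled sets. To do this without knowing $v$ in advance, we store all edges $E[S'_{\cdot},S'_{\cdot}]$ with both endpoints in boosted sets at the edge-probability level. The total space per $\kappa$ is then $\tildeO(m\,p_{1,\kappa}p_{2,\kappa}\beta^2)=\tildeO(m/\sqrt T)$, plus the $p_{2,\kappa}^2$-type sets, which are also $\tildeO(m/\sqrt T)$ as in \Cref{sec:proofofthmdetect}.

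The estimator is $\hat t(v,\kappa,\ell)=Y/q$, where $Y$ counts sampled configurations $c'\ni(v,\kappa,\ell)$ that are valid and whose wedges and edges the (assumed perfect) edge and wedge oracles declare light with shifts $s_1,s_2$. Here $q=\beta^3 p_{(c')}/p_{(v,\kappa,\ell)}$ is the conditional realization probability, and the oracle answers \heavy iff $\hat t>s_3\theta_{(v,\kappa,\ell)}$. By perfection of the subordinate oracles, $\IE[Y]=q\,t(v,\kappa,\ell)$ exactly.

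The concentration step is a Chebyshev bound (\Cref{lemma:chebyshev}) around the threshold value $\Theta(q\,\theta_{(v,\kappa,\ell)})$. By \Cref{obs:relationship-theta-t-p-counting} and the boost, this value is $\Omega(\beta\,\delta^{2}/c_1)$, which is large. For the variance we mirror \Cref{sec:variance_counting}: covariances of pairs $c',d'$ are grouped by the maximal shared labeled substructure $c'\cap d'$ beyond $v$. Every counted $c'$ has light edges and wedges by construction. Its two-node substructures therefore have refined heaviness at most $2\theta$, and pairs sharing just $v$ are independent. This gives $\Var{Y}\le \IE[Y]+\IE[Y]\cdot O(\theta_{\text{sub}}q/p_{\text{sub}})=O(\IE[Y]\cdot \IE[Y]/\beta)$, which is $\ll(\delta' \IE Y)^2$ once $\beta$ is a large enough polylog. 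One subtlety is that when $t(v,\kappa,\ell)\gg\theta$ we only need a one-sided bound. For that we truncate, that is, we only test the event $\hat t\le s_3\theta$, which can be handled by Chebyshev applied to the $\theta$-scaled quantity.

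The main obstacle is accuracy near the threshold, and this is exactly where the random shift $s_3=(1+\frac1{200L})^i$ comes in. Among the at most $L$ queried substructures, each value $t(v,\kappa,\ell)/\theta$ falls within a multiplicative window $(1\pm\frac1{1000L})$ of $s_3$ for at most $O(1)$ choices of $i$ out of $\Theta(L\log 2\cdot 200)$. By a union bound, with probability at least $99/100$ no queried node is within relative distance $\delta'=\frac1{1000L}$ of its shifted threshold. Conditioned on this, taking $\beta=\tildeO(L^2)$ makes each query fail with probability at most $1/(100L)$ by the Chebyshev bound above. A union bound over the $L$ queries, together with the $99/100$ event from \Cref{eq:upper-bound-queries} that there are at most $L$ queries, gives success probability at least $1-3/100\ge 81/100$. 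Since all queries are answered from one stored sample, the answers are consistent. The space remains $\tildeO(m/\sqrt T)$ because $L=\tildeO(1)$ and hence $\beta=\tildeO(1)$.
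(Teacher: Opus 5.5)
\paragraph{Gap: the edges at the queried node are never stored.} You say $v$ ``is not sampled but queried'', and to obtain the edges between $v$ and the sampled sets you store only $E[S'_\cdot,S'_\cdot]$, i.e.\ edges with both endpoints in the fresh boosted sets. In general $v$ lies in none of these sets. So the edges $(v,a),(v,d)$ of the configurations you want to count are never collected. Your estimator therefore essentially sees nothing, and $\IE[Y]=q\,t(v,\kappa,\ell)$ is false for your construction.

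This cannot be repaired for an arbitrary unknown $v$. Keeping all edges incident to, say, $S'_{2,\kappa}$ costs $\tildeO(m\,p_{2,\kappa})=\tildeO(m/\kappa)$, far above $m/\sqrt{T}$.

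\paragraph{The missing idea.} Queries come only from \Cref{algo:oracle} on configurations realized by \Cref{algo:count}. Hence every queried $v$ already lies in one of the main algorithm's sets $S_{1,\kappa'},S_{2,\kappa'},R_{\cdot,\kappa'}$. Moreover, $v$ is queried only at levels $\kappa$ with $\delta^2\kappa'\le\kappa\le\kappa'$, because of the local-minimality checks.

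The paper therefore collects in Pass~1 the cross-level edge sets $E[S_{1,\kappa'},S'_{2,\kappa}]$, $E[S_{2,\kappa'},S'_{1,\kappa}]$ and their $R$-analogues, for all such pairs $(\kappa',\kappa)$. The $R_{1a}$--$R_{1b}$ edge at $v$ is completed in Pass~2 via stored 3-paths. The expected size of these sets is $\tildeO(m\,p_{1,\kappa'}p'_{2,\kappa})=\tildeO\bigl(\frac{m}{\sqrt T}\cdot\frac{\kappa'}{\kappa}\bigr)$. This is $\tildeO(m/\sqrt{T})$ precisely because $\kappa'/\kappa\le\delta^{-2}$. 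Your space bound covers only the same-level products $p_{1,\kappa}p_{2,\kappa}\beta^2$, so it misses this cross-level term and with it the reason the $\delta^2$-window in \Cref{def:locally-minimial} makes the oracle affordable.

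\paragraph{What matches, and two smaller corrections.} The rest of your plan matches the paper's argument. That includes a Chebyshev bound with covariances controlled by the lightness of the counted configurations' edges and wedges, the random shift $s_3$ with disjoint uncertainty windows, and the union bounds.

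First, the regime that needs comparison against the threshold-scaled quantity $M_H=q\,s_3\theta_{(v,\kappa,\ell)}$ is $t(v,\kappa,\ell)\ll\theta_{(v,\kappa,\ell)}$, not $t\gg\theta$. In that regime your identity $\IE[Y]\cdot O(\theta q/p)=O(\IE[Y]^2/\beta)$ fails. For large $t$ the bound relative to $\IE[Y]^2$ already suffices, and no truncation is needed.

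Second, the relative variance is $\Theta(1/\beta)$, and you need failure probability $1/(100L)$ at relative accuracy $\Theta(1/L)$. Chebyshev then forces $\beta=\Theta(L^3)$; the paper uses $c_2(600L)^2$ with $c_2>200L$. This is harmless since $L=\tildeO(1)$.
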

\smallskip
\paragraph{Oracle Construction and Integration into \Cref{algo:count}.}
Next, we describe how the oracle is constructed.
The oracle receives all node sets sampled by \cref{algo:count} as input before the first stream pass. After the third pass, it receives a set of queries $Z$ where each query is a labeled node for which heaviness needs to be checked during \cref{algo:oracle}. 

The oracle samples, similarly to the main algorithm, node sets  $S_{1,\kappa}',R_{1a,\kappa}',R_{1b,\kappa}',S_{2,\kappa}',R_{2a,\kappa}',R_{2b,\kappa}'$ with probabilities 
$$p_{1,\kappa}'=c_2(600L)^2p_{1,\kappa},\quad p_{2,\kappa}'=c_2(600L)^2p_{2,\kappa},$$
respectively, where $c_2=\tildeO(1)$ will be set later. 

During the first pass, while the specific set of queries $Z$ is unknown, the oracle collects all edges in $E[S_{1,\kappa'},S_{2,\kappa}']$, $E[S_{2,\kappa'},S_{1,\kappa}']$ and $E[S_{2,\kappa}',S_{1,\kappa}']$ for all $\kappa,\kappa'\in\CI$ with $\delta^2 \kappa'\leq \kappa\leq \kappa'$. These edge sets are used to answer queries on nodes with labels $S_1$ or $S_2$. Analogously, during the first stream pass it also collects
$E[R_{2a,\kappa'}\cup R_{2a,\kappa}',R_{2b,\kappa}'\cup R_{1b,\kappa}']$, 
$E[R_{2b,\kappa'}\cup R_{2b,\kappa}',R_{2a,\kappa}'\cup R_{1a,\kappa}']$,
$E[R_{1a,\kappa'}\cup R_{1a,\kappa}',R_{2b,\kappa}']$, 
$E[R_{1b,\kappa'}\cup R_{1b,\kappa}',R_{2a,\kappa}']$ for all $\kappa,\kappa'\in\CI$ with $\delta^2 \kappa'\leq \kappa\leq \kappa'$.

During the second stream pass, the edges in $E[R_{1a,\kappa'},R_{1b,\kappa}']$, $E[R_{1a,\kappa}',R_{1b,\kappa}']$ or $E[R_{1b,\kappa'},R_{1b,\kappa}']$ which complete four-cycles with one node from $R_{2a,\kappa'}$ and one from $R_{2b,\kappa}'$ are collected in the second stream pass. This is done exactly as in \Cref{algo:count}.
Here, we note that collecting the edges for the node oracle only requires two passes. The third pass in \cref{algo:count} is necessary for building the edge oracles.

After the third stream pass, the oracle is queried for the set $Z$ of labeled nodes.
For each query $(v,\kappa,\ell)\in Z$ it estimates $t(v,\kappa,\ell)$ by checking whether any four-cycle using nodes sampled by the oracle, corresponds to a configuration which counts towards $t(v,\kappa,\ell)$. We give the details for the cases $\ell(v)=S_1$ and $\ell(v)=R_{2a}$, and note that all other cases follow analogously.

\emph{Case 1: $\ell(v)=S_1$.}
The oracle considers all four-cycles $(v,a,b,d)$ where $b\in S_{1,\kappa}'$ and $a,d\in S_{2,\kappa}'$.
Each such four-cycle corresponds to a configuration $c=(A,\kappa,v,b)$ with $(v,\kappa,\ell)\in c$. We say the oracle \emph{realized} the configuration\footnote{We note that, formally speaking, our definitions for labeled substructures and realized configurations were only defined for the sets $S_1$, $S_2$, etc.\ and not for the sets $S_1'$, $S_2'$, etc.\ However, we could define the same for $S_1'$, $S_2'$, etc.\ analogously and thus continue to use the notation based on $S_1$, $S_2$, etc.} $c'$.
Observe that, conditioned on $v\in S_{1,\kappa}$, this is precisely how configurations containing $(v,\kappa,\ell)$ are realized in \cref{algo:count} up to lower order terms in the sampling probabilities.

For each such configuration $c=(A,\kappa,v,b)$ realized by the oracle, we check whether $c$ counts towards $t(v,\kappa,\ell)$, i.e., whether edges and wedges in $c$ are light. 
For this, we assume access to perfect edge and wedge heaviness oracles. 
Let $q(v)$ denote the number of configurations realized by the oracle that count towards $t(v,\kappa,\ell)$.
We then define the oracle answer as
\begin{equation*}
    \oracle(v,\kappa,\ell)=\begin{cases}
        \light & \text{ if } q(v)\leq p_{1,\kappa}'(p_{2,\kappa}')^2s_3\theta_{(v,\kappa,\ell)},\\
        \heavy & \text{ if } q(v)> p_{1,\kappa}'(p_{2,\kappa}')^2s_3\theta_{(v,\kappa,\ell)},
    \end{cases}
\end{equation*}
where $\theta_{(v,\kappa,\ell)}=\frac{\kappa\sqrt{T}}{\delta^{1.5}}$ is the heaviness threshold for $(v,\kappa,\ell)$.

\emph{Case 2: $\ell(v)=R_{2a}$.}
The oracle considers all four-cycles $A=(v,a,b,d)$ where $a\in R_{2b}'$, $b\in R_{1a}'$ and $d\in R_{1b}'$. Each such four-cycle corresponds to a configuration $c=(A,\kappa,b,d)$ with $(v,\kappa,\ell)\in c$ which is realized by the oracle.
Again, observe that, conditioned on $v\in R_{2a,\kappa}$, this is exactly how configurations containing $(v,\kappa,\ell)$ are realized in \cref{algo:count}.

For each such configuration $c=(A,\kappa,b,d)$ realized by the oracle, we check whether $c$ counts towards $t(v,\kappa,\ell)$, i.e., whether ist valid and its edges and wedges in $c$ are light. 
To check whether edges and wedges are light, we assume perfect edge and wedge oracles for now. To check whether $c$ is valid, we use an edge oracle to check the heaviness of $(b,d)$ as well.
Let $q(v)$ denote the number of such configurations realized by the oracle that count towards $t(v,\kappa,\ell)$.
We then define the oracle answer as
\begin{equation*}
    \oracle(v,\kappa,\ell)=\begin{cases}
        \light & \text{ if } q(v)\leq p_{2,\kappa}'(p_{1,\kappa}')^2s_3\theta_{(v,\kappa,\ell)},\\
        \heavy & \text{ if } q(v)> p_{2,\kappa}'(p_{1,\kappa}')^2s_3\theta_{(v,\kappa,\ell)},
    \end{cases}
\end{equation*}
where $\theta_v=\frac{T}{\delta^{1.5}\kappa}$ is the heaviness threshold for $(v,\kappa,\ell)$.

\emph{Other cases:}
The oracle answers for all other node labelings are computed analogously.

\bigskip
We note that for any $(v,\kappa,\ell)\in Z$ we know that $v$ was sampled by \cref{algo:count}.
Thus the edges that the oracle collects allow it to answer (1)~the heaviness query for any node node $(v,\kappa',\ell)$ sampled by \cref{algo:count} and (2)~all heaviness queries $(v,\kappa,\ell)$ for $\delta^2\kappa'\leq \kappa<\kappa'$ required in $\isLLM(\cdot)$. 

We stress that while the node oracle collects edges to be able to answer heaviness queries for all nodes sampled by \cref{algo:count}, our accuracy analysis will rely on the fact that the oracle is only queried for $|Z|\leq L\in\tildeO(1)$ nodes.

We further note that for nodes with label $S_1$ or $S_2$, all relevant edges are collected by the oracle in the first pass, and thus the oracle can answer such queries after Pass 1.

\smallskip
\paragraph{Analysis of the Oracle.}
Next, we provide our guarantees for the oracle. 
The proof again depends on the label $\ell(v)$ of the query $(v,\kappa,\ell)$. 
We give the details for the case $\ell(v)=S_1$,
and note that all other cases follow analogously.

Suppose $\ell(v)=S_1$.
First, observe that $\Expectation{q(v)}=t(v,\kappa,\ell) p_{1,\kappa}'(p_{2,\kappa}')^2$ as any configuration counting towards $t(v,\kappa,\ell)$ is realized by the oracle with probability $p_{1,\kappa}'(p_{2,\kappa}')^2$.

Let $\epsoracleV=1/(600L)$. We obtain the following accuracy guarantee for the node oracle.
\begin{lemma}
    \label{lemma:oraclenodesaccuracyS1}
    With probability $96/100$, the node oracle classifies all nodes $(v,\kappa,\ell)$ with $\ell(v)=S_1$ and $t(v,\kappa,\ell)\notin[(1-\epsoracleV)s_3\theta_{(v,\kappa,\ell)},(1+\epsoracleV) s_3\theta_{(v,\kappa,\ell)}]$ correctly.
\end{lemma}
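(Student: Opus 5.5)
The plan is a per-query concentration bound followed by a union bound over the at most $L$ queries. Fix a query $(v,\kappa,\ell)$ with $\ell(v)=S_1$. Write $q(v)=\sum_{c}Y_c$, where $c$ ranges over the configurations $c=(A,\kappa,v,b)$ counting towards $t(v,\kappa,\ell)$ and $Y_c$ indicates that the oracle realized $c$, i.e.\ $b\in S'_{1,\kappa}$ and $a,d\in S'_{2,\kappa}$. We already have $\Expectation{q(v)}=t(v,\kappa,\ell)\,p'_{1,\kappa}(p'_{2,\kappa})^2$, and the decision threshold is $\tau:=p'_{1,\kappa}(p'_{2,\kappa})^2 s_3\theta_{(v,\kappa,\ell)}$.

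If $t(v,\kappa,\ell)>(1+\epsoracleV)s_3\theta$, then $\Expectation{q(v)}\ge(1+\epsoracleV)\tau$. If $t(v,\kappa,\ell)<(1-\epsoracleV)s_3\theta$, then $\Expectation{q(v)}\le(1-\epsoracleV)\tau$. So it suffices to show that $q(v)$ deviates from its mean by a factor of about $\epsoracleV$ with probability at most $1/(25L)$ per query. A union bound over $|Z|\le L$ queries then gives failure probability at most $4/100$.

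The variables $Y_c$ are dependent, so I would use Chebyshev (\Cref{lemma:chebyshev}) in the style of \Cref{sec:variance}. Two configurations counting towards $t(v,\kappa,\ell)$ are correlated only if they share a labeled substructure beyond $v$: a node, an edge, a wedge, or an opposite pair through $v$. Then
\[
\Var{q(v)}\le \Expectation{q(v)}+\sum_{c}\Expectation{Y_c}\sum_{(\substr,\kappa,\ell')\in c,\,\substr\ni v}\ \sum_{d:\,c\cap d=(\substr,\kappa,\ell')}\frac{p'_{1,\kappa}(p'_{2,\kappa})^2}{p'_{(\substr,\kappa,\ell')}/p'_{1,\kappa}}.
\]
The key point is that every $c$ counted in $t(v,\kappa,\ell)$ has all its edges and wedges light by definition of refined heaviness. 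Light wedges also make the opposite pairs light. So the number of $d$ sharing such a substructure is at most $s\,\theta_{(\substr,\kappa,\ell')}\le 2\theta_{(\substr,\kappa,\ell')}$.

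The remaining nodes $a,b,d$ are not all light in general, since node lightness is not required. However, sharing only $v$ plus one further node $u$ means sharing the pair $\{v,u\}$, which is an edge or an opposite pair, and both are controlled. So every shared substructure larger than $v$ is handled by edge, wedge or opposite-pair thresholds. Using \Cref{obs:relationship-theta-t-p-counting}, each term is at most $\theta_{(\substr)}/p_{(\substr)}\lesssim \delta^2T/c_1$ relative to the unamplified probabilities. In the regime where $t(v,\kappa,\ell)\approx s_3\theta_{(v,\kappa,\ell)}$ up to constants, this yields
\[
\Var{q(v)}\lesssim \Expectation{q(v)}+\frac{\Expectation{q(v)}^2}{c_2(600L)^2}.
\]
Here the amplification factor $c_2(600L)^2$ enters through $p'=c_2(600L)^2p$: each missing node contributes one power, so relative variance drops by that factor.

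For the far regimes ($t\gg\theta$ or $t\ll\theta$) I would handle things by monotone coupling or by bounding against $M_H$, $M_L$ as in \Cref{lemma:chebyshev}. When $t$ is small I would compare against $M_H=(1-\epsoracleV)\tau$. When $t$ is large I would restrict to a sub-family of $\lceil(1+\epsoracleV)s_3\theta\rceil$ counted configurations, whose variance is bounded the same way.

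Plugging into Chebyshev with deviation $\epsoracleV=1/(600L)$ gives failure probability roughly
\[
\frac{4}{\epsoracleV^2}\Bigl(\frac{1}{\Expectation{q(v)}}+\frac{1}{c_2(600L)^2}\Bigr)\le\frac{1}{25L}
\]
once $c_2=\tildeO(L)$ is large enough. Note that $\Expectation{q(v)}\approx T p_{1,\kappa}^2p_{2,\kappa}^2\cdot c_2^3(600L)^6$ is huge.

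\textbf{Main obstacle.} I expect the hardest step to be the covariance accounting: showing that every dependence among counted configurations is mediated by a substructure made light by the counting condition, rather than by a heavy node $b$, $a$ or $d$. If a single node other than $v$ could be shared, I would argue that the pair it forms with $v$ is itself a light edge or opposite pair.
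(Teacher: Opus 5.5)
Your proposal is correct and follows essentially the same route as the paper. You apply Chebyshev to $q(v)=\sum_c Y_c$, control the covariances because every counted configuration has light edges and wedges (so each shared substructure through $v$ is contained in only $O(\theta)$ counted configurations), compare against $M_H$ in the light case, and union bound over the at most $L$ queries with $c_2=\Theta(L)$. Two minor remarks: your sub-family restriction in the heavy case is unnecessary, because the covariance bound is proportional to $t(v,\kappa,\ell)\,\theta_{(v,\kappa,\ell)}\le t(v,\kappa,\ell)^2$ there, which is how the paper argues; and your explicit observation that any node shared besides $v$ forms an edge or opposite pair with $v$, so lightness of $a,b,d$ is never needed, is a point the paper leaves implicit.
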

\begin{proof}
    We apply a variance analysis similar to \cref{sec:variance_counting}.
    Fix a labeled node $(v,\kappa,\ell)$. We condition on the event that $\oracle(v,\kappa,\ell)$ is called.
    Let $\FC_{(v,\kappa,\ell)}$ be the set of configurations which count towards $t(v,\kappa,\ell)$.
    Let $Y_c$ be the binary random variable indicating whether the oracle realized the configuration $c\in \FC_{(v,\kappa,\ell)}$.
    Then $q(v)=\sum_{c\in \FC_{(v,\kappa,\ell)}} Y_c$.
    
    Note that for any two configurations $c,d\in\FC_{(v,\kappa,\ell)}$, the intersection $c\cap d$ is always at least the node $(v,\kappa,\ell)$. Furthermore, $Y_c$ and $Y_d$ are dependent only if $c\cap d$ is an edge or wedge, as we condition on $\oracle(v,\kappa,\ell)$ being called.
    To account for this conditioning, 
    we also define $p'_{(\substr,\kappa,\ell_\substr)}$ as the probability that the substructure $(\substr,\kappa,\ell_\substr)$ is \emph{realized by the oracle}. For example, an edge $((v,u),\kappa,\ell_{(v,u)})$ with $\ell(v)=S_1$, $\ell(u)=S_2$ is realized by the oracle if $u\in S_{2,\kappa}'$ and thus with probability $p'_{((v,u),\kappa,\ell_{(v,u)})}=p_{2,\kappa}'$.
    
    This implies $p_{(\substr,\kappa,\ell_\substr)}\leq \epsoracleV^2p'_{(\substr,\kappa,\ell_\substr)}p_{1,\kappa}/c_2$ and thus \begin{equation}
        \theta_{(\substr,\kappa,\ell_\substr)}\leq T p_{(\substr,\kappa,\ell_\substr)}/c_1\leq \epsoracleV^2T p'_{(\substr,\kappa,\ell_\substr)}p_{1,\kappa}/(c_1c_2)\leq\epsoracleV^2\theta_{(v,\kappa,\ell)} p'_{(\substr,\kappa,\ell_\substr)}/c_2 \label{eq:nodeoracle}
    \end{equation}
    by \cref{observ:heaviness2}.
    
    Next, we have
    \begin{align*}
        \Var{q(v)} \leq \sum_{c\in \FC_{(v,\kappa,\ell)}}\Var{X_c} + \sum_{c,d\in \FC_{(v,\kappa,\ell)}} \Cov{X_c, X_d},
    \end{align*}
    and
    $$\sum_{c\in\FC_{(v,\kappa,\ell)}}\Var{Y_c}\leq \Expectation{q(v)}\leq \frac{\epsoracleV^2}{c_2}\Expectation{q(v)}^2$$ 
    as $\Expectation{q(v)}>c_2/\epsoracleV^2$. We bound the covariances analogously to \cref{sec:variance_counting}, but using the adapted definition of $p'_{(\substr,\kappa,\ell_\substr)}$:

    \begin{align*}
        \sum_{c,d\in \FC_{(v,\kappa,\ell)}} \Cov{Y_c, Y_d} 
        &\leq  \sum_{\substack{c\in\FC_{(v,\kappa,\ell)}}}\Expectation{Y_c}\sum_{\substack{(\substr,\kappa,\ell_\substr)\in c\\x\neq v}}\sum_{\substack{d\in\FC_{(v,\kappa,\ell)}\\(\substr,\kappa,\ell_\substr)=c\cap d}} \Expectation{Y_d\mid Y_c=1}\\
        &\leq  \sum_{\substack{c\in\FC_{(v,\kappa,\ell)}}} p_{1,\kappa}'(p_{2,\kappa}')^2 \sum_{\substack{(\substr,\kappa,\ell_\substr)\in c\\x\neq v}}\sum_{\substack{d\in\FC_{(v,\kappa,\ell)}\\(\substr,\kappa,\ell_\substr)=c\cap d}} p_{1,\kappa}'(p_{2,\kappa}')^2/p'_{(\substr,\kappa,\ell_\substr)}\\
        &\leq \sum_{c\in\FC_{(v,\kappa,\ell)}} p_{1,\kappa}'(p_{2,\kappa}')^2 \sum_{(\substr,\kappa,\ell_\substr)\in c} t(\substr,\kappa,\ell_\substr) p_{1,\kappa}'(p_{2,\kappa}')^2/p'_{(\substr,\kappa,\ell_\substr)}\\
        &\leq \sum_{c\in\FC_{(v,\kappa,\ell)}} p_{1,\kappa}'(p_{2,\kappa}')^2 \sum_{(\substr,\kappa,\ell_\substr)\in c} s_1^2s_2s_3\theta_{(\substr,\kappa,\ell_\substr)} p_{1,\kappa}'(p_{2,\kappa}')^2/p'_{(\substr,\kappa,\ell_\substr)}\\
        &\leq t(v,\kappa,\ell) (p'_{1,\kappa})^2(p'_{2,\kappa})^4   \cdot 10 s_1^2s_2s_3\epsoracleV^2\theta_{(v,\kappa,\ell)}/c_2,
    \end{align*}
    where we used $|\FC_{(v,\kappa,\ell)}|=t(v,\kappa,\ell)$ and \Cref{eq:nodeoracle} in last inequality.

    We now apply a case distinction to obtain the lemma.
    
    First, assume $t(v,\kappa,\ell)>(1+\epsoracleV)s_3\theta_{(v,\kappa,\ell)}$. Then
    \begin{align*}
        \Var{q(v)} &\leq \frac{\epsoracleV^2}{c_2}\Expectation{q(v)}^2 +  10\epsoracleV^2 t(v,\kappa,\ell)^2 (p'_{1,\kappa})^2(p'_{2,\kappa})^4/c_2 \\
        &\leq \frac{\epsoracleV^2}{c_2}\Expectation{q(v)}^2+\frac{\epsoracleV^2}{c_2}\Expectation{q(v)}^2 
        = \frac{\epsoracleV^2}{100L}\Expectation{q(v)}^2
    \end{align*} 
    for large enough $c_2>200L$. By applying \cref{lemma:chebyshev} we get
    \begin{align*}
        \Prob{q(v)<p_{1,\kappa}'(p_{2,\kappa}')^2s_3\theta_{(v,\kappa,\ell)}} &\leq \Prob{q(v)<p_{1,\kappa}'(p_{2,\kappa}')^2s_3\frac{1}{1+\epsoracleV}t(v,\kappa,\ell)} \\
        &= \Prob{q(v)<\frac{1}{1+\epsoracleV}\Expectation{q(v)}}\\
        &\leq \frac{4}{\epsoracleV^2} \frac{\Var{q(v)}}{\Expectation{q(v)}^2}\leq \frac{4}{100L}.
    \end{align*}

    Second, assume $t(v,\kappa,\ell)<(1-\epsoracleV)s_3\theta_{(v,\kappa,\ell)}$. Then $\Expectation{q(v)}<(1-\epsoracleV)s_3\theta_{(v,\kappa,\ell)}p_{1,\kappa}'(p_{2,\kappa}')^2=:M_H$, and thus
    \begin{align*}
        \Var{q(v)} &\leq \frac{\epsoracleV^2}{c_2}\Expectation{q(v)}^2 +  10\epsoracleV^2 s_3\theta_{(v,\kappa,\ell)}^2 (p'_{1,\kappa})^2(p'_{2,\kappa})^4/c_2 \\
        &\leq \frac{\epsoracleV^2}{c_2}M_H^2+\frac{\epsoracleV^2}{c_2}M_H^2 
        = \frac{\epsoracleV^2}{100L}M_H^2
    \end{align*} 
    for large enough $c_2>200L$. By applying \cref{lemma:chebyshev} we get
    \begin{align*}
        \Prob{q(v)>p_{1,\kappa}'(p_{2,\kappa}')^2s_3\theta_{(v,\kappa,\ell)}} 
        &\leq \Prob{q(v)>\frac{1}{1-\epsoracleV}M_H}\\
        &\leq \frac{4}{\epsoracleV^2} \frac{\Var{q(v)}}{M_H^2}\leq \frac{4}{100L}.
    \end{align*}
    Taking a union bound over all up to $L$ oracle calls shows that all oracle answers are correct with probability at least $\frac{96}{100}$, which proves the lemma.
\end{proof}

The accuracy guarantee for all other cases $\ell(v)\neq S_1$ follows completely analogous to \cref{lemma:oraclenodesaccuracyS1}.
We get the following intermediate result by deriving a lemma similar to \cref{lemma:oraclenodesaccuracyS1} for all four possible labels $\ell(v)$ and then using that all of these four lemmas are satisfied simultaneously with probability at least $(\frac{96}{100})^4>\frac{84}{100}$.
\begin{lemma}
    \label{lemma:oraclenodesaccuracy}
    With probability $84/100$ the node oracle correctly classifies all nodes $(v,\kappa,\ell)$ with $t(v,\kappa,\ell)\notin[(1-\epsoracleV)s_3\theta_{(v,\kappa,\ell)},(1+\epsoracleV) s_3\theta_{(v,\kappa,\ell)}]$.
\end{lemma}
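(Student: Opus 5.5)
The statement to prove is \Cref{lemma:oraclenodesaccuracy}. I would obtain it from \Cref{lemma:oraclenodesaccuracyS1} together with its analogues for the other node labels, followed by a single failure-probability bound. The labels with sampling probability $p_{1,\kappa}$ are $S_1,R_{1a},R_{1b}$, and those with $p_{2,\kappa}$ are $S_2,R_{2a},R_{2b}$. Up to renaming the oracle sets, the $R_{1a}$ and $R_{1b}$ cases are symmetric, and so are $R_{2a}$ and $R_{2b}$. This leaves four essentially different label types: $S_1$, $S_2$, $R_1$ and $R_2$. For each type I would state a lemma of the form of \Cref{lemma:oraclenodesaccuracyS1}: with probability at least $96/100$, every queried node of that type whose refined heaviness lies outside $[(1-\epsoracleV)s_3\theta,(1+\epsoracleV)s_3\theta]$ is classified correctly.

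For each remaining type I would repeat the Chebyshev argument from the $S_1$ case. First, write $q(v)$ as a sum of indicators $Y_c$ over the configurations in $\FC_{(v,\kappa,\ell)}$ that the oracle realizes. Each such configuration is realized with probability equal to the product of the oracle probabilities of the three nodes other than $v$. For $\ell(v)=S_2$ this is $(p'_{1,\kappa})^2p'_{2,\kappa}$; for $\ell(v)=R_{2a}$ it is $p'_{2,\kappa}(p'_{1,\kappa})^2$, as in Case~2 of the construction. Second, derive the analogue of \Cref{eq:nodeoracle}. Since $p'=c_2(600L)^2p$, every substructure other than $v$ satisfies $\theta_{(\substr,\kappa,\ell_\substr)}\le \epsoracleV^2\theta_{(v,\kappa,\ell)}p'_{(\substr,\kappa,\ell_\substr)}/c_2$. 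The argument uses \Cref{observ:heaviness2} and $\theta_{(v,\kappa,\ell)}\approx Tp_{(v,\kappa,\ell)}$, with the factor $p_{(v,\kappa,\ell)}$ now equal to $p_{2,\kappa}$ or $p_{1,\kappa}$ as appropriate. Third, bound the covariances with \Cref{observ:refinedheaviness}. This also covers the $R$ cases, where a configuration counts only if it is valid, because validity is part of the definition of $t(v,\kappa,\ell)$ for nodes. Fourth, make the same two-sided case distinction on $t(v,\kappa,\ell)$, apply \Cref{lemma:chebyshev}, and take a union bound over the at most $L$ queries, giving failure probability at most $4/100$ per type.

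Finally, I would combine the four lemmas. Since they involve overlapping oracle sets, I would not argue independence. A union bound over the four failure events already gives success probability at least $1-4\cdot 4/100=84/100$, which matches the stated bound exactly and avoids needing the product $(96/100)^4$.

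I expect the only real obstacle to be checking the analogue of \Cref{eq:nodeoracle} in the $S_2$ and $R_2$ cases, where $\theta_{(v,\kappa,\ell)}=T/(\delta^{1.5}\kappa)$. Concretely, one must verify that for each edge and wedge containing $v$, the ratio $p_{(\substr,\kappa,\ell_\substr)}/p'_{(\substr,\kappa,\ell_\substr)}$ gives the factor $p_{2,\kappa}$ (or $p_{1,\kappa}$) matching $\theta_{(v,\kappa,\ell)}/T$. I would check this directly against the threshold table in \Cref{fig:heavinessthresholds}.
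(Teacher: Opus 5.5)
Your proposal is correct and follows the paper's route: prove the analogue of \Cref{lemma:oraclenodesaccuracyS1} for each of the four label types, then combine the four results. The only difference is the combination step. The paper writes $(96/100)^4>84/100$, which tacitly treats the four events as independent even though the oracle sets are shared (e.g.\ $S'_{1,\kappa},S'_{2,\kappa}$ serve both $S_1$- and $S_2$-queries), whereas your union bound gives exactly $1-4\cdot 4/100=84/100$ without any independence assumption, so it is the more rigorous way to conclude.
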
 
\smallskip
\paragraph{Shifting Thresholds.}
The analysis above shows that the node oracle is accurate for nodes for
which the heaviness is not too close to the heaviness threshold. 
Now we call a node $(v,\kappa,\ell)$ \emph{bad} if $t(v,\kappa,\ell)\in[(1-\epsoracleV)s_3\theta_{(v,\kappa,\ell)},(1+\epsoracleV) s_3\theta_{(v,\kappa,\ell)}]$ and the oracle is called on $(v,\kappa,\ell)$.
Next, we show that by randomly selecting the shift $s_3$, with probability $99/100$, the oracle is not called on any bad nodes.

Recall that \cref{algo:count} randomly selects one of approximately $100L$ different values for $s_3$ to shift the heaviness thresholds by a small multiplicative factor. Here, recall that $L$ was the upper bound on the number of queries we obtain from \Cref{eq:upper-bound-queries} and that $s_3\in\CS_3:=\{(1+\frac{1}{200L})^i\mid 0\leq i <\log_{1+\frac{1}{200L}}(2)\}$.
We set the oracle accuracy $\epsoracleV=1/(600L)$ such that
$$(1+\epsoracleV)\left(1+\frac{1}{200L}\right)^i
<(1-\epsoracleV)\left(1+\frac{1}{200L}\right)^{i+1}$$
implying that all the possible oracle uncertainty intervals $[(1-\epsoracleV)s\theta_{(\wedge,\kappa,\ell)},(1+\epsoracleV)s\theta_{(\wedge,\kappa,\ell)}]$ for $s\in\CS_3$ are disjoint. 
As the node heaviness $t(v,\kappa,\ell)$ is independent of $s_3$, any node can be bad for at most one shift. 
Thus, at most $L$ shifts admit a bad node and the probability of picking a shift admitting a bad node is at most $$\frac{L}{|\CS_3|}\leq \frac{L}{100L} \leq\frac{1}{100},$$
where we used $|\CS|\geq\log_{1+\frac{1}{200L}}(2)\geq \ln(2) \cdot 200L\geq 100L$.

Thus, the node oracle correctly answers all queries with probability at least $\frac{83}{100}$ by \cref{lemma:oraclenodesaccuracy}.
\smallskip
\paragraph{Space.}
It remains to show the space complexity of the oracle.

As before, all node sets are stored implicitly using hash functions.
Note that $q':=p_{1,\kappa}'p_{2,\kappa}'=\tildeO(1/\sqrt{T})$ and $q_{\delta^2}:=p_{1,\kappa}'p_{2,\delta^2\kappa}=\tildeO(1/\sqrt{T})$ are independent of $\kappa$. 
Each edge in the first pass is collected by the oracle with probability at most $\max\{q',q_{\delta^2}\}$.
In the second pass, the oracle realizes at most $100 T\log(T) q'q_{\delta^2}\in\tildeO(1)$ configurations with probability $99/100$. 
Thus, a Markov bound implies that the space usage of the node oracle is $\tildeO(m\max\{q',q_{\delta^2}\})=\tildeO(m/\sqrt{T})$ with probability at least $1/100$.

Now we can prove \Cref{lemma:oraclenode}.

\begin{proof}[Proof of \cref{lemma:oraclenode}]
With probability $99/100$, the node oracle is called at most $L$ times. With probability $99/100$ none of the $L$ calls is for a bad node, i.e., a node $(v,\kappa,\ell)$ with $t(v,\kappa,\ell)\in[(1-\epsoracleV)s_3\theta_{(v,\kappa,\ell)},(1+\epsoracleV) s_3\theta_{(v,\kappa,\ell)}]$. Thus, by \cref{lemma:oraclenodesaccuracy}, all oracle calls are correctly answered with probability $84/100$.

The statement follows from a union bound and the above space considerations.
\end{proof}

\subsection{Edge Oracles}
\label{sec:oracleedge}
We now design edge oracles which, when queried on a labeled node $(e, \kappa, \ell)$ during \cref{algo:oracle} or inside the node oracle, output
\begin{center}
    $\oracle(e,\kappa,\ell)=\heavy$ if and only if $t(e,\kappa,\ell)>s_2\theta_{(e,\kappa,\ell)}$
\end{center}
with a large constant probability.

To implement the oracle, recall the definition of refined heaviness for edges:
If $p_{(e,\kappa,\ell)} = p_{2,\kappa}^2$, then we defined
\begin{align*}
    t(e,\kappa,\ell)=|\{c\in \FC\mid (e,\kappa,\ell)\in c,\; t(y,\kappa,\ell')\leq s_1\theta_{(y,\kappa,\ell')} \text{ for all wedges } (y,\kappa,\ell')\in c\}|,
\end{align*}
and otherwise we set $t(e,\kappa,\ell)=t(e)$.

We next build two different oracles for these two different cases. In the case of $t(e,\kappa,\ell)=t(e)$, we can build a simple oracle that is based on edge sampling. Otherwise, we again resort to node sampling to estimate $t(e,\kappa,\ell)$.

To understand why we need two different types of edge oracles, let us consider edges $(e,\kappa,\ell)$ with $t(e,\kappa,\ell)=t(e)$. Here, we use an edge-sampling based oracles to estimate the true heaviness $t(e)$. Concretely, if $p_{(e,\kappa,\ell)}=p_{1,\kappa}^2$, we sample a set of edges~$Q$ which contains each edge with probability $p_{2,\kappa}^2\log(n)$. Then in the next stream pass, we collect all edges with one endpoint incident to $e$ and the other endpoint incident to a sampled edge in~$Q$ (see below for the details). This requires $\tildeO(L'mp_2^2)=\tildeO(mp_2^2)$ space. As all four-cycles containing $e$ are sampled independently with this procedure, we get an accurate oracle that is easy to analyze via Chernoff bounds.
However, this edge sampling approach breaks for edges  $(e,\kappa,\ell)$ with $p(e,\kappa,\ell)=p_{2,\kappa}^2$ as then we need to sample the edges in $Q$ with probability $p_{1,\kappa}^2$, giving us a space usage $\tildeO(mp_{1,\kappa}^2)$ which exceeds our budget of $\tildeO(m/\sqrt{T})$ for $\kappa\gg T^{1/4}$.
Thus, we instead use a node-sampling based oracle for this case, which works similarly to the node heaviness oracle.

Additionally, we need to define the oracle $\oracle_{\text{valid}}(e)$ for an edge $e$ which is used in \cref{algo:oracle} to check whether a configuration containing $e$ is valid.
Note that this oracle only checks whether $t(e)\leq\sqrt{T}/\delta^2$ and that $\sqrt{T}/\delta^2$ is also the threshold for edges $(e',\kappa',\ell')$ with $p_{(e',\kappa',\ell')}=p_{1,\kappa'}p_{2,\kappa'}$.
Thus, we can reuse the edge-sampling based oracle for this purpose.

As before, we observe that the number $L'$ of calls to the edge heaviness oracle is $L'=\tildeO(1)$: 
The edge oracle is called at most $L$ times during \cref{algo:oracle}, where $L=\tildeO(1)$ is as in \Cref{eq:upper-bound-queries}, and since
the edge oracle is called at most four times for each configuration realized by the node oracle.

Next, we prove the following guarantee for our edge heaviness oracle.
\begin{lemma}
\label{lemma:oracleedgemain}
    Assuming access to perfect wedge heaviness oracles, there is an edge heaviness oracle which correctly answers all queries by \cref{algo:oracle} and the node oracles with probability $9/10$. The oracle requires $\tildeO(m/\sqrt{T})$ space.
\end{lemma}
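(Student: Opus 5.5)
I will construct two oracles according to which definition of refined heaviness applies, bound the accuracy of each, and then use the random shift $s_2$ exactly as in the node-oracle analysis. The first is an edge-sampling oracle for edges with $t(e,\kappa,\ell)=t(e)$, i.e.\ $p_{(e,\kappa,\ell)}\in\{p_{1,\kappa}^2,p_{1,\kappa}p_{2,\kappa}\}$, and it also serves as $\oracle_{\text{valid}}$. The second is a node-sampling oracle for $R_{2a}$--$R_{2b}$ edges, where $p_{(e,\kappa,\ell)}=p_{2,\kappa}^2$. I will take a union bound over the two oracle types and over the at most $L'=\tildeO(1)$ queries. This number of queries holds with probability $99/100$ by Markov, because each LLM call and each configuration realized by the node oracle triggers $O(1)$ edge queries.

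\emph{Edge-sampling oracle.} Consider a queried edge $e=(u,v)$ with threshold $\theta$, which is either $\kappa^2/\delta^2$ or $\sqrt{T}/\delta^2$. In pass~1 I sample an edge set $Q_\kappa$ in which each edge is kept independently with probability $r_\kappa=C\,L'^2\log n/\theta$, up to $\tildeO(1)$ factors. The endpoints $u,v$ of relevant queries belong to sampled sets of \cref{algo:count}, namely $R_{1a,\kappa},R_{1b,\kappa}$ or $S_{1,\kappa},S_{2,\kappa}$. So in pass~2 (or pass~3, for edges only found in pass~2) I collect every edge joining an endpoint of an edge in $Q_\kappa$ to a node sampled in those sets. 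Each four-cycle $(u,v,a,b)$ through $e$ is then detected exactly when $(a,b)\in Q_\kappa$. These events are independent across four-cycles, because different four-cycles through $e$ have distinct opposite edges $(a,b)$ given $u,v$. Hence $q(e)\sim\mathrm{Bin}(t(e),r_\kappa)$, and I declare $e$ heavy iff $q(e)>r_\kappa s_2\theta$. By \cref{lemma:chernoff} with accuracy $\epsoracleE=1/(600L')$, every edge whose heaviness lies outside $[(1\pm\epsoracleE)s_2\theta]$ is classified correctly with failure probability $1/(100L')$ per query.

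For space, the expected number of collected edges is $m\,r_\kappa$ times the sampling probabilities of the neighbors. For $\theta=\sqrt{T}/\delta^2$ I will sample $Q$ with probability $\tildeO(1/\sqrt{T})$. The edges collected toward the sampled endpoint nodes number $\tildeO(m\,p_{1,\kappa}\,r_\kappa)$ plus the size of $Q$, which is $\tildeO(m/\sqrt{T})$ because $p_{1,\kappa}r_\kappa=\tildeO(\kappa/T)\le\tildeO(1/\sqrt{T})$. For $\theta=\kappa^2/\delta^2$ we have $r_\kappa=\tildeO(p_{2,\kappa}^2)=\tildeO(1/\kappa^2)\le\tildeO(1/\sqrt T)$, and the collected edges incident to $Q$ and to sampled $R_1$-nodes number $\tildeO(m\,p_{1,\kappa}r_\kappa)=\tildeO(m/(\kappa\sqrt T))$. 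I expect this bookkeeping to be the main obstacle: I must make sure that the endpoints one wants to collect edges around are already sampled nodes, so that we never store full neighborhoods. In particular, the validity check for $(x,y)$ in the second pass needs the third pass, as the paper indicates.

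\emph{Node-sampling oracle for $p_{2,\kappa}^2$ edges.} I will mirror \cref{sec:oraclenode}. I sample $R'_{1a,\kappa},R'_{1b,\kappa}$ with the boosted probability $p'_{1,\kappa}=c_3(600L')^2p_{1,\kappa}$. For each queried edge $(a,b)\in E[R_{2a,\kappa},R_{2b,\kappa}]$, I collect in pass~1 the edges from $R_{2a},R_{2b}$ to $R'_1$, and in pass~2 the edges closing the four-cycle inside $E[R'_{1a},R'_{1b}]$. Let $q(e)$ count the realized configurations whose wedges are light according to the perfect wedge oracles; then $\IE[q(e)]=t(e,\kappa,\ell)p_{1,\kappa}'^2$. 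The variance is bounded as in \cref{lemma:oraclenodesaccuracyS1}: dependent pairs share a wedge with an $R_1$ node, a wedge with two such nodes, or an $R_1$ node. I will use \cref{observ:refinedheaviness} together with \cref{obs:relationship-theta-t-p-counting} to get $\mathrm{Var}[q(e)]\le \epsoracleE^2\IE^2/(100L')$ in the heavy case, and the analogous bound with $M_H$ in the light case. Then \cref{lemma:chebyshev} yields a failure probability of $4/(100L')$ per query.

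\emph{Shift and conclusion.} The shift $s_2$ ranges over $\approx100L'$ values whose uncertainty intervals are pairwise disjoint. Each queried edge is therefore bad for at most one shift, so with probability $99/100$ no queried edge is bad. A union bound over the query count, the two accuracy events, the shift, and a Markov bound on space gives success probability $\ge 9/10$ with $\tildeO(m/\sqrt T)$ space.
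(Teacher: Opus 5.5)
\textbf{Gap: the space bound of your edge-sampling oracle.} Most of your plan matches the paper's proof: the split into an edge-sampling oracle for $p_{1,\kappa}^2$ and $p_{1,\kappa}p_{2,\kappa}$ edges (reused as $\oracle_{\text{valid}}$), the boosted node-sampling oracle for $R_{2a}$--$R_{2b}$ edges with a Chebyshev analysis over wedge intersections, the random shift $s_2$, and the final union bound. The space analysis of the edge-sampling oracle, however, is wrong.

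You collect every edge between $V(Q_\kappa)$, the endpoints of sampled edges, and \emph{all} nodes of the sampled sets. You then count this as $\tildeO(m\,p_{1,\kappa}r_\kappa)$, as if each vertex belonged to $V(Q_\kappa)$ with probability $r_\kappa$. Under edge sampling, however, a vertex $x$ lies in $V(Q_\kappa)$ with probability about $\min\{1,r_\kappa\deg(x)\}$. The expected number of stored edges is therefore about $p_{1,\kappa}\sum_x \deg(x)\min\{1,r_\kappa\deg(x)\}$. This is not $\tildeO(m\,r_\kappa p_{1,\kappa})$ unless all degrees are polylogarithmic.

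Concretely, take any graph with $T$ four-cycles and add a disjoint star with $D\gg\sqrt T$ edges. Its center lands in $V(Q_\kappa)$ with constant probability, and you then store about $p_{1,\kappa}D$ star edges. For $\kappa=\Theta(\sqrt T)$ this is $\Theta(m)$. Even for $\kappa=T^{1/4}$ it exceeds $\tildeO(m/\sqrt T)$ by a factor of about $T^{1/4}$. The problem gets worse because queries also come from the node oracle's boosted sets and from every $\kappa'$ in the local window. Restricting the far side of the collected edges to sampled nodes, as you suggest, does not help: the blow-up comes from the degree bias of $V(Q_\kappa)$.

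\textbf{The fix (the paper's).} Collect around the queried edges, not around sampled nodes. There are only $|Z|\le L'=\tildeO(1)$ queries, and all of them are known after pass~2. Some queries only arise then, from configurations realized in pass~2 of \cref{algo:count} and of the node oracle, which is exactly why the third pass is needed. In pass~3 you store only the edges between $V(Q)$ and the endpoints of edges in $Z$. That is at most $4|Z||Q|=\tildeO(m/\sqrt T)$ edges, for both choices of sampling rate. This set still contains the edges $(v,a)$ and $(b,u)$ needed to close every four-cycle $(u,v,a,b)$ through a queried $e$ with $(a,b)\in Q$, so your Chernoff argument is unaffected.

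\textbf{A small correction there.} $q(e)$ is not binomial, because $(u,v,a,b)$ and $(u,v,b,a)$ share the opposite edge $\{a,b\}$. It is a sum of independent $\{0,1,2\}$-valued variables indexed by the sampled edges, which is why the paper applies \cref{lemma:chernoff} with $B=2$.
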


\smallskip
\paragraph{Edge-Sampling Based Edge Oracles.}
We first consider the case of edges $(e,\kappa,\ell)$ with $p(e,\kappa,\ell)=p_{1,\kappa}p_{2,\kappa}$.
Let $Z$ be the set of such labeled edges for which the oracle is called during \cref{algo:count,algo:oracle}.
Note that $|Z|\leq L'\in\tildeO(1)$ as we argued above. 
Let $\epsoracleE=\frac{1}{600L'}$ be the edge oracle accuracy.

The oracle works as follows. During the first stream pass, we sample an edge set $Q^{\text{edge}}$ by sampling each edge with probability $p'=\log(n)p_1p_2/\epsoracleE^2$. 
In the second stream pass, we do nothing, and in the third stream pass, we collect all edges connecting edges in $Q^{\text{edge}}$ to edges in $Z$, i.e., all edges in $E[Q^{\text{edge}},Z]$.

Note that the reason why we only collect the edges of the induced subgraph in third stream pass is that we only know $Z$ after the second stream pass finished, since some of the configurations yielding the set $Z$ are only realized during Pass~2 of \cref{algo:count} and Pass~2 of the node oracle. 

For each edge $e\in Z$, we then count the number $q(e)$ of four-cycles formed using $e$ and one disjoint edge in the sample $Q^{\text{edge}}$.
The heaviness oracle is then defined as 
\begin{equation*}
    \oracle(e,\kappa,\ell)=\begin{cases}
        \light & \text{ if } q(e)\leq p's_2\theta_e,\\
        \heavy & \text{ if } q(e)> p's_2\theta_e,
    \end{cases}
\end{equation*}
where $\theta_e=\frac{\sqrt{T}}{\delta^2}$ is the heaviness threshold for $e$.

The oracle for edges $(e,\kappa,\ell)$ with $p(e,\kappa,\ell)=p_{1,\kappa}^2$ works analogously by sampling with $p_2^2\log(n)/\epsoracleE^2$ instead of $p'$.
We now obtain the following guarantee.
\begin{lemma}
    \label{lemma:oracleedgeChernoff}
    With high probability the edge oracle correctly classifies all edges $(e,\kappa,\ell)$ with $p(e,\kappa,\ell)\in \{p_{1,\kappa}p_{2,\kappa},p_{1,\kappa}^2\}$ and $t(e)\notin[(1-\epsoracleE)s_2\theta_e,(1+\epsoracleE)s_2\theta_\wedge]$.
    The oracle's space usage is $\tildeO(m/\sqrt{T})$.
\end{lemma}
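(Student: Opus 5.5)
Fix a labeled edge $(e,\kappa,\ell)\in Z$ with $p_{(e,\kappa,\ell)}=p_{1,\kappa}p_{2,\kappa}$; the case $p_{(e,\kappa,\ell)}=p_{1,\kappa}^2$ is identical with $p_{2,\kappa}^2$ in place of $p_{1,\kappa}p_{2,\kappa}$. Write $e=(u,v)$. Every four-cycle containing $e$ has the form $(u,v,a,b)$, so it is determined by the opposite edge $f=(a,b)$. Conversely, every such $f$ yields at most a constant number of four-cycles through $e$: the edge $f$ with a fixed orientation determines the cycle, so there are at most $2$. Hence
\[
q(e)=\sum_{f} \mathbb{1}[f\in Q^{\text{edge}}]\cdot n_e(f),
\]
where $n_e(f)\in\{0,1,2\}$ is the number of four-cycles containing both $e$ and $f$, and $\sum_f n_e(f)=t(e)$. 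The indicators $\mathbb{1}[f\in Q^{\text{edge}}]$ are independent because edges are sampled independently. Therefore $q(e)$ is a sum of independent variables in $[0,2]$ with $\Expectation{q(e)}=p' t(e)$. Here $t(e,\kappa,\ell)=t(e)$ for this edge type, so the oracle is estimating the correct quantity.

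We also verify that the algorithm really computes $q(e)$. The third pass stores all edges between endpoints of $Q^{\text{edge}}$ and endpoints of $Z$, namely $(u,b)$ and $(v,a)$. We know $e$ itself because it is part of a realized configuration, and $f$ is stored in $Q^{\text{edge}}$. So every four-cycle counted in $q(e)$ is visible.

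For accuracy, apply \cref{lemma:chernoff} with $B=2$ and deviation parameter $\epsoracleE$. First suppose $t(e)>(1+\epsoracleE)s_2\theta_e$. Take $M_L=p't(e)$; then the oracle errs only if $q(e)\le p's_2\theta_e< M_L/(1+\epsoracleE)$. This has probability at most $\exp(-\epsoracleE^2 M_L/24)$, and $M_L\ge p'\theta_e=\log(n)p_1p_2\sqrt T/(\delta^2\epsoracleE^2)=\Omega(\log(n)/\epsoracleE^2)$, since $p_1p_2\sqrt T=\Theta(1/\delta^7)$. The bound is therefore $n^{-\Omega(1)}$. Next suppose $t(e)<(1-\epsoracleE)s_2\theta_e$. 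Take $M_H=(1-\epsoracleE)s_2\theta_e p'$; the oracle errs only if $q(e)>M_H/(1-\epsoracleE)$, and the same calculation bounds this by $n^{-\Omega(1)}$. For the $p_{1,\kappa}^2$ case, $\theta_e=\kappa^2/\delta^2$ and $p_{2,\kappa}^2\kappa^2=\Theta(1/\delta^7)$, so again the expectation is $\Omega(\log n/\epsoracleE^2)$. The constant inside $\log(n)$ can be enlarged so that a union bound over all queries succeeds with high probability. $Z$ depends on randomness independent of $Q^{\text{edge}}$, and even so we can union bound over all $m$ edges and all $\kappa$.

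For space, the first pass stores $Q^{\text{edge}}$, of expected size $mp'=\tildeO(m p_1p_2/\epsoracleE^2)=\tildeO(m/\sqrt T)$ since $\epsoracleE^{-1}=\tildeO(1)$. In the $p_1^2$ case the size is $\tildeO(mp_{2,\kappa}^2)=\tildeO(m/\kappa^2)\le\tildeO(m/\sqrt T)$. The third pass stores edges of $E[V(Q^{\text{edge}}),V(Z)]$ that lie on such four-cycles. This is the step I expect to be the main obstacle: naively, the endpoints of $Z$ may have large degree. I would bound only the \emph{useful} stored edges: store $(u,b)$ only if $b$ is an endpoint of a sampled $f$. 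Their number is at most $2|Z|\cdot|Q^{\text{edge}}|$ in the worst case, which is $\tildeO(m/\sqrt T)$ because $|Z|=\tildeO(1)$. A Markov bound then converts the expected bound into a bound holding with large probability, and overflowing instances are aborted as in the proof of \cref{thm:detect}.
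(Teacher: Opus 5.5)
Your proof is correct and follows the paper's argument. You write $q(e)$ as a sum of independent $\{0,1,2\}$-valued contributions indexed by $f\in Q^{\text{edge}}$, with $\Expectation{q(e)}=p'\,t(e)$. You then apply the Chernoff bound with $B=2$ separately to the cases $t(e)>(1+\epsoracleE)s_2\theta_e$ and $t(e)<(1-\epsoracleE)s_2\theta_e$, and bound the space by $O(|Z|\,|Q^{\text{edge}}|)$.

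The third-pass step you flag as the main obstacle is not actually one. Because the graph is simple, $E[V(Q^{\text{edge}}),V(Z)]$ has at most $4|Z|\,|Q^{\text{edge}}|$ edges regardless of vertex degrees, which is exactly the paper's bound.
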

\begin{proof}
    We prove the statement for $p(e,\kappa,\ell)=p_{1,\kappa}p_{2,\kappa}$. The other case follows analogously.
    
    Observe that $\Expectation{q(e)}=t(e)p'$.
    Let $Y_f\in\{0,1,2\}$, $f\in Q^{\text{edge}}$, be the random variable indicating the number of four-cycles detected by the oracle using $e$ and $f$. Note that $Y_f$ and $Y_{f'}$ are independent for any $f\neq f'$.
    We have $q(e)=\sum_f Y_f$ and get a high accuracy oracle using Chernoff bounds and a case distinction.
    
    \emph{Case 1: $t(e)>(1+\epsoracleE)s_2\theta_{(e,\kappa,\ell)}$.} Then by \cref{lemma:chebyshev} (with $B=2$) we get
    \begin{align*}
        \Prob{q(e)< p'\theta_{(e,\kappa,\ell)}} &\leq \Prob{q(e)\leq \frac{1}{1+\epsoracleE}\Expectation{q(e)}} \leq \exp\left(-\frac{\epsoracleE^2}{24}s_2\theta_{(e,\kappa,\ell)}p'\right)\leq n^{-2},
    \end{align*}
    since $p'\geq p_{1,\kappa}p_{2,\kappa}\log(n)/\epsoracleW^2\geq 2\cdot12\log(n)c_1/(s_2\theta_{(\wedge,\kappa,\ell)}\epsoracleW^2)$ for large enough $c_1$.

    \emph{Case 2: $t(e)<(1-\epsoracleE)s_2\theta_{(e,\kappa,\ell)}$.}
    Then
    \begin{align*}
        \Prob{q(e)> p'\theta_{(e,\kappa,\ell)}} &\leq \Prob{q(e)\geq \frac{1}{(1-\epsoracleE)}\Expectation{q(e)}} \leq \exp\left(-\frac{\epsoracleE^2}{24}s_2\theta_{(e,\kappa,\ell)}p'\right)\leq n^{-2}.
    \end{align*}
    
    Finally, observe that the expected space required of the oracle is $O(|Z||Q^{edge}|)=O(|Z|mp')=\tildeO(m/\sqrt{T})$.
\end{proof}

\paragraph{Node-Sampling Based Edge Oracle.}
Next, we present our edge heaviness oracle for edges $(e,\kappa,\ell)$ with $p_{(e,\kappa,\ell)}=p_{2,\kappa}^2$, for which we want to estimate the refined heaviness $t(e,\kappa,\ell)$. It is based on node sampling and detecting four-cycles containing $e$ by sampling nodes and collecting edges (similar to the node oracle in \cref{sec:oraclenode}).

The oracle receives the node sets $R_{2a,\kappa},R_{2b,\kappa}$ and $R_{2a,\kappa}',R_{2b,\kappa}'$ for all $\kappa$ as input. Here, $R_{2a,\kappa}',R_{2b,\kappa}'$ are node sets sampled by the node oracle.
After the second stream pass, it may be queried for any edge $(e,\kappa',\ell)$ such that $(e,\kappa,\ell)$ in $E[R_{2a,\kappa},R_{2b,\kappa}]\cup E[R_{2a,\kappa}',R_{2b,\kappa}']$ any pair $\kappa,\kappa'\in\CI$ with $\delta^2\kappa'\leq \kappa\leq \kappa'$.
It is implemented as follows.

Before the first stream pass, the oracle samples each node with probability $p'_{1,\kappa}=c_3p_{1,\kappa}/\epsoracleE^2$, where $c_3=\tildeO(1)$ is set below, into sets $Q_{1a,\kappa}$ and $Q_{1b,\kappa}$.
During the first stream pass, the oracle collects the edges $E[R_{2a,\kappa'}\cup R_{2a,\kappa'}',Q_{1b,\kappa}]\cup E[R_{2b,\kappa'}\cup R_{2b,\kappa'}',Q_{1a,\kappa}]$ for all $\delta^2\kappa'\leq \kappa\leq \kappa'$. 
During the second stream pass, the oracle collects the edges in $E[Q_{1a,\kappa},Q_{1b,\kappa}]$ which complete a four-cycle with any edge $e\in E[R_{2a,\kappa'},R_{2b,\kappa'}]\cup E[R_{2a,\kappa'}',R_{2b,\kappa'}'] $ and one edge from $E[Q_{1a,\kappa},R_{2b,\kappa'}\cup R_{2b,\kappa'}']$ and one edge from $E[Q_{1b,\kappa},R_{2a,\kappa'}\cup R_{2a,\kappa'}']$ for any pair $\kappa,\kappa'$ with $\delta^2\kappa'\leq \kappa\leq \kappa'$.

Note that in the algorithm above it is crucial that $R_{2a,\kappa},R_{2a,\kappa}'$ and $R_{2b,\kappa},R_{2b,\kappa}'$ are already known before the first stream pass, and that after the first stream pass we also stored all edges in $E[R_{2a,\kappa},R_{2b,\kappa}]$ and  $E[R_{2a,\kappa}',R_{2b,\kappa}']$ in \Cref{algo:count} and the node oracle.

To compute an answer to the query $(e,\kappa,\ell)$, the oracle considers all four-cycles $(u,v,a,b)$ where $e=(u,v)$, $a\in Q_{1a,\kappa},b\in Q_{1b,\kappa}$, since each such four-cycle corresponds to a configuration $c=(A,\kappa,a,b)$ with $(e,\kappa,\ell)\in c$.
For each such configuration $(A,\kappa,a,b)$ realized by the oracle, we check whether $c$ counts towards $(e,\kappa,\ell)$, that is, whether the wedges in $c$ are light. For this, we assume access to a perfect wedge oracle.

Let $q(e)$ be the number of four-cycles found that count towards $t(e,\kappa,\ell)$.
The heaviness oracle is then defined as 
\begin{equation*}
    \oracle(e,\kappa,\ell)=\begin{cases}
        \light & \text{ if } q(e)\leq (p_1')^2s_2\theta_e,\\
        \heavy & \text{ if } q(e)> (p_1')^2s_2\theta_e,
    \end{cases}
\end{equation*}
where $\theta_e=\frac{T}{\delta^2\kappa^2}$ is the heaviness threshold for $e$.

Next, we summarize the guarantees for our oracle.
\begin{lemma}
    \label{lemma:oracleedgeR2R2}
    With probability $96/100$, the edge oracle correctly classifies all edges $(e,\kappa,\ell)$ with $p_{(e,\kappa,\ell)}=p_{2,\kappa}^2$ and $t(e,\kappa,\ell)\notin[(1-\epsoracleE)s_2\theta_e,(1+\epsoracleE)s_2\theta_e]$.
    The oracle uses space $\tildeO(m/\sqrt{T})$.
\end{lemma}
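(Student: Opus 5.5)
The plan mirrors the proof of \Cref{lemma:oraclenodesaccuracyS1}, now applied to a query edge $(e,\kappa,\ell)$ with $e=(u,v)$ and $p_{(e,\kappa,\ell)}=p_{2,\kappa}^2$. Throughout we condition on the oracle being called on this edge, which fixes $u\in R_{2a}$ and $v\in R_{2b}$ (or the corresponding primed sets). We also assume a perfect wedge oracle.

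\textbf{Setup.} Let $\FC_{(e,\kappa,\ell)}$ be the set of configurations counting towards $t(e,\kappa,\ell)$. By \Cref{obs:refined-at-most-true}, each four-cycle through $e$ contributes at most one such configuration, namely $(A,\kappa,a,b)$. For each $c\in\FC_{(e,\kappa,\ell)}$, let $Y_c$ indicate that $a\in Q_{1a,\kappa}$ and $b\in Q_{1b,\kappa}$. Since the wedge oracle is perfect, $q(e)=\sum_c Y_c$ and $\Expectation{q(e)}=t(e,\kappa,\ell)(p_1')^2$.

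\textbf{Variance.} Two indicators $Y_c,Y_d$ are dependent only if $c$ and $d$ share $a$ or $b$ beyond $e$. Since $e$ is fixed, this means $c\cap d$ is a labeled wedge $(\wedge,\kappa,\ell_\wedge)$ containing $e$, i.e.\ $u$--$v$--$b$ or $v$--$u$--$a$. Such a wedge has $p_{(\wedge,\kappa,\ell_\wedge)}=p_{1,\kappa}p_{2,\kappa}^2$, and the oracle realizes it with probability $p_1'$. The number of $d$ sharing this wedge is at most $t(\wedge,\kappa,\ell_\wedge)\le s_1\theta_{(\wedge,\kappa,\ell_\wedge)}$, because every counted $c$ has light wedges. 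This gives
\[\sum_{c,d}\Cov{Y_c,Y_d}\le \sum_{c}(p_1')^2\cdot 2\, s_1\theta_{(\wedge,\kappa,\ell_\wedge)}\, p_1'.\]
By \Cref{obs:relationship-theta-t-p-counting}, $\theta_{(\wedge,\kappa,\ell_\wedge)}\le \delta^2 T p_{1,\kappa}p_{2,\kappa}^2/c_1$. Comparing with $\theta_e=T/(\kappa^2\delta^2)\approx Tp_{2,\kappa}^2$ up to constants shows $\theta_{(\wedge,\kappa,\ell_\wedge)}\lesssim \theta_e\, p_{1,\kappa}$. Hence the covariance sum is at most $O(t(e,\kappa,\ell)\,\theta_e (p_1')^3 p_{1,\kappa})=O(\epsoracleE^2/c_3)\cdot t(e,\kappa,\ell)\theta_e(p_1')^4$. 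The diagonal term is at most $\Expectation{q(e)}\le(\epsoracleE^2/c_3)\Expectation{q(e)}^2$ whenever $t(e,\kappa,\ell)\gtrsim\theta_e$, since $\theta_e(p_1')^2\ge c_3/\epsoracleE^2$.

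\textbf{Two cases and Chebyshev.} If $t(e,\kappa,\ell)>(1+\epsoracleE)s_2\theta_e$, bound the variance by $\frac{\epsoracleE^2}{100L'}\Expectation{q(e)}^2$ for $c_3\gtrsim L'$. If $t(e,\kappa,\ell)<(1-\epsoracleE)s_2\theta_e$, set $M_H=(1-\epsoracleE)s_2\theta_e(p_1')^2$ and bound the variance by $\frac{\epsoracleE^2}{100L'}M_H^2$. In both cases \Cref{lemma:chebyshev} gives a misclassification probability of at most $4/(100L')$. A union bound over the at most $L'$ queries yields $96/100$.

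\textbf{Space.} Each stored first-pass edge lies in some $E[R_{2\cdot}^{(\prime)},Q_{1\cdot}]$, which is sampled with probability $\tildeO(p_{2,\kappa'}p_{1,\kappa})$. Because $\delta^2\kappa'\le\kappa\le\kappa'$, this is $\tildeO(1/\sqrt{T})$, and summing over the $\tildeO(1)$ pairs $(\kappa,\kappa')$ gives $\tildeO(m/\sqrt{T})$ edges in expectation. The second pass stores one edge per realized four-cycle. The expected number of such four-cycles is $\tildeO(Tp_{2}^2(p_1')^2)=\tildeO(1)$. A Markov bound finishes the space claim.

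\textbf{Main obstacle.} The hardest step is the mismatched-$\kappa$ case, where $e$ was sampled at scale $\kappa'$ but is queried at $\kappa$. Here I must check that the conditioning on $e$ is harmless, and that the covariance bound uses the threshold at the queried $\kappa$, so the ratio $\theta_\wedge/\theta_e\lesssim p_{1,\kappa}$ still holds.
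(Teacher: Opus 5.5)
Your proposal is correct and follows the paper's proof essentially step for step: the same indicators $Y_c$ over the $Q$-sets with $\Expectation{q(e)}=t(e,\kappa,\ell)(p_1')^2$, the same variance split into a diagonal term and covariances over pairs sharing one of the two labeled wedges through $e$ (bounded via wedge lightness by $2s_1\theta_{(\wedge,\kappa,\ell_\wedge)}(p_1')^3$ per configuration), the same comparison $\sqrt{T}/(\delta\kappa)\lesssim\theta_e p_{1,\kappa}$, and the same two-case Chebyshev argument (using $M_H$ in the light case) followed by a union bound over the $L'$ queries. One small nit: $\theta_e$ and $Tp_{2,\kappa}^2$ differ by a factor $\delta^5/c^2$ rather than a constant, so routing the comparison through \Cref{obs:relationship-theta-t-p-counting} loses a $\delta^{-3}$ factor that $c_3$ must absorb. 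It is cleaner to check it directly, since $\theta_{(\wedge,\kappa,\ell_\wedge)}/(\theta_e p_{1,\kappa})=\delta^{4.5}/c$, which is what the paper does; your space argument, which the paper only gives in the proof of \Cref{lemma:oracleedgemain}, is a welcome addition.
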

\begin{proof}
    Let $Y_c$ be the binary random variable whether $c$ was realized by the edge oracle.
    Let $\FC_{(e,\kappa,\ell)}$ denote the set of configurations containing $(e,\kappa,\ell)$.
    
    We have $\Expectation{q(e)}=t(e,\kappa,\ell)(p_1')^2$ and 
    \begin{align*}
        \Var{q(e)}
        &\leq\sum_{\substack{c\in\FC_{(v,\kappa,\ell)}}}\Var{Y_c^2} + \sum_{\substack{c,d\in\FC_{(e,\kappa,\ell)}\\V(c\cap d)=3}} \Expectation{Y_dY_c}\\
        &\leq \Expectation{q(e)} + t(e,\kappa,\ell)\cdot 2s_1\theta_{c\cap d} (p_1')^3 \\
        &\leq \Expectation{q(e)} + 2s_1t(e,\kappa,\ell) \sqrt{T}/(\delta \kappa)(p_1')^3\\
        &\leq \frac{\epsoracleE^2}{c_3}\Expectation{q(e)}^2 + \frac{2s_1\epsoracleE^2}{c_3}t(e,\kappa,\ell)\theta_{(e,\kappa,\ell)}(p_1')^4 
    \end{align*}
    for large enough $c_3>200L'$. Here, we used that $\theta_{c\cap d}=\sqrt{T}/(\delta \kappa)$ and $\theta_{(e,\kappa,\ell)}=T/(\delta^2 \kappa^2)$. Note that this argument is the same as \Cref{eq:nodeoracle}.

    As before, we finish the proof by distinguishing two cases and applying a union bound.
    
    \emph{Case 1: $t(e,\kappa,\ell)\geq (1+\epsoracleE) s_2\theta_{(e,\kappa,\ell)}$.}
    Then $$\Var{q(e)}\leq \frac{\epsoracleE^2}{c_3}\Expectation{q(e)}^2 +\frac{4\epsoracleE^2}{c_3}\Expectation{q(e)}^2 = \frac{\epsoracleE^2}{100L'}\Expectation{q(e)}^2$$
    for large enough $c_3$. By applying \cref{lemma:chebyshev} we get
    \begin{align*}
        \Prob{q(e) < \theta_{(e,\kappa,\ell)}s_2(p_1')^2} 
        &< \Prob{q(e) < \frac{1}{(1+\epsoracleE)} t(e,\kappa,\ell)(p'_1)^2} = \Prob{q(e) < \frac{1}{(1+\epsoracleE)}\Expectation{q(e)}} \\
        &\leq \frac{4}{\epsoracleE^2} \frac{\Var{q(e)}}{\Expectation{q(e)}^2}
        \leq \frac{4}{\delta^2} \frac{\epsoracleE^2}{100L'}=\frac{4}{100L'}.
    \end{align*}

    \emph{Case 2: $t(e,\kappa,\ell)\leq (1-\epsoracleE) \theta_{(e,\kappa,\ell)}$.}
    Then $\Expectation{q(e)}\leq (1-\epsoracleE)s_2\theta_{(e,\kappa,\ell)}(p_1')^2=:M_H$, and
    $$\Var{q(e)}\leq \frac{\epsoracleE^2}{c_3}M_H^2 +\frac{4\epsoracleE^2}{c_3}M_H^2 = \frac{\epsoracleE^2}{100L'}\Expectation{q(e)}^2$$
    for large enough $c_3$. By applying Chebyshev's inequality (\cref{lemma:chebyshev}) we get that
    \begin{align*}
        \Prob{q(e) \geq \theta_{(e,\kappa,\ell)}(p_1')^2} 
        &= \Prob{q(e) \geq \frac{1}{1-\epsoracleE}M_H} \\
        &\leq \frac{4}{\epsoracleE^2} \frac{\Var{q(e)}}{\Expectation{q(e)}^2}
        \leq \frac{4}{\epsoracleE^2} \frac{\epsoracleE^2}{100L'} = \frac{4}{100}.
    \end{align*}

    Taking a union bound over all up to $L'$ oracle calls shows that all oracle answers are correct with probability at least $96/100$.
\end{proof}

\paragraph{Shifting Thresholds.}
Analogously to \cref{sec:oraclenode}, we now show that with probability $99/100$ none of the $L'$ edges $(e,\kappa,\ell)$ for which the oracle is called fulfills $t(e,\kappa,\ell)\in[(1-\epsoracleE)s_2\theta_e,(1+\epsoracleE)s_2\theta_e]$.
The algorithm randomly selects the shift $s_2$ as one out of $100L'$ different values in $\CS_2:=\{(1+\frac{1}{200L'})^i\mid 0\leq i < \log_{1+\frac{1}{200L'}}(2)\}$. By setting $\epsoracleE=\frac{1}{600L'}$, we ensure that at most $L'$ out of $|\CS_2|$ shifts admit edges with  $t(e,\kappa,\ell)\in[(1-\epsoracleE)s_2\theta_e,(1+\epsoracleE)s_2\theta_e]$.
Thus, by randomly selecting the shift $s_3$, with probability $99/100$, the oracle is not called on any bad nodes.
We refer to \cref{sec:oraclenode} for the details.

\begin{proof}[Proof of \cref{lemma:oracleedgemain}]
    With probability $99/100$, none of the $L'$ oracle calls regard a bad node due to the shift $s_2$. Conditioned on this,
    the edge-sampling oracle calls are correct with high probability by \cref{lemma:oracleedgeChernoff} and the node-sampling oracle calls are correct with probability $96/100$ by \cref{lemma:oracleedgeR2R2}.

    The oracle uses $\tildeO(m/\sqrt{T})$ space in the first pass and with probability $99/100$ stores at most $\tildeO(1)$ additional edges in $E[Q_{1a,\kappa},Q_{2a,\kappa}]$ in the second pass. 
    In total, we get a success probability of at least $\frac{9}{10}$ by a union bound.
\end{proof}

\subsection{Wedge Oracles}
\label{sec:oraclewedge}
We now show how to build a wedge heaviness oracle which, when queried on a labeled wedge $(\wedge,\kappa,\ell)$ during \cref{algo:oracle} or inside the node or edge oracle, outputs
\begin{center}
    $\oracle(\wedge,\kappa,\ell)=\heavy$ if and only if $t(\wedge,\kappa,\ell)>s_2\theta_{(\wedge,\kappa,\ell)}$
\end{center}
with a high probability.
Recall that refined heaviness $t(\wedge,\kappa,\ell)=t(\wedge)$ equals true heaviness for wedges. However, the threshold $\theta_{(\wedge,\kappa,\ell)}$ depends on the labeling.

We use node sampling to build the wedge oracles. Intuitively, given a wedge $\wedge$ with heaviness threshold $\theta$, we sample nodes with probability $p>1/\theta$ and count how many of them form a four-cycle with $\wedge$ by collecting the required edge sets. 
Observe that for wedges $(\wedge,\kappa,\ell)$ with $p_{(\wedge,\kappa,\ell)}=p_{1,\kappa}^2p_{2,\kappa}$ we have $\theta_{(\wedge,\kappa,\ell)}=\kappa/\delta$ and thus $p_{2,\kappa}>1/\theta_{(\wedge,\kappa,\ell)}$. Similarly, $p_{1,\kappa}>1/\theta_{(\wedge,\kappa,\ell')}$ for wedges $(\wedge,\kappa,\ell')$ with $p_{(\wedge,\kappa,\ell')}=p_{1,\kappa}p_{2,\kappa}^2$.

We now state the oracle's guarantees in the following lemma.
\begin{lemma}
    \label{lemma:oraclewedgemain}
    There is a wedge heaviness oracle which correctly answers all queries by \cref{algo:oracle} and by the node and edge oracles with probability $98/100$. The oracle requires $\tildeO(m/\sqrt{T})$ space.
\end{lemma}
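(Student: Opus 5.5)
We plan to prove \cref{lemma:oraclewedgemain} along the same lines as \cref{lemma:oracleedgeChernoff}. For wedges the refined heaviness equals the true heaviness, $t(\wedge,\kappa,\ell)=t(\wedge)$. As a consequence, counting the four-cycles through a fixed wedge $\wedge=(u,m,v)$ only requires counting the common neighbours $w\neq m$ of $u$ and $v$, since $t(\wedge)=\ow(u,v)-1$. These contributions are independent across $w$ under node sampling, so Chernoff bounds apply directly and no covariance analysis is needed.

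\emph{Construction.} We treat the two labelings separately.
\begin{itemize}
\item If $p_{(\wedge,\kappa,\ell)}=p_{1,\kappa}^2p_{2,\kappa}$, then $\theta=\kappa/\delta$. Before the first pass we sample a node set $Q_{2,\kappa}$ with probability $p''=c_4p_{2,\kappa}\log(n)/\epsoracleW^2$, where $\epsoracleW=1/(600L'')$.
\item If $p_{(\wedge,\kappa,\ell)}=p_{1,\kappa}p_{2,\kappa}^2$, then $\theta=\sqrt{T}/(\kappa\delta)$, and we use $Q_{1,\kappa}$ with probability $c_4p_{1,\kappa}\log(n)/\epsoracleW^2$.
\end{itemize}
In both cases $p''\theta=\Omega(\log n/\epsoracleW^2)$, because $p_{2,\kappa}\kappa/\delta$ and $p_{1,\kappa}\sqrt{T}/(\kappa\delta)$ are both $\Omega(1/\delta)$.

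The wedge endpoints $u,v$ that will be queried always lie in node sets of the main algorithm or of the node and edge oracles, and these sets are fixed before the first pass. During the first pass we therefore collect all edges between those sets and $Q_{\cdot,\kappa}$, for all admissible pairs $\kappa,\kappa'$ with $\delta^2\kappa'\le\kappa\le\kappa'$, exactly as the node oracle does. If some endpoint sets are only determined after pass two, we instead collect the edges $E[\{u,v\},Q]$ in the third pass, as in the edge-sampling oracle.

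\emph{Answering a query.} For a query we let $q(\wedge)$ be the number of $w\in Q$ with $w\neq m$ adjacent to both $u$ and $v$. We answer \heavy iff $q(\wedge)>p''s_1\theta$.

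\emph{Correctness.} The indicators for different $w$ are independent Bernoulli variables, and $\Expectation{q(\wedge)}=p''t(\wedge)$. If $t(\wedge)$ lies outside $[(1-\epsoracleW)s_1\theta,(1+\epsoracleW)s_1\theta]$, then \cref{lemma:chernoff} with $B=1$ gives failure probability at most $n^{-2}$ once $c_4$ is large enough. A union bound over the $L''=\tildeO(1)$ wedge queries follows; we bound $L''$ via the $\tildeO(1)$ realized configurations of the main algorithm, node oracle and edge oracle, each generating $O(\log(1/\delta))$ wedge queries, with Markov at probability $99/100$.

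\emph{Shift argument.} We repeat the argument of \cref{sec:oraclenode} with $s_1\in\{(1+\frac{1}{200L''})^i\}$. The uncertainty intervals for distinct shifts are disjoint, so with probability at least $99/100$ no queried wedge is bad. Together these steps give success probability at least $98/100$.

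\emph{Space.} This is the step I expect to be the main obstacle. Each collected edge has one endpoint in a set of probability $\tildeO(p_{1,\kappa'})$ or $\tildeO(p_{2,\kappa'})$ and the other in $Q$ of probability $\tildeO(p_{2,\kappa})$ or $\tildeO(p_{1,\kappa})$. We need the product to be $\tildeO(1/\sqrt T)$. This holds when $\kappa=\kappa'$, and for $\delta^2\kappa'\le\kappa$ it holds up to a $\delta^{-2}=\tildeO(1)$ factor. Labelings whose product would be $p_1^2$ must be paired with the $p_2$-sampled $Q$, and we will check case by case that this is always so. Any case that does not fit will instead use the third-pass collection of edges $E[\{u,v\},Q]$ restricted to the $\tildeO(1)$ queried wedges, which costs only $\tildeO(|Q|)\subseteq\tildeO(m/\sqrt T)$ in expectation.
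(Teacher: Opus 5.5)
Your correctness argument (independent indicators over the common neighbours $w\neq m$, Chernoff with $B=1$, a union bound over the $L''$ queries, and the random shift $s_1$) is essentially the paper's. The gap is in the space bound, which you flag but do not resolve, and both parts of your plan for it fail.

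\emph{The case claim is false.} Take an adjacent configuration with cycle $x$--$y$--$u$--$v$, where $x\in R_{1a}$, $y\in R_{1b}$, $u\in R_{2a}$, $v\in R_{2b}$. The wedge $(y,u,v)$, and likewise $(u,v,x)$, has probability $p_{1,\kappa}p_{2,\kappa}^2$ and threshold $\sqrt{T}/(\kappa\delta)$. Its completing node must therefore be sampled at rate $\tilde{\Theta}(p_{1,\kappa})$, yet its endpoint $y$ is itself $p_1$-sampled. The configurations realized by the node oracle and by the node-sampling edge oracle produce the same shape. Collecting $E[V_{1,\kappa'},Q_{1,\kappa}]$ in pass 1 then costs $\tilde{\Theta}(m\kappa\kappa'/T)$ edges, which can be $\tilde{\Theta}(m)$.

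\emph{The fallback does not help either.} The bound $\tildeO(|Q|)$ is correct, but $|Q|\approx n\cdot\tilde{\Theta}(p_{1,\kappa})$, and $p_{1,\kappa}$ is capped at $1$ once $\kappa\geq\delta^{3.5}\sqrt{T}/c$. Concretely, take the heavy-edge instance: $u^*$ adjacent to $a_1,\dots,a_T$, $v^*$ adjacent to $b_1,\dots,b_T$, plus edges $(a_i,b_i)$ and $(u^*,v^*)$. Then:
\begin{itemize}
\item $m\approx 3T$ and $\kappa_A\approx\delta\sqrt{T}$, so $p_{1,\kappa_A}$ is capped at $1$;
\item some configuration $(A,\kappa_A,u^*,v^*)$ is realized with at least constant probability;
\item \isLLM then queries the wedge $(v^*,b_i,a_i)$, with labels $R_{1b},R_{2a},R_{2b}$;
\item storing $E[\{v^*,a_i\},Q]$ means storing all $T+1$ edges at $v^*$, polynomially more than $m/\sqrt{T}\approx 3\sqrt{T}$.
\end{itemize}

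The missing idea is the four-cycle-closing trick from pass 2 of \cref{algo:count}. In pass 1, store $E[V_{2,\kappa'},Q_{1,\kappa}]$; its rate product is $p_{2,\kappa'}p_{1,\kappa}=\tildeO(1/\sqrt{T})$, and the wedge edges themselves are already collected. In pass 2, keep an edge $(y,w)$ with $y\in V_{1,\kappa'}$ and $w\in Q_{1,\kappa}$ only if it closes a four-cycle with stored edges, i.e.\ only if $w$ is already known to be adjacent to the wedge's other endpoint. The number of such edges is at most the number of four-cycles with nodes in $V_{1,\kappa'},V_{2,\kappa'},V_{2,\kappa'},Q_{1,\kappa}$. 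In expectation this is $\tildeO(T\,p_{1,\kappa'}p_{2,\kappa'}^2p_{1,\kappa})=\tildeO(\kappa/\kappa')=\tildeO(1)$. This is how the paper's wedge oracle stays within $\tildeO(m/\sqrt{T})$ space.
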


Let $L''$ be the number of oracle calls made to the wedge heaviness oracle by \cref{algo:oracle}, the node oracles and the edge oracles together. We have $L''\in \tildeO(1)$ as only $\tildeO(1)$ configurations and four-cycles are sampled in each of these algorithms. Let $\epsoracleW=\frac{1}{600L''}$ denote the accuracy of our oracle. 

The oracle receives all node sets sampled by \cref{algo:count}, the node oracles and the edge oracles as input. Additionally, it samples two node sets $Q_{1,\kappa}^{\text{wedge}}$ and $Q^{\text{wedge}}_{2,\kappa}$ with probabilities
$$p_{1,\kappa}'=p_{1,\kappa}\log(n)/\epsoracleW^2\quad\text{ and }\quad p_{2,\kappa}'=p_{2,\kappa}\log(n)/\epsoracleW^2,$$ respectively before the first stream pass. 

Now let $V_{1,\kappa}$ (resp. $V_{2,\kappa}$) denote the union of all node sets which were sampled in the node oracle, edge oracle or \cref{algo:count} with probability $\tilde{\Theta}(p_{1,\kappa})$ (resp. $\tilde{\Theta}(p_{2,\kappa})$), that is 
\begin{align*}
    V_{1,\kappa}&=S_{1,\kappa}\cup S_{1,\kappa}'\cup R_{1a,\kappa}\cup R_{1b,\kappa}\cup R_{1a,\kappa}'\cup R_{1b,\kappa}'\cup Q_{1a}\cup Q_{1b}\\
    V_{2,\kappa}&=S_{2,\kappa}\cup S_{2,\kappa}'\cup R_{2a,\kappa}\cup R_{2b,\kappa}\cup R_{2a,\kappa}'\cup R_{2b,\kappa}'
\end{align*}
Curcially, observe that for all queries $(\wedge,\kappa',\ell)$ to the wedge oracle, either $\wedge$ contains two nodes from $V_{1,\kappa}$ and one from $V_{2,\kappa}$ or two nodes from $V_{2,\kappa}$ and one from $V_{1,\kappa}$ for any $\delta^2\kappa\leq \kappa'\leq \kappa$.

Thus, during the first two passes we collect edges between $V_{1,\kappa'}, V_{2,\kappa'}$ and $Q_{1,\kappa}^{\text{wedge}}, Q^{\text{wedge}}_{2,\kappa}$ as in the previous oracles. 
In particular, in the first pass we collect the edge sets $E[V_{1,\kappa'},Q^{\text{wedge}}_{2,\kappa}]$, $E[V_{2,\kappa'},Q^{\text{wedge}}_{1,\kappa}]$ and $E[V_{2,\kappa'},Q^{\text{wedge}}_{2,\kappa}]$ for all $\delta^2\kappa\leq \kappa'\leq \kappa$.
In the second pass, we collect all edges in $E[V_{1,\kappa'},Q^{\text{wedge}}_{1,\kappa}]$ which complete a four-cycle with one edge in $E[V_{1,\kappa'},V_{2,\kappa'}]$, one edge in $E[V_{2,\kappa'},V_{2,\kappa'}]$ and one edge in $E[V_{2,\kappa'},Q^{\text{wedge}}_{2,\kappa}]$ for any $\delta^2\kappa\leq \kappa'\leq \kappa$.

When the oracle is queried for some wedge $(\wedge,\kappa,\ell)$, we need to distinguish whether $p_{(\wedge,\kappa,\ell)}=p_{1,\kappa}p_{2,\kappa}^2$ or $p_{(\wedge,\kappa,\ell)}=p_{1,\kappa}^2p_{2,\kappa}$. We only provide details for the latter case here. The other case follows analogously.

Let $(\wedge,\kappa,\ell)$ be any wedge for which the oracle is queried with $p_{(\wedge,\kappa,\ell)}=p_{1,\kappa}^2p_{2,\kappa}$. 
The oracle then counts the number $q(\wedge)$ of four-cycles containing $\wedge$ and one disjoint node in $Q^{\text{wedge}}_2$.
The heaviness oracle is then defined as 
\begin{equation*}
    \oracle(\wedge,\kappa,\ell)=\begin{cases}
        \light & \text{ if } q(\wedge)\leq p'_2s_1\theta_{(\wedge,\kappa,\ell)},\\
        \heavy & \text{ if } q(\wedge)> p'_2s_1\theta_{(\wedge,\kappa,\ell)},
    \end{cases}
\end{equation*}
where $\theta_{(\wedge,\kappa,\ell)}=\frac{\kappa}{\delta}$ is the heaviness threshold for $\wedge$.

As all the four-cycles counting towards $q(\wedge)$ are found independently (conditioned on having sampled $\wedge$), we get a high accuracy oracle using Chernoff bounds:

\begin{lemma}
    \label{lemma:oraclewedge}
    With high probability the wedge oracle correctly classifies all wedges $(\wedge,\kappa,\ell)$ with $t(\wedge)\notin[(1-\epsoracleW)s_1\theta_{(\wedge,\kappa,\ell)},(1+\epsoracleW)s_1\theta_{(\wedge,\kappa,\ell)}]$.
\end{lemma}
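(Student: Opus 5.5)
The proof mirrors \cref{lemma:oracleedgeChernoff}. We fix a queried wedge $(\wedge,\kappa,\ell)$ with $p_{(\wedge,\kappa,\ell)}=p_{1,\kappa}^2p_{2,\kappa}$; the other labeling is symmetric, with $Q^{\text{wedge}}_{1,\kappa}$ and $p'_{1,\kappa}$ in place of $Q^{\text{wedge}}_{2,\kappa}$ and $p'_{2,\kappa}$. Let $\wedge$ have endpoints $u,v$. Every four-cycle containing $\wedge$ is determined by its fourth node $w$, a common neighbour of $u$ and $v$ distinct from the middle node of $\wedge$. So $t(\wedge)=\ow(u,v)-1$, and $q(\wedge)=\sum_w Y_w$, where $Y_w$ indicates $w\in Q^{\text{wedge}}_{2,\kappa}$.

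The first step is to check that the collected edge sets let the oracle see both edges $(u,w)$ and $(v,w)$ for every sampled $w$. The query endpoints lie in $V_{1,\kappa'}\cup V_{2,\kappa'}$ for some $\kappa'$ in the allowed window, and the first pass stores $E[V_{1,\kappa'},Q^{\text{wedge}}_{2,\kappa}]$ and $E[V_{2,\kappa'},Q^{\text{wedge}}_{2,\kappa}]$. Hence $q(\wedge)$ is computed exactly. Since nodes are sampled independently and the query set is determined by the other samples, we condition on those samples (including the choice of the query) and treat $q(\wedge)$ as a sum of independent Bernoulli$(p'_{2,\kappa})$ variables. Its mean is $\Expectation{q(\wedge)}=t(\wedge)p'_{2,\kappa}$, and each summand is bounded by $B=1$.

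The main step is the two-case Chernoff argument via \cref{lemma:chernoff}. If $t(\wedge)>(1+\epsoracleW)s_1\theta$ with $\theta=\kappa/\delta$, we set $M_L=(1+\epsoracleW)s_1\theta p'_{2,\kappa}$. The oracle errs only if $q(\wedge)\le M_L/(1+\epsoracleW)$, which has probability at most $\exp(-\epsoracleW^2M_L/12)$. If $t(\wedge)<(1-\epsoracleW)s_1\theta$, we set $M_H=(1-\epsoracleW)s_1\theta p'_{2,\kappa}$, and the error probability is at most $\exp(-\epsoracleW^2M_H/12)$. In both cases the exponent is at least $\epsoracleW^2\theta p'_{2,\kappa}/24=\theta p_{2,\kappa}\log(n)/24$. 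Since $\theta p_{2,\kappa}=\frac{\kappa}{\delta}\cdot\frac{c}{\delta^{3.5}\kappa}\ge c$, this is at least $3\log n$ for $c$ large, so the failure probability is at most $n^{-3}$. The same computation works for the other labeling, because there $\theta p_{1,\kappa}=\frac{\sqrt T}{\kappa\delta}\cdot\frac{c\kappa}{\delta^{3.5}\sqrt T}\ge c$. This is the point where the choice $p>1/\theta$ noted before the lemma is used.

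The final step is a union bound. The bad event "some wedge whose heaviness is outside the uncertainty interval is misclassified" is contained in a union over all at most $n^3$ wedges and all $O(\log T)$ values of $\kappa$. A union bound over these gives failure probability $O(n^{-3}\cdot n^3\cdot\log T)$, which is not $o(1)$. So instead we enlarge the sampling factor $\log n$ by a constant, or union bound only over the $L''$ actual queries. In either case we get correctness with high probability. I expect the main obstacle to be the bookkeeping in the first step: verifying that the collected edge sets really contain both spokes for every query, whatever sets the wedge's endpoints came from and for every $\kappa'$ in the window $[\delta^2\kappa,\kappa]$. This requires a case check over the labelings. The concentration part is routine.
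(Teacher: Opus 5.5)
Your proposal is correct and follows essentially the same route as the paper. The paper also uses a two-case Chernoff bound (\cref{lemma:chernoff} with $B=1$) on $q(\wedge)$ with $\Expectation{q(\wedge)}=t(\wedge)p'_{2,\kappa}$, and uses $\theta_{(\wedge,\kappa,\ell)}p_{2,\kappa}\geq c$ (resp.\ $\theta_{(\wedge,\kappa,\ell)}p_{1,\kappa}\geq c$ for the other labeling) to make each wedge's failure probability inverse-polynomial. Your explicit union bound over all $O(n^3\log T)$ labeled wedges, made to go through by enlarging the constant, is a valid tightening of a step the paper leaves implicit.
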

\begin{proof}
    We prove the oracle accuracy using \cref{lemma:chernoff}. We only proof the lemma for wedges $(\wedge,\kappa,\ell)$ with $p_{(\wedge,\kappa,\ell)}=p_{1,\kappa}^2p_{2,\kappa}$. The other case follows analogously.
    
    Observe that $\Expectation{q(\wedge)}=t(\wedge)p_{2,\kappa}'$ as each node $v$ which forms a four-cycle with $\wedge$ is sampled with $p_{2,\kappa}$ into $Q_{2,\kappa}^\text{wedge}$ and if sampled, the edges between $v$ and the endpoints of $\wedge$ are collected by the oracle.

    To bound the variance,
    first assume $t(\wedge)>(1+\epsoracleW)\theta_{(\wedge,\kappa,\ell)}$.
    Then
    \begin{align*}
        \Prob{q(\wedge)< p_{2,\kappa}'\theta_{(\wedge,\kappa,\ell)}} &\leq \Prob{q(\wedge)\leq \frac{1}{1+\epsoracleW}\Expectation{q(\wedge)}} \leq \exp\left(-\frac{\epsoracleW^2}{12}s_1\theta_{(\wedge,\kappa,\ell)}p_{2,\kappa}'\right)\leq n^{-4},
    \end{align*}
    as $p_{2,\kappa}'\geq p_{2,\kappa}\log(n)/\epsoracleW^2\geq \log(n)c_1/(\theta_{(\wedge,\kappa,\ell)}\epsoracleW^2)\geq 4\cdot12\log(n)/(s_1\theta_{(\wedge,\kappa,\ell)}\epsoracleW^2)$ for large enough $c_1$.

    Now assume $t(\wedge)<(1-\epsoracleW)\theta_{(\wedge,\kappa,\ell)}$.
    Then
    \begin{align*}
        \Prob{q(\wedge)> p_2'\theta_{(\wedge,\kappa,\ell)}} &\leq \Prob{q(\wedge)\geq \frac{1}{(1-\epsoracleW)}\Expectation{q(\wedge)}} \leq \exp\left(-\frac{\epsoracleW^2}{24}s_1\theta_{(\wedge,\kappa,\ell)}p_2'\right)\leq n^{-4},
    \end{align*}
    where the last inequality again assumes large enough $c_1$.
\end{proof}
This shows that the described wedge oracle is accurate with high probability for wedges, for which the heaviness is not too close to the heaviness threshold.

\begin{proof}[Proof of \Cref{lemma:oraclewedgemain}]
    Analogously to the shifting analysis in \cref{sec:oraclenode} we obtain that by randomly selecting the shift $s_1\in\CS_2:=\{(1+\frac{1}{200L''}^i)\mid 0\leq i < \log_{1+\frac{1}{200L''}}(2)\}$ during \cref{algo:count}, all of the $L''$ wedges $(\wedge,\kappa,\ell)$ for which the oracle is called fulfill $t(\wedge)\notin[(1-\epsoracleW)s_1\theta_{(\wedge,\kappa,\ell)},(1+\epsoracleW)s_1\theta_{(\wedge,\kappa,\ell)}]$ with probability $99/100$.
    
    Conditioned on this event, all oracle queries are answered correctly with high probability by \cref{lemma:oraclewedge}.

    The wedge oracle uses $\tildeO(m/\sqrt{T})$ space in the first pass and $\tildeO(1)$ space in the second pass with probability $99/100$.
\end{proof}

We note that the wedge oracle only needs a single stream pass to be able to answer all queries on wedges using nodes from $S_{1,\kappa},S_{2,\kappa},S_{1,\kappa}',S_{2,\kappa}'$ during \cref{algo:oracle} and inside the node oracle. 

\subsection{Proof of \texorpdfstring{\cref{thm:oracles}}{Proposition 6.6}}
We now combine the different oracles to prove \cref{thm:oracles}. As mentioned, we call each oracle at most $\tildeO(1)$ times, which is crucial for the oracle accuracy and space usage.

\begin{proof}[Proof of \cref{thm:oracles}]
    By \cref{lemma:oraclenode,lemma:oracleedgemain,lemma:oraclewedgemain} and a union bound, all heaviness oracle calls by \cref{algo:oracle} are correct with probability at least $\frac{2}{3}$.
    By \cref{observ:LLMoracle}, the oracle $\IsLLM(\cdot)$ answers all queries correctly if all heaviness oracle answers are correct.
    As the node oracle, edge oracle and wedge oracle each use $\tildeO(m/\sqrt{T})$ space, the total space usage of the \isLLM oracle is $\tildeO(m/\sqrt{T})$.
\end{proof}

\section*{Acknowledgments}
We are grateful to the anonymous reviewers for their helpful comments on restructuring our paper.
This research has been funded by the Vienna Science and Technology Fund (WWTF) [Grant ID: 10.47379/VRG23013].
Supported in part by NSFC Grant 62272431 and the Innovation Program for Quantum Science and Technology (Grant No. 2021ZD0302901).

\bibliographystyle{plainnat}
\bibliography{references} 

\end{document}